\documentclass[11pt]{article}
\usepackage[utf8]{inputenc}
\usepackage[T1]{fontenc}
\usepackage[margin=1.2in]{geometry}
\usepackage{authblk}
\pdfoutput=1
\usepackage{slashed}
\usepackage{cancel}
\usepackage{comment}
\usepackage[T1]{fontenc}
\usepackage{kpfonts,baskervald}
\usepackage{amsfonts, amssymb}
\usepackage{hyperref}
\usepackage{bbm}
\usepackage[backend=biber,style=alphabetic]{biblatex}
\usepackage{amsmath}
\usepackage{stmaryrd}
\usepackage{physics}
\usepackage{bm}
\usepackage{dsfont}
\usepackage[table,xcdraw]{xcolor}
\usepackage{enumitem}

\usepackage[all]{xy}

\usepackage{adjustbox}
\usepackage{tikz}
\usepackage{tikz-3dplot}
\usetikzlibrary{positioning}
\usetikzlibrary{cd}
\usepackage{pgfplots}
\usepackage[labelfont=bf]{caption}
\usepackage{blindtext}
\pgfplotsset{compat=1.17} 
        \usetikzlibrary{patterns,hobby}

\usepackage{amsthm}
\newtheorem{lemma}{Lemma} [section]

\newtheorem{proposition}{Proposition}[section]
\newtheorem{defn}{Definition} 
\newtheorem{thm}{Theorem}

\usepackage{hyperref}
\numberwithin{equation}{section}

\usepackage{graphicx} % Required for inserting images

\usepackage{xcolor}
\usepackage{subfig}
\usepackage[compat=1.1.0]{tikz-feynman}

\usetikzlibrary{backgrounds}
\usetikzlibrary{shadows}

\usetikzlibrary{automata,positioning}
\tikzset{snake it/.style={decorate, decoration=snake}}

\pgfarrowsdeclaredouble{doubled}{doubled}{stealth}{stealth}
\tikzset{
->-/.style={postaction={decorate,
   decoration={markings,mark=at position .5 with {\arrow{stealth};}}}
   },
->>-/.style={postaction={decorate,
   decoration={markings,mark=at position .57 with {\arrow{doubled};}}}
   },   
}

\pgfdeclarelayer{bottom}
\pgfdeclarelayer{middleb}
\pgfdeclarelayer{middlet}
\pgfdeclarelayer{top}
\pgfsetlayers{bottom,middleb,main,middlet,top}
\usepgfplotslibrary{colorbrewer}
\pgfplotsset{compat=newest}
\pgfplotsset{colormap/violet}

\pgfdeclarelayer{background layer}
\pgfdeclarelayer{foreground layer}
\pgfsetlayers{background layer,main,foreground layer}
\newtheoremstyle{boldremark}
  {6pt}
  {6pt}
  {\normalfont}
  {}
  {\bfseries}
  {.}
  {0.5em}
  {}

\theoremstyle{boldremark}
\newtheorem{remark}{Remark}

\usepackage[nameinlink]{cleveref}
\usetikzlibrary{matrix,calc,positioning,decorations.markings,decorations.pathmorphing,decorations.pathreplacing,cd}
\usetikzlibrary{arrows,decorations.markings}
\usetikzlibrary{shapes.misc}
\usetikzlibrary{shapes.multipart}
\usetikzlibrary{decorations.markings}
\usetikzlibrary{backgrounds}
\usetikzlibrary{shadows}

\usetikzlibrary{automata,positioning}
\tikzset{snake it/.style={decorate, decoration=snake}}

\pgfarrowsdeclaredouble{doubled}{doubled}{stealth}{stealth}
\tikzset{
->-/.style={postaction={decorate,
   decoration={markings,mark=at position .5 with {\arrow{stealth};}}}
   },
   ->>>-/.style={postaction={decorate,
   decoration={markings,mark=at position .75 with {\arrow{stealth};}}}
   },
    ->>>>-/.style={postaction={decorate,
   decoration={markings,mark=at position .55 with {\arrow{stealth};}}}
   },
    ->>>>>-/.style={postaction={decorate,
   decoration={markings,mark=at position .85 with {\arrow{stealth};}}}
   },
->>-/.style={postaction={decorate,
   decoration={markings,mark=at position .50 with {\arrow{doubled};}}}
   },   
   ->>>>>-/.style={postaction={decorate,
   decoration={markings,mark=at position 1 with {\arrow{stealth};}}}
   },   
}

\usetikzlibrary{shapes.geometric} 
\tikzset{cross/.style={cross out, draw=black, minimum size=2*(#1-\pgflinewidth), inner sep=0pt, outer sep=0pt},
cross/.default={4pt}}

\usepackage[bbgreekl]{mathbbol}

\usepackage{bbold}
\DeclareSymbolFontAlphabet{\mathbb}{AMSb}
\DeclareSymbolFontAlphabet{\mathbbl}{bbold}

\def\a{\ensuremath\alpha}
\def\b{\ensuremath\beta}
\def\d{\ensuremath\delta}

\def\o{\ensuremath\omega}

\def\s{\ensuremath\sigma}
\def\e{\ensuremath\epsilon}
\def\ve{\ensuremath\varepsilon}
\def\vphi{\ensuremath\varphi}

\def\de{\ensuremath\partial}

\def\mc{\ensuremath\mathcal}
\def\l{\ensuremath\lambda}
\def\L{\ensuremath\Lambda}
\def\ov{\ensuremath\overline}
\def\wt{\ensuremath\widetilde}

\def\mf{\ensuremath\mathfrak}
\def\wt{\ensuremath\widetilde}
\def\z{\ensuremath\zeta}
\def\mbb{\ensuremath\mathbb}

\newcommand{\virg}[1]{``#1''}

\newcounter{Chapcounter}

\newcommand{\chapter}[1] 
{ {\centering          
  \addtocounter{Chapcounter}{1} \Large \textbf{ \color{black} Chapter \theChapcounter: ~#1}}    
  \addcontentsline{toc}{section}{ \color{blue} Chapter:~\theChapcounter~~ #1}    
}

\hypersetup{colorlinks=true}
\hypersetup{linkcolor=black}

\usepackage[framemethod=tikz]{mdframed}
\definecolor{mycolor1}{rgb}{0.122, 0.435, 0.698}
\newmdenv[innerlinewidth=0.5pt, roundcorner=4pt,linecolor=mycolor1,innerleftmargin=6pt,
innerrightmargin=6pt,innertopmargin=6pt,innerbottommargin=6pt]{mybox1}
\definecolor{cadmiumred}{rgb}{0.89, 0.0, 0.13}
\newmdenv[innerlinewidth=0.5pt, roundcorner=4pt,linecolor=cadmiumred,innerleftmargin=6pt,
innerrightmargin=6pt,innertopmargin=6pt,innerbottommargin=6pt]{mybox2}

\definecolor{brandeisblue}{rgb}{0.0, 0.44, 1.0}
\newmdenv[innerlinewidth=0.5pt, roundcorner=4pt,linecolor=brandeisblue,innerleftmargin=6pt,
innerrightmargin=6pt,innertopmargin=6pt,innerbottommargin=6pt]{mybox3}

\definecolor{orange}{rgb}{1.0, 0.5, 0.0}
\newmdenv[innerlinewidth=0.5pt, roundcorner=4pt,linecolor=orange,innerleftmargin=6pt,
innerrightmargin=6pt,innertopmargin=6pt,innerbottommargin=6pt]{mybox4}

\usepackage{hyperref}
\definecolor{airforceblue}{rgb}{0.36, 0.54, 0.66}
\hypersetup{
    colorlinks,
    citecolor=purple,
    filecolor=black,
    linkcolor=airforceblue,
    urlcolor=red    
}
\title{\vspace{1cm}\bf{Chiral Long-Range Order in three Euclidean Lattice Gross--Neveu Models}\vspace{0.2cm}}
\date{\vspace{0.2cm} \today}
\author[1]{Simone Fabbri\thanks{{sfabbri@sissa.it}}}
\author[1]{Leonardo Goller\thanks{{lgoller@sissa.it}}}
\affil[1]{Mathematics Area, SISSA, Via Bonomea 265, 34136 Trieste, Italy}

\begin{document}

\maketitle
\begin{abstract}
We prove the existence of Long-Range Order in a class of two-dimensional Euclidean lattice Gross–Neveu models with an even number of fermion flavors, covering three standard lattice discretizations, including naive and staggered fermions widely used in numerical studies. By performing a Hubbard–Stratonovich transformation, we map the fermionic systems to bosonic ones and establish Reflection Positivity for the resulting measures. Exploiting this structure, we combine Chessboard Estimates with a Peierls-type contour argument to prove Long-Range Order for the Chirally charged fermion-mass bilinear $\overline{\psi}\psi$ at sufficiently small coupling and sufficiently large flavor number. In particular, we obtain uniform pointwise bounds on the bosonic two-point function, equivalently on the fermionic mass–mass correlator, showing that it is quantitatively controlled by the minimizers of the effective potential. This provides a rigorous, non-perturbative proof of Long-Range Order in lattice Gross–Neveu models.
\end{abstract}

\setcounter{page}{1}
{\hypersetup{linkcolor=black}
\tableofcontents
%\listoffigures
}

\newpage
\section{Introduction}
Relativistic quantum field theories of interacting fermions with four-fermion couplings provide a natural framework for the study of dynamical mass generation and spontaneous Chiral Symmetry Breaking. A prototypical example is the Nambu--Jona-Lasinio model \cite{NambuJonaLasinio1961a,NambuJonaLasinio1961b}, originally introduced as an effective theory of interacting nucleons and later recognized as a precursor of modern mechanisms of Chiral Symmetry Breaking. The model has been extensively investigated using a wide range of analytical and non-perturbative techniques, including large-$N$ expansions \cite{Klevansky1992RMP, HatsudaKunihiro1994PhysRep, Bersini:2025lxs}, functional Renormalization Group methods \cite{BergesTetradisWetterich2002PhysRep, Braun2011FRGFermions}, and lattice regularizations \cite{BitarVranas1993LatticeNJL}. Rigorous results, however, remain comparatively scarce; notable progress includes \cite{salmhofer1991proof}, where Chiral Symmetry Breaking was established using Reflection Positivity techniques.\\

The Gross--Neveu model \cite{GN} can be viewed as a paradigmatic example within this framework, retaining the essential features of Spontaneous Symmetry Breaking while offering a more tractable setting for the analysis of non-perturbative phenomena. It describes $N$ species (flavors) of Dirac fermions coupled to each others via a quartic mass-mass local interaction. In $1+1$ dimensions it is renormalizable, asymptotically free, and admits a formal solution in the large-$N$ limit, which relies upon a saddle-point argument. This yields a detailed analytic description of the dynamically generated mass gap and of the spectrum of bound states \cite{GN,DHN1975}, and establishes the spontaneous breaking of the discrete Chiral symmetry, with the fermion condensate $\langle \overline{\psi} \psi \rangle$ as the natural order parameter.\\

The method of Reflection Positivity originated in the Osterwalder--Schrader axiomatization of Euclidean quantum field theory \cite{OS1,OS2}, as the key positivity condition allowing the reconstruction of a relativistic quantum theory from its Euclidean correlation functions. Its formulation for Euclidean fermionic field theories was subsequently developed by Fröhlich and Osterwalder \cite{Frohlich:1974zs}, while its implementation in lattice field theory, including Lattice Gauge Theories and fermionic models, was further developed by Osterwalder and Seiler \cite{OSTERWALDER1978440} (see also \cite{SeilerBook} for a comprehensive review). Soon afterwards, Fr\"ohlich, Simon and
Spencer recognized that the same positivity property could be exploited
in Classical Statistical Mechanics to prove phase transitions
\cite{FSS}. This point of view was subsequently developed by
Fr\"ohlich, Israel, Lieb and Simon into a general method for classical
and quantum lattice systems, introducing the machinery of successive
reflections and Chessboard Estimates
\cite{cmp/1103904299, Dyson1978-ef, frohlich1980phase, frohlich1978phase}. Since then, Reflection
Positivity has become one of the main non-perturbative tools for proving
infrared bounds, phase transitions, Long-Range Order and Symmetry
Breaking in a wide variety of lattice models, including quantum spin and fermionic systems. A limitation of Reflection Positivity is that it relies on a suitable positivity condition for the Hamiltonian (or action) of the system, which restricts the class of models that can be considered. In parallel, Renormalization Group and Constructive Field Theory methods have provided robust, rigorous approaches to interacting fermionic quantum field theories \cite{gawccdzki1985gross, BG90,BGPS,FelBook,BM02, benfatto2005ward, MasBook}.
These methods allow to construct a large class of quantum field theories and to analyze scaling limits of critical statistical mechanics models; however, their application in proving phase transitions and symmetry breaking is substantially heavier than Reflection Positivity. These developments provide part of the mathematical background for the present work.\\

The Gross--Neveu model has played an important role in the development of constructive and Renormalization Group techniques for fermionic quantum field theories. Its short-distance (or ultraviolet) construction has been carried rigorously in \cite{FMRS85,gawccdzki1985gross, DR00} (see also the recent works \cite{2024AnHP...25.5113D, Duch24, dimock2025correlationfunctionsgrossneveumodel}), where asymptotic freedom was established from a rigorous viewpoint. On the other hand the long-distance (or infrared) analysis has been successfully performed by Kopper, Magnen and Rivasseau \cite{KMR95}, via the combination of Cluster Expansion techniques and large-$N$ analysis for a suitably regularized version of the continuum model.
 See also \cite{K99} where a similar approach was followed for the study of mass generation in the context of Non-Linear Sigma Models.
It is worth stressing, however, that up to now (this work included) there is no mathematical proof of Chiral Symmetry Breaking for \emph{any} version of the Gross--Neveu model in the removed ultraviolet-cutoff limit, which remains a main open problem.\\

In this work, we address the infrared problem by proving the existence of Long-Range Order (LRO) in a class of lattice regularizations of the two-dimensional Euclidean Gross-Neveu model, with a large, even number $N$ of fermion flavors and fixed lattice spacing. We consider three standard lattice discretizations, originally introduced by Cohen, Elitzur and Rabinovici \cite{cohen1981monte, Cohen:1983nr}, including naive fermions and staggered fermions with different interaction structures. These models correspond to formulations widely used in numerical simulations, see e.g. \cite{WU06} and the more recent papers \cite{LPWWW20,SCA22}.
Our result is conceptually comparable with those of \cite{KMR95,K99}, even though we follow a different approach, mainly based on Reflection Positivity and a Peierls' argument. Reflection Positivity methods have the advantage of being more simple and direct compared to Cluster Expansion techniques; however they are less robust with respect to small modifications of the model, as they require a rather specific form of the action (or alternatively the Hamiltonian).
\\

In the same way as \cite{KMR95}, our starting point consists of a Hubbard--Stratonovich transformation,
which introduces an auxiliary scalar field coupled to the Grassmann
fermions. Reflection Positivity is first established for the
Grassmann--boson system arising after the Hubbard--Stratonovich
transformation and is then transferred to the corresponding effective
bosonic measure, providing the starting point for the application of
Chessboard Estimates \cite{cmp/1103904299, frohlich1978phase}, a technique originally developed in the context of classical spin systems and then applied to quantum spin systems and also to fermionic models\footnote{
The Hamiltonian formulation of fermionic Reflection Positivity developed
by Fr\"ohlich, Israel, Lieb and Simon is technically different from the
present Euclidean Grassmann framework. In particular, the proof in
\cite{frohlich1980phase} relies on a Jordan--Wigner-type transformation
and was later revisited by Lieb \cite{Lieb_1994}, Macris and
Nachtergaele \cite{Macris_1996} because of a sign problem for the
standard nearest-neighbour hopping Hamiltonian . This Hamiltonian version
of Reflection Positivity has subsequently found numerous applications in
Mathematical Physics, including
\cite{PhysRevB.51.4777,Koma:2022mua,Goto_2023,Goto_2024,
goto2025antiferromagneticlongrangeorderlattice,
Goller2026-xy,bachmann2026anyonspifluxphasefermionic}. By contrast, the
present work is formulated entirely within the Euclidean Grassmann
framework and is closer in spirit to the constructive field-theoretic
approach developed in \cite{OSTERWALDER1978440,SHARATCHANDRA1981205}.
} \\

The main result of the paper is the derivation of uniform upper and lower bounds on the two-point correlation function of the auxiliary field, implying Long-Range Order for the fermionic bilinear $ \overline{\psi}\psi$. The proof combines Reflection Positivity with a Peierls-type argument \cite{frohlich1978phase}, controlling configurations which interpolate between distinct minima of the effective potential and showing that they are exponentially suppressed in the number of fermion flavors. The bounds are formulated in terms of the minimizers of the effective potential and show that the Long-Range Order parameter converges, up to controlled errors, to its mean-field prediction in the large-$N$ limit. This provides a rigorous justification of the standard large-$N$ picture for Symmetry Breaking in lattice Gross--Neveu models. However, it should be pointed out that our bounds are not uniform in the lattice spacing $\ell$ (see Remark \ref{rmk:main}.\ref{ite:4}), which should be regarded as an ultraviolet cutoff; thus our result provides a description of the behavior of the model only in the presence of a fixed ultraviolet cutoff, or in a combined scaling limit: $\ell\to 0^+$ and $N\to\infty$.
\\

The present result should also be compared with the recent work of Goto and Koma on Chiral Symmetry Breaking in quantum Nambu--Jona-Lasinio models \cite{Goto_2023}. While their analysis is carried out in the operator formalism and applies to spatial dimensions $d\ge 3$ at positive temperature and to $d=2$ at zero temperature, our setting, similarly to \cite{salmhofer1991proof}, is Euclidean and formulated directly in terms of Grassmann fields over a two-dimensional discrete space-time. Moreover, the proof of \cite{Goto_2023} (as well as its Euclidean Lattice counterpart \cite{salmhofer1991proof}) relies on Infrared Bounds, which, however, are not expected to be applicable in our setting, since the integral of the propagator that naturally appears in the lower bound of the $2$-point function is log-divergent. A natural tool in our context is rather given by the Peierls argument. However, it is not clear apriori how such geometric contour method can be implemented directly in a Grassmann formulation. This difficulty is one main motivation for introducing the Hubbard--Stratonovich representation, which reformulates the problem in terms of auxiliary bosonic fields and makes Peierls-type arguments accessible.\\

A key feature of our analysis is its robustness with respect to different lattice discretizations. Although the models differ in their fermionic content and ultraviolet behavior, the proof applies in the same way to all the different variants once a Hubbard--Stratonovich representation is available. This indicates that the mechanism leading to Long-Range Order is not tied to a specific formulation, but rather reflects a structural property of a broader class of fermionic systems admitting an auxiliary-field description.\\

Our contributions can be summarized as follows:
\begin{enumerate}
    \item\label{i1} we establish \textit{Reflection Positivity} for a class of \textit{lattice Gross--Neveu models};
    \item\label{i2} we adapt \textit{Chessboard Estimates} to the \emph{bosonic model} obtained after the Hubbard-Stratonovich transformation; 
    \item\label{i3} we prove \textit{uniform, pointwise upper and lower bounds} on the \textit{bosonic two-point function} (i.e. the mass-mass fermionic correlator), implying \textit{Long-Range Order}.
\end{enumerate}

Specifically, Item \ref{i3} should be regarded as a rigorous, quantitative counterpart of the qualitative prediction arising from a mean-field analysis similar to the one performed in \cite{GN}, placing earlier heuristic and numerical evidence \cite{cohen1981monte,Cohen:1983nr} on a firm mathematical footing.\\

The paper is organized as follows. In Section \ref{sec2} we introduce the lattice models and their bosonic (Hubbard-Stratonovich) formulations and state the main result. In Section \ref{sec3} we establish Reflection Positivity of the effective measures. In Section \ref{sec4} we collect the main properties which lead to the Long-Range Order, and we show how they imply the main theorem. Section \ref{sect:contrib} is devoted to the estimates of some remarkable bosonic expectations, based on Chessboard Estimates and Peierls-type arguments, which constitute a main building block for the proof of Long-Range Order. Appendix \ref{app:Dirac} is devoted to the characterization of the lattice Dirac operator, whose properties are essential for the analysis carried out in Section \ref{sec3}. In Appendix \ref{app:pot} we perform the study of the effective bosonic potential, which plays a central role throughout the paper. In Appendix \ref{app:continuum_sp} we compute the formal continuum limit for our third model under consideration, namely the \emph{Staggered-Plaquette} model, which, differently from the other two, formally reduces to the standard Gross-Neveu model, as the lattice spacing is sent to zero. Finally, in Appendix \ref{applbub} we discuss the bounds for the several contributions to the bosonic two-point function, serving as a main technical ingredient for the proof of Long-Range Order in Section \ref{sec4}.

\paragraph{Outlook.} A natural direction for future work is the construction of infinite-volume Gibbs states associated with the symmetry-broken phases and the characterization of Spontaneous Symmetry Breaking in terms of extremal translation-invariant measures. In particular, it would be desirable to complement the present results on Long-Range Order with a direct construction of distinct infinite-volume states exhibiting a non-vanishing order parameter.\\

Another important problem concerns the identification of a dynamically generated mass gap through the decay properties of correlation functions. Establishing exponential decay of suitable truncated correlators would provide a direct characterization of mass generation within the present framework.\\

Finally, it would be extremely interesting to combine our infrared result with a multiscale analysis able to control the ultraviolet regime of the model, thus allowing to prove Long-Range Order uniformly with respect to the ultraviolet cutoff (namely the inverse of the lattice spacing). This should be regarded as a major open problem, as it would lead to a complete picture of the phenomenon of Spontaneous Symmetry Breaking in the standard Gross-Neveu model, defined as the continuum limit of a suitable lattice regularization thereof.

\paragraph{Acknowledgements.} LG and SF would like to thank Marcello Porta and Alessandro Giuliani for their supervision, insightful discussions, and critical feedback. LG is grateful to J{\"u}rg Fr{\"o}hlich for references on Reflection Positivity and phase transitions, to Matteo Delladio for enlightening conversations on the continuum limit of lattice models, and to Amartya Harsh Singh and Alessandro Piazza for discussions on large-$N$ limits in strongly coupled QFTs.

\section{The Models}\label{sec2}

\paragraph{Conventions.}

We work on the two-dimensional discrete torus which we represent with as its fundamental domain given by the square lattice

\[
\Lambda_L
\doteq
\mathbb Z^2 \cap
\big(-\tfrac L2,\tfrac L2\big]^2,
\]

with lattice spacing set equal to one and $L\in4\mathbb N$. 
For the plaquette formulation introduced below we shall additionally assume
$L\in 8\mathbb N$.
Lattice sites are denoted by
$x=(x_0,x_1)$.\\

Throughout the paper we consider three lattice realizations of the Gross--Neveu model:

\begin{enumerate}
\item The Naive-Fermion model (NN);
\item The Staggered-Fermion model (SN);
\item The Staggered-Fermion Plaquette model (SP).
\end{enumerate}

Despite their different microscopic realizations, the three models share the same overall structure. In each case, we define a lattice Dirac operator $\slashed{D}_{\a}$, introduce a quartic Gross--Neveu-like interaction, and construct the corresponding fermionic Gibbs state. We then analyze the relevant symmetries and spectral properties and derive an equivalent bosonic representation through a Hubbard--Stratonovich transformation. The arguments developed later apply to all three formulations with only minor modifications.

\paragraph{General Set-Up.}

The Grassmann fields
$(\psi,\ov{\psi})$ satisfy anti-periodic boundary conditions in both lattice directions. These boundary conditions ensure that hopping terms crossing the reflection plane enter with the sign required by the Osterwalder--Seiler Reflection Positivity construction.

The fields carry an internal index $J$, whose meaning depends on the formulation:

\[
J=
\begin{cases}
(s,j)
&
\alpha=\mathrm{NN}\\
j
&
\alpha=\mathrm{SN}
\\
(a,j)
&
\alpha=\mathrm{SP}
\end{cases}
\]

For the naive model,
$s=1,2$ is a spin index and
$j=1,\cdots,N$ 
is a flavour index. For the staggered model only the flavour index is present. For the plaquette formulation,
$a\in\mathcal I=\{A,B,C,D\}$ 
labels the four internal degrees of freedom associated with a $2\times2$ unit cell, while $j$ again denotes the flavour index.\\

We shall always assume that $N$ is even. This assumption is used only to guarantee positivity of the effective bosonic measure (see below) obtained after integrating out the fermions. Whether the restriction can be removed would require different techniques and will not be investigated here.\\

For the NN and SN formulations the fermionic fields are indexed by sites of $\Lambda_L$. In the plaquette formulation it is convenient to introduce the coarse lattice

\[
\widetilde\Lambda_L
\doteq
(2\mathbb Z)^2
\cap
\big(-\tfrac L2,\tfrac L2\big]^2,
\]

whose sites label the centers of the unit cells. Accordingly, we define:

\[
\Lambda_L^\alpha =
\begin{cases}
\Lambda_L\,\,\,\,\,\,& \alpha = \mathrm{NN}\\
\Lambda_L\,\,\,\,\,\,& \alpha = \mathrm{SN} \\
\widetilde{\Lambda}_L & \alpha = \mathrm{SP}
\end{cases}
\]

and we denote by $|\L^{\a}_L|$ its cardinality, namely $|\L^{\a}_L|=L^2$ for $\a= $NN, SN and $|\L^{\a}_L|=L^2/4$ for $\a=$SP. For notational simplicity, we use a unified notation and make the dependence on the formulation explicit only when needed. In particular, we write:
\[
(\overline{\psi}\psi)_x \doteq \sum_J \overline{\psi}_{x,J}\psi_{x,J}
\]
 Given a fermionic action $S_{\alpha}(\overline{\psi},\psi)$, the expectation of an even Grassmann polynomial $\mathcal O(\overline{\psi},\psi)$ is defined by: 
\[ \langle\mathcal O\rangle_{\alpha} \doteq \frac{1}{Z^{\alpha}_{\Lambda_L}} \int d\overline{\psi}\,d\psi\; \mathcal O(\overline{\psi},\psi)\,e^{-S_{\alpha}(\overline{\psi},\psi)}, \qquad Z^{\alpha}_{\Lambda_L} \doteq \int d\overline{\psi}\,d\psi\;e^{-S_{\alpha}(\overline{\psi},\psi)}, \] where $d\overline{\psi}\,d\psi$ denotes the product of Grassmann differentials over all lattice sites and internal indices (spin, flavor, and unit-cell degrees of freedom) and $\alpha$ indicates each possible model. 

\medskip

\paragraph{Effective bosonic measure.}

In all three formulations the interaction is local and depends only on the composite field
$(\overline\psi\psi)^2_x$. We can therefore introduce an auxiliary real field $\phi$ through the Hubbard--Stratonovich identity

\begin{equation}
\label{eqn:Hub_Strat}
e^{\frac{\lambda}{2N}(\overline{\psi}\psi)_x^2} =
\sqrt{\frac{N}{2\pi \lambda}}
\int_{-\infty}^{\infty} d\phi_x
e^{-\frac{N}{2\lambda}\phi_x^2 + \phi_x(\overline{\psi}\psi)_x}.
\end{equation}

Applying this transformation independently at every lattice site, integrating out exactly the fermionic fields and absorbing $\phi$-independent normalization constants into the partition function, one obtains an effective bosonic measure\footnote{$d\mu_{\a}$ is actually a probability measure, as $N$ is even and the determinant in the right--hand side of \eqref{eq:boson_measure} is real, as shown in Appendix \ref{app:Dirac}.\\} of the form:

\begin{equation}
\label{eq:boson_measure}
d\mu_{\alpha}(\phi) \doteq \frac{1}{Z^{\alpha}_{\Lambda_L}} \prod_{x \in \L_L^{\a}} d\phi_x \; e^{-\frac{N}{2\lambda} \phi_x^2} \det((\slashed{D}_{\alpha})^{-}_{\Lambda_L}-M_\phi)^N. \end{equation} 

Here $(\slashed{D}_{\alpha})^{-}_{\Lambda_L}$ denotes the appropriate lattice Dirac operator (including anti-periodic boundary conditions) used in each model while $M_\phi$ is the multiplication operator:  

\begin{align*}
(M_{\phi})_{(x,s),(y,s')}&\doteq \d_{s,s'}\d_{x,y}\phi_x, \quad \forall x,y\in\L_L, \; s,s'\in\{1,2\} \quad \, (\a=\mathrm{NN});\\ 
(M_{\phi})_{x,y}&\doteq \d_{x,y}\phi_x, \quad \,\,\,\,\,\,\,\,\,\,\,\,\,\,\forall x,y\in\L_L, \quad \quad \quad \quad \quad \,\,\,\,\,\,\,\,(\a=\mathrm{SN}); \\
(M_{\phi})_{(x,a),(y,b)}&\doteq \d_{a,b}\d_{x,y}\phi_x, \quad \forall x,y\in\wt{\L}_L,\; a,b\in\mc{I} \quad \quad \,\,\,\,\,\, (\a=\mathrm{SP}).
\end{align*}

The determinant is computed by regarding  $(\slashed{D}_{\alpha})^{-}_{\Lambda_L}-M_\phi$ as a square matrix (of size $L^2\times L^2$, if $\a=\mathrm{SN},\mathrm{SP}$, and $(2L)^2\times(2L)^2$, if $\a=\mathrm{NN}$).

\paragraph{Order parameter.}

Differentiating the Hubbard--Stratonovich identity with respect to $\phi_x$ and intergrating by parts yields

\begin{equation}\label{eq:HS-2pt-common}
\big\langle(\overline{\psi}\psi)_x
(\overline{\psi}\psi)_y\big\rangle_{\alpha}= 
\Big(\frac{N}{\lambda}\Big)^2
\langle\phi_x\phi_y\rangle_{\mu_{\alpha}}
-\frac{N}{\lambda}\delta_{x,y}.
\end{equation}

Hence it is natural to identify the auxiliary field $\phi$ as the Chiral order parameter. Indeed, proving Long-Range Order for the fermionic bilinear $\overline{\psi}\psi$, namely (see \cite{FSS, cmp/1103904299}) 

\begin{equation*} \liminf_{L \to \infty} \frac{1}{|\Lambda^{\a}_L|^2} \sum_{x,y \in \Lambda^{\a}_L} \left \langle (\overline{\psi}\psi)_x (\overline{\psi} \psi)_y \right \rangle_{\alpha} > C 
\end{equation*}

for some $C>0$ independent of the volume $|\Lambda^{\a}_L|$, is equivalent\footnote{Note that in the thermodynamic limit $|\Lambda_L| \to \infty$, the contact term in the right--hand side of \eqref{eq:HS-2pt-common} is negligible in view of establishing LRO.} to proving the corresponding lower bound for the bosonic two-point function: \begin{equation*} \frac{1}{|\Lambda^{\a}_L|^2} \sum_{x, y \in \Lambda^{\a}_L} \left \langle \phi_x \phi_y \right \rangle_{\mu_{\alpha}} >C', \end{equation*}

Therefore, throughout the paper we shall work primarily with the effective bosonic measure $\mu_\alpha$, establishing Long-Range Order for the auxiliary field $\phi$.

\subsection{Naive Fermions with Naive Interaction (NN)}

The first model we consider is the lattice version of the Gross--Neveu model originally introduced
in \cite{cohen1981monte}, formulated in terms of naive lattice fermions and therefore
subject to the usual fermion--doubling phenomenon. This formulation provides the
simplest discretization of the fermionic action and serves as a natural starting
point for our analysis.\\
\paragraph{Spinors.} At each lattice site $x \in \Lambda_L$, we introduce $2N$ independent Grassmann-valued spinor fields $\psi_{x,j}$ and $\overline{\psi}_{x, j}$ ($j=1, \cdots,N$) with $2$ entries, where $\psi_{x,j}$ is regarded as a column vector and $\overline{\psi}_{x,j}$ as a row vector:

 \begin{equation*}
     \psi_{x,j} = \begin{pmatrix}
          \psi_{x,1,j} \\
            \psi_{x,2,j}\end{pmatrix},\,\,\,\,\,\,\,\,\,\,\,\,\,\,\,\,\,\, \,\,\,\,\,\,\,\,\, \,\,\,\,\,\,\,\,\,  \overline{\psi}_{x,j} = \begin{pmatrix}
          \overline{\psi}_{x,1,j} &
          \overline{ \psi}_{x,2,j}
     \end{pmatrix}.
 \end{equation*}

 \paragraph{Gamma matrices.} We choose two self-adjoint matrices $\{\gamma_\mu\}_{\mu=0,1} \subset M_2(\mathbb{C})$ as a possible irreducible representation of the
$2$-dimensional Euclidean Clifford algebra $\{\gamma_\mu,\gamma_\nu\} = 2\,\delta_{\mu\nu}\,\mathbbl{1}_2$. The Axial matrix $\gamma_A$ is defined as $\gamma_A \doteq i \gamma_0 \gamma_1$ and satisfies:

\begin{equation}
    \gamma_A^2=\mathbbl{1}_2, \qquad \{\gamma_A,\gamma_\mu\}=0, \qquad \gamma_A^{\dagger}=\gamma_A.
     \tag{GA}
\label{eq:GA1}
\end{equation}

In this paper, we adopt the following choice for the $\gamma$-matrices:

\begin{equation*}
    \gamma_0 = \sigma^y \equiv \left(\begin{array}{cc}
        0 & -i \\
        i & 0
    \end{array}\right), \qquad\gamma_1 = \sigma^x\equiv \left(\begin{array}{cc}
        0 & 1 \\
        1 & 0
    \end{array}\right),\qquad\gamma_A = \sigma^z \equiv \left(\begin{array}{cc}
        1 & 0 \\
        0 & -1
    \end{array}\right).
\end{equation*}

\paragraph{Dirac operator.} The Lattice Dirac operator is chosen to correspond with its naive discretization,
acting on $\ell^2(\mathbb{Z}^2, \mathbb{C}^{2})$ according to:

\begin{equation*}
    \slashed{D}_{\mathrm{NN}} = \frac{1}{2} \sum_{\mu=0}^{1} \gamma_{\mu} \otimes (T_{\mu}-T_{\mu}^*),
\end{equation*}

where $T_{\mu}$ is the right translation operator in the $\mu$-th direction, namely:

\begin{equation*}
(T_{\mu}\psi)_x=\psi_{x+\hat{e}_{\mu}}\,\,\,\,\forall \psi\in\ell^2(\mbb{Z}^2).
\end{equation*}

The Fourier representation for the Dirac operator is
\begin{equation}
\label{eq:model_NN}
(\slashed{D}_{\mathrm{NN}})_{x,y}= -i\sum_{\mu=0,1} \iint_{(-\pi,\pi]^2} \frac{d^2p}{(2\pi)^2} e^{-ip\cdot(x-y)}  \gamma_{\mu} \sin(p_{\mu}).     
\end{equation}

The fermionic fields are taken to satisfy anti-periodic boundary conditions in both
directions. This is implemented by using the anti-periodic extension of the lattice Dirac operator\footnote{There are in general two equivalent ways to implement anti-periodic boundary conditions: either one considers anti-periodic Grassmann variables and a periodic kernel (both restricted to the finite lattice $\L_L$) or periodic Grassmann variables and an anti-periodic kernel. In this paper we adopt the latter formulation.}, denoted by $(\slashed{D}_{\mathrm{NN}})_{\Lambda_L}^-$, as the kernel of the kinetic term: 

\begin{equation*}
\big((\slashed{D}_{\mathrm{NN}})_{\Lambda_L}^-\big)_{xy} = \sum_{n \in \mathbb{Z}^2} (-1)^{n_0+n_{1}} (\slashed{D}_{\mathrm{NN}})_{x+Ln, y}.
\end{equation*}

\begin{figure}[h]
    \centering
\begin{tikzpicture}
\draw[dotted, ->-] (-0.5,-0.5) -- (3.5,-0.5);
\draw[dotted, ->>-] (3.5,-0.5) -- (3.5,3.5);
\draw[dotted, ->-] (-0.5,3.5) -- (3.5,3.5);
\draw[dotted, ->>-] (-0.5,-0.5) -- (-0.5,3.5);
\draw[] (0,0) -- (3,0);
\draw[line width = 0.05 cm] (0, 0) -- (-0.5,0);
\draw[line width = 0.05 cm] (0,2) -- (-0.5,2);
\draw[line width = 0.05 cm] (3,2) -- (3.5,2);
\draw[line width = 0.05 cm] (3,0) -- (3.5,0);
\draw[] (0, 2) -- (3,2);
\draw[line width = 0.05 cm] (3,1) -- (3.5,1);
\draw[line width = 0.05 cm] (-0.5,1) -- (0,1);
\draw[line width = 0.05 cm] (-0.5,3) -- (0,3);
\draw[line width = 0.05 cm] (3,3) -- (3.5,3);
\draw (0,3) -- (3,3);
\draw (0,1) -- (3,1);
\draw (0,-0.5) -- (0,3.5);
\draw (1,-0.5) -- (1,3.5);
\draw (2,-0.5) -- (2,3.5);
\draw (-0.5,2) -- (0,2);
\draw (3,2) -- (3.5,2);
\draw (-0.5,0) -- (0,0);
\draw (3,0) -- (3.5,0);
\draw (3,-0.5) -- (3,3.5);

\draw[line width = 0.05 cm] (0,-0.5) -- (0,0);
\draw[line width = 0.05 cm] (1,-0.5) -- (1,0);
\draw[line width = 0.05 cm] (2,-0.5) -- (2,0);
\draw[line width = 0.05 cm] (3,-0.5) -- (3,0);

\draw[line width = 0.05 cm] (0,3.5) -- (0,3);
\draw[line width = 0.05 cm] (1,3.5) -- (1,3);
\draw[line width = 0.05 cm] (2,3.5) -- (2,3);
\draw[line width = 0.05 cm] (3,3.5) -- (3,3);
\draw[thin, red] (-0.5,-0.5) -- (-0.5,3.5);
\draw[thin, red] (1.5,-0.5) -- (1.5,3.5);
\node[below, red] (a) at (-0.5,-0.6) {{$\partial_- (\L_L)_+$}};
\node[below, red] (a) at (1.5,-0.6) {{$\partial_+ (\L_L)_+$}};
\node[above] (a) at (0.5,3.6) {{$(\L_L)_+$}};
\node[above] (a) at (2.5,3.6) {{$(\L_L)_-$}};
\draw[->] (-1,-0.5) -- (-1,3.5);
\node[above] (a) at (-1,3.5) {{$x_0$}};
\end{tikzpicture}
\caption{Distribution of Phases of the Anti-symmetrized Forward Dirac Operator $\gamma_{\mu} \otimes T_{\mu}$ in $L=2$ Lattice. Solid Links represent a $-1$ sign. Notice that the set of edges intersecting $\partial_- (\L_L)_+$ are opposite of those intersecting $\partial_+ (\L_L)_+$ because we have chosen an anti-symmetrization convention where the $\mathbb{Z}_2$-twisting is localized on $\partial_- (\L_L)_+$. By means of any $\mathbb{Z}_2$-Gauge Transformation, we can move such line wherever we desire, without changing the physical properties of the system.}
\label{fig1N}
\end{figure}

\begin{lemma}[Properties of the NN Dirac operator]\label{PND}
    
The operator $(\slashed{D}_{\mathrm{NN}})_{\Lambda_L}^-$ has the following properties, for every $\phi\in \mathbb{R}^{\Lambda_L}$.    

\begin{enumerate}
        \item \textbf{{Anti--self-adjointness:}} 
        
        \begin{equation}    \big((\slashed{D}_{\mathrm{NN}})_{\Lambda_L}^-\big)^* = - (\slashed{D}_{\mathrm{NN}})_{\Lambda_L}^-.
            \tag{PND1}
            \label{PND1}
        \end{equation}
        \item \textbf{{Chirality under adjoint:}}
\begin{equation}
     \big((\slashed{D}_{\mathrm{NN}})^-_{\Lambda_L}-  M_\phi\big)^* = (\gamma_A \otimes 1)\big((\slashed{D}_{\mathrm{NN}})^-_{\Lambda_L}-  M_{\phi}    \big) (\gamma_A\otimes 1),
      \tag{PND2}
      \label{PND2}
\end{equation}
where $\gamma_A \otimes 1: \mathbb{C}^2\otimes\ell^2(\Lambda_L) \to \mathbb{C}^2 \,\otimes\, \ell^2(\Lambda_L)$ acts as multiplication by $\gamma_A$ on the spinor indices.
  \item \textbf{{Spectrum of the adjoint:}} 
  \begin{equation}
      \mathrm{spec}\Big[\big((\slashed{D}_{\mathrm{NN}})^-_{\Lambda_L}- M_{\phi} \big)^*\Big] = \mathrm{spec}\Big[(\slashed{D}_{\mathrm{NN}})^-_{\Lambda_L}- M_{\phi}\Big].
       \tag{PND3}
            \label{PND3}
  \end{equation}
  
       \item \textbf{Reality of the determinant:}
       \begin{equation}
\det\big((\slashed{D}_{\mathrm{NN}})^-_{\Lambda_L}- M_{\phi}\big) \in \mathbb{R}.
 \tag{PND4}
            \label{PND4}
       \end{equation}
    \end{enumerate}
\end{lemma}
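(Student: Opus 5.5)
The plan is to verify \eqref{PND1} directly on the kernel, derive \eqref{PND2} from \eqref{PND1} together with the Clifford relations \eqref{eq:GA1}, and then obtain \eqref{PND3} and \eqref{PND4} as formal consequences of \eqref{PND2}.

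\textbf{Step 1: anti--self-adjointness \eqref{PND1}.} On $\ell^2(\mathbb Z^2)$, $T_\mu^*$ is the left translation, so $T_\mu - T_\mu^*$ is anti-self-adjoint. Since $\gamma_\mu^\dagger=\gamma_\mu$, each term $\gamma_\mu\otimes(T_\mu-T_\mu^*)$ is anti-self-adjoint, and hence so is $\slashed{D}_{\mathrm{NN}}$. For the anti-periodized kernel we check the matrix entries. Using translation invariance, $(\slashed D_{\mathrm{NN}})_{x+Ln,y}=(\slashed D_{\mathrm{NN}})_{x,y-Ln}$. Combining this with $(\slashed D_{\mathrm{NN}})_{u,v}^\dagger=-(\slashed D_{\mathrm{NN}})_{v,u}$ and the evenness of $(-1)^{n_0+n_1}$ under $n\mapsto -n$, we get
\[\big((\slashed D_{\mathrm{NN}})^-_{\Lambda_L}\big)_{yx}^\dagger=\sum_n(-1)^{n_0+n_1}(\slashed D_{\mathrm{NN}})^\dagger_{y+Ln,x}=-\sum_n(-1)^{n_0+n_1}(\slashed D_{\mathrm{NN}})_{x-Ln,y}=-\big((\slashed D_{\mathrm{NN}})^-_{\Lambda_L}\big)_{xy}.\]
Alternatively, one can diagonalize in the anti-periodic Fourier basis $p\in\frac{2\pi}{L}(\mathbb Z+\tfrac12)^2$, where the symbol $-i\sum_\mu\gamma_\mu\sin p_\mu$ is manifestly anti-Hermitian.

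\textbf{Step 2: chirality \eqref{PND2}.} Since $\phi$ is real, $M_\phi$ is self-adjoint, so
\[\big((\slashed D_{\mathrm{NN}})^-_{\Lambda_L}-M_\phi\big)^*=-(\slashed D_{\mathrm{NN}})^-_{\Lambda_L}-M_\phi.\]
The operator $\gamma_A\otimes1$ anticommutes with every $\gamma_\mu\otimes(\cdot)$ by \eqref{eq:GA1}. It also commutes with $M_\phi$, because $M_\phi$ is diagonal in the spin index, and it squares to the identity. Therefore conjugating $(\slashed D_{\mathrm{NN}})^-_{\Lambda_L}-M_\phi$ by $\gamma_A\otimes 1$ flips the sign of the Dirac part and leaves $M_\phi$ unchanged, which gives exactly the right-hand side above.

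\textbf{Steps 3--4: spectrum and determinant.} By \eqref{PND2}, the adjoint is similar to $(\slashed D_{\mathrm{NN}})^-_{\Lambda_L}-M_\phi$ via the invertible matrix $\gamma_A\otimes1$, which is its own inverse. Similar matrices have the same characteristic polynomial, so \eqref{PND3} follows, with algebraic multiplicities preserved. On the other hand, the spectrum of $A^*$ is the complex conjugate of the spectrum of $A$. Hence
\[\det A = \det A^* = \overline{\det A},\]
where the first equality uses similarity and the second is the general identity $\det A^*=\overline{\det A}$. This shows $\det A\in\mathbb R$, which is \eqref{PND4}.

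The only point needing care is Step 1, specifically the bookkeeping of the anti-periodic signs. The Fourier diagonalization avoids this cleanly, so that is the route I would take first. The remaining steps are purely algebraic.
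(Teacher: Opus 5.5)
Your proof is correct and follows the paper's argument essentially step by step. The entrywise check of \eqref{PND1} (anti-self-adjointness of the infinite-plane operator, then translation invariance, then the substitution $n\mapsto -n$) is exactly the paper's computation, and \eqref{PND2}--\eqref{PND4} are derived as in the paper from $\gamma_A\gamma_\mu\gamma_A=-\gamma_\mu$, the commutation of $\gamma_A\otimes 1$ with $M_\phi$, and $\det A^*=\overline{\det A}$; the Fourier-diagonalization shortcut you mention for \eqref{PND1} is also valid, since anti-periodic plane waves diagonalize the anti-periodized kernel with the anti-Hermitian symbol $-i\sum_\mu\gamma_\mu\sin p_\mu$, as the paper itself uses for constant fields.
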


The proof of Lemma \ref{PND} is discussed in Appendix \ref{Ap21}.

\paragraph{Lattice action.} The Dirac fermions are coupled through a local four-fermion interaction of Gross--Neveu type:

\begin{equation}\label{NaiveA}
    S_{\mathrm{NN}}(\overline{\psi}, \psi) = \sum_{x, y \in \Lambda_L} \sum_{j=1}^N \overline{\psi}_{x,j} \big((\slashed{D}_{\mathrm{NN}})_{\Lambda_L}^-\big)_{x,y} \psi_{y,j} - \frac{\lambda}{2N} \sum_{x \in \Lambda_L} \bigg(\sum_{j=1}^N \overline{\psi}_{x,j} \psi_{x,j} \bigg)^2, \qquad \l>0,
\end{equation}

where all spinor products have to be understood as row times column products, such as:

\begin{equation*}
    \overline{\psi}_{x,j} \psi_{x,j}  = \sum_{s=1}^{2}  \overline{\psi}_{x,s,j} \psi_{x,s,j}.
\end{equation*}

\paragraph{Chiral symmetry.} The action is invariant under the discrete $\mathbb{Z}_2$-Chiral symmetry defined by:

\begin{equation}
\label{eq:Chiral_1}
    \begin{cases}
        \psi_{x,j} \to \gamma_A \psi_{x,j},\\
        \overline{\psi}_{x,j} \to -\overline{\psi}_{x,j} \gamma_A,
    \end{cases}
\end{equation}

while the fermion bilinear $\overline{\psi}_{x, j}\psi_{x,j}$ obviously changes sign as one can easily infer from the first two of \eqref{eq:GA1}. 

\begin{remark}
    Alternatively to \eqref{eq:Chiral_1}, one can consider the following transformation:

    \begin{equation*}
    \begin{cases}
        \psi_{x,j} \to e^{i \frac{\pi}{2}\gamma_A} \psi_{x,j} = i \gamma_A \psi_{x,j}\\
        \overline{\psi}_{x,j} \to \overline{\psi}_{x,j} e^{i \frac{\pi}{2}  \gamma_A} = i \overline{\psi}_{x,j} \gamma_A.
    \end{cases}
\end{equation*}
 Such transformation seems apparently a $\mathbb{Z}_4$-symmetry. However, when restricted to even polynomials (which are the actual observables) it reduces to a $\mathbb{Z}_2$-symmetry.    
\end{remark}

\paragraph{Effective bosonic measure.} By applying the general Hubbard--Stratonovich construction \eqref{eqn:Hub_Strat} to the action \eqref{NaiveA}, and integrating out the fermionic fields, we obtain the effective bosonic measure on $\mathbb{R}^{\Lambda_L}$:

\begin{equation*}
    d \mu_{\mathrm{NN}}(\phi) = \frac{1}{Z^{\mathrm{NN}}_{\Lambda_L}} \prod_{x \in \Lambda_L} d \phi_x e^{-\frac{N}{2\lambda}\phi_x^2} \det((\slashed{D}_{\mathrm{NN}})_{\Lambda_L}^- -  M_{\phi})^N. %\doteq \frac{1}{Z^{\mathrm{NN}}_{\Lambda_L}}d \phi e^{-\frac{N}{2\lambda}\phi^2} \det((\slashed{D}_{\mathrm{NN}})_{\Lambda_L}^- - \mathbbl{1}_2 \otimes M_{\phi})^N 
\end{equation*}

\paragraph{Issues, strong coupling and scaling limit.} As anticipated above, the naive lattice formulation of the Gross--Neveu model suffers
from the well-known fermion--doubling problem \cite{PhysRevD.11.395,NIELSEN1981173},
implying that the number of low-energy (long-wavelength) Dirac fermions described by
the free action is four, rather than one as suggested by the continuum theory. It has
been pointed out that such a proliferation of fermionic degrees of freedom in free
lattice theories may lead to significant difficulties once interactions among the
long-wavelength modes are introduced \cite{PhysRevD.16.3031}. These issues were
analyzed in detail in \cite[Sect.~3.1]{Cohen:1983nr}, where it was argued that the
interaction spoils Lorentz invariance at the quantum level. Nevertheless, at least heuristically, this model is still expected to exhibit asymptotic freedom and infrared properties analogous to those of the continuum Gross--Neveu model, up to an overall factor of four arising from the additional poles in the free propagator \cite{GUERIN1980168,Cohen:1983nr}. \\

Motivated by these difficulties, alternative lattice formulations based on staggered
fermions \cite{PhysRevD.11.395} were introduced in \cite{Cohen:1983nr}. While these models share the same
continuum limit and reproduce the Gross--Neveu action, they exhibit
different behaviors in the strong-coupling regime. One of these two formulations, the SP model, will be the third model under study of this paper. We first introduce a
simpler staggered-fermion model, the SN model, which, despite failing to reproduce the correct
continuum limit, still captures the appropriate infrared physics
(see \cite[Sect.~3]{Hands_1993} and \cite{JOLICOEUR1986431}) and has the simplest presentation among all three.

\subsection{Staggered fermions with Naive Interaction (SN)}

The second model we consider is formulated in terms of staggered fermions \cite{PhysRevD.11.395}.
Compared with the naive discretization, spin degrees of freedom are encoded in site-dependent phases, while the interaction remains on-site and of Gross--Neveu type. 
\paragraph{Staggered Fermions.} At each lattice site $x \in \Lambda_L$, we introduce $2N$ independent Grassmann variables ${\psi}_{x,j}$ and $\overline{\psi}_{x,j}$, with $j=1,\dots,N$.
\paragraph{Gamma matrices.} Gamma Matrices are represented by phases:

\begin{equation*}
    \Gamma_{\mu}(x) = \begin{cases}
        1, \,\,\,\,&\mu=0\\
        (-1)^{x_0}, &\mu=1.
    \end{cases}
\end{equation*}

The role of the Axial matrix $\gamma_A$ is played by the multiplication operator:

\begin{equation*}
    \epsilon(x) = (-1)^{x_0+x_1}.
\end{equation*}

\paragraph{Dirac operator.} The Lattice Dirac operator in the Staggered Fermions formalism is an operator acting on $\ell^2(\mathbb{Z}^2, \mathbb{C})$ according to:

\begin{equation*}
    \slashed{D}_{\mathrm{SN}} = \frac{1}{2} \sum_{\mu=0}^1 \Gamma_{\mu} (T_{\mu}- T_{\mu}^*).
\end{equation*}

By labeling the sites of $\L_L$ by $(x,a)$, with $x\in \L_L$ such that $x_0$ is even and $a\in\{1,2\}$, so that every $y\in\L_L$ can be written as $\big(x_0+ \frac{1}{2}(a-1),x_1\big)$, we recover translation invariance of the Dirac operator, which can be therefore expressed by Fourier transform: 
\begin{equation}
\label{eq:model_SN}
(\slashed{D}_{\mathrm{SN}})_{(x,a),(y,b)}
=\iint_{(-\frac{\pi}{2},\frac{\pi}{2}]\times(-\pi,\pi]}\frac{d^2p}{(2\pi)^2}
\,e^{-ip(x-y)}\,
  \begin{pmatrix}
        i\sin p_1 & \frac{1}{2}(e^{2ip_0}-1)\\
        -\frac{1}{2}(e^{-2ip_0}-1) & -i\sin p_1
    \end{pmatrix}_{a,b}.
\end{equation}

Again, we impose anti-periodic boundary conditions by considering an anti-periodic extension of the above Dirac Operator:

\begin{equation*}
    \big((\slashed{D}_{\mathrm{SN}})^-_{\Lambda_L}\big)_{xy} = \sum_{n \in \mathbb{Z}^2} (-1)^{n_0+n_1} (\slashed{D}_{\mathrm{SN}})_{x+ L n,y}.
\end{equation*}

\begin{lemma}[Properties of the SN Dirac operator]\label{AP221}
    The operator $(\slashed{D}_{\mathrm{SN}})^-_{\Lambda_L}$ has the following properties, for every $\phi\in \mathbb{R}^{\Lambda_L}$.
    \begin{enumerate}
        \item \textbf{{Anti-self-adjointness:}} 
        
        \begin{equation}
            \big((\slashed{D}_{\mathrm{SN}})^-_{\Lambda_L}\big)^* = - (\slashed{D}_{\mathrm{SN}})^-_{\Lambda_L}.
            \tag{PSD1}
            \label{PSD1}
        \end{equation}
        \item \textbf{{Chirality under adjoint:}}

\begin{equation}
     \big((\slashed{D}_{\mathrm{SN}})^-_{\Lambda_L}-M_{\phi} \big)^* = \epsilon \big((\slashed{D}_{\mathrm{SN}})^-_{\Lambda_L}-M_{\phi}\big) \epsilon,
      \tag{PSD2}
      \label{PSD2}
\end{equation}
where $\big((\slashed{D}_{\mathrm{SN}})^-_{\Lambda_L} -M_{\phi}\big)_{x,y} = \big((\slashed{D}_{\mathrm{SN}})^-_{\Lambda_L}\big)_{x,y} - \phi_x  \delta_{x,y}$ and $\epsilon$ is the multiplication operator by $\epsilon(x)= (-1)^{x_0+x_1}$.
  \item \textbf{{Spectrum of the adjoint:}} 
  \begin{equation}
      \mathrm{spec} \Big[   \big((\slashed{D}_{\mathrm{SN}})^-_{\Lambda_L}-M_{\phi}\big)^*\Big] = \mathrm{spec}\Big[(\slashed{D}_{\mathrm{SN}})^-_{\Lambda_L}-M_{\phi}\Big].
       \tag{PSD3}
            \label{PSD3}
  \end{equation}
  
       \item \textbf{{Reality of the determinant:}}
       \begin{equation}
\det\big((\slashed{D}_{\mathrm{SN}})^-_{\Lambda_L}-M_{\phi}\big) \in \mathbb{R}.
 \tag{PSD4}
            \label{PSD4}
       \end{equation}
    \end{enumerate}
\end{lemma}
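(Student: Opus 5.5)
The plan mirrors the NN case (Lemma \ref{PND}), with $\gamma_A\otimes 1$ replaced by the staggered sign $\epsilon$. Throughout, write $D\doteq(\slashed{D}_{\mathrm{SN}})^-_{\Lambda_L}$ for brevity.

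\textbf{Step 1: anti-self-adjointness (PSD1).} I would work directly with the real-space kernel. On $\ell^2(\mathbb{Z}^2)$ the phases $\Gamma_\mu(x)$ are real, so
\[
\bigl(\Gamma_\mu (T_\mu - T_\mu^*)\bigr)^* = (T_\mu^*-T_\mu)\Gamma_\mu .
\]
Since $\Gamma_\mu$ depends only on $x_0$, and $\Gamma_1$ is not shifted by $T_1$, the multiplication operator $\Gamma_\mu$ commutes with $T_\mu$. Hence $\slashed{D}_{\mathrm{SN}}^*=-\slashed{D}_{\mathrm{SN}}$, i.e.\ the infinite-volume kernel is real and antisymmetric: $(\slashed{D}_{\mathrm{SN}})_{y,x}=-(\slashed{D}_{\mathrm{SN}})_{x,y}$.

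For the antiperiodic extension, I would compute
\[
D_{y,x}=\sum_n(-1)^{n_0+n_1}(\slashed{D}_{\mathrm{SN}})_{y+Ln,x}.
\]
By translation invariance under shifts by $L n$ (this uses $L$ even, so that $\Gamma_1$ is $L$-periodic), this equals $\sum_n(-1)^{n_0+n_1}(\slashed{D}_{\mathrm{SN}})_{y,x-Ln}$. Antisymmetry turns this into $-\sum_n(-1)^{n}(\slashed{D}_{\mathrm{SN}})_{x-Ln,y}$, and relabelling $n\to -n$ (the sign $(-1)^{n_0+n_1}$ is invariant) gives $-D_{x,y}$. Reality of the kernel then gives $D^*=-D$.

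\textbf{Step 2: chirality under adjoint (PSD2).} The key identity is $\epsilon D\epsilon=-D$. The kernel of $D$ only connects nearest neighbours (also across the boundary, where $L$ even guarantees that $\epsilon(x+L\hat e_\mu)=\epsilon(x)$). For such a pair, $\epsilon(x)\epsilon(y)=-1$, so every hopping term changes sign.

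Meanwhile $M_\phi$ is diagonal, so $\epsilon M_\phi\epsilon=M_\phi$ because $\epsilon^2=1$. Since $\phi$ is real, $M_\phi^*=M_\phi$. Combining these,
\[
(D-M_\phi)^*=-D-M_\phi=\epsilon(D-M_\phi)\epsilon .
\]

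\textbf{Step 3: spectrum of the adjoint (PSD3).} Since $\epsilon$ is unitary and self-inverse, (PSD2) exhibits $(D-M_\phi)^*$ as unitarily conjugate to $D-M_\phi$. Conjugate operators have the same spectrum, which gives (PSD3).

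\textbf{Step 4: reality of the determinant (PSD4).} From the definition of the adjoint, $\det((D-M_\phi)^*)=\overline{\det(D-M_\phi)}$. From (PSD2), $\det((D-M_\phi)^*)=\det(\epsilon)^2\det(D-M_\phi)=\det(D-M_\phi)$. Equating the two gives $\det(D-M_\phi)=\overline{\det(D-M_\phi)}$, so the determinant is real. (Alternatively, step 1 already shows the matrix is real, which gives reality immediately.)

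\textbf{Main obstacle.} The only delicate point is the bookkeeping at the boundary. One must check that the antiperiodic sum, together with the staggered phase $\Gamma_1(x)=(-1)^{x_0}$, stays compatible with both antisymmetry and the bipartite sign flip for bonds that wrap around the torus. This requires $L$ even, which holds since $L\in 4\mathbb N$. I would handle it by writing the wrapped bonds explicitly: each carries an extra factor $-1$, which is symmetric under exchange of the endpoints, and under which $\epsilon$ still takes opposite values at the two ends.
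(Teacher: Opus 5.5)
Your proof is correct and follows essentially the same route as the paper. Both arguments get anti-self-adjointness from the real, antisymmetric kernel of the infinite-volume operator together with the $L$-periodicity of $\Gamma_\mu$ (for $L$ even), then use the bipartite sign flip $\epsilon D\epsilon=-D$ and unitary conjugation by $\epsilon$ for (PSD2)--(PSD4). Your parenthetical remark that $(\slashed{D}_{\mathrm{SN}})^-_{\Lambda_L}-M_\phi$ is a real matrix, which gives (PSD4) at once, is a valid shortcut the paper does not use; it is special to the staggered case, since for NN the matrix $\gamma_0=\sigma^y$ is imaginary.
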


The proof of Lemma \ref{AP221} is discussed in Appendix \ref{AP22}.

\paragraph{Lattice action.} The Dirac fermions are coupled through a local four-fermion interaction of Gross--Neveu type:

\begin{equation}\label{StaggeredA}
    S_{{\mathrm{SN}}}(\overline{\psi},\psi) = \frac{1}{2} \sum_{x,y \in \Lambda_L}\sum_{j=1}^N \overline{\psi}_{x,j} \big((\slashed{D}_{\mathrm{SN}})^-_{\Lambda_L}\big)_{x,y} \psi_{y,j}- \frac{\lambda}{2N} \sum_{x \in \Lambda_L} \bigg(\sum_{j=1}^N \overline{\psi}_{x,j} \psi_{x,j} \bigg)^2, \qquad \lambda>0.
\end{equation}

\paragraph{Chiral symmetry.} The action is invariant under the discrete $\mathbb{Z}_2$-Chiral transformation defined by:

\begin{equation*}
    \begin{cases}
        \psi_{x, j} \to e^{i \frac{\pi}{2} \epsilon(x)}  \psi_{x, j}\\
                \overline{\psi}_{x, j} \to e^{i \frac{\pi}{2} \epsilon(x)}  \overline{\psi}_{x, j}.\\
    \end{cases}
\end{equation*}

Indeed, under this transformation, we have that 

\begin{equation*}
\begin{aligned}
    \overline{\psi}_{x,j} \big((\slashed{D}_{\mathrm{SN}})^-_{\Lambda_L}\big)_{x,y} \psi_{y,j} \to e^{i \frac{\pi}{2} (\epsilon(x)+\epsilon(y))}  \overline{\psi}_{x,j} \big((\slashed{D}_{\mathrm{SN}})^-_{\Lambda_L}\big)_{x,y} \psi_{y,j} = \overline{\psi}_{x,j} \big((\slashed{D}_{\mathrm{SN}})^-_{\Lambda_L}\big)_{x,y} \psi_{y,j}
    \end{aligned}
\end{equation*}

where we used the fact that $\big((\slashed{D}_{\mathrm{SN}})^-_{\Lambda_L}\big)_{x,y}$ is non-zero only if $|x-y|=1$, in which case $\e(x)=-\e(y)$. Conversely, under this transformation the bilinear $(\overline{\psi} \psi)_x$ changes sign: $(\overline{\psi} \psi)_x \to e^{i \pi \epsilon(x)} (\overline{\psi} \psi)_x = - (\overline{\psi} \psi)_x$.

\begin{figure}[h]
    \centering
\begin{tikzpicture}
\draw[dotted, ->-] (-0.5,-0.5) -- (3.5,-0.5);
\draw[dotted, ->>-] (3.5,-0.5) -- (3.5,3.5);
\draw[dotted, ->-] (-0.5,3.5) -- (3.5,3.5);
\draw[dotted, ->>-] (-0.5,-0.5) -- (-0.5,3.5);
\draw[line width = 0.05 cm] (0, 0) -- (3,0);
\draw[line width = 0.05 cm] (0, 2) -- (3,2);
\draw[line width = 0.05 cm] (3,1) -- (3.5,1);
\draw[line width = 0.05 cm] (-0.5,1) -- (0,1);
\draw[line width = 0.05 cm] (-0.5,3) -- (0,3);
\draw[line width = 0.05 cm] (3,3) -- (3.5,3);
\draw (0,3) -- (3,3);
\draw (0,1) -- (3,1);
\draw (0,-0.5) -- (0,3.5);
\draw (1,-0.5) -- (1,3.5);
\draw (2,-0.5) -- (2,3.5);
\draw (-0.5,2) -- (0,2);
\draw (3,2) -- (3.5,2);
\draw (-0.5,0) -- (0,0);
\draw (3,0) -- (3.5,0);
\draw (3,-0.5) -- (3,3.5);

\draw[line width = 0.05 cm] (0,-0.5) -- (0,0);
\draw[line width = 0.05 cm] (1,-0.5) -- (1,0);
\draw[line width = 0.05 cm] (2,-0.5) -- (2,0);
\draw[line width = 0.05 cm] (3,-0.5) -- (3,0);

\draw[line width = 0.05 cm] (0,3.5) -- (0,3);
\draw[line width = 0.05 cm] (1,3.5) -- (1,3);
\draw[line width = 0.05 cm] (2,3.5) -- (2,3);
\draw[line width = 0.05 cm] (3,3.5) -- (3,3);
\draw[thin, red] (-0.5,-0.5) -- (-0.5,3.5);
\draw[thin, red] (1.5,-0.5) -- (1.5,3.5);
\node[below, red] (a) at (-0.5,-0.6) {{$\partial_- (\L_L)_+$}};
\node[below, red] (a) at (1.5,-0.6) {{$\partial_+ (\L_L)_+$}};
\node[above] (a) at (0.5,3.6) {{$(\L_L)_+$}};
\node[above] (a) at (2.5,3.6) {{$(\L_L)_-$}};
\draw[->] (-1,-0.5) -- (-1,3.5);
\node[above] (a) at (-1,3.5) {{$x_0$}};
\end{tikzpicture}
\caption{Distribution of Phases of the Anti-symmetrized Forward Dirac Operator $\Gamma_{\mu}(x) T_{\mu}$ in $L=2$ Lattice. Solid Links represent a $-1$ sign. Notice that the set of edges intersecting $\partial_- (\L_L)_+$ are opposite of those intersecting $\partial_+ (\L_L)_+$ because we have chosen an anti-symmetrization convention where the $\mathbb{Z}_2$-twisting is localized on $\partial_- (\L_L)_+$. By means of any $\mathbb{Z}_2$ Gauge Transformation, we can move such line wherever we desire, without changing the physical properties of the system.}
\label{fig1}
\end{figure}

\paragraph{Effective bosonic measure.} By applying the general Hubbard--Stratonovich construction \eqref{eqn:Hub_Strat} to the action \eqref{StaggeredA}, and integrating out the fermionic fields, we obtain the effective bosonic measure on $\mathbb{R}^{\Lambda_L}$: 

\begin{equation*}
    d \mu_{\mathrm{SN}}(\phi) = \frac{1}{Z^{\mathrm{SN}}_{\Lambda_L}} \prod_{x \in \Lambda_L} d \phi_x e^{-\frac{N}{2\lambda}\phi_x^2} \det\big((\slashed{D}_{\mathrm{SN}})^-_{\Lambda_L} - M_\phi\big)^N. %\doteq \frac{1}{Z^{\mathrm{SN}}_{\Lambda_L}}d \phi e^{-\frac{N}{2\lambda}\phi^2} \det\big((\slashed{D}_{\mathrm{SN}})^-_{\Lambda_L} -M_{\phi}\big)^N 
\end{equation*}

\subsection{Staggered Fermions with Plaquette Interaction (SP)}

The third model, originally introduced in \cite{Cohen:1983nr}, is widely considered as the proper lattice discretization of the Gross-Neveu model, as after performing the formal continuum limit one gets exactly the usual Gross-Neveu action \cite[Sect.~3.2]{Cohen:1983nr}, see also Appendix \ref{app:continuum_sp}.\\

The model is most naturally formulated by blocking the original lattice $\Lambda_L$ into four-site unit cells, as illustrated in Fig.~\ref{figstaglat1}. This construction produces a reduced lattice $\widetilde{\Lambda}_L = 2\mathbb{Z}^2 \cap \big(-\frac{L}{2}, \frac{L}{2}\big]^2$, such that $\Lambda_L = \{A,B,C,D\} \times \widetilde{\Lambda}_L$, where $a \in \{A,B,C,D\} \doteq \mathcal{I}$ labels the position within each unit cell. \\

\begin{figure}[h]
\centering
\begin{tikzpicture}
\draw[dotted, ->-] (-0.5,-0.5) -- (3.5,-0.5);
\draw[dotted, ->>-] (3.5,-0.5) -- (3.5,3.5);
\draw[dotted, ->-] (-0.5,3.5) -- (3.5,3.5);
\draw[dotted, ->>-] (-0.5,-0.5) -- (-0.5,3.5);
\draw[line width = 0.05 cm] (0, 0) -- (3,0);
\draw[line width = 0.05 cm] (0, 2) -- (3,2);
\draw[line width = 0.05 cm] (3,1) -- (3.5,1);
\draw[line width = 0.05 cm] (-0.5,1) -- (0,1);
\draw[line width = 0.05 cm] (-0.5,3) -- (0,3);
\draw[line width = 0.05 cm] (3,3) -- (3.5,3);
\draw (0,3) -- (3,3);
\draw (0,1) -- (3,1);
\draw (0,-0.5) -- (0,3.5);
\draw (1,-0.5) -- (1,3.5);
\draw (2,-0.5) -- (2,3.5);
\draw (-0.5,2) -- (0,2);
\draw (3,2) -- (3.5,2);
\draw (-0.5,0) -- (0,0);
\draw (3,0) -- (3.5,0);
\draw (3,-0.5) -- (3,3.5);
\draw[fill, fill opacity =0.1] (-0.25,-0.25) -- (1.25,-0.25) -- (1.25, 1.25) -- (-0.25,1.25) -- (-0.25,-0.25);
\begin{scope} [xshift = 2cm]
\draw[fill, fill opacity =0.1] (-0.25,-0.25) -- (1.25,-0.25) -- (1.25, 1.25) -- (-0.25,1.25) -- (-0.25,-0.25);
\end{scope}
\begin{scope} [yshift = 2cm]
\draw[fill, fill opacity =0.1] (-0.25,-0.25) -- (1.25,-0.25) -- (1.25, 1.25) -- (-0.25,1.25) -- (-0.25,-0.25);
\end{scope}
\begin{scope} [xshift = 2cm, yshift= 2cm]
\draw[fill, fill opacity =0.1] (-0.25,-0.25) -- (1.25,-0.25) -- (1.25, 1.25) -- (-0.25,1.25) -- (-0.25,-0.25);
\end{scope}
\draw[line width = 0.05 cm] (0,-0.5) -- (0,0);
\draw[line width = 0.05 cm] (1,-0.5) -- (1,0);
\draw[line width = 0.05 cm] (2,-0.5) -- (2,0);
\draw[line width = 0.05 cm] (3,-0.5) -- (3,0);
\node[below, left] (a) at (0,1.7) {{$A$}};
\node[below, right] (a) at (1,1.7) {{$B$}};
\node[above, right] (a) at (1,3.3) {{$D$}};
\node[below, left] (a) at (0,3.3) {{$C$}};
\draw[line width = 0.05 cm] (0,3.5) -- (0,3);
\draw[line width = 0.05 cm] (1,3.5) -- (1,3);
\draw[line width = 0.05 cm] (2,3.5) -- (2,3);
\draw[line width = 0.05 cm] (3,3.5) -- (3,3);
\draw[->] (-1,-0.5) -- (-1,3.5);
\node[above] (a) at (-1,3.5) {{$x_0$}};
\draw[->] (0.5,0.5) -- (2.5,0.5);
\draw[->] (0.5,0.5) -- (0.5,2.5);
\end{tikzpicture}
\caption{Blocking of the lattice $\L_L$ ($L=4$ in this example) in terms of the unit cell with points indexed by $\mc{I}$. Lighter (resp. heavier) lines are used for denoting hoppings with a phase $+1$ (resp. $-1$).}
\label{figstaglat1}
\end{figure}

The model is constructed along the same lines as the SN one described above, but with a different interaction structure. We therefore adopt the same notation, adapted to the new labeling.

\paragraph{Staggered fermions.} At each lattice site $(x,a) \in \widetilde{\Lambda}_L \times \mc{I}$, we introduce $2N$ independent Grassmann variables $\psi_{x,a,j}$ and $\overline{\psi}_{x,a,j}$.

\paragraph{Gamma matrices.} The staggered phases are now encoded in
\[
\Gamma_{\mu}(a) \doteq
\begin{cases}
+1 & \mu=0,\\
-1 & \mu=1,\ a=A,B,\\
+1 & \mu=1,\ a=C,D.
\end{cases}
\]
The analogue of the axial matrix is the multiplication operator
\begin{equation}\label{epP}
    \epsilon(a)=
\begin{cases}
+1 & a=A,D,\\
-1 & a=B,C.
\end{cases}
\end{equation}

\paragraph{Dirac operator.} The lattice Dirac operator acts on
$\ell^2(\widetilde{\Lambda}_L \times \mathcal{I},\mathbb{C})$ as
\[
\slashed{D}_{\mathrm{SP}}
=
\frac{1}{2}\sum_{\mu=0}^1
\Gamma_{\mu}(a)(T_\mu - T_\mu^*).
\]
Note that it is translationally invariant on the reduced lattice $\widetilde{\Lambda}_L$:
\begin{equation}
\label{eq:model_SP}
(\slashed{D}_{\mathrm{SP}})_{(x,a),(y,b)}
=\iint_{(-\frac{\pi}{2},\frac{\pi}{2}]^2}\frac{d^2p}{(2\pi)^2}
\,e^{-ip\cdot (x-y)}\,
\frac{1}{2} \begin{pmatrix}
        0 &-1 + e^{2i p_1} &1 - e^{2i p_0} &0\\
        1 - e^{-2i p_1} &0 &0 &1-e^{2ip_0}\\
        -1+e^{-2ip_0} &0 &0 &1 -e^{2i p_1}\\
        0 &-1+e^{-2ip_0} &-1+ e^{-2ip_1} &0
    \end{pmatrix}_{a,b},
\end{equation}

for every $x,y\in\wt{\L}_L$ and $a,b\in\mc{I}$ (the matrix row and columns are indexed in the order $A,B,C,D$).
Anti-periodic boundary conditions are imposed as in the previous models by defining the anti-periodic extension
\begin{equation}
\label{eqn:51}
\big((\slashed{D}_{\mathrm{SP}})^-_{\Lambda_L}\big)_{(x,a),(y,b)}
=
\sum_{n\in\mathbb{Z}^2}
(-1)^{n_0+n_1}
(\slashed{D}_{\mathrm{SP}})_{(x+Ln,a),(y,b)}, \qquad x,y\in\wt{\L}_L, \; a,b\in\mc{I}.
\end{equation}

\begin{lemma}[Properties of the SP Dirac operator]\label{Lemma:Dirac_SP}
The operator $(\slashed{D}_{\mathrm{SP}})^-_{\Lambda_L}$ has the following properties, for every $\phi\in \mathbb{R}^{\wt{\Lambda}_L}$.

    \begin{enumerate}
    \item \textbf{Anti-self-adjointness}:

    \begin{equation*}
      \big((\slashed{D}_{\mathrm{SP}})^-_{\Lambda_L} \big)^* = - (\slashed{D}_{\mathrm{SP}})^-_{\Lambda_L}.
       \tag{PSDP1}
       \label{PSDP1}
    \end{equation*}
        
        \item \textbf{{Chirality under adjoint:}}

\begin{equation}
     \big((\slashed{D}_{\mathrm{SP}})^-_{\Lambda_L}-M_{\phi} \big)^* = \epsilon \big((\slashed{D}_{\mathrm{SP}})^-_{\Lambda_L}-M_{\phi}\big) \epsilon, 
      \tag{PSDP2}
      \label{PSDP2}
\end{equation}
where 
$\epsilon$ is the multiplication operator by $\epsilon(a)$.
\item \textbf{{Spectrum of the Adjoint:}} 
  \begin{equation}
      \mathrm{spec}\Big[\big((\slashed{D}_{\mathrm{SP}})^-_{\Lambda_L}-M_{\phi}\big)^*\Big] = \mathrm{spec}\Big[(\slashed{D}_{\mathrm{SP}})^-_{\Lambda_L}-M_{\phi}\Big].
       \tag{PSDP3}
            \label{PSDP3}
  \end{equation}
  
\item \textbf{Reality of the determinant:}
\begin{equation}
\det\big((\slashed{D}_{\mathrm{SP}})^-_{\Lambda_L}-M_{\phi}\big) \in \mathbb{R}.
 \tag{PSDP4}
\label{PSDP4}
\end{equation}
\end{enumerate}
\end{lemma}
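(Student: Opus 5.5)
We follow the same pattern as for Lemmas \ref{PND} and \ref{AP221}, working directly with the real-space kernel of $(\slashed{D}_{\mathrm{SP}})^-_{\Lambda_L}$ on $\ell^2(\widetilde{\Lambda}_L\times\mathcal{I})$. Written in the original lattice $\Lambda_L$, this operator is $\frac12\sum_\mu \Gamma_\mu(a)(T_\mu-T_\mu^*)$ with anti-periodic identifications. It is therefore a sum of nearest-neighbour hoppings with real coefficients $\pm\frac12$; the signs come from $\Gamma_\mu$ and from the boundary twist $(-1)^{n_0+n_1}$ in \eqref{eqn:51}.

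\emph{Step 1 (PSDP1).} The kernel is real, so the adjoint is the transpose. For each fixed $n$, the map $(x,y)\mapsto(y,x)$ sends the term $\Gamma_\mu(a)T_\mu$ to $\Gamma_\mu T_\mu^*$. This uses the fact that $\Gamma_\mu$ depends only on the site from which the hop originates in the $\mu$ direction. Concretely, $\Gamma_1$ is constant along $\mu=1$ hops inside a row, since $A,B$ and $C,D$ are the horizontal pairs, and $\Gamma_0\equiv1$. Hence $\Gamma_\mu$ commutes with $T_\mu$ and $\big(\Gamma_\mu(T_\mu-T_\mu^*)\big)^T=-\Gamma_\mu(T_\mu-T_\mu^*)$. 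The anti-periodic extension is preserved by this operation, because $(-1)^{n_0+n_1}=(-1)^{-n_0-n_1}$ and re-indexing $n\to-n$ handles the boundary terms. An equivalent check is in Fourier space: in \eqref{eq:model_SP} the symbol matrix $\hat D(p)$ satisfies $\hat D(p)^*=-\hat D(p)$ entrywise, e.g.\ the $(B,A)$ entry $\frac12(1-e^{-2ip_1})$ equals minus the conjugate of the $(A,B)$ entry $\frac12(-1+e^{2ip_1})$. The anti-periodic momenta then simply shift the momentum lattice and do not affect this property.

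\emph{Step 2 (PSDP2).} Since $M_\phi$ is real diagonal and commutes with $\epsilon$, with $\epsilon^2=1$, it suffices to show $\epsilon\,(\slashed{D}_{\mathrm{SP}})^-_{\Lambda_L}\,\epsilon=-(\slashed{D}_{\mathrm{SP}})^-_{\Lambda_L}$. Combined with Step 1, this gives
\[
(D^- - M_\phi)^*=-D^--M_\phi=\epsilon(D^- - M_\phi)\epsilon .
\]
To prove the anticommutation, note that every hop connects sites $(x,a),(y,b)$ with $\epsilon(a)=-\epsilon(b)$. This holds because $\epsilon(a)$ is exactly the checkerboard sign $(-1)^{x_0+x_1}$ on $\Lambda_L$ restricted to the cell labelling ($A,D$ on one sublattice, $B,C$ on the other), and nearest neighbours always lie on opposite sublattices. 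This also holds across the torus boundary because $L\in 8\mathbb{N}$ is even. One can also read it off from \eqref{eq:model_SP}: the nonzero entries occur only for pairs $\{A,B\},\{A,C\},\{B,D\},\{C,D\}$, all of which have opposite $\epsilon$.

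\emph{Steps 3 and 4.} From PSDP2, $(D^--M_\phi)^*$ is similar to $D^--M_\phi$ via the invertible $\epsilon$, which gives PSDP3. Consequently $\overline{\det(D^--M_\phi)}=\det\big((D^--M_\phi)^*\big)=\det(D^--M_\phi)$, which gives PSDP4. Alternatively, the kernel is real, so the determinant is trivially real; I would mention this as well.

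The only delicate point is the bookkeeping of the boundary twist and the staggered phases in Step 1. One must verify that the reversed hop across $\partial\Lambda_L$ carries the opposite sign, which is the content of the figure convention. I would do this by checking the Fourier symbol at anti-periodic momenta $p\in\frac{2\pi}{L}(\mathbb{Z}+\frac12)^2$, which avoids real-space case analysis.
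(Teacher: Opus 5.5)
Your proof is correct, but it is organized differently from the paper's. The paper does not verify the SP properties directly. Remark~\ref{RPSDP} observes that, under the relabelling $\Lambda_L\cong\widetilde\Lambda_L\times\mathcal I$, the operator $(\slashed{D}_{\mathrm{SP}})^-_{\Lambda_L}$ is the SN operator, and the SP multiplication operator $M_\phi$ is an SN one with $\phi$ constant on unit cells. Hence \eqref{PSD1}--\eqref{PSD4} of Lemma~\ref{AP221}, which hold for every real field, specialize to \eqref{PSDP1}--\eqref{PSDP4}.

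You instead re-run the SN argument in the cell labelling: a real nearest-neighbour kernel, invariance of $\Gamma_\mu$ along $\mu$-hops, re-indexing $n\to-n$ in the anti-periodic sum, and bipartiteness with respect to $\epsilon(a)$. You also add an independent check through the anti-Hermiticity of the symbol in \eqref{eq:model_SP}.

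The reduction is shorter and makes clear that nothing new happens beyond restricting $\phi$. Your route is self-contained and avoids matching conventions between the two models: the positions of $A,B,C,D$ inside the cell, $\epsilon(a)$ versus $(-1)^{x_0+x_1}$, and $\Gamma_1(a)$ versus $(-1)^{x_0}$. With the natural placement of $A,B$ on even rows, the last pair differs by a global sign on the $\mu=1$ hoppings. This is harmless for these four properties, but it is a check the reduction leaves implicit.

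Finally, your observation that the real-space kernel is real gives \eqref{PSDP4} directly, without passing through \eqref{PSDP2}. This shortcut works for both staggered models but not for the NN model, where $\gamma_0=\sigma^y$ has imaginary entries.
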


\begin{remark}\label{RPSDP}
   The properties stated in Lemma \ref{Lemma:Dirac_SP} follow from their analogue related to the SN Dirac operator, Lemma \ref{AP221}. This is straightforward since every field configuration $\phi$ associated with the plaquette interaction, can be regarded as a field configuration of the SN model (considered in Lemma \ref{AP221}) which is constant within each unit cell.
\end{remark}

\paragraph{Lattice action.} The third model is similar to the previous one but differs for the interaction, which couples degrees of freedom within the same unit cell:

\begin{equation}\label{StaggeredPA}
    S_{\mathrm{SP}}(\overline{\psi}, \psi) = \frac{1}{2} \sum_{x,y \in \widetilde{\Lambda}_L} \sum_{a,b\in\mc{I}} \sum_{j=1}^N \overline{\psi}_{x,a, j} \big((\slashed{D}_{\mathrm{SP}})^-_{\Lambda_L}\big)_{(x,a),(y,b)} \psi_{y,b, j}- \frac{\lambda}{2 N} \sum_{x \in \widetilde{\Lambda}_L} \bigg(\sum_{j=1}^N \sum_{a \in \mathcal{I}}\overline{\psi}_{x,a, j} \psi_{x,a, j} \bigg)^2,% - m \sum_{x \in \Lambda_L} \bigg(\sum_{a=1}^N \overline{\psi}_{x,a} \psi_{x,a}\bigg)
\end{equation}

where again $\lambda>0$.  \\

\paragraph{Effective bosonic measure.} Applying the general Hubbard--Stratonovich construction \eqref{eqn:Hub_Strat} to the action \eqref{StaggeredPA}, and integrating out the fermionic fields, we obtain the effective bosonic measure on $\mathbb{R}^{\widetilde{\Lambda}_L}$: 

\begin{equation*}
    d \mu_{\mathrm{SP}}(\phi) = \frac{1}{Z^{\mathrm{SP}}_{\Lambda_L}} \prod_{x \in \widetilde{\Lambda}_L} d \phi_x e^{-\frac{N}{2\lambda}\phi_x^2} \det\big((\slashed{D}_{\mathrm{SP}})^-_{\Lambda_L} - M_\phi\big)^N. %\doteq \frac{1}{Z^{\mathrm{SP}}_{\Lambda_L}}d \phi e^{-\frac{N}{2\lambda}\phi^2} \det(\slashed{D}^-_{\widetilde{\Lambda}_L % \times \{A,B,C,D\}
   % } -M_{\phi})^N 
\end{equation*}

This formulation makes it natural to view $\slashed{D}$ as an operator on the reduced lattice $\widetilde{\Lambda}_L$, endowed with four internal degrees of freedom per site. Equivalently, the SP model may be regarded as a restriction of the SN model to Hubbard--Stratonovich fields that are constant within each unit cell. This restriction, however, has significant consequences for the (formal) continuum limit (see \cite{Cohen:1983nr, Hands_1993}) and Appendix \ref{app:continuum_sp}.\\

\subsection{Main result}

Given the precise definitions of the models, we are ready to state our main theorem.

\begin{thm}[Chiral Long-Range Order]\label{thm1}
There exists $\lambda_0 > 0$ such that for each $\alpha \in \{\mathrm{NN}, \mathrm{SN}, \mathrm{SP}\}$, the following holds. For every $\lambda \in (0,\lambda_0]$ and $\varepsilon \in (0,1)$, there exist $N_0(\lambda,\varepsilon)\ge1$ and $L_0(\lambda,\varepsilon)\ge1$ with the property that, for all $N \in \big[N_0(\lambda,\varepsilon),\infty\big)\cap 2\mbb{N}$ and $L \in \big[L_0(\lambda,\varepsilon),\infty\big)\cap4(1+\d_{\a,\mathrm{SP}})\mbb{N}$, 
\begin{equation}
\label{eqn:33}
(1+\varepsilon)\,\left(\tfrac{N}{\l}\right)^2 \varphi_\alpha^\star(\lambda)^2 
\ge
\frac{1}{|\Lambda_L^\alpha|^2}
\sum_{x,y \in \Lambda_L^\alpha}
\left\langle (\overline{\psi}\psi)_x (\overline{\psi}\psi)_y \right\rangle_{\alpha}
\ge 
(1-\varepsilon)\,\left(\tfrac{N}{\l}\right)^2 \varphi_\alpha^\star(\lambda)^2 > 0.
\end{equation}
Here $\varphi_\alpha^\star(\lambda)$ denotes the (strictly) positive minimizer of the effective potential:
\begin{equation}
\label{eq:mean_field_pot}
v^{\alpha}_{\lambda}(\varphi)
=
\frac{\varphi^2}{2\lambda}
-
C_\alpha
\iint_{[-\pi,\pi]^2} \frac{d^2p}{(2\pi)^2}
\log\bigl(\varphi^2 + \sin^2(p_0) + \sin^2(p_1)\bigr),
\end{equation}
where $C_{\a}=1,\frac{1}{2},2$ for $\a=\mathrm{NN},\mathrm{SN},\mathrm{SP}$ respectively.

\end{thm}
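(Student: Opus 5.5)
By \eqref{eq:HS-2pt-common}, the middle quantity in \eqref{eqn:33} equals $(N/\lambda)^2\,|\Lambda_L^\alpha|^{-2}\sum_{x,y}\langle\phi_x\phi_y\rangle_{\mu_\alpha}$ minus the contact term $\frac{N}{\lambda}|\Lambda^\alpha_L|^{-1}$. This contact term is negligible once $L\ge L_0(\lambda,\varepsilon)$. It therefore suffices to prove
\[
(1+\tfrac{\varepsilon}{2})\varphi^{\star 2}_\alpha \ge \frac{1}{|\Lambda^\alpha_L|^2}\sum_{x,y}\langle\phi_x\phi_y\rangle_{\mu_\alpha}\ge (1-\tfrac{\varepsilon}{2})\varphi^{\star 2}_\alpha .
\]

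The first step is to establish Reflection Positivity of $\mu_\alpha$ with respect to reflections through lines between lattice rows and columns (for SP, lines between unit cells). I would proceed in two stages. First, prove Osterwalder--Seiler positivity for the Grassmann--boson system $e^{-\frac N{2\lambda}\phi^2}e^{-\bar\psi(\slashed D-M_\phi)\psi}$. The anti-periodic boundary conditions make the hoppings crossing the reflection plane carry the correct sign, and the on-site term $\phi_x\bar\psi_x\psi_x$ is local. Second, integrate out the fermions, which gives positivity of $\det(\cdot)^N$ as a function of $\phi$ for $N$ even, using \eqref{PND4}, \eqref{PSD4}, \eqref{PSDP4}. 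This yields the Chessboard Estimate: for any local event $A$ supported on one site (or a block),
\[
\mu_\alpha(A)\le \Big(\textstyle\int \prod_x \mathbbm 1_{\theta A}\,d\mu_\alpha\Big)^{1/|\Lambda|},
\]
where the right-hand side is the "disseminated" event, reflected to every site.

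The second step is to bound disseminated events. If $\phi$ is constant equal to $\varphi$, the weight is $\exp[-N|\Lambda|(v^\alpha_\lambda(\varphi)+o(1))]$, because $\log\det(\slashed D-\varphi)$ diagonalizes in Fourier space into $\log(\varphi^2+\sin^2p_0+\sin^2p_1)$, with multiplicity giving $C_\alpha$. For a general disseminated configuration (periodic under reflections), I would use convexity/concavity bounds on $\log\det$ to compare with $v_\lambda^\alpha$. The appendix-type analysis of $v^\alpha_\lambda$ for small $\lambda$ should show two nondegenerate minima $\pm\varphi^\star_\alpha$, with $\varphi^\star_\alpha\sim e^{-c/\lambda}$. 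Three "bad" events then each have probability $\le e^{-cN\delta}$ per site: (i) $|\phi_x|$ far from $\varphi^\star$; (ii) $|\phi_x|$ large; (iii) neighbouring sites with opposite signs, which is the Peierls step. For (iii), the disseminated configuration is a sign-alternating pattern whose effective potential exceeds $v(\varphi^\star)$ by $\delta(\lambda)>0$, uniformly in $N$.

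The final step is a Peierls contour argument. For $x,y$ given, $\langle\phi_x\phi_y\rangle\ge \varphi^{\star2}(1-\eta)^2\,\mathbb P(\mathrm{sgn}\phi_x=\mathrm{sgn}\phi_y)-\mathrm{errors}$. A sign change between $x$ and $y$ forces a closed contour of bad links separating them (the torus allows winding contours too). Summing over contours of length $k$, with at most $4\cdot3^{k}$ per length, gives probability $\sum_k |\Lambda|\,3^k e^{-cNk}$ for winding contours and $\sum_k k^2 3^ke^{-cNk}$ otherwise. Both are small once $N\ge N_0(\lambda,\varepsilon)$. The upper bound uses (i) and (ii) together with Cauchy--Schwarz, where tails are controlled by the Gaussian factor and by $|\det|\le\prod(\ldots)$.

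The main obstacle is the Peierls estimate (iii): showing that the disseminated alternating or kink configurations cost an energy proportional to $N$ per site beyond $v^\alpha_\lambda(\varphi^\star)$. This requires a lower bound on $-\log|\det(\slashed D-M_\phi)|$ for non-constant reflection-periodic $\phi$. I would first try computing the determinant exactly for 2-periodic patterns in Fourier space, then use monotonicity in $|\phi|$ to handle general values in the bad intervals.
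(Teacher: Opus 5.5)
Your architecture matches the paper's: Hubbard--Stratonovich, reflection positivity via the Grassmann--boson system, chessboard estimates, comparison with $v^\alpha_\lambda$, and a Peierls sum over contours. However, the step that turns a disseminated event into an energy gap is missing, and the substitute you propose does not work. Write $D$ for $(\slashed D_\alpha)^-_{\Lambda_L}$.

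\textbf{Disseminated events.} After the chessboard estimate for, say, $\{\phi_x>K_\alpha\}$, you must bound $e^{-NV^\alpha_{L,\lambda}}$ on the whole product set $[K_\alpha,\infty)^{\Lambda^\alpha_L}$. That means arbitrary non-constant, non-periodic configurations, and the bound must be sharp up to $o(1)$ per site. The gaps $v^\alpha_\lambda(K_\alpha)-v^\alpha_\lambda(\varphi^\star_\alpha)$ and $v^\alpha_\lambda(C_\alpha)-v^\alpha_\lambda(\varphi^\star_\alpha)$ are of order $\varepsilon^2\varphi^\star_\alpha(\lambda)^2$, exponentially small in $1/\lambda$. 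Large $N$ does not help, since $N$ multiplies the whole exponent. So no lossy per-site bound (e.g.\ Hadamard's inequality) can give your event (i), which both your upper and lower bounds need.

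Convexity or concavity is not available either. Along constant fields, $-|\Lambda^\alpha_L|^{-1}\log|\det(D-M_\phi)|$ tends to $-C_\alpha\iint\frac{d^2p}{(2\pi)^2}\log(\varphi^2+\sin^2p_0+\sin^2p_1)$. This is concave near $\varphi=0$ and convex for large $|\varphi|$, and $V^\alpha_{L,\lambda}$ is a double well. Hence neither translation averaging (Jensen) nor an extreme-point argument reduces the supremum to constant fields.

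The paper's tool is reflection positivity of the Grassmann integral at a \emph{fixed} external field. One writes $\det(D-M_\phi)=\langle e^{A},e^{B}\rangle_{\rho_\alpha}$ and applies Cauchy--Schwarz to get $\det(D-M_\phi)^2\le\det(D-M_{\phi^+})\det(D-M_{\phi^-})$ (Lemma~\ref{lemma:reflection}). A maximizer over $\overline{I}^{\Lambda^\alpha_L}$ can then be reflected into a constant one, which is \eqref{Minpot}. Your first RP stage already contains this positivity, but you never use it at fixed $\phi$.

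You also need a matching lower bound on $Z^\alpha_{\Lambda_L}$, not just the weight of constant fields. The paper restricts to a box of side $\varphi^\star_\alpha/N$ around the constant minimizer and uses $|\det(1+A)|\ge(1-\|A\|_{\mathrm{op}})^n$ with $\|(D-\varphi)^{-1}\|_{\mathrm{op}}\le 1/\varphi$; see \eqref{eqn:16}.

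\textbf{Peierls estimate.} You rightly single this out as the main obstacle, but you leave it without an argument, and the tool you suggest does not apply. In general $\det(D-M_\phi)$ is not even, let alone monotone, in a single $\phi_x$; for SN it is affine in each $\phi_x$ and only globally even. With contour edges defined by sign changes $\phi_{x'}>0>\phi_{y'}$, the disseminated support contains all configurations with the prescribed signs, including fields near $0$. Reflecting a crossed edge also produces the $4$-periodic patterns $(-++-)$ and $(--++)$, not $2$-periodic ones.

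The paper proceeds in four steps:
(a) it reduces, again by RP, the infimum over that support to the two-parameter family $(\varphi_-,\varphi_+,\varphi_+,\varphi_-)$;
(b) it computes the corresponding density $w^\alpha_\lambda(\varphi_+,\varphi_-)$ explicitly in Fourier space;
(c) it shows $(\partial_{\varphi_+}-\partial_{\varphi_-})w^\alpha_\lambda=(\varphi_+-\varphi_-)\{\frac{1}{2\lambda}-\mathcal O(1)\}\ge0$ on $\{\varphi_+\ge0\ge\varphi_-\}$ for $\lambda\le\lambda_0$, so the minimum lies on $\{\varphi_-=0\}$ (this is precisely where small coupling enters);
(d) it proves $w^\alpha_\lambda(\varphi,0)=v^\alpha_\lambda(\varphi/\sqrt2)+\delta w^\alpha(\varphi)$ with $\delta w^\alpha\ge0$ increasing, which yields the gap $c^\alpha_3(\lambda)>0$ of Lemma~\ref{lem:Peierls-gap}.

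Without these ingredients, or some other sharp lower bound on $-\log|\det(D-M_\phi)|$ for mixed-sign configurations, neither your estimate on sign-changing edges nor the lower bound in \eqref{eqn:33} follows.
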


\begin{remark}
\label{rmk:main}
    \leavevmode
    \begin{enumerate}
\item\label{ite:1} \textbf{Interpretation of the Result.} Theorem~\ref{thm1} establishes Long-Range Order for the scalar fermionic
bilinear $(\overline{\psi}\psi)_x$ in the regime of sufficiently weak coupling and
sufficiently large flavor number. Actually, the proof yields a stronger
pointwise statement than the spatially averaged formulation of the theorem:
for every $\varepsilon\in(0,1)$ there exists
$N_0=N_0(\lambda,\varepsilon)$ such that, for all $N\ge N_0$,
\[
(1-\varepsilon)\,\left(\tfrac{N}{\l}\right)^2 \varphi_\alpha^\star(\lambda)^2
\le
\inf_{x,y\in\Lambda_L^\alpha}
\big\langle(\overline{\psi}\psi)_x(\overline{\psi}\psi)_y\big\rangle_{\a}
\le
\sup_{x,y\in\Lambda_L^\alpha}
\big\langle(\overline{\psi}\psi)_x(\overline{\psi}\psi)_y\big\rangle_{\a}
\le
(1+\varepsilon)\,\left(\tfrac{N}{\l}\right)^2 \varphi_\alpha^\star(\lambda)^2.
\]
Thus the mean-field prediction for the fermion condensate is recovered uniformly at the level of individual two-point functions of the fermion bilinear, and not only after spatial averaging. Remarkably, this conclusion is obtained without relying on a large-$N$ expansion or on a saddle-point approximation. Instead, our rigorous upper and lower bounds are formulated directly in terms of the minimizers of the effective potential, namely the same quantities that arise in the heuristic mean-field analysis (similarly to what is done in \cite{GN}). As a consequence, the fermion bilinear two-point function is uniformly trapped, up to controlled errors, between its corresponding mean-field predictions.
\item\label{ite:2} \textbf{Current Limitations and Future Directions.} A notable feature of our analysis is that it is carried out entirely at finite volume, with estimates that are uniform in the lattice size. Consequently, any thermodynamic-limit point of the finite-volume Gibbs states considered in this work would automatically inherit the Long-Range Order established in Theorem~\ref{thm1}. The existence and characterization of such infinite-volume Gibbs states, however, are beyond the scope of the present paper and will be investigated elsewhere.

A natural next step is the rigorous construction of extremal infinite-volume Gibbs states $\langle\cdot\rangle^{\infty,\pm}_{\alpha}$, corresponding to the two symmetry-related minima of the effective potential. One would then like to prove that these states satisfy the clustering property and, more strongly, that the 
\emph{truncated correlations} decay exponentially, namely:
\[\big\langle (\overline{\psi}\psi)_x(\overline{\psi}\psi)_y\big\rangle^{\infty,\pm}_{\alpha} -  \big(\tfrac{N}{\lambda}\vphi_\alpha^\star(\lambda)\big)^2 =\mc{O}\big(e^{-\mf{k}|x-y|} \big), \qquad \mf{k}>0,
\] 

up to the accuracy already controlled by the finite-volume estimates. 
From this perspective, the main remaining challenge is not the identification of the order parameter itself, but rather the rigorous construction and characterization of the corresponding pure infinite-volume phases.

\item\label{ite:3} \textbf{Asymptotic Expansion for $\varphi^{\star}_{\alpha}(\lambda)$.} The positive minimizer of the effective potential $v^{\alpha}_{\l}(\varphi)$ can be implicitly written \cite[Sect.~5]{GUERIN1980168} as the (non-trivial) solution of the Gap Equation $\frac{d}{d\vphi}v^{\a}_{\l}(\vphi^{\star}_{\a})=0$, which in turn can be rewritten as

\begin{equation}
\label{eqn:49}
\frac{\pi}{4 \l} = \frac{C_{\a}}{1+ (\vphi^{\star}_{\a})^2} K\bigg(\sqrt{\frac{1}{1+(\vphi_{\a}^{\star})^2}}\bigg),
\end{equation}

where $K$ is the Complete Elliptic Integral of the first kind\footnote{
Notice that \cite{Abramowitz1964} and \cite{GUERIN1980168} adopt different conventions for the complete elliptic integral of the first kind. Throughout \cite{Abramowitz1964} the argument is the parameter $m$, whereas \cite{GUERIN1980168} uses the modulus $k$. The two conventions are related by $m=k^2$.} \cite[Sect.~17.3]{Abramowitz1964}. Using \cite[Eq. 17.3.26]{Abramowitz1964}, it is possible to obtain the precise asymptotic of $\vphi^{\star}_{\a}(\l)$ as $\l\to 0^+$. In fact: 

\begin{equation*}
\frac{C_{\a}}{1+ (\vphi^{\star}_{\a})^2} K\bigg(\sqrt{\frac{1}{1+(\vphi_{\a}^{\star})^2}}\bigg) = C_{\a} \log(\frac{2 \sqrt{2}}{\vphi_{\a}^{\star}}) \big(1+ o(1)\big),
\end{equation*}

(with $o(1)$ denoting quantities which vanish as $\vphi_{\a}^{\star}\to 0^+$) which leads to

\begin{equation}\label{asymmin}
\vphi_{\a}^{\star}(\l)=  2 \sqrt{2} e^{-\frac{\pi}{4 C_{\a} \lambda}} \big(1+ o(1) \big)
\end{equation}

(here instead $o(1)$ denotes quantities which vanish as $\l\to 0^+$). It is worth observing that this expression is closely related to the dynamically generated mass predicted by the continuum Gross--Neveu gap equation. The only difference lies in the numerical coefficients appearing in the exponent, which reflect the larger number of effective fermionic degrees of freedom arising from lattice fermion doubling.

\item\label{ite:4} \textbf{Extension to Lattices with Arbitrary Mesh Size.} Theorem~\ref{thm1} extends immediately to lattices $\Lambda_{\ell,L}\doteq \ell\mathbb Z^2 \cap \big(-\frac{L}{2},\frac{L}{2}\big]^2$, with arbitrary mesh size
$\ell$ such that 
$L/\ell\in4\mathbb N$. Indeed, the dependence on the lattice spacing can be completely absorbed into a rescaling of the fermionic fields. We illustrate this observation for the NN model on $\Lambda_{\ell,L}$ described by the Gibbs state $\langle\cdot\rangle_{\mathrm{NN}}^{\ell,L}$ given by

\[
\langle\mathcal O\rangle_{\mathrm{NN}}^{\ell,L}
\doteq
\frac{
\int d\overline{\psi}\,d\psi\;
\mathcal O(\overline{\psi},\psi)\,
e^{-S_{\mathrm{NN}}^{\ell,L}(\overline{\psi},\psi)}
}{
\int d\overline{\psi}\,d\psi\;
e^{-S_{\mathrm{NN}}^{\ell,L}(\overline{\psi},\psi)}
},
\]

where\footnote{The extra normalization factor $\ell^2$ is chosen in order for the action to scale as the volume of the system $L^2$. Alternatively, one can think that such factor allows to reconstruct a Riemann Integral in the limit $\ell \to 0^+$.}

\begin{equation}
\label{eqn:45}
S_{\mathrm{NN}}^{\ell,L}(\overline{\psi}, \psi)
=
\ell^2
\sum_{x,y\in\Lambda_{\ell,L}}
\sum_{j=1}^{N}
\overline{\psi}_{x,j}
\bigl((\slashed D_{\mathrm{NN}})^-_{\Lambda_{\ell,L}}\bigr)_{x,y}
\psi_{y,j}
-
\frac{\lambda}{2N}\ell^2
\sum_{x\in\Lambda_{\ell,L}}
\Bigl(
\sum_{j=1}^{N}
\overline{\psi}_{x,j}\psi_{x,j}
\Bigr)^2,
\end{equation}
and
\[
\bigl((\slashed D_{\mathrm{NN}})^-_{\Lambda_{\ell,L}}\bigr)_{x,y}
=
\frac{1}{2\ell}
\sum_{\mu=0}^{1}
\gamma_\mu
\left(
\delta_{x+\ell e_\mu,y}
-
\delta_{x-\ell e_\mu,y}
\right).
\]

Introducing the rescaled spinors
\[
\xi_{x,s,j}
\doteq
\ell^{1/2}\psi_{\ell x, s, j},
\qquad
\overline{\xi}_{x,s,j}
\doteq
\ell^{1/2}\overline{\psi}_{\ell x, s, j},
\qquad
x\in\Lambda_{1,L/\ell},\,\,\,\, s \in \{1,2\},
\]
a direct computation shows that
\[
S_{\mathrm{NN}}^{\ell,L}(\overline{\psi},\psi)
=
S_{\mathrm{NN}}^{1,L/\ell}(\overline\xi,\xi),
\]

and thus\footnote{Since the Jacobian of the change of Grassmann variables cancels between numerator and denominator in the definition of the Gibbs state} we obtain
\[
\Big\langle
(\overline{\psi}\psi)_x
(\overline{\psi}\psi)_y
\Big\rangle_{\mathrm{NN}}^{\ell,L}
=
\ell^{-2}
\Big\langle
(\overline\xi\xi)_{x/\ell}
(\overline\xi\xi)_{y/\ell}
\Big\rangle_{\mathrm{NN}}^{1,L/\ell}.
\]

The same rescaling argument applies in the same way to the SN and SP models. Since the rescaled theories coincide exactly with the unit-lattice-spacing models studied throughout this paper, Theorem~\ref{thm1} applies verbatim for all three cases under study. Hence, for
$\lambda\le\lambda_0$,
$L/\ell\ge L_0(\lambda,\varepsilon)$
and
$N\ge N_0(\lambda,\varepsilon)$,
\[
(1+\varepsilon)
\Bigl(\frac{N}{\ell\lambda}\Bigr)^2
\varphi_\alpha^\star(\lambda)^2
\ge \frac{1}{|\Lambda^{\a}_{\ell, L}|^2}
\sum_{x,y\in\Lambda^{\a}_{\ell,L}}
\Big\langle
(\overline{\psi}\psi)_x
(\overline{\psi}\psi)_y
\Big\rangle_{\a}^{\ell,L}
\ge
(1-\varepsilon)
\Bigl(\frac{N}{\ell\lambda}\Bigr)^2
\varphi_\alpha^\star(\lambda)^2,
\]

where recall that $|\L^{\a}_{\ell,L}|$ denotes the cardinality of the lattice. It is worth stressing that the sole effect of the mesh size is the dimensional prefactor $\ell^{-1}$ in the condensate and $\ell^{-2}$ in its two-point function. A noteworthy feature is that, for fixed bare coupling $\lambda$, the dynamically generated mass scale grows like $\ell^{-1}$ as the lattice spacing is sent to zero. More precisely, if all bare parameters are kept fixed, the order parameter diverges proportionally to $\ell^{-1}$ in the limit $\ell\to0^+$. This behavior can be compensated by allowing the bare coupling to run with the cutoff according to
\begin{equation}
\label{eqn:bare_coupling}
\lambda(\ell)=
\frac{\lambda(\ell_0)}
{1+\frac{4C_{\alpha}}{\pi}\lambda(\ell_0)\log\!\left(\frac{\ell_0}{\ell}\right)},
\end{equation}
where $\ell_0$ is a fixed reference lattice spacing and $\lambda(\ell_0)$ is the corresponding physical coupling. By substituting this expression into \eqref{asymmin}, one obtains

\[\begin{split}
\frac{1}{|\Lambda^{\a}_{\ell, L}|^2}
\sum_{x,y\in\Lambda^{\a}_{\ell,L}}
\Big\langle
(\overline{\psi}\psi)_x
(\overline{\psi}\psi)_y
\Big\rangle_{\a}^{\ell,L}
&\lessgtr
(1\pm\ve) \Bigl(\frac{N}{\ell\lambda(\ell)}\Bigr)^2
\varphi_\alpha^\star(\lambda(\ell))^2 \\
&= (1\pm\ve) \Bigl(\frac{N}{\ell_0\lambda(\ell_0)}\Bigr)^2
\varphi_\alpha^\star(\lambda(\ell_0))^2\big(1+ o(1)\big),
\end{split}\]

where $o(1)$ denotes quantities tending to zero as $\l(\ell_0)\to 0^+$, showing that the order parameter expectation value can be kept fixed as the ultraviolet cutoff is removed. In this sense, the scaling of the bare coupling compensates exactly for the divergence induced by the shrinking lattice spacing and yields a finite continuum value for the order parameter.

Nevertheless, our results should not be interpreted as a genuine ultraviolet construction of the model. Indeed, the assumptions of Theorem~\ref{thm1} are imposed directly at the ultraviolet scale and the corresponding estimates are not controlled uniformly in the lattice spacing $\ell$, as the choice of bare coupling in \eqref{eqn:bare_coupling} forces the number of flavors $N$ to diverge as $\ell\to 0^+$. This limitation is not surprising: our proof establishes Symmetry Breaking directly at the cutoff scale, without performing a Renormalization Group analysis. A true continuum limit would require a multiscale argument capable of tracking the relevant estimates uniformly across scales and proving that the symmetry-breaking mechanism persists along the Renormalization Group flow.

\item\label{ite:5} \textbf{Continuum Limit and Relation with the Gross--Neveu beta function.}

The scaling relation \eqref{eqn:bare_coupling} also suggests a natural Renormalization Group interpretation. 
Indeed, after upgrading $\ell$ to a continuum variable, \eqref{eqn:bare_coupling} is the unique solution to
\begin{equation}
\label{eqn:flow}
\ell\frac{d\lambda}{d\ell}
=
\frac{4C_\alpha}{\pi}\lambda^2.
\end{equation}
This turns out to be exactly the one-loop beta-function equation for the standard Gross--Neveu model, with a model-dependent coefficient reflecting the number of effective low-energy fermionic species.
Let us stress, however, that this interpretation is most transparent for the SP model (see Appendix \ref{app:continuum_sp}). In this case the low-energy degrees of freedom are expected to reproduce the Gross--Neveu interaction without extra spurious quartic terms due to the fermion doubling (similar e.g. to the Umklapp--Scattering effects). Such interactions (which are not of the standard Gross--Neveu form) are instead generated at intermediate scales within the NN and SN models, yielding some extra difficulties, as it is not apriori clear whether they are compatible with Lorentz invariance or whether they can be renormalized without introducing additional counterterms 
(even though it is claimed to be the case, at least formally: see \cite[Sect.~3.1]{Cohen:1983nr} for the NN model and \cite[Sect.~3]{Hands_1993}, where the reader is referred to \cite{JOLICOEUR1986431}, for the SN model).

Thus, while the one-loop running coupling constant, extracted from the scaling of the order parameter, has the expected Gross--Neveu form, a complete continuum Renormalization Group analysis for the NN and SN discretizations would require controlling these additional effective interactions and their Reflection Positivity properties. 
\\

Even for the SP model, despite being a natural candidate for reproducing the correct Gross-Neveu model (see Appendix \ref{app:continuum_sp}), a detailed multiscale analysis would be required in order to constructively prove its renormalizability and to establish the rigorous connection with the flow equation \eqref{eqn:flow}. All in all, the above comparison with the continuum Gross--Neveu beta function should be, at least for this moment, regarded as a heuristic indication, rather than as a solid mathematical argument.

\end{enumerate}
\end{remark}

\section{Reflection Positivity}\label{sec3}

\begin{figure}[h]
\begin{tikzpicture}[scale=1.6]
\begin{axis}[axis equal image,
        axis line style={draw=none},
        tick style={draw=none},
        xmax=18,ymax=22,zmax=7,xmin=-15,ymin=-20,zmin=-7,
        ticks=none,
        clip bounding box=upper bound,
        colormap/blackwhite,
        samples=23,
        view={45}{45}]
        \begin{pgfonlayer}{background layer}
            %%plot metà bianca del toro (per rimettere il mesh basta togliere l'opzione faceted color)
            \addplot3[domain=0:360,
                    y domain=90:270,
                    surf,
                    z buffer=sort,
                    color = white!30,
                    mesh/ordering=y varies,
                    shader=flat, 
                    draw=black,
                    line width=0.1pt
                ]
                ({(12 + 3 * cos(x)) * cos(y)} ,
                {(12 + 3 * cos(x)) * sin(y)},
                {3 * sin(x)});
        \end{pgfonlayer} 
        \begin{pgfonlayer}{main}
            %%frecce
            %\foreach \th in {-20,0,30,50} {
            %\addplot3 [samples=70,samples y=1,domain=262:269,-, blue,opacity=.2] 
            %    ({(12 + 3*cos(\th))*cos(x)},
            %    {(12 + 3*cos(\th))*sin(x)},
            %    {3*sin(\th)});

            %\addplot3 [samples=70,samples y=1,domain=91:100,-{Stealth[length=1mm]}, blue,opacity=.2] 
            %    ({(12+ 3*cos(\th+150))*cos(x)},
            %     {(12 + 3*cos(\th+150))*sin(x)},
            %    {3*sin(\th+150)}); };	
            %%piano
            \draw[fill=gray, fill opacity=0.4, draw=black, line width=0.1pt] (0,18,-7) -- (0,18,7) -- (0,-20,7)  -- (0,-20,-7) -- (0,18,-7);
           % \draw[fill=gray, fill opacity=0.4, draw=black, line width=0.1pt] (0,17,-7) -- (0,18,7) -- (0,-20,7)  -- (0,-20,-7) -- (0,18,-7);
        \end{pgfonlayer}

        \begin{pgfonlayer}{foreground layer}
            %%plot metà blu del toro
            \addplot3[domain=0:360,y 
                domain=-90:90,
                surf,
                z buffer=sort,
                color=gray, 
                mesh/ordering=y varies,
                shader=flat, 
                draw=black,
                line width=0.1pt]
                ({(12 + 3 * cos(x)) * cos(y)} ,
                {(12 + 3 * cos(x)) * sin(y)},
                {3 * sin(x)});
            %%cerchio blu sotto
            \addplot3[line width = 0.1pt,domain=0:360,surf, z buffer=sort, black,samples=71, samples y=1]
                (0,
                {(12 + 3 * cos(x)) * sin(-90)},
                {3 * sin(x)});
            %%cerchio blu sopra
            \addplot3[line width=0.1pt,domain=0:360,surf, z buffer=sort, black,samples=71, samples y=1]
                (0,
                {(12 + 3 * cos(x)) * sin(90)},
                {3 * sin(x)});
            %%frecce
            %\foreach \th in {-20,0,30,50} {
            %\addplot3 [samples=70,samples y=1,domain=269:280,-{Stealth[length=1mm]}, blue,opacity=.6] 
            %    ({(12 + 3*cos(\th))*cos(x)},
            %    {(12 + 3*cos(\th))*sin(x)},
            %    {3*sin(\th)});

            %\addplot3 [samples=70,samples y=1,domain=80:91,-, blue,opacity=.6] 
            %    ({(12+ 3*cos(\th+150))*cos(x)},
            %    {(12 + 3*cos(\th+150))*sin(x)},
            %    {3*sin(\th+150)}); };		
            %\node[blue] at (0,-22,5) {\scalebox{0.3}{$\partial_-$}};
            %\node[blue] at (2,10,4.5) {\scalebox{0.3}{$\partial_+$}};
            \node[] at (21,0,1) {\scalebox{1}{$(\L_L)_+$}};
            \node[] at (-12,0,10) {\scalebox{1}{$(\L_L)_-$}};
        \end{pgfonlayer}
    \end{axis}
\begin{scope}[xshift =6cm,yshift=1cm, scale =0.4]
\draw[black]  (1,1) -- (2,1);
\draw[black]  (0,0) -- (0,1);
\draw[black]  (0,0) -- (1,0);
\draw[black]  (1,0) -- (1,1);
\draw[black]  (1,0) -- (2,0);
\draw[black]  (2,0) -- (2,1);
\draw[black]  (2,0) -- (2,1);
\draw[black]  (3,0) -- (4,0);
\draw[black]  (3,0) -- (3,1);
\draw[black]  (4,0) -- (5,0);
\draw[black]  (2,0) -- (3,0); 
\draw[black]  (4,0) -- (4,1);
\draw[black]  (5,0) -- (6,0);
\draw[black]  (5,0) -- (5,1);
\draw[black]  (6,0) -- (7,0);
\draw[black]  (6,0) -- (6,1);
\draw[black]  (7,0) -- (7,1);

\draw[black]  (0,1) -- (1,1);
\draw[black]  (0,1) -- (0,2);
\draw[black]  (1,1) -- (1,2);
\draw[black]  (2,1) -- (2,2);

\draw[black]  (2,1) -- (3,1);
\draw[black]  (3,1) -- (3,2);
\draw[black]  (3,1) -- (4,1);
\draw[black]  (4,1) -- (4,2);

\draw[black]  (4,1) -- (5,1);
\draw[black]  (5,1) -- (5,2);
\draw[black]  (5,1) -- (6,1);
\draw[black]  (6,1) -- (6,2);

\draw[black]  (6,1) -- (7,1);
\draw[black]  (7,1) -- (7,2);
\draw[black]  (1,2) -- (2,2);
\draw[black]  (1,2) -- (1,3);

\draw[black]  (0,2) -- (1,2);
\draw[black]  (0,2) -- (0,3);

\draw[black]  (2,2) -- (3,2);
\draw[black]  (2,2) -- (2,3);
\draw[black]  (3,2) -- (4,2);
\draw[black]  (3,2) -- (3,3);
\draw[black]  (4,2) -- (5,2);
\draw[black]  (4,2) -- (4,3);
\draw[black]  (5,2) -- (6,2);
\draw[black]  (5,2) -- (5,3);
\draw[black]  (5,2) -- (6,2);
\draw[black]  (5,2) -- (5,3);
\draw[black]  (6,2) -- (7,2);
\draw[black]  (6,2) -- (6,3);
\draw[black]  (4,2) -- (5,2);
\draw[black]  (5,3) -- (6,3);
\draw[black]  (7,2) -- (7,3);
\draw[black]  (0,3) -- (1,3);
\draw[black]  (0,3) -- (0,4);
\draw[black]  (1,3) -- (2,3);
\draw[black]  (2,3) -- (3,3);
\draw[black]  (3,3) -- (4,3);
\draw[black]  (4,3) -- (5,3);
\draw[black]  (0,4) -- (0,5);
\draw[black]  (5,3) -- (6,3);
\draw[black]  (6,3) -- (7,3);
\draw[black]  (0,4) -- (0,5);
\draw[black]  (0,4) -- (1,4);
\draw[black]  (1,4) -- (2,4);
\draw[black]  (2,4) -- (3,4);
\draw[black]  (3,4) -- (4,4);
\draw[black]  (4,4) -- (5,4);
\draw[black]  (0,5) -- (0,6);
\draw[black]  (5,4) -- (6,4);
\draw[black]  (6,4) -- (7,4);
\draw[black]  (0,5) -- (1,5);
\draw[black]  (1,5) -- (2,5);
\draw[black]  (2,5) -- (3,5);
\draw[black]  (3,5) -- (4,5);
\draw[black]  (4,5) -- (5,5);
\draw[black]  (0,6) -- (0,7);
\draw[black]  (5,5) -- (6,5);
\draw[black]  (6,5) -- (7,5);
\draw[black]  (0,6) -- (1,6);
\draw[black]  (1,6) -- (2,6);
\draw[black]  (2,6) -- (3,6);
\draw[black]  (3,6) -- (4,6);
\draw[black]  (4,6) -- (5,6);
\draw[black]  (7,6) -- (7,7);
\draw[black]  (5,6) -- (6,6);
\draw[black]  (6,6) -- (7,6);
\draw[black]  (0,7) -- (1,7);
\draw[black]  (1,7) -- (2,7);
\draw[black]  (2,7) -- (3,7);
\draw[black]  (3,7) -- (4,7);
\draw[black]  (4,7) -- (5,7);
\draw[black]  (7,5) -- (7,6);
\draw[black]  (5,7) -- (6,7);
\draw[black]  (6,7) -- (7,7);
\draw[black]  (7,3) -- (7,4);
\draw[black]  (7,4) -- (7,5);
\draw[black]  (1,4) -- (1,5);
\draw[black]  (1,3) -- (1,4);
\draw[black]  (1,5) -- (1,6);
\draw[black]  (1,6) -- (1,7);
\draw[black]  (2,4) -- (2,5);
\draw[black]  (2,3) -- (2,4);
\draw[black]  (2,5) -- (2,6);
\draw[black]  (2,6) -- (2,7);
\draw[black]  (3,4) -- (3,5);
\draw[black]  (3,3) -- (3,4);
\draw[black]  (3,5) -- (3,6);
\draw[black]  (3,6) -- (3,7);
\draw[black]  (4,4) -- (4,5);
\draw[black]  (4,3) -- (4,4);
\draw[black]  (4,5) -- (4,6);
\draw[black]  (4,6) -- (4,7);
\draw[black]  (5,4) -- (5,5);
\draw[black]  (5,3) -- (5,4);
\draw[black]  (5,5) -- (5,6);
\draw[black]  (5,6) -- (5,7);
\draw[black]  (6,4) -- (6,5);
\draw[black]  (6,3) -- (6,4);
\draw[black]  (6,5) -- (6,6);
\draw[black]  (6,6) -- (6,7);
%Boundaries
\draw[black, dotted, ->-] (-0.5,-0.5) -- (7.5,-0.5);
\draw[black, dotted, ->-] (-0.5,7.5) -- (7.5,7.5);
\draw[black, dotted, ->>-] (-0.5,-0.5) -- (-0.5,7.5);
\draw[black, dotted, ->>-] (7.5,-0.5) -- (7.5,7.5);
% Oriented Edges at the boundary
%Left+Down
\draw[black] (0, -0.5) --(0,0);
\draw[black] (-0.5,0) -- (0,0);
\draw[black] (1, -0.5) --(1,0);
\draw[black]  (-0.5,1) -- (0,1);
\draw[black]  (2, -0.5) --(2,0);
\draw[black]  (-0.5,2) -- (0,2);
\draw[black]  (3, -0.5) --(3,0);
\draw[black]  (-0.5,3) -- (0,3);
\draw[black]  (4, -0.5) --(4,0);
\draw[black]  (-0.5,4) -- (0,4);
\draw[black]  (5, -0.5) --(5,0);
\draw[black]  (-0.5,6) -- (0,6);
\draw[black]  (6,-0.5) -- (6,0);
\draw[black]  (-0.5,5) -- (0,5);
\draw[black]  (7, -0.5) --(7,0);
\draw[black]  (-0.5,7) -- (0,7);
%Shaded area
\draw[fill = black,  fill opacity =0.6] (1.5,-0.5) -- (5.5,-0.5) -- (5.5,7.5) -- (1.5,7.5) -- (1.5, -0.5);
%Rigth + Up
\draw[black] (0,7) -- (0,7.5);
\draw[black] (7,7) -- (7.5,7);
\draw[black] (1,7) -- (1,7.5);
\draw[black] (7,6) -- (7.5,6);
\draw[black] (2,7) -- (2,7.5);
\draw[black] (7,5) -- (7.5,5);
\draw[black] (3,7) -- (3,7.5);
\draw[black] (7,4) -- (7.5,4);
\draw[black] (4,7) -- (4,7.5);
\draw[black] (7,0) -- (7.5,0);
\draw[black] (5,7) -- (5,7.5);
\draw[black] (7,3) -- (7.5,3);
\draw[black] (6,7) -- (6,7.5);
\draw[black] (7,2) -- (7.5,2);
\draw[black] (7,7) -- (7,7.5);
\draw[black] (7,1) -- (7.5,1);

\node[below] (a) at (3.5,-0.7) {$(\L_L)_+$};
\node[below] (b) at (0.5,-0.7) {$(\L_L)_-$};
\draw[red, thick] (5.5,-1) -- (5.5,8);
\node[below, red] (c) at (5.5,-1) {$r$};
\end{scope}
\end{tikzpicture}
 \caption{Graphical representation of the cut torus, with the cut plane highlighted in red (Credits: Davide Morgante)}
    \label{fig:cut}
\end{figure}

We aim to show that the three bosonic measures $\mu_{\alpha}(\phi)$,
$\alpha\in\{\mathrm{NN},\mathrm{SN},\mathrm{SP}\}$, constructed in the previous section,
are reflection positive \cite{cmp/1103904299} with respect to reflections across bond
planes of the underlying lattice $\Lambda_L^\alpha$, where: 
\[
\Lambda_L^{\mathrm{NN}}=\Lambda_L^{\mathrm{SN}}=\Lambda_L
\qquad
\Lambda_L^{\mathrm{SP}}=\widetilde{\Lambda}_L.
\]

As a very first step towards the notion of Reflection Positivity, we need the precise definition of reflections on $\L^{\a}_L$, regarded as a torus, with respect to some fixed \emph{cut plane}.
\begin{defn}[Cut Plane]
\label{def:cut_plane}
For $\a\in\{\mathrm{NN},\mathrm{SN},\mathrm{SP}\}$, a \virg{cut plane} on $\L^{\a}_L$ is a couple $(\nu,r)$, with $\nu\in\{0,1\}$ and
\begin{itemize}
\item $r\in \big\{-\frac{L-1}{2}, -\frac{L-3}{2},\dots, \frac{L+1}{2} \big\}$, if $\a=\mathrm{NN},\mathrm{SN}$,
\item $r\in \big\{-\frac{L}{2}+1, -\frac{L}{2}+3,\dots, \frac{L}{2}+1 \big\}$, if $\a=\mathrm{SP}$,
\end{itemize}
which, from a geometric viewpoint, identifies the axis $\{x_{\nu}=r\}$. We call \virg{canonical} the cut plane with $r=\frac{1}{2}$, if $\a=\mathrm{NN},\mathrm{SN}$ or $r=1$, if $\a=\mathrm{SP}$. Notice that the canonical cut plane is the only axis that preserves the fundamental domain presentation of each lattice\footnote{In other words it preserves the open square lattice}.
\end{defn}
\begin{defn}[Reflection on the Discrete Torus]
\label{def:reflections}
For $\a\in\{\mathrm{NN},\mathrm{SN},\mathrm{SP}\}$, to any cut plane $(\nu,r)$ on $\L^{\a}_L$, we associate a \virg{full-plane reflection} $\vartheta_{\nu}:\L^{\a}_{\infty}\mapsto \L^{\a}_{\infty}$, defined as:
\begin{equation*}
\vartheta_{\nu}(x)\doteq \begin{cases}
(2r-x_0,x_1), & \nu=0,\\
(x_0,2r-x_1), & \nu=1.
\end{cases}
\end{equation*}
Consequently, we define the \virg{torus reflection}, $\vartheta_{\nu}^L: \L^{\a}_L\mapsto \L^{\a}_L$, as:
\begin{equation*}
\vartheta_{\nu}^L\doteq \pi_{\L^{\a}_L}\circ\vartheta_{\nu},
\end{equation*}
with $\pi_{\L^{\a}_L}:\L^{\a}_{\infty}\mapsto\L^{\a}_L$ the unique map such that $\pi_{\L^{\a}_L}(x)- x \in L\mbb{Z}^2$ for every $x\in\L^{\a}_{\infty}$.
\end{defn}
Any cut plane $(r,\nu)$ induces the splitting $\L^{\a}_L= (\L^{\a}_L)_+\sqcup (\L^{\a}_L)_-$, where (see Fig. \ref{fig:cut}):
\begin{equation*}
(\L^{\a}_L)_+ \doteq \bigg\{x\in \L^{\a}_L: \; (x_{\nu}-r)\, \text{mod}\, L \in \big(-\tfrac{L}{2}, 0\big)  \bigg\}
\end{equation*}
and $(\L^{\a}_L)_-\doteq \L^{\a}_L\setminus (\L^{\a}_L)_+$. It is straightforward to check that $\vartheta_{\nu}^L((\L^{\a}_L)_{\pm})= (\L^{\a}_L)_{\mp}$.
\\

The reflection $\vartheta_{\nu}^L$ induces an action on
field configurations $\phi\in \mathbb{R}^{\Lambda_L^\alpha}$ given by:
\[
(\Theta\phi)_x \doteq \phi_{\vartheta_{\nu}^L(x)}.
\]

In the following, in order to avoid cluttering of the notation, we will denote $\vartheta_{\nu}^L$ simply by $\vartheta_{\nu}$ as reflections on the plane will not play any role in the main text. In Appendix \ref{app:Dirac} we will restore the double notation as we will need both reflections. Besides, we will possibly drop the index $\nu\in\{0,1\}$ for the reflection, whenever its role will not be essential.\\

Let $\mathcal A^\alpha$ be the algebra of complex polynomials on $\mathbb{R}^{\L^{\a}_L}$. We define the Osterwalder--Schrader reflection
$\Theta:\mathcal A^\alpha\to\mathcal A^\alpha$ by
\[
(\Theta F)(\phi) \doteq \overline{F(\Theta\phi)} .
\]

With this definition, $\Theta$ satisfies the following properties discussed in the following proposition.

\begin{proposition}[Properties of $\Theta$]
$\Theta$ is an anti-linear algebra homomorphism of $\mathcal{A}_{\alpha}$:
    \begin{enumerate}
        \item \textbf{\emph{Additivity:}} $\Theta(F+G) = \Theta(F)+\Theta(G)$;
        \item \textbf{\emph{Anti-homogeneity:}} $\Theta(\lambda F) = \overline{\lambda} \Theta(F)$;
        \item \textbf{\emph{Involutivity:}} $\Theta( \Theta(F))= F$;
        \item \textbf{\emph{Multiplicativity:}} $\Theta(FG)=\Theta(F)\Theta(G)$.
    \end{enumerate}
\end{proposition}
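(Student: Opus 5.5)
The plan is to verify the four properties directly from the definition $(\Theta F)(\phi)=\overline{F(\Theta\phi)}$. Here $\Theta\phi$ denotes the reflected configuration $(\Theta\phi)_x=\phi_{\vartheta_\nu(x)}$. The one structural input is that $\vartheta_\nu$, as a torus reflection, is an involutive bijection of $\Lambda^\alpha_L$. Consequently the induced map $\phi\mapsto\Theta\phi$ is a linear involution of $\mathbb{R}^{\Lambda^\alpha_L}$.

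The first step is to check that $\Theta$ maps $\mathcal{A}^\alpha$ into itself. If $F$ is a complex polynomial in the variables $\{\phi_x\}$, then $F(\Theta\phi)$ is the same polynomial with each $\phi_x$ replaced by $\phi_{\vartheta_\nu(x)}$, so it is again a polynomial. Since $\phi$ is real, complex conjugation only conjugates the coefficients, and $\overline{F(\Theta\phi)}$ is a polynomial too.

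Additivity and multiplicativity follow pointwise. We have $\overline{(F+G)(\Theta\phi)}=\overline{F(\Theta\phi)}+\overline{G(\Theta\phi)}$. Likewise $\overline{(FG)(\Theta\phi)}=\overline{F(\Theta\phi)}\,\overline{G(\Theta\phi)}$, because conjugation is a ring homomorphism of $\mathbb{C}$. Anti-homogeneity is $\overline{\lambda F(\Theta\phi)}=\overline{\lambda}\,\overline{F(\Theta\phi)}$.

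The only point needing care is involutivity. We compute
\[
(\Theta(\Theta F))(\phi)=\overline{(\Theta F)(\Theta\phi)}=\overline{\overline{F(\Theta\Theta\phi)}}=F(\Theta\Theta\phi).
\]
So it suffices to show $\Theta\Theta\phi=\phi$, i.e. $\vartheta_\nu^L\circ\vartheta_\nu^L=\mathrm{id}$ on $\Lambda^\alpha_L$. By Definition \ref{def:reflections}, $\vartheta_\nu$ on the plane is the map $x_\nu\mapsto 2r-x_\nu$, which is an involution. It also commutes with translations by $L\mathbb{Z}^2$ up to sign: $\vartheta_\nu(x+Ln)=\vartheta_\nu(x)+Ln'$ with $n'_\nu=-n_\nu$. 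Hence $\pi\circ\vartheta_\nu\circ\pi=\pi\circ\vartheta_\nu$ on $\Lambda^\alpha_\infty$, and therefore
\[
\vartheta^L_\nu\circ\vartheta^L_\nu=\pi\circ\vartheta_\nu\circ\vartheta_\nu=\pi,
\]
which is the identity on $\Lambda^\alpha_L$.

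For the SP model we must also confirm that $\vartheta_\nu$ maps $\widetilde{\Lambda}_L$ to itself. Since $r$ is odd there, $2r$ is even, so $2r-x_\nu\in 2\mathbb{Z}$ whenever $x_\nu\in2\mathbb{Z}$. This is the only mild obstacle; everything else is routine.
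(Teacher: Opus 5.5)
Your proposal is correct. It is exactly the direct verification from $(\Theta F)(\phi)=\overline{F(\Theta\phi)}$ that the paper leaves implicit, since the proposition is stated there without proof. The one nontrivial input you isolate is $\vartheta^L_\nu\circ\vartheta^L_\nu=\mathrm{id}$ on $\Lambda^\alpha_L$, obtained from the compatibility of $\vartheta_\nu$ with $L\mathbb{Z}^2$-translations; the paper uses this same fact tacitly when it asserts $(\hat{\vartheta}^L_\nu)^2=1$ in Appendix~\ref{app:Dirac}.
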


Let $\mathcal A^\alpha_\pm\subset\mathcal A^\alpha$ denote the subalgebra of observables
depending only on $\{\phi_x\}_{x\in(\Lambda_L^\alpha)_\pm}$. We define the pairing:
\begin{equation}\label{PS}
\begin{aligned}
\langle \cdot,\cdot\rangle_{\mu_{\a}}:\mathcal A^\alpha_+\times\mathcal A^\alpha_+&\to\mathbb C,\\
(A,B)&\mapsto \int d\mu_\alpha(\phi)\, (\Theta A)(\phi)\, B(\phi).
\end{aligned}
\end{equation}

\begin{thm}[Reflection Positivity of the bosonic measures]\label{rp}
For each $\alpha\in\{\mathrm{NN},\mathrm{SN},\mathrm{SP}\}$, the pairing \eqref{PS}
defines a positive semidefinite sesquilinear form on $\mathcal A^\alpha_+$. In particular,
for all $A,B\in\mathcal A^\alpha_+$,
\begin{enumerate}
\item $\langle A,B\rangle_{\mu_{\a}} = \overline{\langle B,A\rangle_{\mu_{\a}}}$,
\item $\langle A,A\rangle_{\mu_{\a}} \ge 0$.
\end{enumerate}
\end{thm}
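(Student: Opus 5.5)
The plan is to prove Reflection Positivity first for the joint Grassmann--boson system obtained after the Hubbard--Stratonovich transformation, and then transfer it to $\mu_\alpha$ by integrating out the fermions. Concretely, for $A,B\in\mathcal A^\alpha_+$ we write
\[
\int d\mu_\alpha(\phi)\,(\Theta A)(\phi)B(\phi)=\frac{1}{Z^\alpha_{\Lambda_L}}\int \prod_x d\phi_x\, e^{-\frac{N}{2\lambda}\sum_x\phi_x^2}\,(\Theta A)(\phi)B(\phi)\int d\overline\psi\, d\psi\, e^{-\sum_j \overline\psi_j\,((\slashed D_\alpha)^-_{\Lambda_L}-M_\phi)\,\psi_j},
\]
so that the integrand is a mixed Grassmann--bosonic ``measure'' whose action splits as $S=S_++\Theta S_++S_{\partial}$. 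Here $S_+$ collects the Gaussian weight, the mass term $-\phi_x\overline\psi_x\psi_x$ and the hoppings inside $(\Lambda^\alpha_L)_+$, and $\Theta S_+$ is its image under the Osterwalder--Seiler reflection. This reflection is extended to Grassmann variables as an antilinear map sending $\psi_x\mapsto \overline\psi_{\vartheta x}\Gamma$ and $\overline\psi_x\mapsto \Gamma'\psi_{\vartheta x}$, with suitable matrices $\Gamma,\Gamma'$: for NN built from $\gamma_\nu$ (or $\gamma_\nu\gamma_A$), for SN/SP the corresponding staggered phases. The key identity to check is that the hopping terms crossing the two cut lines can be written as $S_\partial=-\sum_{k}\Theta(C_k)\,C_k$, with $C_k\in$ (even part of) the Grassmann algebra on the $+$ side.

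The steps are as follows:

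\textbf{(i)} Define the Grassmann reflection and verify that it is an antilinear anti-automorphism compatible with $\Theta$ on bosonic polynomials.

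\textbf{(ii)} Check that the single-site Gaussian factor and the on-site Yukawa term $\phi_x(\overline\psi\psi)_x$ are mapped correctly. This is where reality of $\phi$ and the choice of $\Gamma,\Gamma'$ (so that $\Theta(\overline\psi_x\psi_x)=\overline\psi_{\vartheta x}\psi_{\vartheta x}$) enter.

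\textbf{(iii)} Verify the boundary decomposition of $S_\partial$ using the explicit hopping signs. The anti-periodic boundary conditions are exactly what make the terms across $\partial_-(\Lambda_L)_+$ and $\partial_+(\Lambda_L)_+$ appear with the same sign, as shown in Figures \ref{fig1N}, \ref{fig1}, \ref{figstaglat1}. For SP, Remark \ref{RPSDP} reduces the Dirac part to SN, and the cut plane through cell boundaries ($r$ odd) ensures that no unit cell is split, so the plaquette field $\phi_x$ lives entirely on one side.

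\textbf{(iv)} Expand $e^{-S_\partial}=\prod_k e^{\Theta(C_k)C_k}$ as a sum, with nonnegative coefficients, of products $\Theta(F_m)F_m$. Then apply the standard Osterwalder--Seiler lemma: for $F$ in the $+$ algebra, $\int \Theta(F) F\, e^{-S_+-\Theta S_+}\ge0$, which follows from factorization of the Berezin integral over the two halves (with the sign convention for $d\overline\psi d\psi$ fixed once) together with the bosonic Fubini theorem. Combining with $A,B$ gives $\langle A,A\rangle_{\mu_\alpha}\ge0$. The hermiticity $\langle A,B\rangle=\overline{\langle B,A\rangle}$ then follows by polarization from positivity, or directly from $\Theta$-invariance of $\mu_\alpha$, since the measure is real by \eqref{PND4}, \eqref{PSD4}, \eqref{PSDP4}.

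Finally, note that we need $N$ even only to have $\det(\cdot)^N\ge0$ so that $\mu_\alpha$ is a genuine probability measure; the positivity argument itself works flavour by flavour.

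The main obstacle I expect is step (iii)/(iv) for the NN model: choosing $\Gamma$ so that the crossing term $\overline\psi_x\gamma_\nu\psi_{x+e_\nu}$ and its reverse combine into $\Theta(C)C$ with the correct (positive) sign, consistently with the Grassmann ordering signs. I would first try $\Theta\psi_x=\overline\psi_{\vartheta x}^{\,T}\gamma_\nu$-type conventions from Osterwalder--Seiler, and if a sign fails, absorb it by a $\mathbb Z_2$ gauge transformation moving the twist line as indicated in the figures.
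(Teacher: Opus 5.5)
Your strategy is the paper's: reintroduce the Grassmann fields as in \eqref{H2}, split the exponent into a $+$-part, its reflection and crossing terms as in \eqref{H3}, expand the crossing exponential, and factorize the Berezin integral over the two halves, with anti-periodic boundary conditions making both cut lines contribute with the same sign. However, your key identity is misstated. A crossing hopping such as $\overline\psi_x\Gamma_\nu(x)\psi_{x+e_\nu}$ has exactly one generator on each side of the cut, so it cannot be written as $\Theta(C)C$ with $C$ even; that form only produces terms of degree $0$ or at least $4$. The correct building blocks are the \emph{odd} single generators $C=\psi_x$ and $C=\overline\psi_x$ at sites adjacent to the cut. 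For example, for SN the paper obtains $K'_{+-}=\sum_{x,j}\big(\overline\psi_{x,j}\Theta(\overline\psi_{x,j})+\psi_{x,j}\Theta(\psi_{x,j})\big)$, with $x$ running over the sites of $(\Lambda_L)_+$ adjacent to $\partial_+(\Lambda_L)_+$. Consequently step (iv) is not the trivial rearrangement it would be for commuting even elements. Reordering the odd factors in $\prod_{i=1}^n\Theta(C_{k_i})C_{k_i}$ produces a sign $(-1)^{n(n-1)/2}$, which is compensated by reversing the order, so that $\prod_i\Theta(C_{k_i})C_{k_i}=\Theta(F)F$ with $F=C_{k_n}\cdots C_{k_1}$. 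The paper does the same bookkeeping with the ordering $C\,\Theta(C)$.

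Some further remarks.

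\textbf{The global sign is not the real obstacle.} The overall sign of the crossing terms, which you single out, is harmless. Indeed $e^{B}$ and the on-site weights are even, and the $+$-side Berezin integral needs an even number of generators, so only terms with $n$ even survive. What matters is that all crossing terms carry the same sign: on both cut lines, and for both $\psi$ and $\overline\psi$. The paper fixes the global sign anyway with the staggered change of variables $\psi_x,\overline\psi_x\mapsto(-1)^{x_0+x_1}\psi_x,(-1)^{x_0+x_1}\overline\psi_x$. This flips the kinetic term and leaves $(\overline\psi\psi)_x$ and the Grassmann measure unchanged. The $\mathbb{Z}_2$ gauge transformation is used only to reduce a general cut plane to the canonical one, not to repair signs.

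\textbf{Interior hoppings.} Your splitting $S=S_++\Theta S_++S_\partial$ also requires $\Theta(K'_{++})=K'_{--}$ for the interior hoppings. This rests on the reflection covariance \eqref{RND1}/\eqref{RSD1} of the Dirac operator, not only on the on-site checks of step (ii).

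\textbf{Hermiticity.} Deducing item 1 by polarization from positivity is a legitimate shortcut, since a sesquilinear form with real quadratic form is Hermitian. By contrast, $\Theta$-invariance of $\mu_\alpha$ does not follow from reality of the determinant alone. The paper's route uses \eqref{H1}, i.e.\ \eqref{RND3}/\eqref{RSD3}, which combine reflection covariance with reality.
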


\begin{remark}\label{rem:holonomy} 
Without loss of generality, we can always assume the cut plane to be the \emph{canonical one}.\\%, namely $r=\frac{1}{2}$ for $\a=\mathrm{NN},\mathrm{SN}$ and $r=1$ for $\a=\mathrm{SP}$. 

The canonical cut plane has two important properties:
\begin{itemize}
\item $\vartheta_{\nu}(\L_L^{\a})=\L_L^{\a}$;
\item the holonomy line, where the $\mbb{Z}_2$ twisting for anti-periodic boundary conditions is imposed, corresponds to the boundary $\de_-(\L^{\a}_L)_+$ (see Fig. \ref{fig1N}).
\end{itemize} 

 Indeed, if one wishes to implement reflection with respect to an arbitrary reflection plane, one may first perform a $\mathbb{Z}_2$-gauge transformation\footnote{Such a transformation leaves the multiplication operator $M_{\phi}$ 
 invariant, since it acts by conjugation with a diagonal multiplication operator which clearly commutes with $M_{\phi}$ (see \cite{10.1215/S0012-7094-93-07114-1}).} so that the corresponding holonomy defect coincides with the boundary $\partial_-(\Lambda^{\a}_L)_+$. After a suitable relabeling of the lattice coordinates, the problem is reduced to the canonical reflection plane considered above.  
In other words, the change of the reflection plane can always be complemented with a gauge transformation which moves the holonomy line over the boundary corresponding to $\de_-(\L^{\a}_L)_+$ (see Fig. \ref{fig2N}). For this reason, the proof of Theorem \ref{rp} will be carried out for reflections with respect to the canonical cut planes only. 
\end{remark}

  \begin{figure}[h]
    \centering
\begin{tikzpicture}
\node[left] (w) at (-2,2) {{$\substack{\mathrm{NN}\\\mathrm{model}}$}};
\draw[dotted, ->-] (-0.5,-0.5) -- (3.5,-0.5);
\draw[dotted, ->>-] (3.5,-0.5) -- (3.5,3.5);
\draw[dotted, ->-] (-0.5,3.5) -- (3.5,3.5);
\draw[dotted, ->>-] (-0.5,-0.5) -- (-0.5,3.5);
\draw (0, 0) -- (3,0);
\draw (0, 2) -- (3,2);
\draw[line width = 0.05 cm] (3,1) -- (3.5,1);
\draw[line width = 0.05 cm] (-0.5,1) -- (0,1);
\draw[line width = 0.05 cm] (-0.5,3) -- (0,3);
\draw[line width = 0.05 cm] (3,3) -- (3.5,3);
\draw (0,3) -- (3,3);
\draw (0,1) -- (3,1);
\draw (0,-0.5) -- (0,3.5);
\draw (1,-0.5) -- (1,3.5);
\draw (2,-0.5) -- (2,3.5);
\draw (-0.5,2) -- (0,2);
\draw (3,2) -- (3.5,2);
\draw (-0.5,0) -- (0,0);
\draw (3,0) -- (3.5,0);
\draw (3,-0.5) -- (3,3.5);

\draw[line width = 0.05 cm] (0,-0.5) -- (0,0);
\draw[line width = 0.05 cm] (1,-0.5) -- (1,0);
\draw[line width = 0.05 cm] (2,-0.5) -- (2,0);
\draw[line width = 0.05 cm] (3,-0.5) -- (3,0);

\draw[line width = 0.05 cm] (0,3.5) -- (0,3);
\draw[line width = 0.05 cm] (1,3.5) -- (1,3);
\draw[line width = 0.05 cm] (2,3.5) -- (2,3);
\draw[line width = 0.05 cm] (3,3.5) -- (3,3);
\draw[line width = 0.05 cm] (-0.5,0) -- (0,0);
\draw[line width = 0.05 cm] (3.5,0) -- (3,0);
\draw[line width = 0.05 cm] (-0.5,2) -- (0,2);
\draw[line width = 0.05 cm] (3.5,2) -- (3,2);
\draw[thin, red] (0.5,-0.5) -- (0.5,3.5);
\draw[thin, red] (2.5,-0.5) -- (2.5,3.5);
\node[below, red] (a) at (0.5,-0.6) {{$\partial_- (\L_L)_+$}};
\node[below, red] (a) at (2.5,-0.6) {{$\partial_+ (\L_L)_+$}};
\node[above] (a) at (1.5,3.6) {{$(\L_L)_+$}};
\node[above] (a) at (3.5,3.6) {{$(\L_L)_-$}};
\draw[->] (-1,-0.5) -- (-1,3.5);
\node[above] (a) at (-1,3.5) {{$x_0$}};
\draw[->] (4.5,1.5) -- (6.5,1.5);
\draw[fill, fill opacity=0.15] (-0.25,3) -- (-0.25,0) -- (0.25,0) -- (0.25,3) --(-0.25,3);
\draw[fill, fill opacity=0.15] (-0.25,3) -- (0.25,3) arc(0:180:0.25);
\draw[fill, fill opacity=0.15] (0.25,0) -- (-0.25,0) arc(180:360:0.25);
\node[above] (a) at (0,3.5) {{$-1$}};

\begin{scope}[xshift=8cm]
    \draw[dotted, ->-] (-0.5,-0.5) -- (3.5,-0.5);
\draw[dotted, ->>-] (3.5,-0.5) -- (3.5,3.5);
\draw[dotted, ->-] (-0.5,3.5) -- (3.5,3.5);
\draw[dotted, ->>-] (-0.5,-0.5) -- (-0.5,3.5);
\draw[line width = 0.05 cm] (1, 0) -- (0,0);
\draw[] (1,0) -- (3.5,0);
\draw[] (1,2) -- (3.5,2);
\draw[line width = 0.05cm] (1,2) -- (0,2);
\draw[] (0,2) -- (1,2);
\draw[] (0,0) -- (1,0);
\draw[] (3,1) -- (3.5,1);
\draw[] (-0.5,1) -- (0,1);
\draw[line width = 0.05 cm] (0,1) -- (1,1);
\draw[] (-0.5,3) -- (0,3);
\draw[] (3,3) -- (3.5,3);
\draw[line width = 0.05 cm] (0,3) -- (1,3);

\draw (0,3) -- (3,3);
\draw (0,1) -- (3,1);
\draw (0,-0.5) -- (0,3.5);
\draw (1,-0.5) -- (1,3.5);
\draw (2,-0.5) -- (2,3.5);
\draw (-0.5,2) -- (0,2);
\draw (3,2) -- (3.5,2);
\draw (-0.5,0) -- (0,0);
\draw (3,0) -- (3.5,0);
\draw (3,-0.5) -- (3,3.5);

\draw[line width = 0.05 cm] (0,-0.5) -- (0,0);
\draw[line width = 0.05 cm] (1,-0.5) -- (1,0);
\draw[line width = 0.05 cm] (2,-0.5) -- (2,0);
\draw[line width = 0.05 cm] (3,-0.5) -- (3,0);

\draw[line width = 0.05 cm] (0,3.5) -- (0,3);
\draw[line width = 0.05 cm] (1,3.5) -- (1,3);
\draw[line width = 0.05 cm] (2,3.5) -- (2,3);
\draw[line width = 0.05 cm] (3,3.5) -- (3,3);
\draw[thin, red] (0.5,-0.5) -- (0.5,3.5);
\draw[thin, red] (2.5,-0.5) -- (2.5,3.5);
\node[below, red] (a) at (0.5,-0.6) {{$\partial_- (\L_L)_+$}};
\node[below, red] (a) at (2.5,-0.6) {{$\partial_+ (\L_L)_+$}};
\node[above] (a) at (1.5,3.6) {{$(\L_L)_+$}};
\node[above] (a) at (3.5,3.6) {{$(\L_L)_-$}};
\draw[->] (-1,-0.5) -- (-1,3.5);
\node[above] (a) at (-1,3.5) {{$x_0$}};
\end{scope}
\begin{scope}[yshift=-6cm]
\node[left] (w) at (-2,2) {{$\substack{\mathrm{SN}\\\mathrm{model}}$}};
    \draw[dotted, ->-] (-0.5,-0.5) -- (3.5,-0.5);
\draw[dotted, ->>-] (3.5,-0.5) -- (3.5,3.5);
\draw[dotted, ->-] (-0.5,3.5) -- (3.5,3.5);
\draw[dotted, ->>-] (-0.5,-0.5) -- (-0.5,3.5);
\draw[line width = 0.05 cm] (0, 0) -- (3,0);
\draw[line width = 0.05 cm] (0, 2) -- (3,2);
\draw[line width = 0.05 cm] (3,1) -- (3.5,1);
\draw[line width = 0.05 cm] (-0.5,1) -- (0,1);
\draw[line width = 0.05 cm] (-0.5,3) -- (0,3);
\draw[line width = 0.05 cm] (3,3) -- (3.5,3);
\draw (0,3) -- (3,3);
\draw (0,1) -- (3,1);
\draw (0,-0.5) -- (0,3.5);
\draw (1,-0.5) -- (1,3.5);
\draw (2,-0.5) -- (2,3.5);
\draw (-0.5,2) -- (0,2);
\draw (3,2) -- (3.5,2);
\draw (-0.5,0) -- (0,0);
\draw (3,0) -- (3.5,0);
\draw (3,-0.5) -- (3,3.5);

\draw[line width = 0.05 cm] (0,-0.5) -- (0,0);
\draw[line width = 0.05 cm] (1,-0.5) -- (1,0);
\draw[line width = 0.05 cm] (2,-0.5) -- (2,0);
\draw[line width = 0.05 cm] (3,-0.5) -- (3,0);

\draw[line width = 0.05 cm] (0,3.5) -- (0,3);
\draw[line width = 0.05 cm] (1,3.5) -- (1,3);
\draw[line width = 0.05 cm] (2,3.5) -- (2,3);
\draw[line width = 0.05 cm] (3,3.5) -- (3,3);
\draw[thin, red] (0.5,-0.5) -- (0.5,3.5);
\draw[thin, red] (2.5,-0.5) -- (2.5,3.5);
\node[below, red] (a) at (0.5,-0.6) {{$\partial_- (\L_L)_+$}};
\node[below, red] (a) at (2.5,-0.6) {{$\partial_+ (\L_L)_+$}};
\node[above] (a) at (1.5,3.6) {{$(\L_L)_+$}};
\node[above] (a) at (3.5,3.6) {{$(\L_L)_-$}};
\draw[->] (-1,-0.5) -- (-1,3.5);
\node[above] (a) at (-1,3.5) {{$x_0$}};
\draw[->] (4.5,1.5) -- (6.5,1.5);
\draw[fill, fill opacity=0.15] (-0.25,3) -- (-0.25,0) -- (0.25,0) -- (0.25,3) --(-0.25,3);
\draw[fill, fill opacity=0.15] (-0.25,3) -- (0.25,3) arc(0:180:0.25);
\draw[fill, fill opacity=0.15] (0.25,0) -- (-0.25,0) arc(180:360:0.25);
\node[above] (a) at (0,3.5) {{$-1$}};

\begin{scope}[xshift=8cm]
    \draw[dotted, ->-] (-0.5,-0.5) -- (3.5,-0.5);
\draw[dotted, ->>-] (3.5,-0.5) -- (3.5,3.5);
\draw[dotted, ->-] (-0.5,3.5) -- (3.5,3.5);
\draw[dotted, ->>-] (-0.5,-0.5) -- (-0.5,3.5);
\draw[line width = 0.05 cm] (-0.5, 0) -- (0,0);
\draw[line width = 0.05 cm] (1,0) -- (3.5,0);
\draw[line width = 0.05 cm] (1,2) -- (3.5,2);
\draw[line width = 0.05cm] (-0.5,2) -- (0,2);
\draw[] (0,2) -- (1,2);
\draw[] (0,0) -- (1,0);
\draw[] (3,1) -- (3.5,1);
\draw[] (-0.5,1) -- (0,1);
\draw[line width = 0.05 cm] (0,1) -- (1,1);
\draw[] (-0.5,3) -- (0,3);
\draw[] (3,3) -- (3.5,3);
\draw[line width = 0.05 cm] (0,3) -- (1,3);

\draw (0,3) -- (3,3);
\draw (0,1) -- (3,1);
\draw (0,-0.5) -- (0,3.5);
\draw (1,-0.5) -- (1,3.5);
\draw (2,-0.5) -- (2,3.5);
\draw (-0.5,2) -- (0,2);
\draw (3,2) -- (3.5,2);
\draw (-0.5,0) -- (0,0);
\draw (3,0) -- (3.5,0);
\draw (3,-0.5) -- (3,3.5);

\draw[line width = 0.05 cm] (0,-0.5) -- (0,0);
\draw[line width = 0.05 cm] (1,-0.5) -- (1,0);
\draw[line width = 0.05 cm] (2,-0.5) -- (2,0);
\draw[line width = 0.05 cm] (3,-0.5) -- (3,0);

\draw[line width = 0.05 cm] (0,3.5) -- (0,3);
\draw[line width = 0.05 cm] (1,3.5) -- (1,3);
\draw[line width = 0.05 cm] (2,3.5) -- (2,3);
\draw[line width = 0.05 cm] (3,3.5) -- (3,3);
\draw[thin, red] (0.5,-0.5) -- (0.5,3.5);
\draw[thin, red] (2.5,-0.5) -- (2.5,3.5);
\node[below, red] (a) at (0.5,-0.6) {{$\partial_- (\L_L)_+$}};
\node[below, red] (a) at (2.5,-0.6) {{$\partial_+ (\L_L)_+$}};
\node[above] (a) at (1.5,3.6) {{$(\L_L)_+$}};
\node[above] (a) at (3.5,3.6) {{$(\L_L)_-$}};
\draw[->] (-1,-0.5) -- (-1,3.5);
\node[above] (a) at (-1,3.5) {{$x_0$}};
\end{scope}
\end{scope}

\end{tikzpicture}
\caption{By means of a $\mathbb{Z}_2$-Gauge Transformation supported on one side of the holonomy line, we can translate such line to the left making it overlap with the left boundary of $(\L_L)_+$. An example of this construction for the NN and SN models is shown in the upper and lower panels, respectively.}
\label{fig2N}
\end{figure}

\begin{proof}
The proof follows the same general strategy in all three cases 
$\alpha \in \{\mathrm{NN},\mathrm{SN},\mathrm{SP}\}$. Therefore, we first outline the common argument in an abstract setting and then proceed to the specialization to each model.  

    \begin{enumerate}
\item Let $A,B \in \mathcal A_+$. Then:
\begin{equation*}
\begin{aligned}
\langle A,B\rangle_{\mu_{\a}}
&= \int d\mu_\alpha(\phi)\, (\Theta A)(\phi)\, B(\phi) = \int d\mu_\alpha(\phi)\, \overline{A(\Theta\phi)}\, B(\phi).
\end{aligned}
\end{equation*}
Performing the change of variables $\phi \mapsto \Theta\phi$ and using the
reflection invariance of the measure:
\[
\mu_\alpha \circ \Theta^{-1} = \mu_\alpha,
\]
we obtain:
\begin{equation*}
\begin{aligned}
\langle A,B\rangle_{\mu_{\a}}
&= \int d\mu_\alpha(\phi)\, \overline{A(\phi)}\, B(\Theta\phi) = \overline{\int d\mu_\alpha(\phi)\, \overline{B(\Theta\phi)}\, A(\phi)} = \overline{\langle B,A\rangle_{\mu_{\a}}}.
\end{aligned}
\end{equation*}
The reflection invariance of the measure holds in each model as a consequence of the following identity (see Lemmata \ref{RND} and \ref{RSD}): 

\begin{equation}
\det\big((\slashed{D}_{\alpha})^-_{\Lambda_L}- M_{\Theta(\phi)}\big)= \det\big((\slashed{D}_{\alpha})^-_{\Lambda_L}- M_\phi\big).
\tag{H1}
\label{H1}
\end{equation}

%For the naive fermion case this follows from property \eqref{RND4} in Lemma~\ref{prnd}, while for the staggered formulations it follows from property \eqref{RSD4} in Lemma~\ref{rsdo}.
\item Let $A \in \mathcal{A}_+$, then:

   \begin{equation*}
       \begin{aligned}
           \left \langle A, A \right \rangle_{\mu_{\a}} &=  \int d \mu_{\alpha}(\phi) (\Theta A)(\phi) A(\phi).
       \end{aligned}
   \end{equation*} 
   
   The key idea is to represent the determinant appearing in the bosonic measure $\mu_{\alpha}(\phi)$ by reintroducing auxiliary fermionic fields, thereby recovering a local fermionic 
action. One then exploits the reflection covariance properties of the Dirac 
operator to decompose the action into a sum of terms supported on the positive 
and negative halves of the lattice, together with mixed boundary contributions. 
Using the Osterwalder--Schrader reflection and the specific structure of the 
cross terms, one shows that the pairing $\langle A,A\rangle_{\mu_{\a}}$ can be written as a sum of absolute squares and is therefore non-negative.

\paragraph{Step 1: Reintroduction of the fermionic variables.} Using the Grassmann representation of the determinant, we rewrite the bosonic
measure as the marginal of a local measure on $(\phi,\overline{\psi},\psi)$ and thus it factorizes over
$(\Lambda^{\alpha}_L)_\pm$. Writing $\Lambda^{\alpha}_L=(\Lambda^{\alpha}_L)_+\sqcup(\Lambda^{\alpha}_L)_-$ and splitting the
bosonic and on-site fermionic factors accordingly, we obtain:

\begin{equation*}
\begin{aligned}
&\prod_{x \in \Lambda_L^{\alpha} } d \phi_x e^{-\frac{N}{2 \lambda} \phi^2_x}  \det\big((\slashed{D}_{\alpha})^-_{\Lambda_L}- M_\phi\big)^N\\
&= \int d \nu^{\a}_+(\phi, \overline{\psi}, \psi)    d \nu^{\a}_-(\phi, \overline{\psi}, \psi) \exp\Big(-\sum_{J,J'} \sum_{x,y \in \Lambda_L^{\a}} \overline{\psi}_{x, J}\big((\slashed{D}_{\alpha})^-_{\Lambda_L}\big)^{J,J'}_{xy} \psi_{y, J'}\Big)
\end{aligned}
\tag{H2}
\label{H2}
\end{equation*}

   where:

   \begin{equation*}
       d \nu^{\a}_{\pm}(\phi, \overline{\psi}, \psi) = \prod_{x \in (\Lambda_L^{\alpha})_{\pm}} \prod_{J} d \phi_x d \overline{\psi}_{x, J} d\psi_{x,J} e^{-\frac{N}{2\lambda} \phi_x^2} e^{\phi_x \sum_J \overline{\psi}_{x, J} \psi_{x, J}}
   \end{equation*}

and the integral is meant to be taken only on the fermionic part.  Notice that the term $e^{\phi_x \sum_J \overline{\psi}_{x, J} \psi_{x, J}}$ is actually a polynomial in the bosonic field due to the nilpotency of the Grassmann Variables. 

\paragraph{Step 2: RP Decomposition of the fermion kinetic term.} 

Using the notation introduced in Section \ref{sec2}, we write the fermionic kinetic term as

$$K_{\alpha} \doteq -\sum_{J,J'} \sum_{x,y \in \Lambda^{\a}_L} \overline{\psi}_{x, J}\big((\slashed{D}_{\alpha})^-_{\Lambda_L}\big)_{(x,J),(y,J')} \psi_{y, J'}.$$ 

The reflection $\Theta_\nu$, across the canonical axis orthogonal to $e_{\nu}$, acts on fermions depending on the specific formulation:\footnote{Notice that these definitions are the correct ones only for reflections across the \emph{canonical cut plane} (see Definition \ref{def:cut_plane}). For reflections across generic cut planes, \eqref{RNF},\eqref{RSF} and \eqref{RSFP} must be accompanied with the multiplication of the right-hand side by an extra sign:
\[\s_{\nu}^L(x)\doteq (-1)^{\sum_{\mu=0,1}\frac{1}{L}(\vartheta_{\nu}(x)- \vartheta_{\nu}^L(x) )_{\mu}}.\] 
See also Appendix \ref{app:Dirac} for a discussion of reflections across generic cut planes.\\}
    
    \begin{subequations}
    \begin{equation}
        \begin{cases}            \Theta_\nu(\overline{\psi}_{x,s,j})= \sum_{s'}(\gamma_\nu)_{s,s'}\,\psi_{\vartheta_\nu(x),s',j}\\
            \Theta_\nu(\psi_{x,s,J})=\sum_{s'}\overline{\psi}_{\vartheta_{\nu}(x),s',j}\,(\gamma_\nu)_{s',s}
        \end{cases}
        \tag{RNF}
        \label{RNF}
    \end{equation}
    \\
    \begin{equation}
        \begin{cases}            \Theta_\nu(\overline{\psi}_{x,j})= \Gamma_\nu(x)\,\psi_{\vartheta_\nu(x),j}\\            \Theta_\nu(\psi_{x,j})=\overline{\psi}_{\vartheta_\nu(x),j}\,\Gamma_\nu(x)
        \end{cases}
        \tag{RSF}
        \label{RSF}
    \end{equation}
   \\
   \begin{equation}
       \begin{cases}
            \Theta_\nu(\overline{\psi}_{x,a, j})= \Gamma_\nu(a)\,\psi_{\vartheta_\nu(x),\vartheta_{\nu}(a), j}\\           \Theta_\nu(\psi_{x,a, j})=\overline{\psi}_{\vartheta_\nu(x),\vartheta_{\nu}(a),j}\,\Gamma_\nu(a),
        \end{cases}
        \tag{RSFP}
        \label{RSFP}
   \end{equation}
    \end{subequations}

for $\a=\mathrm{NN},\mathrm{SN},\mathrm{SP}$ respectively. In the third case $\vartheta_{\nu}$ also acts on the internal index $J=(j, a)$ because $a$ describes a position within unit cells\footnote{Explicitly: \[\theta_1(a)= \begin{cases}
    A \,\,\,\,\,\,\,\,\,\,&a=B\\
    B &a=A\\
    D &a=C\\
    C &a = D\\
\end{cases},\,\,\,\,\,\,\,\,\,\,\,\, \theta_0(a)= \begin{cases}
    A \,\,\,\,\,\,\,\,\,\,&a=C\\
    C &a=A\\
    D &a=B\\
    B &a= D\\
\end{cases}.\]}. We will show that each such transformation is an involution. In the following, in order to lighten the notation, we will omit the index $\nu$ wherever is not needed (i.e. whenever no explicit computation is performed).  \\

The Reflection operator is then extended to the full algebra of Polynomials in the Grassmann variables $\mathcal{A}^{\alpha}_{\mathrm{Fer}}$ as an anti-linear anti-homomorphism, namely $\Theta(\lambda A) = \overline{\lambda} \Theta(A)$ and $\Theta(AB)= \Theta(B)\Theta(A)$. \\

Let us suppose that each fermionic kinetic term has the form: 
  
  \begin{equation*}
      K_{\alpha} = B_{\alpha} + \Theta(B_{\alpha}) + \sum_{\omega} C^{\omega}_{\a} \Theta(C^{\omega}_{\a}), 
      \tag{H3}
      \label{H3}
  \end{equation*}
  
  with $\o$ an index running over a suitable finite set, $B_{\alpha} \in (\mathcal{A}^{\alpha}_{\mathrm{Fer}})^{\emph{even}}_+$ and $C^{\omega}_{\a} \in (\mathcal{A}^{\alpha}_{\mathrm{Fer}})^{\emph{odd}}_+$. We will show that such condition is sufficient to ensure $\left \langle A, A \right \rangle_{\mu_{\a}} \geq 0$. 

  Since:
  
  \begin{equation*}
      e^{\Theta(B_{\a})} = \sum_{n \geq 0} \frac{\Theta(B_{\a})^n}{n!} = \sum_{n \geq 0} \frac{\Theta(B_{\a}^n)}{n!} = \Theta\bigg(\sum_{n \geq 0} \frac{B_{\a}^n}{n!}\bigg) = \Theta(e^{B_{\a}}),
  \end{equation*}

we are left with proving that:

    \begin{equation*}
        \int d \nu^{\a}_+ d \nu^{\a}_- e^{B_{\a} + \Theta(B_{\a}) + \sum_{\omega} C^{\omega}_{\alpha} \Theta(C^{\omega}_{\alpha}) } \Theta(A) A =   \int d \nu^{\a}_+ e^{B_{\a}}  \int d \nu^{\a}_- \Theta(e^{B_{\alpha}}) e^{\sum_{\omega} C^{\omega}_{\alpha} \Theta(C^{\omega}_{\alpha}) } \Theta(A) A \geq 0.
    \end{equation*}

    Expanding the mixed term, we find that

    \begin{equation*}
        \begin{aligned}
            e^{\sum_{\omega} C^{\omega}_{\a} \Theta(C^{\omega}_{\a})} &= \sum_{n \geq 0} \frac{1}{n!} \sum_{\omega_1 \cdots \omega_n} C^{\omega_1}_{\alpha} \Theta(C^{\omega_1}_{\alpha}) \cdots C^{\omega_n}_{\alpha} \Theta(C^{\omega_n}_{\alpha})\\
            &= \sum_{n \geq 0} \frac{1}{n!} (-1)^{\frac{n(n-1)}{2}}\sum_{\omega_1 \cdots \omega_n} C^{\omega_1}_{\a}  \cdots C^{\omega_n}_{\alpha} \Theta(C^{\omega_1}_{\alpha}) \cdots\Theta(C^{\omega_n}_{\alpha})\\
            &= \sum_{n \geq 0} \frac{1}{n!} (-1)^{\frac{n(n-1)}{2}}\sum_{\omega_1 \cdots \omega_n} C^{\omega_1}_{\alpha}  \cdots C^{\omega_n}_{\alpha} \Theta(C^{\omega_n}_{\alpha} \cdots C^{\omega_1}_{\alpha})\\
            &= \sum_{n \geq 0} \frac{1}{n!} \sum_{\omega_1 \cdots \omega_n} C^{\omega_1}_{\alpha}  \cdots C^{\omega_n}_{\alpha} \Theta(C^{\omega_1}_{\alpha} \cdots C^{\omega_n}_{\alpha}).
        \end{aligned}
    \end{equation*}

    Thus:

\begin{equation*}
\begin{aligned}
\int d \nu^{\a}_+ d \nu^{\a}_- e^{B_{\a} + \Theta(B_{\a}) + \sum_{\omega} C^{\omega}_{\alpha} \Theta(C^{\omega}_{\alpha}) } \Theta(A) A&= \sum_{n \geq 0} \frac{1}{n!} \sum_{\omega_1 \cdots \omega_n} \int d \nu^{\a}_+ e^{B_{\a}}  C^{\omega_1}_{\alpha}  \cdots C^{\omega_n}_{\alpha} A \times \\
&\,\,\,\,\,\,\,\,\,\,\,\,\,\,\,\,\,\,\,\,\,\,\,\,\,\,\,\times \int d \nu^{\a}_- \Theta (e^{B_{\a}} C^{\omega_1}_{\alpha} \cdots C^{\omega_n}_{\alpha}  A).
\end{aligned}
\end{equation*}

   Since\footnote{In order to prove the identity on the reflected measures, it is enough to observe that, for any function \(f\) depending on the Grassmann and real variables in \((\Lambda_L^{\alpha})_{\pm}\), the Grassmann integral of \(f\) with respect to \(d\nu_{\pm}^{\alpha}\) vanishes unless \(f\) contains all Grassmann variables associated with \((\Lambda_L^{\alpha})_{\pm}\). Hence, without loss of generality, we may restrict ourselves to observables of the form $
f = P(\phi)\prod_{x\in (\Lambda_L^{\alpha})_+}\prod_J \overline{\psi}_{x,J}\psi_{x,J}$, where \(P(\phi)\) is an integrable function depending only on \(\phi|_{(\Lambda_L^{\alpha})_+}\). For such observables, $\Theta(f)
=
\Theta(P(\phi))
\prod_{x\in (\Lambda_L^{\alpha})_+}\prod_J
\overline{\psi}_{\theta(x),\theta(J)}\psi_{\theta(x),\theta(J)}$, where, importantly, no sign ambiguity has arisen in this transformation. Since, by definition, $\left\langle f \right\rangle_{\Theta(d\nu_-^{\alpha})}
= \overline{
\left\langle \Theta(f) \right\rangle_{d\nu_-^{\alpha}}}$, the claimed identity immediately follows.} $\Theta(d \nu^{\a}_-) = d \nu^{\a}_+$, we have: 

\begin{equation*}
\int d \nu^{\a}_+ d \nu^{\a}_- e^{B_{\a} + \Theta(B_{\a}) + \sum_{\omega} C^{\omega}_{\a} \Theta(C^{\omega}_{\a}) } \Theta(A) A = \bigg|\int d \nu^{\a}_+ e^{B_{\a}}  C^{\omega_1}_{\alpha}  \cdots C^{\omega_n}_{\alpha} A\bigg|^2 \geq 0.
\end{equation*}

The remaining part of the proof consists of checking \eqref{H1}, \eqref{H2} and \eqref{H3} for each lattice formulation.
\end{enumerate}
\end{proof}

\subsection{Naive Fermions with Naive Interaction}

\begin{lemma}
Properties \eqref{H1}, \eqref{H2} and \eqref{H3} hold for canonical reflections, within the $\mathrm{NN}$ model.
\end{lemma}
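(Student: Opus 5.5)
We verify the three hypotheses separately, always for the canonical cut plane $r=\tfrac12$, orthogonal to $e_\nu$, and with the holonomy line placed on $\partial_-(\Lambda_L)_+$ as in Remark \ref{rem:holonomy}.

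\textbf{Property \eqref{H2}.} This is the Grassmann Gaussian formula. Apply \eqref{eqn:Hub_Strat} in reverse at each site and write $\det\big((\slashed{D}_{\mathrm{NN}})^-_{\Lambda_L}-M_\phi\big)^N$ as the Grassmann integral over $N$ flavours of $e^{-\overline\psi((\slashed{D}_{\mathrm{NN}})^-_{\Lambda_L}-M_\phi)\psi}$. The diagonal piece $e^{\phi_x(\overline\psi\psi)_x}$ is local, so it factorizes together with the Gaussian weight $e^{-\frac{N}{2\lambda}\phi_x^2}$ into $d\nu_+\,d\nu_-$. Only the hopping term $K_{\mathrm{NN}}$ remains.

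\textbf{Property \eqref{H1}.} Let $R_\nu$ be the unitary $(R_\nu f)_x=\gamma_\nu\, f_{\vartheta_\nu(x)}$, including the sign $\sigma^L_\nu$ from the boundary crossing where needed. We check that $R_\nu(\slashed{D}_{\mathrm{NN}})^-_{\Lambda_L}R_\nu^{-1}=\pm(\slashed{D}_{\mathrm{NN}})^-_{\Lambda_L}$ up to adjoint, and that $R_\nu M_\phi R_\nu^{-1}=M_{\Theta\phi}$. For this we use:
\begin{itemize}
\item $\gamma_\nu\gamma_\mu\gamma_\nu=-\gamma_\mu$ for $\mu\neq\nu$ and $=\gamma_\nu$ for $\mu=\nu$;
\item the reflection maps $T_\nu\mapsto T_\nu^*$ and fixes $T_\mu$ for $\mu\neq\nu$.
\end{itemize}
These give $R_\nu\slashed{D}R_\nu^{-1}=-\slashed{D}$ on $\mathbb{Z}^2$. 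The anti-periodic extension is preserved because $\vartheta_\nu$ maps $L$-translates to $L$-translates and the sign $(-1)^{n_0+n_1}$ is invariant under $n\mapsto -n$. Hence $\det(\slashed{D}^- -M_{\Theta\phi})=\det(-\slashed{D}^- -M_\phi)$. Finally, combine \eqref{PND1} and \eqref{PND2}: $-\slashed{D}^- -M_\phi=(\slashed{D}^- -M_\phi)^*$ up to conjugation by $\gamma_A\otimes 1$. By \eqref{PND3}–\eqref{PND4} this determinant equals $\overline{\det(\slashed{D}^- -M_\phi)}=\det(\slashed{D}^- -M_\phi)$. This step is routine bookkeeping; I would carry it out in Appendix \ref{Ap21}.

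\textbf{Property \eqref{H3}, the main obstacle.} Split the hopping terms into three groups:
\begin{itemize}
\item bonds inside $(\Lambda_L)_+$, which form $B_{\mathrm{NN}}$;
\item bonds inside $(\Lambda_L)_-$;
\item bonds crossing the two cut lines $\partial_\pm(\Lambda_L)_+$.
\end{itemize}
First check that \eqref{RNF} is an involution. This follows from $\gamma_\nu^2=1$ and $\gamma_\nu^\dagger=\gamma_\nu$ (with $\overline{(\gamma_\nu)_{ss'}}=(\gamma_\nu)_{s's}$). Next check, term by term, that $\Theta$ of a hopping term in $(\Lambda_L)_+$ is exactly the mirrored hopping term in $(\Lambda_L)_-$, so that the second group equals $\Theta(B_{\mathrm{NN}})$. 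Here the anti-linearity produces the complex conjugation of $\gamma_\mu$, and the anti-homomorphism property reverses the order $\overline\psi\psi\mapsto\Theta\psi\,\Theta\overline\psi$. These two effects should combine with the identity $\gamma_\nu\gamma_\mu^{T\!*}\gamma_\nu=\pm\gamma_\mu$ to reproduce $\slashed{D}$ with the right sign.

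For the crossing bonds, consider a bond $x\in(\Lambda_L)_+$, $y=\vartheta_\nu(x)=x+e_\nu$. Its contribution is $-\tfrac12[\overline\psi_x\gamma_\nu\psi_y-\overline\psi_y\gamma_\nu\psi_x]$. Take as odd elements
\[
C^{(x,s,j)}\propto \psi_{x,s,j} \quad\text{and}\quad \overline\psi_{x,s,j}.
\]
Then $C\,\Theta(C)$ produces $\overline\psi_x\gamma_\nu\psi_{\vartheta(x)}$-type terms. The sign must come out so that the coefficient is positive (with, if necessary, a factor $i$ absorbed into $C$ and conjugated by $\Theta$). This sign is exactly where the boundary conditions enter: on $\partial_+(\Lambda_L)_+$ the hopping has its natural sign, while on $\partial_-(\Lambda_L)_+$ the antiperiodic twist supplies the opposite sign. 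Because $\vartheta_\nu$ wraps around the torus there, the twist turns the mismatched sign into the same form $\sum C\Theta(C)$. I expect this sign verification to be the delicate point, and I would check it explicitly for $\nu=0$ first, using the concrete matrices $\gamma_0=\sigma^y$ and $\gamma_1=\sigma^x$. Bonds in direction $\mu\neq\nu$ never cross the cut, so they cause no difficulty.
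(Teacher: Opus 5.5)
Your treatment of \eqref{H2}, \eqref{H1}, and $\Theta(K_{++})=K_{--}$ is essentially the paper's (the last step is \eqref{RND1}). You also correctly predict that the antiperiodic twist makes the two cuts contribute with the same sign. The gap is what that common sign is.

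Take any crossing bond, with $x\in(\Lambda_L)_+$ and $\vartheta_\nu(x)$ its neighbour across either $\partial_+(\Lambda_L)_+$ or $\partial_-(\Lambda_L)_+$. Its contribution to $K=-\sum\overline\psi(\slashed{D}_{\mathrm{NN}})^-_{\Lambda_L}\psi$ is
\[
-\tfrac12\big(\overline\psi_{x,j}\gamma_\nu\psi_{\vartheta_\nu(x),j}-\overline\psi_{\vartheta_\nu(x),j}\gamma_\nu\psi_{x,j}\big)=-\tfrac12\sum_{s}\big(\overline\psi_{x,s,j}\Theta(\overline\psi_{x,s,j})+\psi_{x,s,j}\Theta(\psi_{x,s,j})\big),
\]
which is minus a sum of terms $C\Theta(C)$. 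Your proposed repair cannot change this sign. Since $\Theta$ is anti-linear, $(cC)\Theta(cC)=|c|^2\,C\Theta(C)$ for every $c\in\mathbb{C}$, including $c=i$. More generally, in any $\sum_\omega C^\omega\Theta(C^\omega)$ with $C^\omega$ odd, the quadratic part has a positive semidefinite coefficient matrix in the boundary generators $\overline\psi_{x,s,j},\psi_{x,s,j}$. Here that matrix is $-\tfrac12$ times the identity. Since $B$ and $\Theta(B)$ cannot produce crossing terms, $K$ itself does not admit the decomposition \eqref{H3}. The paper says this explicitly ("This is not quite true").

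The missing step is a change of Grassmann variables before decomposing. The paper uses the staggered phase $\psi_{x,j}\mapsto(-1)^{x_0+x_1}\psi_{x,j}$, $\overline\psi_{x,j}\mapsto(-1)^{x_0+x_1}\overline\psi_{x,j}$, which has these properties:
\begin{itemize}
\item it has unit Jacobian;
\item it leaves $(\overline\psi\psi)_x$, and hence $d\nu^{\mathrm{NN}}_\pm$, unchanged;
\item every hopping joins sites of opposite parity (also across the boundary, $L$ being even), so it maps $K$ to $K'=+\sum\overline\psi(\slashed{D}_{\mathrm{NN}})^-_{\Lambda_L}\psi$.
\end{itemize}
For $K'$ your $++/--$ computation is unaffected. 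The crossing part becomes $+\tfrac12\sum(\cdots)=\sum C\Theta(C)$, with $C=\tfrac{1}{\sqrt2}\overline\psi_{x,s,j}$ and $C=\tfrac{1}{\sqrt2}\psi_{x,s,j}$.

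Either of the following works equally well:
\begin{itemize}
\item Represent the determinant with $-(\slashed{D}_{\mathrm{NN}})^-_{\Lambda_L}$ in the exponent. This is legitimate because $\det\big(-(\slashed{D}_{\mathrm{NN}})^-_{\Lambda_L}-M_\phi\big)=\det\big((\slashed{D}_{\mathrm{NN}})^-_{\Lambda_L}-M_\phi\big)$, which your own \eqref{H1} paragraph already establishes.
\item Flip the sign of all generators in $(\Lambda_L)_+$ only. This changes the sign of the crossing bonds and nothing else.
\end{itemize}
Without one of these changes of variables, your step ``the sign must come out positive'' is false.
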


\begin{proof}
\leavevmode
    \begin{enumerate}
        \item[\eqref{H1}] See Lemma \ref{RND} in Appendix \ref{Ap21}.        
        \item[\eqref{H2}] By a simple computation.
        \item[\eqref{H3}] Let us first show that the reflection \eqref{RNF} is an involution, as claimed right after its definition.

        \begin{equation*}
        \begin{aligned}
            \Theta_{\nu} (\Theta_{\nu}(\overline{\psi}_{x,s,j})) &= \Theta_{\nu}\big( \sum_{s'=1,2} (\gamma_{\nu})_{s,s'} \psi_{\vartheta_{\nu}(x),s',j}\big)\\
            &= \sum_{s'=1,2} \overline{(\gamma_{\nu})_{s,s'}} \sum_{s''=1,2} \overline{\psi}_{x,s'', j} (\gamma_{\nu})_{s'',s'}\\
            &= \sum_{s',s''=1,2} (\gamma_{\nu})_{s'',s'}(\gamma_{\nu}^{*})_{s',s} \overline{\psi}_{x,s'', j} \\
            &= \sum_{s',s''=1,2} (\gamma_{\nu})_{s'',s'}(\gamma_{\nu})_{s',s} \overline{\psi}_{x,s'', j} \\
            &= \sum_{s''=1,2} (\gamma_{\nu}^2)_{s'',s}\overline{\psi}_{x,s'', j} \\
            &= \overline{\psi}_{x,s, j}.
            \end{aligned}
        \end{equation*}

 We need now to show that the kinetic term: 

\begin{equation*}
  K=  - \sum_{\substack{x, y \in \Lambda_L\\j = 1, \cdots, N}} \overline{\psi}_{x,j} \big((\slashed{D}_{\mathrm{NN}})^-_{\Lambda_L}\big)_{xy} \psi_{y,j}
\end{equation*}

(with $K\equiv K_{\mathrm{NN}}$ in this subsection) fits the decomposition \eqref{H3}. This is not quite true. However, with a simple staggered phase transformation (which preserves the local bilinears and the fermionic measure):

\begin{equation*}
    \begin{cases}
        \overline{\psi}_{x,j} \to  \mathbbl{1}_2 \otimes (-1)^{x_0+x_1}  \overline{\psi}_{x,j}\\
        \psi_{x,j} \to\mathbbl{1}_2 \otimes  (-1)^{x_0+x_1}  \psi_{x,j}. 
    \end{cases}
\end{equation*}

we get:

\begin{equation*}
    K \to K' = \sum_{\substack{x, y \in \Lambda_L\\j=1, \cdots, N}} \overline{\psi}_{x,j} \big((\slashed{D}_{\mathrm{NN}})^-_{\Lambda_L}\big)_{x,y} \psi_{y,j}.
\end{equation*}

Let us split the sum in the following way:

\begin{equation}
\label{eqn:52}
  K'= K'_{++} + K'_{--} + K'_{+-} + K'_{-+}, 
\end{equation}

where:

\begin{equation*}
    \begin{aligned}
        K'_{\e\e'} &= \sum_{x\in \Lambda_L^{\e},y \in \Lambda^{\e'}_L}  \overline{\psi}_{x,j} \big((\slashed{D}_{\mathrm{NN}})^-_{\Lambda_L}\big)_{x,y} \psi_{y,j}, \qquad \e,\e'\in\{\pm\}.        
    \end{aligned}
\end{equation*}

Let us analyze the four terms in the r.h.s. of \eqref{eqn:52}.

\begin{enumerate}
\item Let us begin by discussing $K'_{++}$ and $K'_{--}$. We want to show that $        \Theta_{\nu}(K'_{++}) = K'_{--}$. By a direct computation:

        \begin{equation*}
        \begin{aligned}
            \Theta_{\nu}(K'_{++}) &=  \sum_{\substack{x, y \in (\L_L)_+\\j=1, \cdots, N}}\Theta_{\nu}\big(\overline{\psi}_{x,j} \big((\slashed{D}_{\mathrm{NN}})^-_{\Lambda_L}\big)_{x,y} \psi_{y,j}   \big)\\
            &= \sum_{\substack{x, y \in (\L_L)_+\\s_1,s_2=1,2 \\j=1, \cdots, N}}\Theta_{\nu}\big(\overline{\psi}_{x,s_1,j} \big((\slashed{D}_{\mathrm{NN}})^-_{\Lambda_L}\big)_{(x,s_1),(y,s_2)}\psi_{y,s_2,j}   \big) \\
            &=  \sum_{\substack{x, y \in (\L_L)_+\\s_1,s_2,s_3,s_4=1,2\\j=1, \cdots, N}} \overline{\psi}_{\vartheta_{\nu}(y),s_3,j} (\gamma_{\nu})_{s_3,s_2} \overline{\big((\slashed{D}_{\mathrm{NN}})^-_{\Lambda_L}\big)_{(x,s_1),(y,s_2)}} (\gamma_{\nu})_{s_1,s_4}\psi_{\vartheta(x),s_4,j}\\ 
            &=  \sum_{\substack{x, y \in (\L_L)_+\\s_1,s_2,s_3,s_4=1,2\\j=1, \cdots, N}} \overline{\psi}_{\vartheta_{\nu}(y),s_3,j} (\gamma_{\nu})_{s_3,s_2} {\big((\slashed{D}_{\mathrm{NN}})^-_{\Lambda_L}\big)^*}_{(y,s_2),(x,s_1)} (\gamma_{\nu})_{s_1,s_4}\psi_{\vartheta(x),s_4,j}\\   &=  \sum_{\substack{x, y \in (\L_L)_+\\s_3,s_4=1,2\\j=1, \cdots, N}} \overline{\psi}_{\vartheta(y),s_3,j} \Big(\gamma_{\nu} \big((\slashed{D}_{\mathrm{NN}})^-_{\Lambda_L}\big)^*_{x,y} \gamma_{\nu}\Big)_{s_3,s_4} \psi_{\vartheta(x),s_4,j}\\
            &=  \sum_{\substack{x,y \in (\L_L)_+\\j=1, \cdots, N}} \overline{\psi}_{\vartheta(y),j} \big((\slashed{D}_{\mathrm{NN}})^-_{\Lambda_L}\big)_{\vartheta(y), \vartheta(x)} \psi_{\vartheta(x),j}\\
            &=  \sum_{\substack{x,y \in (\L_L)_-\\j=1, \cdots, N}} \overline{\psi}_{x,j} \big((\slashed{D}_{\mathrm{NN}})^-_{\Lambda_L}\big)_{x,y}\psi_{y,j}\\
            &= K_{--},
            \end{aligned}
        \end{equation*}

        where we have used \eqref{RND1} to pass from the fourth to the third line from the end.
    \item The terms $K'_{+-},K'_{-+}$ are more tricky and show where the anti-symmetric boundary conditions are needed. We will show that:

\begin{equation}
\label{eqn:53a}
K'_{+-} + K'_{-+} = \sum_{j=1}^N \sum_{\epsilon = \pm 1} \sum_{s=1,2}\bigg(\sum_{\substack{x \in (\L_L)_+:\\(x,x+e_{\mu}) \in \partial_+ (\L_L)_+}} \psi^{\epsilon}_{x,s,j} \Theta(\psi^{\epsilon}_{x,s,j})  + \sum_{\substack{x \in (\L_L)_+:\\(x-e_{\nu}, x) \in \partial_- (\L_L)_+}} \psi^{\epsilon}_{x,s,j} \Theta(\psi^{\epsilon}_{x,s,j})\bigg),
\end{equation}

with $\psi^-_{x,s,j} \equiv \psi_{x,s,j}$ and $\psi^+_{x,s,j} \equiv \overline{\psi}_{x,s,j}$. 

Let us consider the subset of cross terms localized around the right boundary $\partial_+ (\L_L)_+$ (or in general on the boundary of the region $(\L_L)_+$ where we have not inserted the $\mathbb{Z}_2$-twisting):

\begin{equation*}
\begin{aligned}
K'_{+-} &= \sum_{\substack{x \in (\L_L)_+:\\
(x,x+e_{\nu}) \in \partial_+ (\L_L)_+\\s_1,s_2=1,2\\
j=1, \cdots, N}} \Big( \overline{\psi}_{x,s_1,j} (\gamma_{\nu})_{s_1,s_2} \psi_{x+e_{\nu},s_2,j} - \overline{\psi}_{x+e_{\nu},s_1j} (\gamma_{\nu})_{s_1,s_2} \psi_{x,s_2,j} \Big) \\
&= \sum_{\substack{x \in (\L_L)_+:\\
(x,x+e_{\nu}) \in \partial_+ (\L_L)_+\\s_1,s_2=1,2\\
j=1, \cdots, N}} \Big( \overline{\psi}_{x,s_1,j} (\gamma_{\nu})_{s_1,s_2} \psi_{x+e_{\nu},s_2,j} + \psi_{x,s_2,j} (\gamma_{\nu})_{s_1,s_2}\overline{\psi}_{x+e_{\nu},s_1,j} \Big) \\
&= \sum_{\substack{x \in (\L_L)_+:\\
(x,x+e_{\nu}) \in \partial_+ (\L_L)_+\\s=1,2\\
j=1, \cdots, N}} \Big( \overline{\psi}_{x,s,j} \Theta(\overline{\psi}_{x,s,j}) + {\psi}_{x,s,j} \Theta({\psi}_{x,s,j}) \Big),
\end{aligned}
\end{equation*}   

where the Grassmann variables $\psi,\ov{\psi}$ are understood with \emph{periodic} boundary conditions.

The terms on the other boundary of $(\L_L)_+$, $\partial_- (\L_L)_+$, come with an extra minus sign which is compensated by the $\mathbb{Z}_2$-holonomy (see Fig. \ref{fig1N}): 

\begin{equation*}
\begin{aligned}
K'_{-+} &= \sum_{\substack{x \in (\L_L)_+:\\
(x-e_{\nu},x) \in \partial_- (\L_L)_+\\s_1,s_2=1,2\\
j=1, \cdots, N}} \Big( -\overline{\psi}_{x-e_{\nu},s_1,j} (\gamma_{\nu})_{s_1,s_2} \, \psi_{x,s_2,j} + \overline{\psi}_{x,s_1,j} (\gamma_{\nu})_{s_1,s_2} \psi_{x-e_{\nu},s_2,j} \Big)\\
&= \sum_{\substack{x \in (\L_L)_+:\\
(x-e_{\nu},x) \in \partial_- (\L_L)_+\\s_1,s_2=1,2\\
j=1, \cdots, N}} \Big( \psi_{x,s_2,j}(\gamma_{\nu})_{s_1,s_2} \overline{\psi}_{x-e_{\nu},s_1,j}  \,  + \overline{\psi}_{x,s_1,j} (\gamma_{\nu})_{s_1,s_2} \psi_{x-e_{\nu},s_2,j} \Big)\\
&=  \sum_{\substack{x \in (\L_L)_+:\\
(x-e_{\nu},x) \in \partial_- (\L_L)_+\\s=1,2\\
j=1, \cdots, N}} \Big( \psi_{x,s,j} \Theta(\psi_{x,s,j}) + \overline{\psi}_{x,s,j} \Theta(\overline{\psi}_{x,s,j} ) \Big).
\end{aligned}
\end{equation*}

\end{enumerate}

\end{enumerate}

\end{proof}

\subsection{Staggered Fermions with Naive Interaction}

\begin{lemma}
    Properties \eqref{H1}, \eqref{H2} and \eqref{H3} hold for canonical reflections, within the $\mathrm{SN}$ model.
\end{lemma}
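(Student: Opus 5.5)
The plan follows the NN case step by step, using the staggered reflection \eqref{RSF} in place of \eqref{RNF}.

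\textbf{Property \eqref{H1}.} This is the reflection identity for the staggered Dirac operator, Lemma \ref{RSD} in the Appendix. The reason it holds: conjugating by the unitary $(U\psi)_x=\Gamma_\nu(x)\psi_{\vartheta_\nu(x)}$ maps $(\slashed{D}_{\mathrm{SN}})^-_{\Lambda_L}$ to its adjoint, i.e. to minus itself by \eqref{PSD1}. The same conjugation sends $M_\phi$ to $M_{\Theta\phi}$. Combining this with \eqref{PSD2}--\eqref{PSD4} gives equality of the two determinants.

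\textbf{Property \eqref{H2}.} This is the standard Grassmann Gaussian identity $\int d\overline{\psi}d\psi\, e^{-\overline{\psi}(D-M_\phi)\psi}=\det(D-M_\phi)$, applied flavour by flavour. The on-site factors are grouped into $d\nu^{\mathrm{SN}}_\pm$. The prefactor $\tfrac12$ in \eqref{StaggeredA} only rescales the kinetic kernel and can be absorbed into the normalisation of the fermionic fields.

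\textbf{Property \eqref{H3}, step 1: involutivity.} I first check that \eqref{RSF} is an involution. Using anti-linearity and the fact that $\Gamma_\nu$ is real,
\[
\Theta^2(\overline{\psi}_{x,j})=\Gamma_\nu(x)\Gamma_\nu(\vartheta_\nu(x))\,\overline{\psi}_{x,j}.
\]
This equals $\overline{\psi}_{x,j}$ because $\Gamma_\nu(\vartheta_\nu x)=\Gamma_\nu(x)$. Indeed, $\Gamma_0\equiv1$, and $\Gamma_1(x)=(-1)^{x_0}$ does not depend on $x_1$, so it is unchanged by the reflection $\vartheta_1$. One must also check the reflection $\vartheta_0$ acting on $\Gamma_1$; if that fails, I would insert a staggered gauge transformation, as in the NN case.

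\textbf{Property \eqref{H3}, step 2: the bulk terms.} As for NN, I would perform a staggered phase change $(-1)^{x_0+x_1}$, if needed, to fix the overall sign of $K$. Then I split $K'=K'_{++}+K'_{--}+K'_{+-}+K'_{-+}$. Reversing the order of the Grassmann product and conjugating the kernel gives $\Theta(K'_{++})=K'_{--}$, using
\[
\Gamma_\nu(\vartheta y)\,\overline{D_{x,y}}\,\Gamma_\nu(x)=D_{\vartheta y,\vartheta x}.
\]
This is the scalar analogue of $\gamma_\nu D^*\gamma_\nu$; it follows from \eqref{PSD1} and the reflection covariance of the hopping phases.

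\textbf{Property \eqref{H3}, step 3: the cross terms (main obstacle).} The cross terms live on bonds $(x,x+e_\nu)$ crossing $\partial_\pm(\Lambda_L)_+$. The hopping phase on such a bond is $\Gamma_\nu(x)$, and $\vartheta_\nu(x)=x+e_\nu$. For example, the term
\[
\overline{\psi}_x\,\Gamma_\nu(x)\,\psi_{x+e_\nu}
\]
can be rewritten, after anticommuting, as $\overline{\psi}_x\Theta(\overline{\psi}_x)+\psi_x\Theta(\psi_x)$. On $\partial_-(\Lambda_L)_+$ the extra sign is compensated by the antiperiodic holonomy (Fig.~\ref{fig1}). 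The main difficulty is checking that, after the gauge choice of Remark \ref{rem:holonomy}, every crossing bond carries exactly the sign needed to produce $+C\,\Theta(C)$ with odd $C=\psi_x$ or $C=\overline{\psi}_x$, rather than $-C\,\Theta(C)$. I would verify this separately for $\nu=0$ and $\nu=1$. For $\nu=1$ the phase $(-1)^{x_0}$ appears on both sides of the product and so cancels out of the required sign. For $\nu=0$ there are no such phases, and the check is the same as for NN without the $\gamma$-matrices. Any residual sign would be absorbed by choosing the staggered transformation $(-1)^{x_0}$ or $(-1)^{x_1}$ instead of $(-1)^{x_0+x_1}$.
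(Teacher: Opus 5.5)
Your proposal is correct and follows the paper's proof essentially step by step: \eqref{H1} via Lemma~\ref{RSD}, \eqref{H2} by the Grassmann Gaussian identity, and \eqref{H3} via involutivity of \eqref{RSF}, the staggered phase $(-1)^{x_0+x_1}$, the bulk identity $\Theta(K'_{++})=K'_{--}$ from \eqref{RSD1}, and the explicit boundary computation in which the antiperiodic holonomy supplies the needed sign on $\partial_-(\Lambda_L)_+$. Your hedges are unnecessary: $\Gamma_1\circ\vartheta_0$ never enters the involution check (only $\Gamma_\nu\circ\vartheta_\nu=\Gamma_\nu$ does), and the sign flip $\Gamma_1(\vartheta_0 x)=-\Gamma_1(x)$ is exactly what makes \eqref{RSD1} hold in the bulk step, so no alternative gauge choice is required.
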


\begin{proof}
\leavevmode
    \begin{enumerate}
        \item[\eqref{H1}] See Lemma \ref{RSD} in Appendix \ref{AP22}.        
        \item[\eqref{H2}] By a simple computation.
        \item[\eqref{H3}] Let us first show that the reflection \eqref{RSF} is an involution:

        \begin{equation*}
        \begin{aligned}
            \Theta_{\nu} (\Theta_{\nu}(\overline{\psi}_{x,j})) &= \Theta_{\nu}\big( \Gamma_{\nu}(x) \psi_{\vartheta_{\nu}(x),j}\big)\\
            &=  \Gamma_{\nu}(x) \Gamma_{\nu}(\vartheta_{\nu}(x)) \overline{\psi}_{x, j}\\
            &=  \Gamma_{\nu}(x)^2 \overline{\psi}_{x, j} \\
            &= \overline{\psi}_{x, j},
            \end{aligned}
        \end{equation*}

             \begin{equation*}
        \begin{aligned}
            \Theta_{\nu} (\Theta_{\nu}(\psi_{x,j})) &= \Theta_{\nu}\big( \Gamma_{\nu}(x) \overline{\psi}_{\vartheta_{\nu}(x),j}\big)\\
            &=  \Gamma_{\nu}(x) \Gamma_{\nu}(\vartheta_{\nu}(x)) {\psi}_{x, j}\\
            &=  \Gamma_{\nu}(x)^2 {\psi}_{x, j} \\
            &= {\psi}_{x, j},
            \end{aligned}
        \end{equation*}

where we have used that

\begin{equation*}
\Gamma_{\nu}(\vartheta_{\nu}(x)) = \begin{cases}
1 \,\,\,\,&\nu=0\\
(-1)^{\vartheta_{1}(x_0)} & \nu=1
\end{cases}\,\,\,\, =\,\, \begin{cases}
1 \,\,\,\,&\nu=0\\
(-1)^{x_0} & \nu=1
\end{cases}\,\,\,\,  = \,\Gamma_{\nu}(x).
\end{equation*}
        
 We need now to show that the kinetic term:

\begin{equation*}
  K=  - \sum_{\substack{x, y \in \Lambda_L\\j = 1, \cdots, N}} \overline{\psi}_{x,j} \big((\slashed{D}_{\mathrm{SN}})^-_{\Lambda_L}\big)_{xy} \psi_{y,j}
\end{equation*}

(with $K\equiv K_{\mathrm{SN}}$ in this subsection) fits the decomposition \eqref{H3}. Again this is not strictly true; however it can be fixed by a simple staggered phase transformation (which preserves the local bilinears and the fermionic integration):

\begin{equation*}
    \begin{cases}
        \overline{\psi}_{x,j} \to   (-1)^{x_0+x_1}  \overline{\psi}_{x,j}\\
        \psi_{x,j} \to (-1)^{x_0+x_1}  \psi_{x,j}, 
    \end{cases}
\end{equation*}

\begin{equation*}
K \to K' = \sum_{\substack{x, y \in \Lambda_L\\j=1, \cdots, N}} \overline{\psi}_{x,j} \big((\slashed{D}_{\mathrm{SN}})^-_{\Lambda_L}\big)_{x,y}  \psi_{y,j}.
\end{equation*}

Let us split $K'$ in the following way:

\begin{equation}
\label{eqn:52b}
K'= K'_{++} + K'_{--} + K'_{+-} + K'_{-+}, 
\end{equation}

where:

\begin{equation*}
    \begin{aligned}
        K'_{\e\e'} &= \sum_{x\in \Lambda_L^{\e},y \in \Lambda^{\e'}_L}  \overline{\psi}_{x,j} \big((\slashed{D}_{\mathrm{SN}})^-_{\Lambda_L}\big)_{x,y} \psi_{y,j}, \qquad \e,\e'\in\{\pm\}.        
    \end{aligned}
\end{equation*}

Let us analyze the terms in the r.h.s. of \eqref{eqn:52b}.

\begin{enumerate}
\item Let us begin by discussing $K'_{++}$ and $K'_{--}$. We want to show that $        \Theta(K'_{++}) = K'_{--}$. By a direct computation:

\begin{equation*}
\begin{aligned}
\Theta(K'_{++}) &= \sum_{x, y \in (\L_L)_+} \Theta(\psi_y)\ov{\big((\slashed{D}_{\mathrm{SN}})^-_{\Lambda_L}\big)_{x,y}}    \Theta( \overline{\psi}_x )\\
&= \sum_{x, y \in (\L_L)_+} \Gamma_{\nu}(x) \overline{\psi}_{\vartheta(y)} \big((\slashed{D}_{\mathrm{SN}})^-_{\Lambda_L}\big)_{y,x}^*  \Gamma_{\nu}(y) \psi_{\vartheta(x)}\\      
&= \sum_{x,y \in (\L_L)_+} \overline{\psi}_{\vartheta(y)} \big((\slashed{D}_{\mathrm{SN}})^-_{\Lambda_L}\big)_{\vartheta(y), \vartheta(x)} \psi_{\vartheta(x)}\\
&= \sum_{x,y \in (\L_L)_-} \overline{\psi}_{x}\big((\slashed{D}_{\mathrm{SN}})^-_{\Lambda_L}\big)_{x,y} \psi_{y}\\
&= K'_{--},
\end{aligned}
\end{equation*}

where we have used \eqref{RSD1} to pass from the second to the third line.

\item In analogy with the NN model, for the terms $K'_{+-}$ and $K'_{-+}$ we will show a decomposition of the form:
 
\begin{equation*}
K'_{+-} + K'_{-+} = \sum_{j=1}^N \sum_{\epsilon= \pm 1}\bigg( \sum_{\substack{x \in (\L_L)_+:\\(x,x+e_{\nu}) \in \partial_+ (\L_L)_+}} \psi^{\epsilon}_{x,j} \Theta(\psi^{\epsilon}_{x,j})  + \sum_{\substack{x \in (\L_L)_+:\\(x- e_{\nu}, x) \in \partial_- (\L_L)_+}} \psi^{\epsilon}_{x,j} \Theta(\psi^{\epsilon}_{x,j})\bigg),
\end{equation*}

with $\psi^-_{x,j}\equiv \psi_{x,j}$ and $\psi^+_{x,j}\equiv \ov{\psi}_{x,j}$.

Let us consider the subset of cross terms localized around the right boundary $\partial_+ (\L_L)_+$ (or in general on the boundary of the region $(\L_L)_+$ where we have not inserted the $\mathbb{Z}_2$-twisting):

\begin{equation*}
\begin{aligned}
K'_{+-} &= \sum_{\substack{x \in (\L_L)_+:\\
(x,x+e_{\nu}) \in \partial_+ (\L_L)_+\\
j=1, \cdots, N}} \Big(\overline{\psi}_{x,j} \Gamma_{\nu}(x)\psi_{x+e_{\nu},j} - \overline{\psi}_{x+e_{\nu},j} \Gamma_{\nu}(x) \psi_{x,j} \Big)\\
&= \sum_{\substack{x \in (\L_L)_+:\\
(x,x+e_{\nu}) \in \partial_+ (\L_L)_+\\
j=1, \cdots, N}} \Big( \overline{\psi}_{x,j} 
\Gamma_{\nu}(x)\psi_{x+e_{\nu},j} + \psi_{x,j} \Gamma_{\nu}(x) \overline{\psi}_{x+e_{\nu},j} \Big) \\
&= \sum_{\substack{x \in (\L_L)_+:\\
(x,x+e_{\nu}) \in \partial_+ (\L_L)_+\\
j=1, \cdots, N}} \Big( \overline{\psi}_{x,j} \Theta(\overline{\psi}_{x,j}) + {\psi}_{x,j} \Theta({\psi}_{x,j}) \Big),
\end{aligned}
\end{equation*}

where again the Grassmann variables $\psi,\ov{\psi}$ are understood with \emph{periodic} boundary conditions.

The terms on the other boundary of $(\L_L)_+$, $\partial_- (\L_L)_+$, come with an extra minus sign which is compensated by the $\mathbb{Z}_2$-holonomy (see Fig. \ref{fig1}):

\begin{equation*}
\begin{aligned}
K'_{-+} &= \sum_{\substack{x \in (\L_L)_+:\\
(x-e_{\nu},x) \in \partial_- (\L_L)_+\\
j=1, \cdots, N}} \Big( -\overline{\psi}_{x-e_{\nu},j} \Gamma_{\nu}(x) \, \psi_{x,j} + \overline{\psi}_{x,j} \Gamma_{\nu}(x)\psi_{x-e_{\nu},j} \Big)\\
&= \sum_{\substack{x \in (\L_L)_+:\\
(x-e_{\nu},x) \in \partial_- (\L_L)_+\\
j=1, \cdots, N}} \Big( \psi_{x,j} \Gamma_{\nu}(x) \overline{\psi}_{x-e_{\nu},j}  \,  + \overline{\psi}_{x,j} \Gamma_{\nu}(x) \psi_{x-e_{\nu},j} \Big)\\
&= \sum_{\substack{x \in (\L_L)_+:\\
(x-e_{\nu},x) \in \partial_- (\L_L)_+\\
j=1, \cdots, N}}\Big( \psi_{x,j} \Theta(\psi_{x,j}) + \overline{\psi}_{x,j} \Theta(\overline{\psi}_{x,j} ) \Big).\\
\end{aligned}
\end{equation*}
        
\end{enumerate} 
        \end{enumerate}
        \end{proof}

\subsection{Staggered Fermions with Plaquette Interaction}

\begin{lemma}
    Properties \eqref{H1}, \eqref{H2} and \eqref{H3} hold for canonical reflections, within the $\mathrm{SP}$ model.
\end{lemma}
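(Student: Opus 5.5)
I will follow the same three-step pattern used for the NN and SN models, exploiting Remark \ref{RPSDP}: the SP model is the SN model with Hubbard--Stratonovich fields constant on unit cells. The canonical cut plane for SP ($r=1$ on $\widetilde\Lambda_L$) coincides, at the level of the fine lattice $\Lambda_L$, with the canonical SN plane $r=\tfrac12$. The plane passes between the two columns (or rows) of each unit cell in the sense that $\vartheta_\nu$ maps whole cells to whole cells, and it exchanges $A\leftrightarrow B$, $C\leftrightarrow D$ (for $\nu=1$) or $A\leftrightarrow C$, $B\leftrightarrow D$ (for $\nu=0$) only when combined with the cell relabeling. I should double-check this geometry first, since it determines whether cross terms only occur between cells on opposite sides.

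For \eqref{H1}, I will invoke the SP analogue of Lemma \ref{RSD} in the Appendix. Concretely, a $\Theta$-reflected cell-constant field is again cell-constant, and it equals the SN reflection of the lifted field. Hence \eqref{H1} follows directly from the SN statement applied to the lift. For \eqref{H2}, the argument is the same Grassmann Gaussian identity as before. The on-site factor $e^{\phi_x\sum_{a,j}\overline\psi_{x,a,j}\psi_{x,a,j}}$ is supported in a single cell, so it belongs to $(\Lambda^{\rm SP}_L)_\pm$ cleanly; this is exactly why the cells must not be split by the plane.

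For \eqref{H3}, I first check that \eqref{RSFP} is an involution. This requires $\Gamma_\nu(\vartheta_\nu(a))=\Gamma_\nu(a)$. For $\nu=0$ we have $\Gamma_0\equiv1$. For $\nu=1$, $\vartheta_1$ swaps $A\leftrightarrow B$ and $C\leftrightarrow D$, which preserves $\Gamma_1$ (equal to $-1$ on $\{A,B\}$ and $+1$ on $\{C,D\}$). Since the SP kinetic term is literally the SN kinetic term rewritten in cell coordinates, I apply the same staggered gauge transformation $\psi\to(-1)^{x_0+x_1}\psi$ on fine sites to pass to $K'$. I then split $K'=K'_{++}+K'_{--}+K'_{+-}+K'_{-+}$ according to cells. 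The identity $\Theta(K'_{++})=K'_{--}$ is obtained exactly as in the SN computation, using \eqref{PSDP1} and the fact that \eqref{RSFP} coincides with \eqref{RSF} under the identification of $(x,a)$ with a fine site $y$, where $\Gamma_\nu(a)=\Gamma_\nu(y)$.

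The cross terms involve only fine links crossing the plane, and these join, for example, a $B$ or $D$ site of a cell to an $A$ or $C$ site of the neighbouring cell. I expect the main obstacle to be verifying that the hopping phases and the $\mathbb Z_2$ holonomy along $\partial_-(\widetilde\Lambda_L)_+$ combine so that each crossing link reads $\overline\psi\,\Theta(\overline\psi)+\psi\,\Theta(\psi)$ with the right sign. The same applies to the fact that the reflected partner of $(x,a)$ is $(\vartheta(x),\vartheta(a))$, i.e. the fine mirror site. I would settle this by reducing to the SN cross-term computation on the fine lattice, noting that the fine mirror of $(x,a)$ is precisely $(\vartheta_\nu(x),\vartheta_\nu(a))$. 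The requirement $L\in8\mathbb N$ ensures that the holonomy line and both plane boundaries lie between cells with the same phase pattern as in Fig.~\ref{figstaglat1}.
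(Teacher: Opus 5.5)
Your proposal is correct and follows the same route as the paper. The paper's proof is just the observation that the $\mathrm{SP}$ model is the $\mathrm{SN}$ model restricted to Hubbard--Stratonovich fields constant on unit cells, and that canonical reflections map whole cells to whole cells, so \eqref{H1}--\eqref{H3} are inherited from the $\mathrm{SN}$ lemma. You make that reduction explicit: you lift the field, identify the fine-lattice plane, and check that \eqref{RSFP} is the fine mirror of \eqref{RSF} and an involution.

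One small aside: the hypothesis $L\in 8\mathbb N$ is not what is used here; it is needed later for the period-four chessboard patterns on $\widetilde\Lambda_L$. For reflection positivity it suffices that each half of the torus consists of whole cells.
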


\begin{proof}
The Lemma is a corollary of the related lemma for the SN model. Indeed, one can think of the SP model as a particular case of the SN model where the bosonic field is constant within each unit cell $\mathcal{I}$. Since reflections are performed between cells, the structure is preserved under reflections and every consideration made above can be easily translated here.
\end{proof}

\section{Long-Range Order}
\label{sec4}

In this section we establish Long-Range Order for the bosonic field with respect to the effective measures $\mu_\alpha(\phi)$ and, by the Hubbard--Stratonovich identity \eqref{eq:HS-2pt-common}, for the fermionic bilinear $\overline{\psi}\psi$.\\

In each model, the effective bosonic measure can be written in Gibbs form as:

\begin{equation*}
    d \mu_{\alpha}(\phi) = \prod_{x \in \Lambda^{\alpha}_L} d \phi_x e^{-N  V^{\alpha}_{L,\lambda}(\phi)}
 \end{equation*}

where the effective potential is given by

\begin{equation}
\label{EP}\tag{EP}
V^{\a}_{L,\l}(\phi) \doteq \frac{1}{2\l} \sum_{x\in\L_L^{\a}}\phi_x^2 - \log\big|\det\big( (\slashed{D}_{\a})^-_{\L_L} - M_{\phi} \big) \big|.
\end{equation}

\begin{remark}
The determinant appearing in the definition of $V^{\alpha}_{L,\lambda}(\phi)$ may vanish for certain configurations $\phi\in \mbb{R}^{\L^{\a}_L}$. However, since it is a polynomial in the field variables, its zero set has zero Lebesgue measure (and hence also zero Gaussian measure). Throughout the paper, we adopt the convention that $V^{\alpha}_{L,\lambda}(\phi)=+\infty$ on such configurations. In particular, $V^{\alpha}_{L,\lambda}$ takes values in $\mathbb{R}\cup\{+\infty\}$.
\end{remark}

\begin{figure}
    \centering
\begin{tikzpicture}[scale=1.5]
\begin{axis}[
    axis lines = middle,
    xlabel = {$\vphi$},
    xmin = -2.2, xmax = 2.2,
    ymin = -1, ymax = 3,
   samples = 200,
    domain = -1.5:1.5,
    width=9cm,
    height=6cm,
     xtick = \empty,
    ytick = \empty,
]

    \addplot[blue, thick] {x^4 - x^2+0.2};
    \node[blue] at (1.8,2) {\scalebox{1}{$ v^{\alpha}_{\lambda}(\vphi)$}};
    \draw[] (-0.4,-0.5) -- (-0.4, 3);
     \draw[] (0.4,-0.5) -- (0.4, 3);
      \draw[] (-1,-0.5) -- (-1, 3);
     \draw[] (1,-0.5) -- (1, 3);
     \node[below] at (-0.4,-0.5) {\scalebox{0.75}{$-C_{\alpha}$}};
\node[below] at (0.4,-0.5) {\scalebox{0.75}{$C_{\alpha}$}};
\node[below] at (1,-0.5) {\scalebox{0.75}{$K_{\alpha}$}};
\node[below] at (-1,-0.5) {\scalebox{0.75}{$-K_{\alpha}$}};
\addplot[only marks, mark=*, red] coordinates {
    (-0.707,-0.05)
    (0.707,-0.05)
    (0,0.2)
};

\node[below] at (-0.707,-0.05) {\scalebox{0.75}{$-\vphi^{\star}_{\alpha}(\lambda)$}};
\node[below] at (0.707,-0.05) {\scalebox{0.75}{$\vphi^{\star}_{\alpha}(\lambda)$}};
    
\end{axis}

\end{tikzpicture}
\caption{Sketch of the Effective Potential $v^{\alpha}_{\lambda}(\vphi)$}
\label{effpot1}
\end{figure}

The strategy of the proof is dictated by the structure of the effective potential. Although $ V^{\alpha}_{L,\lambda}$ is highly non-local and does not admit a simple closed expression, Reflection Positivity implies that, among configurations constrained to lie in a given interval, 
its minimizers are spatially constant and can therefore be computed explicitly via Fourier transforms. The corresponding infinite-volume effective potential density:

\begin{equation}
\label{tdpd}
\tag{TDEPD}
    v^{\alpha}_{\lambda}(\vphi) \doteq \lim_{L \to \infty} \frac{1}{|\Lambda^{\a}_L|} V^{\a}_{L, \l}(\phi)\Big|_{\phi= \vphi^{\L^{\a}_L}},
\end{equation}

has a symmetric double-well structure, with minima at $\pm\varphi_\alpha^\star(\lambda) \neq 0$ for any $\lambda >0$ (see Fig.~\ref{effpot1}). Long-Range Order is then proved by decomposing the configuration space into regions around the two minima $\vphi=\pm\vphi^{\star}_{\a}(\l)$, the local maximum $\vphi=0$, and the large-field regions, and by estimating the contribution from each region separately.\\

The following facts about the effective potential will be repeatedly used.

\begin{proposition}[Structural properties of the effective potential]
\label{const_field}
For each $\alpha\in\{\mathrm{NN},\mathrm{SN},\mathrm{SP}\}$ and every interval $I\subseteq\mathbb R$, the following hold.

\begin{enumerate}
\item\label{itt:1}
The constrained infimum of $ V^{\alpha}_{L,\lambda}$ over $I^{\Lambda_L^\alpha}$ is attained among constant configurations:
\begin{equation}
    \label{Minpot}
    \inf_{\phi\in I^{\Lambda_L^\alpha}}  V^{\alpha}_{L,\lambda}(\phi) = \inf_{\varphi\in I}  V^{\alpha}_{L,\lambda}(\phi)\Big|_{\phi= \vphi^{\L^{\a}_L}}
\end{equation}

\item\label{itt:2}
There exists a constant $\mathfrak c_0>0$ such that:
\begin{equation}
    \label{eqn:12a}
    \left|
\tfrac{1}{|\Lambda_L^\alpha|} V^{\alpha}_{L,\lambda}(\varphi^{\L^{\a}_L})
-
 v^{\alpha}_{\lambda}(\varphi)
\right|
\le
\frac{\mathfrak c_0}{L},
\end{equation}

uniformly with respect to $\varphi \in \mathbb{R}$, where the infinite-volume potential density $v^{\alpha}_{\lambda}$ admits the explicit representation:
\[
 v^{\alpha}_{\lambda}(\varphi)
=
\frac{\varphi^2}{2\lambda}
-
C_\alpha
\iint_{(-\pi,\pi]^2}\frac{d^2p}{(2\pi)^2}
\log\big(\varphi^2+\sin^2(p_0)+\sin^2(p_1)\big)
\]

with $C_{\a}=1,\frac{1}{2},2$ for $\a=\mathrm{NN},\mathrm{SN},\mathrm{SP}$ respectively.

\item\label{itt:3} The function $v^{\alpha}_{\lambda}$ has exactly two non-zero global minima at $\pm\varphi_\alpha^\star(\lambda)$ and a local maximum at $0$. 
$\vphi^{\star}_{\a}$ is a $C^{\infty}$ function of $\l\in(0,\infty)$ and satisfies, for $\l\le 1$,
\begin{equation}
\label{eqn:41}
\mf{a}_1e^{-\frac{\mf{b}_1}{\l}}\le \vphi^{\star}_{\a}(\l) \le   \mf{a}_2e^{-\frac{\mf{b}_2}{\l}}, 
\end{equation}
for suitable constants $\mf{a}_1,\mf{a}_2,\mf{b}_1,\mf{b}_2>0$. Besides, $v^{\a}_{\l}$ is strictly decreasing over $[0,\vphi^{\star}_{\a}(\l)]$ and strictly increasing over $[\vphi^{\star}_{\a}(\l),\infty)$.

\end{enumerate}
\end{proposition}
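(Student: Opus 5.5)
The plan treats the three items separately. Item \ref{itt:1} is a consequence of Reflection Positivity, item \ref{itt:2} is a Fourier computation, and item \ref{itt:3} is a one-variable analysis of the gap equation.

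\textbf{Item \ref{itt:1}.} The Gaussian part of $V^{\alpha}_{L,\lambda}$ is a sum of single-site terms, so everything reduces to the determinant. Write $F(\phi)\doteq|\det((\slashed{D}_\alpha)^-_{\Lambda_L}-M_\phi)|^2$, which equals $\det(\cdots)^2$ by \eqref{PND4}, \eqref{PSD4} and \eqref{PSDP4}.
\begin{itemize}
\item \emph{Grassmann representation.} I would write $F(\phi)$ as a Grassmann integral with two flavours. This is legitimate because $N$ is even, so two flavours suffice.
\item \emph{Reflection Positivity.} In this representation I would apply exactly the mechanism of the proof of Theorem \ref{rp}: properties \eqref{H2} and \eqref{H3}, the factorisation $d\nu_+\,d\nu_-$ with $\Theta(d\nu_-)=d\nu_+$, and the expansion into absolute squares. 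Here the fields $\phi|_{(\Lambda^\alpha_L)_\pm}$ are frozen parameters rather than integration variables.
\item \emph{Cauchy--Schwarz.} The sum-of-squares structure yields a Schwarz inequality for every cut plane,
\[
F(\phi)\le F(\phi_+\sqcup\Theta\phi_+)^{1/2}\,F(\Theta\phi_-\sqcup\phi_-)^{1/2},
\]
where $\phi_\pm=\phi|_{(\Lambda^\alpha_L)_\pm}$.
\item \emph{Chessboard iteration.} I would iterate this over reflections in both directions. To avoid assuming $L$ is a power of $2$, I would use the standard Fröhlich--Israel--Lieb--Simon argument: maximise the ratio $F(\phi)/\prod_x F(\phi_x^{\Lambda^\alpha_L})^{1/|\Lambda^\alpha_L|}$, and among maximisers pick one with the largest number of sites carrying a given value. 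Alternatively, I would use the Fröhlich--Lieb trick of lifting to a larger torus.
\item \emph{Conclusion.} The iteration gives $|\det(\cdots M_\phi)|\le\prod_x|\det(\cdots M_{\phi_x^{\Lambda^\alpha_L}})|^{1/|\Lambda^\alpha_L|}$. Hence $V^{\alpha}_{L,\lambda}(\phi)\ge|\Lambda^\alpha_L|^{-1}\sum_x V^{\alpha}_{L,\lambda}(\phi_x^{\Lambda^\alpha_L})$. If every $\phi_x\in I$, the right-hand side is at least $\inf_{\varphi\in I}V^{\alpha}_{L,\lambda}(\varphi^{\Lambda^\alpha_L})$, which proves \eqref{Minpot}.
\end{itemize}
For SP, I would use Remark \ref{RPSDP}, since reflections act between unit cells.

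\textbf{Item \ref{itt:2}.} For constant $\phi=\varphi$, the operator $M_\phi=\varphi\mathbbl{1}$ commutes with the Dirac operator. The anti-periodic extension is diagonalised by the momenta $p\in\frac{\pi}{L}(2\mathbb{Z}+1)^2$, modulo the unit-cell structure for SN and SP.
\begin{itemize}
\item \emph{Spectrum.} By \eqref{eq:model_NN}, \eqref{eq:model_SN} and \eqref{eq:model_SP}, the eigenvalues come in pairs $\pm i\sqrt{\sin^2p_0+\sin^2p_1}-\varphi$. So $|\det|$ is a product of factors $(\varphi^2+\sin^2p_0+\sin^2p_1)$ over a Brillouin zone.
\item \emph{Constants $C_\alpha$.} Counting spinor or unit-cell multiplicities against $|\Lambda^\alpha_L|$ gives the constants: NN has two spinor components per site, so $C=1$. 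SN has one component per site, so $C=\tfrac12$. SP has $L^2$ degrees of freedom but only $L^2/4$ sites, so $C=2$.
\item \emph{Riemann-sum error.} Then $|\Lambda^\alpha_L|^{-1}V^{\alpha}_{L,\lambda}(\varphi^{\Lambda^\alpha_L})$ is a midpoint Riemann sum of $\log(\varphi^2+\sin^2p_0+\sin^2p_1)$. The difficulty is uniformity in $\varphi$ near $0$, where the integrand has logarithmic singularities at the points $p\in\{0,\pi\}^2$. Anti-periodic momenta stay at distance at least $\pi/L$ from these points. I would split each $1/L$-cell as a near-singularity region plus its complement. In the complement, the midpoint rule gives an error $O(L^{-2}\|\partial^2 f\|)$, with $\|\partial^2 f\|\lesssim|p|^{-2}$, which sums to $O(\log L/L^2)$. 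The near cells contribute $O(L^{-2}\log L)$ directly. This gives $O(1/L)$ uniformly in $\varphi$.
\item \emph{Large $\varphi$.} For large $|\varphi|$ the integrand is smooth with bounded derivatives, uniformly.
\end{itemize}

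\textbf{Item \ref{itt:3}.} Differentiate to get $v'(\varphi)=\varphi\big(\lambda^{-1}-2C_\alpha G(\varphi)\big)$, with $G(\varphi)=\iint\frac{d^2p}{(2\pi)^2}(\varphi^2+\sin^2p_0+\sin^2p_1)^{-1}$.
\begin{itemize}
\item \emph{Unique positive critical point.} $G$ is strictly decreasing on $(0,\infty)$, tends to $0$ at infinity, and diverges like $\frac{1}{\pi}\log(1/\varphi)+O(1)$ as $\varphi\to0$, with four contributing Dirac points. Hence there is a unique positive root $\varphi^\star_\alpha(\lambda)$. The sign of $v'$ then gives the claimed monotonicity, the local maximum at $0$, and the two global minima.
\item \emph{Smoothness.} $C^\infty$ dependence on $\lambda$ follows from the implicit function theorem, since $G'<0$.
\item \emph{Exponential bounds.} Inverting the logarithmic asymptotics of $G$ gives \eqref{eqn:41}. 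Equivalently, one can use the elliptic-integral form \eqref{eqn:49}.
\end{itemize}

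\textbf{Main obstacle.} I expect the hardest step to be the chessboard step in item \ref{itt:1}. It requires checking that the Schwarz inequality holds for the determinant with \emph{arbitrary} real, non-reflection-symmetric $\phi$. It also requires that the auxiliary gauge transformations of Remark \ref{rem:holonomy} for non-canonical cut planes leave $|\det|$ unchanged. I would first verify both points carefully on the two-flavour Grassmann representation.
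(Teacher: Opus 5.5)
Your proposal is correct. Items 2 and 3 follow essentially the paper's route: Fourier diagonalisation plus a Riemann-sum estimate uniform in $\varphi$, then the gap equation, the implicit function theorem and logarithmic bounds on the bubble integral. Item 1, however, is argued differently.

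\textbf{Item 1.} Write $\mathcal{D}_\phi\doteq(\slashed{D}_\alpha)^-_{\Lambda_L}-M_\phi$. The paper (Lemma \ref{lemma:reflection}) uses only the one-flavour Schwarz inequality $\det(\mathcal{D}_\phi)^2\le\det(\mathcal{D}_{\phi^+})\det(\mathcal{D}_{\phi^-})$, combined with a maximiser argument:
\begin{itemize}
\item a maximiser $\tilde\phi$ of $e^{-NV^\alpha_{L,\lambda}}$ over $\overline{I}^{\Lambda^\alpha_L}$ exists, since $e^{-NV}$ is continuous and vanishes at infinity;
\item the inequality forces both $\tilde\phi^{+}$ and $\tilde\phi^{-}$ to be maximisers as well;
\item successive reflections, choosing the kept half each time, spread a single value over the whole torus and give a constant maximiser.
\end{itemize}
You instead prove the full chessboard bound $|\det\mathcal{D}_\phi|\le\prod_{x}|\det\mathcal{D}_{\phi_x^{\Lambda^\alpha_L}}|^{1/|\Lambda^\alpha_L|}$. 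This yields the stronger pointwise inequality $V^\alpha_{L,\lambda}(\phi)\ge|\Lambda^\alpha_L|^{-1}\sum_xV^\alpha_{L,\lambda}(\phi_x^{\Lambda^\alpha_L})$, and hence \eqref{Minpot} for any set $I$, with no compactness needed.

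The cost of your route has two parts.
\begin{itemize}
\item The FILS normalisation. This is fine here: anti-periodic momenta never hit zeros of $\sin^2p_0+\sin^2p_1$, so $\det\mathcal{D}_{a^{\Lambda}}>0$ for every constant $a$.
\item The combinatorics for $L$ not a power of $2$. A maximiser with maximal count can be balanced with respect to every cut; on a ring with $L=4$, the set $\{0,2\}$ is such an example. So the count argument must be run along the whole chain of reflected maximisers, which is exactly the paper's doubling step. The two routes therefore share their core.
\end{itemize}
The lifting alternative is not immediate here. A periodic lift to a larger anti-periodic torus splits into determinants on $\Lambda_L$ with twisted, non-anti-periodic boundary phases, and Reflection Positivity is only established for the anti-periodic one. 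Two flavours are also unnecessary: the one-flavour form $\langle e^A,e^B\rangle_{\rho_\alpha}$ already gives the Schwarz inequality.

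\textbf{Item 2.} Your second-order midpoint estimate gives $O(L^{-2}\log L)$. This is sharper than the paper's first-order bound, which uses $|\nabla_p\log(\varphi^2+\sin^2p_0+\sin^2p_1)|\lesssim|p|^{-1}$ uniformly in $\varphi$ and already suffices for $\mathfrak c_0/L$.

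\textbf{Item 3.} There is one small slip. Each of the four Dirac points contributes $\frac{1}{2\pi}\log(1/\varphi)$, so $G(\varphi)=\frac{2}{\pi}\log(1/\varphi)+O(1)$. This gives $\varphi^\star_\alpha(\lambda)\approx e^{-\pi/(4C_\alpha\lambda)}$, consistent with \eqref{asymmin}. The slip only affects the constants in \eqref{eqn:41}.
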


The proof of Proposition \ref{const_field} is discussed in Appendix \ref{app:pot}.

\subsection{Field-size decomposition}
\label{ssect:field_size}
By Proposition \ref{const_field} we have that the minima of the finite-volume energy density are asymptotically determined by those of $ v^{\alpha}_{\lambda}$. The latter exhibits a symmetric double-well structure (see Fig.~\ref{effpot1}), with absolute minima at $\pm \varphi_\alpha^\star(\lambda)$. This suggests partitioning the configuration space according to the position of the field relative to the minima, the local maximum at the origin, and the large-field regime. Configurations belonging to energetically unfavorable regions, namely large fields or fields near the local maximum, are exponentially suppressed in $N$. This suppression is quantified by comparing the energy of configurations constrained to a given region with that of optimal constant configurations, using Chessboard Estimates \cite{cmp/1103904299, Friedli_Velenik_2017}, which are a remarkable corollary of Reflection Positivity, together with \eqref{Minpot}.

\begin{thm}[Chessboard Estimates]
   Let $\mu$ be a reflection-positive probability measure on $\mbb{R}^{\Lambda_L}$ with respect to reflections between neighboring blocks $\Lambda_B + Bt$ with $t \in \mathbb{Z}^2_{L/B}$ (where both $L, L/B \in 2 \mathbb{N}$). Then for any positive set of functions $f_1, \dots, f_n$ supported in $\Lambda_B$ and for any set of vectors $t_1, \cdots, t_n \in \mathbb{Z}^2_{L/B}$, the following inequality holds:

\begin{equation}\label{CE}\tag{CE}
     \Big\langle \prod_{j=1}^n \Theta_{t_j}(f_j) \Big\rangle_{\mu} \leq \prod_{j=1}^n  \Big\langle \prod_{t \in \mathbb{Z}^2_{L/B}} \Theta_{t}(f_j)\Big\rangle_{\mu}
\end{equation}

where $\Theta_t$ is the composition of reflections that brings the block $\Lambda_B$ in the block $\Lambda_B + Bt$
\end{thm}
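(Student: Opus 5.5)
The plan is the classical Fröhlich--Israel--Lieb--Simon argument: Cauchy--Schwarz inequalities for the reflection-positive inner product, followed by iterated reflections. Since $\mu$ is reflection positive, the form $(A,B)\mapsto\langle \Theta(A) B\rangle_\mu$ on $\mathcal A_+$ is positive semidefinite. The Cauchy--Schwarz inequality then gives
\[
\big|\langle \Theta(A)B\rangle_\mu\big|\le \langle\Theta(A)A\rangle_\mu^{1/2}\,\langle\Theta(B)B\rangle_\mu^{1/2}
\]
for $A,B$ supported on one half of the torus. Note that $\Theta(f)$ is real for real $f$, and, for positive functions, the reflection of $f$ supported in a block is positive.

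Without loss of generality the block indices can be assumed distinct. If two factors sit in the same block, their product is again a positive function supported there; if some block carries no factor, insert $f\equiv 1$, whose ``disseminated'' expectation equals $1$. So it suffices to treat the case where every block $t\in\mathbb Z^2_{L/B}$ carries exactly one function $g_t$ (a reflected copy of some $f_j$, transported back to $\Lambda_B$). Define the functional
\[
Z(\{g_t\})\doteq\Big\langle\prod_t\Theta_t(g_t)\Big\rangle_\mu \qquad\text{and}\qquad G(\{g_t\})\doteq \frac{Z(\{g_t\})}{\prod_t Z_{g_t}^{1/(L/B)^2}},
\]
where $Z_{g}\doteq\langle\prod_t\Theta_t(g)\rangle_\mu$. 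The target is $G\le 1$, assuming first that all $Z_{g_t}>0$; the degenerate case follows afterwards by approximating with $g_t+\delta$ and letting $\delta\to0$.

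Take a configuration $\{g_t\}$ maximizing $G$, which exists since the set of assignments drawn from the finite family $\{f_j,1\}$ is finite. Among maximizers, choose one with the largest number of blocks carrying a common function $g_{t_0}$. Now split the torus by a reflection plane between blocks, i.e.\ a cut plane at a block boundary, chosen so that the two halves carry different functions. Cauchy--Schwarz bounds $Z$ by the geometric mean of two configurations, each obtained by reflecting one half onto the other. These reflected configurations have the same multiset product of normalization factors squared. Hence at least one of them is also a maximizer, and it has strictly more copies of $g_{t_0}$, unless the configuration is already constant. Iterating along both coordinate directions (this requires $L/B$ even, so that reflections map blocks to blocks and half-tori to half-tori) the maximum is attained at a constant configuration $g_t\equiv g$, for which $G=1$. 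An alternative to the maximizer argument is the standard iteration: apply Cauchy--Schwarz $\log_2(L/B)$ times in each direction, which doubles one function's support at each step until it fills the torus.

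The main obstacle is the bookkeeping. One must ensure that each reflection across block planes maps the product of reflected block functions to a product of the same type, with the correct reflected functions. Here $\Theta_t$ is a composition of reflections, and the antilinear $\Theta$ acts trivially on real positive functions. One must also check that reflection positivity is available for every block-boundary cut plane, not only the canonical one. The latter is provided by Theorem~\ref{rp} together with Remark~\ref{rem:holonomy} (gauge-moving the holonomy line), and translation invariance of $\mu$ lets us treat all such planes uniformly.
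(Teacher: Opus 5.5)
Your route (Cauchy--Schwarz for the reflection-positive form, then an extremal argument over the finite set of block assignments) is the Fr\"ohlich--Israel--Lieb--Simon proof that the paper simply cites. Your target $G\le 1$ is the right normalized inequality: each factor on the right carries the exponent $1/|\mathbb{Z}^2_{L/B}|$, as actually used in \eqref{eqn:10a}. The display \eqref{CE} drops this exponent and is false without it (take $n=1$, $f_1\equiv\tfrac12$).

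The gap is in the extremal step, which does not work as written.

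First, constant assignments give $G=1$, so $M\doteq\max G\ge 1>0$. Then $G(g)^2\le G(g^+)G(g^-)$ forces \emph{both} reflected configurations to be maximizers. Your ``at least one'' is not enough, because you need the configuration with more copies of $g_{t_0}$ to be the maximizer.

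More seriously, a non-constant assignment need not admit any block plane that splits the copies of $g_{t_0}$ unevenly. Take the checkerboard assignment of two functions $h,f$ on a $4\times 4$ block torus (or $(h,f,h,f)$ on a cycle of four blocks). Every half-torus contains exactly half of the $h$'s, so no single reflection increases the count, and your contradiction never arises.

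The repair is to use that the set of maximizers is closed under every block reflection, and to spread one function deterministically:
\begin{itemize}
\item If $h$ occupies $k\le \tfrac{L}{2B}$ consecutive blocks of a row, reflect across the plane at one end of this interval, keeping the half that contains it. This gives a maximizer with $h$ on $2k$ consecutive blocks.
\item Once $k>\tfrac{L}{2B}$, keep a half lying inside the interval; this fills the row.
\item Repeat with planes in the other coordinate direction, acting on full rows.
\end{itemize}
This yields the constant assignment $h$ as a maximizer, whence $M=1$. Your fallback of applying Cauchy--Schwarz $\log_2(L/B)$ times needs $L/B$ to be a power of $2$, which is not assumed.

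Separately, your reduction to distinct block indices by merging factors in the same block is not legitimate. It replaces the two right-hand factors for $f_i$ and $f_j$ by the single factor for $f_if_j$, which is a different inequality. In fact the estimate is false for coinciding indices. For an i.i.d.\ product measure (which is reflection positive), $t_1=t_2$ and $f_1=f_2=f$ would give $\langle f^2\rangle_\mu\le\langle f\rangle_\mu^2$. The theorem must be read, as in the cited references, for distinct $t_1,\dots,t_n$. Padding empty blocks with $f\equiv 1$ is fine.
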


\begin{proof}
    See \cite[Pag. 444-449]{Friedli_Velenik_2017}, \cite[Pag. 43-46]{Biskup_2009} or \cite[Pag. 19-20]{cmp/1103904299}
\end{proof}

The contributions which are not apriori suppressed arise from configurations where both fields lie near the minima. Configurations in which the field at different points lies around different minima can be shown to be again exponentially suppressed in $N$, via a 
Peierls-type argument combined with Reflection Positivity, in the spirit of \cite{frohlich1978phase} (see also \cite{Biskup_2009, Friedli_Velenik_2017}). While this strategy applies to all our three models, the explicit estimates depend on the specific lattice realization of the Dirac operator and will be treated separately in the following subsections.\\

To implement this strategy, we introduce a quantitative partition of the configuration space at the level of single-site variables. Fix parameters $C_\alpha,K_\alpha>0$ such that:
\[
0< C_\alpha < \varphi_\alpha^\star(\lambda) < K_\alpha.
\]
We then split the single-site field values into the following five regions:
\begin{enumerate}
\item\label{it:large+}
\textbf{Large positive fields:}
$\phi_x\in (K_\alpha,\infty)$;

\item\label{it:large-}
\textbf{Large negative fields:} $\phi_x\in (-\infty,-K_\alpha)$;

\item\label{it:max}
\textbf{Neighborhood of the local maximum:} $\phi_x\in (-C_\alpha,C_\alpha)$;

\item\label{it:min+}
\textbf{Neighborhood of the positive minimum:}
$\phi_x\in [C_\alpha,K_\alpha]$;

\item\label{it:min-}
\textbf{Neighborhood of the negative minimum:}
$\phi_x\in [-K_\alpha,-C_\alpha]$.
\end{enumerate}

Accordingly, introduce the following indicator functions:

\begin{align*}
&I_x^{\pm K_\alpha}(\phi)\doteq \mathbbl{1}_{\{\pm \phi_x>K_\alpha\}}, \qquad I_x^{(-C_\alpha,C_\alpha)}(\phi)\doteq\mathbbl{1}_{\{|\phi_x|<C_\alpha\}}, \qquad I_x^{\star_{\pm}}(\phi) \doteq \mathbbl{1}_{\{C_\alpha\le \pm \phi_x\le K_\alpha\}}.
\end{align*}

The lower bound on the two-point function is obtained by isolating the contribution in which both fields lie near the same minimum, and by estimating all the remaining contributions in terms of three errors: \emph{large-field} events, \emph{local-maximum} events, and \emph{opposite-minimum} events. Following the decomposition carried out in Appendix~\ref{applbub}, we obtain:
\begin{equation}
\label{eq:mainbound2}
\begin{aligned}
\inf_{\substack{x,y \in \Lambda^{\alpha}_L}} \langle \phi_x\phi_y\rangle_{\mu_\alpha} 
&\ge C_{\alpha}^2 - \Bigg( 2(K_{\alpha}^2+C_{\alpha}^2) \sup_{x,y\in\L^{\alpha}_L} \left\langle I^{\star_+}_x I^{\star_-}_y \right\rangle_{\mu_{\alpha}} +  4C_{\alpha} (C_{\a}+ K_{\alpha}) \sup_{x\in\L^{\alpha}_L} \left\langle I^{(-C_{\alpha},C_{\alpha})}_x \right\rangle_{\mu_{\alpha}} \\
&+ 4\big(1+ \tfrac{C_{\alpha}^2}{K_{\alpha}^2}\big)\sup_{x\in\L^{\alpha}_L} \left\langle \phi_x^2 I^{+K_{\alpha}}_x \right\rangle_{\mu_{\alpha}} + 4\big(C_{\alpha}+ 2\tfrac{C_{\alpha}^2}{K_{\alpha}}+ 2K_{\alpha}\big) \sup_{x\in\L^{\alpha}_L} \left\langle \phi_x^2  I^{+K_{\alpha}}_x\right\rangle_{\mu_{\alpha}}^{\frac{1}{2}}  \Bigg).
\end{aligned}
\end{equation}

Similarly, see again Appendix \ref{applbub}, for the upper bound we find:
\begin{equation}
\label{upbound}
    \sup_{x,y \in \Lambda^{\alpha}_L}  \langle \phi_x \phi_y \rangle_{\mu_{\alpha}} \leq K_{\alpha}^2 + 2K_{\alpha} \sup_{x \in \Lambda_L^{\alpha}} \left \langle \phi_x^2 I_x^{+K_{\alpha}} \right \rangle_{\mu_{\alpha}}^{\frac{1}{2}} + 4 \sup_{x \in \Lambda_L^{\alpha}} \big \langle \phi_x^2 I_x^{+K_{\alpha}} \big \rangle_{\mu_{\alpha}}.
\end{equation}

Thus, the proof reduces to estimating the following three expectation values:
\[
\big\langle \phi_x^2 I_x^{+K_\alpha}\big\rangle_{\mu_\alpha},
\qquad
\big\langle I_x^{(-C_\alpha,C_\alpha)} \big\rangle_{\mu_\alpha},
\qquad
\big\langle I_x^{\star_+}I_y^{\star_-}\big\rangle_{\mu_\alpha},
\]

whose bounds are discussed in Proposition \ref{prop:contrib} below. The first two correspond to \textit{large-field} and \textit{local-maximum} events and will be controlled directly by Chessboard Estimates together with Proposition~\ref{const_field}. The third one corresponds to configurations in which the field fluctuates near different minima at two distant sites and its control requires a Peierls-type argument combined with Reflection Positivity.

\begin{proposition}
\label{prop:contrib}
There exists $\lambda_0>0$ such that the following holds. For every $0<\lambda\le\lambda_0$ and every choice of parameters
\[
0<C_\alpha<\varphi_\alpha^\star(\lambda)<K_\alpha,
\]
with $\vphi^{\star}_{\a}$ as in Proposition \ref{const_field}, there exist $N_1(K_\alpha,C_\alpha,\lambda)$ and $L_1(K_\alpha,C_\alpha,\lambda)$ such that, for all 
\[
N\ge N_1(K_\alpha,C_\alpha,\lambda), \qquad L\ge L_1(K_\alpha,C_\alpha,\lambda),
\]
we have:
\begin{align}
\sup_{x\in\Lambda_L^\alpha}
\langle \phi_x^2 I_x^{+K_\alpha}\rangle_{\mu_\alpha}
&\le
e^{-\frac N2 c_1^\alpha(\lambda)}
\label{eqn:18}
\\[0.3em]
\sup_{x\in\Lambda_L^\alpha}
\langle I_x^{(-C_\alpha,C_\alpha)}\rangle_{\mu_\alpha}
&\le
e^{-\frac N2 c_2^\alpha(\lambda)}
\label{eqn:18b}
\\[0.3em]
\sup_{x,y\in\Lambda_L^\alpha}
\langle I_x^{\star_+} I_y^{\star_-}\rangle_{\mu_\alpha}
&\le
e^{-\frac N4 c_3^\alpha(\lambda)}
\label{eqn:18c}
\end{align}

where:

\begin{enumerate}
    \item $c_1^\alpha(\lambda)$ is the \virg{large-field energy gap} associated with configurations where $\phi_x > K_\alpha$:

    \[c_1^\alpha(\lambda) \doteq
 v^{\alpha}_{\lambda}(K_\alpha)- v^{\alpha}_{\lambda}(\varphi_\alpha^\star(\lambda))>0;\]

 \item $c_2^\alpha(\lambda)$ is the \virg{local-maximum energy gap} associated with configurations $-C_{\alpha}<\phi_x<C_{\alpha}$:

 \[c_2^\alpha(\lambda)
\doteq
 v^{\alpha}_{\lambda}(C_\alpha)- v^{\alpha}_{\lambda}(\varphi_\alpha^\star(\lambda))>0;\]

 \item $c_3^\alpha(\lambda)>0$ is the \virg{Peierls gap} associated with alternating-sign configurations obtained by reflection of elementary blocks:
\[
c^{\alpha}_3(\lambda)
\doteq
\min\left\{
v^{\alpha}_{\lambda}\!\left(\tfrac{1}{\sqrt{2}}\varphi^{\star}_{\alpha}(\lambda)\right)
-
v^{\alpha}_{\lambda}\!\left(\varphi^{\star}_{\alpha}(\lambda)\right),
\;
-\frac{1}{4} \log \Big(1- \mf{C}_2^{\a}\frac{\varphi^{\star}_{\alpha}(\lambda)^{n_\alpha}}{1+ \varphi^{\star}_{\alpha}(\lambda)^{n_\alpha}} \Big)
\right\},
\]
where $n_{\alpha}=4,2,8$ for $\alpha=\mathrm{NN},\mathrm{SN},\mathrm{SP}$ respectively and $\mf{C}_2^{\a}>0$ are suitable constants.
\end{enumerate}
\end{proposition}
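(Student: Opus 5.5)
The three bounds share one mechanism: Chessboard Estimates \eqref{CE} reduce a local event to a \emph{global} event, where every block is forced into the same constraint. The energy cost of that global event is then compared with the optimal constant configuration $\varphi^\star_\alpha(\lambda)$ using Proposition~\ref{const_field}.

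\textbf{Bounds \eqref{eqn:18} and \eqref{eqn:18b}.} I apply \eqref{CE} with single-site blocks (or unit cells for SP), which is allowed by Theorem~\ref{rp}. For $f=\phi_x^2 I_x^{+K_\alpha}\ge 0$ this gives
\[
\langle f\rangle_{\mu_\alpha}\le \Big(Z^{-1}\int_{(K_\alpha,\infty)^{\Lambda^\alpha_L}}\prod_y \phi_y^2\, d\phi\, e^{-NV^\alpha_{L,\lambda}(\phi)}\Big)^{1/|\Lambda^\alpha_L|}.
\]
\emph{Numerator.} I write $e^{-NV}=e^{-(N-1)V}e^{-V}$. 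The first factor is bounded by $e^{-(N-1)\inf V}$ over the constrained region. By \eqref{Minpot} this infimum is attained on constants; by \eqref{eqn:12a} and the monotonicity in item~\ref{itt:3}, it is at least $|\Lambda^\alpha_L|\big(v^\alpha_\lambda(K_\alpha)-\mathfrak c_0/L\big)$. The leftover $\int\prod_y\phi_y^2\,e^{-V}$ is at most $\mathfrak C^{|\Lambda^\alpha_L|}$. This uses Hadamard's bound $|\det(\slashed D-M_\phi)|\le\prod_x(|\phi_x|+2)^{k_\alpha}$ against the Gaussian weight $e^{-\phi_x^2/2\lambda}$.
\emph{Denominator.} I bound $Z$ from below by integrating over the box $|\phi_x-\varphi^\star_\alpha|<\delta$ with $\delta<\varphi^\star_\alpha/2$. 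On this box $\phi_x>0$, and since $\slashed D$ is anti-self-adjoint, $\|(\slashed D-M_\phi)^{-1}\|\le 2/\varphi^\star_\alpha$. Hence $|\partial_{\phi_x}\log|\det||\le k_\alpha\cdot 2/\varphi^\star_\alpha$, so $V$ varies by at most $O(\delta/\varphi^\star_\alpha)$ per site on the box. This gives $Z\ge(2\delta)^{|\Lambda|}e^{-N|\Lambda|(v^\alpha_\lambda(\varphi^\star_\alpha)+\mathfrak c_0/L+O(\delta/\varphi^\star_\alpha))}$.
\emph{Conclusion.} Taking the $|\Lambda|$-th root, $\langle f\rangle\le \mathfrak C'\delta^{-1}e^{-N(c_1^\alpha-2\mathfrak c_0/L-O(\delta))}$. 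I choose $\delta$ small, then $L\ge L_1$ and $N\ge N_1$, so that this is at most $e^{-Nc_1^\alpha/2}$. The local-maximum bound is identical, using evenness of $v^\alpha_\lambda$ and that it decreases on $[0,\varphi^\star_\alpha]$, so $\inf_{(-C_\alpha,C_\alpha)}v^\alpha_\lambda= v^\alpha_\lambda(C_\alpha)$.

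\textbf{Bound \eqref{eqn:18c} (Peierls).} On $\{I^{\star_+}_x I^{\star_-}_y=1\}$, let $\gamma$ be the contour bounding the connected component of $x$ in $\{\phi>0\}$ that separates $x$ from $y$. Each dual bond $b=\{u,u'\}\in\gamma$ has $\phi_u>0\ge\phi_{u'}$. I split this bond event into two cases.
\emph{Case (a).} One endpoint has $|\phi|<\varphi^\star_\alpha/\sqrt2$. This is handled as above with threshold $\varphi^\star_\alpha/\sqrt2$, giving the first term in $c_3^\alpha$.
\emph{Case (b).} Both $|\phi_u|,|\phi_{u'}|\ge\varphi^\star_\alpha/\sqrt2$ with opposite signs.
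I use \eqref{CE} with $2\times 2$ blocks; this needs $L\in 4\mathbb N$, resp.\ $8\mathbb N$ for SP. Each type-(b) bond event is then bounded by $\langle\text{fully alternating event}\rangle^{1/|\text{blocks}|}$. Under the fully alternating event the field is staggered, and I compare $|\det|^2=\det\big((\slashed D-M_\phi)^*(\slashed D-M_\phi)\big)$ with its value for the constant field. Using \eqref{PND2}/\eqref{PSD2}, this product equals $-\slashed D^2+M_\phi^2$ plus a commutator term $[\slashed D,M_\phi]$. The commutator vanishes for constant fields and is nonzero for staggered ones, and that is what produces a strict deficit. I expect it to give, after Fourier analysis on the doubled unit cell, a factor $\big(1-\mathfrak C^\alpha_2\varphi^{\star n_\alpha}/(1+\varphi^{\star n_\alpha})\big)$ per block, which is the second term in $c_3^\alpha$. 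With cost $e^{-Nc_3^\alpha/2}$ per contour bond and at most $3^k$ contours of length $k\ge 1$ around $x$, summing $\sum_k 3^k e^{-Nc_3^\alpha k/2}$ gives $e^{-Nc_3^\alpha/4}$ for $N$ large. Contours winding around the torus are handled by the same bound, since they have length at least $L$.

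\textbf{Main obstacle.} Case (b) is the step I expect to be hardest. The infimum of $V$ over staggered configurations is not covered by \eqref{Minpot}, so it needs a genuine lower bound on $-\log|\det|$ for non-constant fields with $|\phi|\ge\varphi^\star_\alpha/\sqrt2$. I would first reduce to the exactly alternating field of modulus $a\ge\varphi^\star_\alpha/\sqrt2$ by a convexity (Hadamard–Fischer) argument on $\log\det(-\slashed D^2+M_\phi^2+\cdots)$. I would then diagonalize in momentum space with the doubled cell and extract the $\varphi^{n_\alpha}$-order suppression explicitly.
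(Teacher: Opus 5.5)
Your treatment of \eqref{eqn:18} and \eqref{eqn:18b} is correct and essentially follows the paper: single-site chessboard, \eqref{Minpot} with \eqref{eqn:12a} for the numerator, and a box around $\varphi^\star_\alpha$ for $Z$. Writing $e^{-NV}=e^{-(N-1)V}e^{-V}$ with a Hadamard bound is a harmless variant of the paper's $\tilde\lambda_N$ splitting. Controlling $\log|\det|$ on the box via $\|(\slashed D-M_\phi)^{-1}\|\le 2/\varphi^\star_\alpha$ likewise replaces the paper's $|\det(\mathbbl{1}+A)|\ge(1-\|A\|)^n$. The contour combinatorics for \eqref{eqn:18c} is also fine.

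The gap is your case (b). It is exactly Lemma~\ref{lem:Peierls-gap}, the one genuinely new estimate in the proposition, and you leave it as an expectation. Two steps in your sketch do not work.
\begin{itemize}
\item \emph{Reduction to one modulus.} Reflections preserving the sign pattern only bring you to the two-parameter periodic family $(\varphi_-,\varphi_+,\varphi_+,\varphi_-)$. Nothing forces $\varphi_+=-\varphi_-$, and $\phi\mapsto-\log|\det(\slashed D-M_\phi)|$ is not convex: already $-\log(\varphi^2+s^2)$ is concave for $|\varphi|<s$. So the appeal to a Hadamard--Fischer/convexity step does not justify equalizing the moduli.
\item \emph{The commutator deficit.} Made precise, your idea gives only a non-strict inequality. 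If $|\phi_x|\equiv b$ and $P=-\slashed D^2+b^2$, then $(\slashed D-M_\phi)^*(\slashed D-M_\phi)=P+[\slashed D,M_\phi]$. Since $P$ commutes with $\slashed D$, $\operatorname{tr}(P^{-1}[\slashed D,M_\phi])=0$. Concavity of $\log\det$ then gives $|\det(\slashed D-M_\phi)|\le|\det(\slashed D-b)|$, i.e.\ energy per site at least $v^\alpha_\lambda(b)-\mathfrak c_0/L$. This has no gap at $b=\varphi^\star_\alpha$.
\end{itemize}
The quantitative deficit of order $\varphi^{\star\,n_\alpha}$ is therefore the whole point, and you do not derive it. Note also that your proposal never uses $\lambda\le\lambda_0$, which signals that the step needing this hypothesis is missing.

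The paper closes this step by explicit computation.
\begin{itemize}
\item It uses four tessellations by $2\times1$ rectangles, so that every site is constrained with the pattern $(-++-)$. Your $2\times2$ blocks would leave half of the sites unconstrained after reflection and would miss bonds straddling two blocks.
\item It computes $\Delta^\alpha(p;\varphi_+,\varphi_-)$ on the enlarged cell and shows $(1,-1)\cdot\nabla w^\alpha_\lambda=(\varphi_+-\varphi_-)\{\tfrac{1}{2\lambda}-O(1)\}\ge0$ on $\{\varphi_+\ge0\ge\varphi_-\}$ for $\lambda\le\lambda_0$. This is exactly where $\lambda\le\lambda_0$ enters, and it pushes the minimizer to $\{\varphi_-=0\}$.
\item It writes $\Delta^\alpha(p;\varphi,0)$ (for NN, its square root) as $\Delta^\alpha(p;\tfrac{\varphi}{\sqrt2},\tfrac{\varphi}{\sqrt2})$ minus an explicit nonnegative term. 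This gives $w^\alpha_\lambda(\varphi,0)\ge v^\alpha_\lambda(\varphi/\sqrt2)+\delta w^\alpha(\varphi)$ with $\delta w^\alpha(\varphi)\ge-\tfrac14\log\big(1-\mathfrak C_2^\alpha\varphi^{n_\alpha}/(1+\varphi^{n_\alpha})\big)$.
\item Distinguishing $\varphi\le\varphi^\star_\alpha$ from $\varphi\ge\varphi^\star_\alpha$ then produces both terms of $c_3^\alpha$, including the factor $1/\sqrt2$.
\end{itemize}
Only sign constraints are used there, so your case (a)/(b) split becomes unnecessary. That split would also need care with contour bonds sharing an endpoint, and with mixing one- and two-site events in a single chessboard estimate.
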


Assuming Proposition \ref{prop:contrib} for the moment, we first derive Theorem \ref{thm1}. The proof of Proposition \ref{prop:contrib} is given in Section \ref{sect:contrib}.

\subsection{Proof of Theorem \ref{thm1}}
\label{ssect:proof_theorem}

We now show how Proposition~\ref{prop:contrib} implies the main theorem. By \eqref{eq:HS-2pt-common} we have that 

\begin{equation*}
\frac{1}{|\Lambda^{\a}_L|^2}\sum_{x,y\in\L^{\a}_L} \left\langle (\overline{\psi}\psi)_x (\overline{\psi} \psi)_y \right\rangle_{\a}= \left(\frac{N}{\l}\right)^2 \frac{1}{|\L^{\a}_L|^2} \sum_{x,y\in\L^{\a}_L} \left\langle \phi_x \phi_y \right\rangle_{\mu_{\a}} - \frac{N}{\l} \frac{1}{|\L^{\a}_L|}.
\end{equation*}

Therefore:

\begin{equation}
\label{eqn:32}
\left(\frac{N}{\l}\right)^2 \min_{x,y\in\L^{\a}_L} \langle\phi_x\phi_y\rangle_{\mu_{\a}}- \frac{4N}{\l L^2}\le \frac{1}{|\Lambda^{\a}_L|^2}\sum_{x,y\in\L^{\a}_L} \left\langle (\overline{\psi}\psi)_x (\overline{\psi} \psi)_y \right\rangle_{\a} \le \left(\frac{N}{\l}\right)^2 \max_{x,y\in\L^{\a}_L} \langle\phi_x\phi_y\rangle_{\mu_{\a}}.    
\end{equation}

Given any $\varepsilon\in(0,1)$, we apply Proposition \ref{prop:contrib} with

\[
C_\alpha=\sqrt{1-\frac{\varepsilon}{2}}\,\varphi_\alpha^\star(\lambda),
\qquad
K_\alpha=\sqrt{1+\frac{\varepsilon}{2}}\,\varphi_\alpha^\star(\lambda).
\]

Using the upper and lower bound for the bosonic correlator, \eqref{upbound} and \eqref{eq:mainbound2} respectively, we find that

\begin{align*}
&\sup_{x,y\in\Lambda_L^\alpha}\langle \phi_x\phi_y\rangle_{\mu_\alpha}
\le
\Big(1+\frac{\varepsilon}{2}\Big)\varphi_\alpha^\star(\lambda)^2+\mathcal E_1(N,\l,\ve),\\
&\inf_{x,y\in\Lambda_L^\alpha}\langle \phi_x\phi_y\rangle_{\mu_\alpha}
\ge
\Big(1-\frac{\varepsilon}{2}\Big)\varphi_\alpha^\star(\lambda)^2-\mathcal E_2(N,\l,\ve),
\end{align*}

for every $N\ge N_1(K_{\a},C_{\a},\l)$ and $L\ge L_1(K_{\a},C_{\a},\l)$, where $\mathcal E_1$ and $\mathcal E_2$ are suitable quantities, which are exponentially small in $N$ for any $\l\in(0,\l_0]$ and $\ve\in(0,1)$ fixed. Going back to \eqref{eqn:32}, we have that

\begin{equation*}
\left(1-\frac{\ve}{2}\right)\vphi^{\star}_{\a}(\l)^2 - \mc{E}_2(N,\l,\ve) - \frac{4N}{\l L^2}\le \frac{1}{|\Lambda^{\a}_L|^2}\sum_{x,y\in\L^{\a}_L} \left\langle (\overline{\psi}\psi)_x (\overline{\psi} \psi)_y \right\rangle_{\a}\le \left(1+\frac{\ve}{2}\right)\vphi^{\star}_{\a}(\l)^2 + \mc{E}_1(N,\l,\ve).  
\end{equation*}

Finally we require $N,L$ to be large enough (depending on $\l$ and $\ve$), in order to satisfy: 

\begin{equation*}
\mc{E}_1(N,\l,\ve)\le \frac{\ve}{2} \vphi^{\star}_{\a}(\l)^2, \qquad \mc{E}_2(N,\l,\ve)\le \frac{\ve}{4} \vphi^{\star}_{\a}(\l)^2, \qquad \frac{N}{L^2}\le \frac{\ve}{8} \l\vphi^{\star}_{\a}(\l)^2,
\end{equation*}

so that \eqref{eqn:33} follows.

\section{Proof of Proposition \ref{prop:contrib}}
\label{sect:contrib}

This section is devoted to the analysis of the three main subdominant contributions to the two-point bosonic correlator $\langle\phi_x\phi_y\rangle_{\mu_{\a}}$, namely the \emph{large-field}, the \emph{local-maximum} and the \emph{Pierles-like} term. These will be shown to satisfy the bounds \eqref{eqn:18}, \eqref{eqn:18b} and \eqref{eqn:18c} respectively, which are the main building blocks for the proof of LRO, as discussed in Subsection \ref{ssect:proof_theorem}. The first two estimates will be proved in Subsections~\ref{ssect:large_fields} and \ref{ssect:loc_max}. The third estimate, treated in Subsection~\ref{ssect:Peierls}, requires a separate analysis of explicit, alternating, periodic configurations, discussed in Subsection \ref{ssect:alternate}. 
%and is the only place where a genuine Peierls' argument enters.

\subsection{Large-field terms}
\label{ssect:large_fields}

We prove the large-field bound \eqref{eqn:18},
%\[\sup_{x\in \Lambda_L^\alpha}\left\langle \phi_x^2 \,I_x^{+K_\alpha}\right\rangle_{\mu_\alpha}\le e^{-\frac N2 c_1^\alpha(\lambda)}\]
namely we show that configurations with $\phi_x > K_\alpha$ are exponentially suppressed in $N$.\\

The proof combines Chessboard Estimates with the fact that, for $K_\alpha>\varphi_\alpha^\star(\lambda)$, the mean-field potential \eqref{eq:mean_field_pot}  is strictly larger than its minimum. Applying the Chessboard Estimate \eqref{CE} to the single-site observable $I^{+K_{\a}}_x(\phi)$, we obtain:
\begin{equation}
\label{eqn:10a}
\left\langle \phi_x^2 I_x^{+K_\alpha}\right\rangle_{\mu_\alpha}
\le
\left(
\frac{
\displaystyle
\int_{[K_\alpha,\infty)^{\Lambda_L^\alpha}}
\Big(\prod_{x \in \L_L^{\a}} d\phi_x\,\phi_x^2\Big)
e^{-N V^{\a}_{L,\lambda}(\phi)}
}{
\displaystyle
\int_{\mathbb R^{\Lambda_L^\alpha}}
\Big(\prod_{x \in \L_L^{\a}} d\phi_x\Big)
e^{-N V^{\a}_{L,\lambda}(\phi)}
}
\right)^{\frac{1}{|\Lambda_L^\alpha|}}.
\end{equation}

We will proceed by looking for an upper and a lower bound for the numerator and denominator of \eqref{eqn:10} respectively.

\paragraph{Upper bound for the numerator of \eqref{eqn:10a}.} We proceed via the following intermediate steps.

\begin{enumerate}
\item We first rewrite the effective potential as:

\[
V^{\alpha}_{L,\lambda}(\phi)
=
\frac{1}{2\lambda N}\sum_{x\in\Lambda_L^\alpha}\phi_x^2
+
V^{\alpha}_{L,\tilde\lambda_N}(\phi),
\qquad
\tilde\lambda_N \doteq \tfrac{\lambda}{1-\frac1N}.
\]

\item As a second step, we extract the infimum of $V^{\alpha}_{L,\tilde\lambda_N}$ over the constrained set $\{\phi_x \geq K_{\a}\, \forall x\in \L^{\a}_L\}$ and we integrate over $\phi$:

\begin{equation*}
\begin{aligned}
&\bigg(\int_{[K_{\alpha},\infty)^{\Lambda^{\alpha}_L}} \big(\prod_{x\in \Lambda_L^{\alpha}}d\phi_x \phi_x^2 \big) \, e^{- N V^{\alpha}_{L,\lambda}(\phi)}\bigg)^{\frac{1}{|\Lambda_L^{\a}|}}  =\\
&\bigg(\int_{[K_{\alpha},\infty)^{\Lambda^{\alpha}_L}} \big(\prod_{x\in \Lambda^{\alpha}_L}d\phi_x \phi_x^2 e^{- \frac{\phi_x^2}{2\lambda}}\big) \, e^{- N V^{\alpha}_{L,\tilde{\lambda}_N}(\phi)} \bigg)^{\frac{1}{|\Lambda_L^{\a}|}}  \leq (\sqrt{2\pi} \lambda^{\frac{3}{2}})\, \exp \Big\{-\tfrac{N}{|\L^{\a}_L|} \inf_{ \phi\in[K_{\alpha},\infty)^{\L^{\alpha}_L}} V^{\alpha}_{L,\tilde\lambda_N}(\phi) \Big\}.
\end{aligned}
\end{equation*}

\item  By Proposition \ref{const_field}.\ref{itt:1}, the infimum of $V^{\alpha}_{L, \lambda}$ is attained for constant field configurations, whose large $L$ asymptotics can be controlled, in force of Proposition \ref{const_field}.\ref{itt:2}, by the infinite-volume potential density $v^{\alpha}_{\l}$:

\begin{equation}
\label{eqn:13s}
\frac{1}{|\L^{\a}_L|}\inf_{\phi\in[K_{\alpha},\infty)^{\L^{\alpha}_L}} V^{\alpha}_{L,\tilde\lambda_N}(\phi) \overset{\eqref{Minpot}}{\geq} \frac{1}{|\L^{\a}_L|}\inf_{\vphi\in[K_{\alpha},\infty)} V^{\alpha}_{L,\tilde\lambda_N}(\vphi) \overset{\eqref{eqn:12a}}{\geq} \inf_{\vphi\in[K_{\alpha},\infty)} v_{\tilde{\l}_N}^{\alpha}(\vphi) -\frac{\mathfrak{c}_0}{L}. 
\end{equation}
\item Since $K_\alpha>\varphi_\alpha^\star(\lambda)$ and the map $\lambda \mapsto  \varphi^{\star}_{\alpha}(\lambda)$ is continuous over $(0,\infty)$ (see Proposition \ref{const_field}.\ref{itt:3}), for $N$ sufficiently large one has $K_\alpha\ge \varphi_\alpha^\star(\tilde\lambda_N)$. Hence, since $v^{\a}_{\tilde{\l}_N}$ is increasing over $[\varphi_{\a}^{\star}(\tilde{\l}_N), \infty)$ (see Proposition \ref{const_field}.\ref{itt:3}), the infimum over $[K_\alpha,\infty)$ is attained at $\varphi=K_\alpha$ and \eqref{eqn:13s} yields:

\begin{equation*}
\frac{1}{|\L^{\a}_L|}\inf_{\phi\in[K_{\alpha},\infty)^{\L^{\alpha}_L}} V^{\alpha}_{L,\tilde\lambda_N}(\phi) \geq v_{\tilde{\l}_N}^{\a}(K_{\alpha}) - \tfrac{\mf{c}_0}{L}.
\end{equation*}
\end{enumerate}

All in all:

\begin{equation}
\label{eqn:13}
\bigg(\int_{[K_{\alpha},\infty)^{\Lambda^{\alpha}_L}} \big(\prod_{x\in \Lambda_L^{\alpha}}d\phi_x \phi_x^2 \big) \, e^{- N V^{\alpha}_{L,\lambda}(\phi)}\bigg)^{\frac{1}{|\Lambda_L^{\a}|}} \le   \left(\sqrt{2\pi} \l^{\frac{3}{2}}\right) \, \exp\Big\{-N \big(v^{\a}_{\tilde{\l}_N}(K_{\a}) - \tfrac{\mf{c}_0}{L}\big)\Big\}.
\end{equation}

\paragraph{Upper bound for the denominator of \eqref{eqn:10a}.} The strategy for bounding from below the denominator of \eqref{eqn:10a} (a.k.a. the partition function) goes through the following steps.
\begin{enumerate}
    \item First, we  restrict the integral over $\phi$ to a neighborhood $Q^{\a}_{\d_N}$  of the homogeneous minimizer $\varphi^{\star}_{\a}(\l)>0$, where
    
    \[Q^{\a}_{\d_N} \doteq \big[\varphi^{\star}_{\alpha}(\lambda)-\tfrac{\d_N}{2},\varphi^{\star}_{\alpha}(\lambda)+\tfrac{\d_N}{2}\big]^{\Lambda^{\a}_L}\]

   with $\d_N>0$ to be specified. Since the integrand is pointwise non-negative for $N$ even, such restriction produces a lower-bound: 

\begin{equation}
\label{eqn:42}
\big(Z^{\alpha}_{\L_L}\big)^{\frac{1}{|\L^{\a}_{L}|}} \geq \Big[ \int_{Q^{\a}_{\d_N}} \prod_{x \in \Lambda^{\alpha}_L} d \phi_x\, \ e^{-\frac{N}{2 \lambda} \phi_x^2}\, \det\big((\slashed{D}_{\alpha})^-_{\Lambda_L}-M_{\phi}\big)^N\Big]^{\frac{1}{|\L^{\a}_{L}|}}.
\end{equation}

    \item In the second place, we perform a change of variables $\phi_x\mapsto \varphi_\alpha^\star(\lambda)+\phi_x$ and we use the following bound (holding for $\phi\in Q^{\a,0}_{\d_N}\doteq [-\frac{\d_N}{2},\frac{\d_N}{2} ]^{\L^{\a}_L}$):

\begin{equation*}
e^{-\frac{N}{2 \lambda} \sum_{x} (\varphi^{\star}_{\a}(\l)+ \phi_x)^2} = e^{-\frac{N}{2 \lambda} \big(\varphi^{\star}_{\a}(\l)^2 |\Lambda^{\alpha}_L| + 2 \varphi^{\star}_{\a}(\l)\sum_{x} \phi_x + \sum_x \phi_x^2 \big)} \geq e^{-\frac{N}{2 \lambda}|\Lambda^{\alpha}_L| \big(\varphi^{\star}_{\a}(\l)^2+ 2 \varphi^{\star}_{\a}(\l) \d_N + \d_N^2\big)}
\end{equation*}

to obtain:
\begin{equation}
\label{eqn:44}
\begin{split}
&\big(Z^{\alpha}_{\L_L}\big)^{\frac{1}{|\L^{\a}_{L}|}}
\ge\\
&
e^{-\frac{N}{|\L^{\a}_L|} V^{\a}_{L,\lambda}(\varphi_\alpha^\star(\lambda))}
e^{-\frac{N}{2\lambda}\big(2\varphi_\alpha^\star(\lambda)\delta_N+\delta_N^2\big)} \bigg[
\int_{Q_{\d_N}^{\a,0}}
\prod_{x \in \L_{L}^{\a}} d\phi_x\,
\det\Big(
\mathbbl{1}-
\big((\slashed D_\alpha)^-_{\Lambda_L}-M_{\varphi_\alpha^\star(\lambda)}\big)^{-1} M_\phi
\Big)^N\bigg]^{\frac{1}{|\L^{\a}_{L}|}}.
\end{split}\end{equation}

\begin{remark}
\label{rmk:inverse_Dirac}
Observe that the Dirac operator is invertible since a non-vanishing, constant background field opens a gap in the dispersion relations (see Appendix \ref{app:pot}). More precisely, introducing the operator norm $\|\,\cdot\,\|_{\mathrm{op}}\doteq \sup_{v\in\mbb{C}^n,\|v\|=1}\|\,\cdot\,v\|_{\ell^2(X)}$, with $X=\L_L\times\{1,2\}, \L_L, \wt{\L}_L\times\{A,B,C,D\}$ for $\a=\mathrm{NN},\mathrm{SN},\mathrm{SP}$ respectively, one has that 

\[\left\| \big((\slashed D_\alpha)^-_{\Lambda_L}- M_{\varphi^{\L^{\a}_L}}\big)^{-1} \right\|_{\mathrm{op}}\le \frac{1}{\vphi},\] 

for any $\vphi\ne0$. Indeed, it is readily checked (see Appendix \ref{app:pot_2}) that the operator $(\slashed D_\alpha)^-_{\Lambda_L}- M_{\varphi^{\L^{\a}_L}}\equiv (\slashed D_\alpha)^-_{\Lambda_L}- \vphi\mathbbl{1}$ is normal and has eigenvalues $\Big\{\pm i \sqrt{\sin^2p_0+ \sin^2p_1} -\vphi \Big\}_{p\in(\L^{\a}_L)^*_{--}}$, where

\begin{equation*}
(\Lambda^{\mathrm{NN}}_L)^*_{--} \doteq \tfrac{2\pi}{L}\big(\mbb{Z}+ \tfrac{1}{2}\big)^2 \cap (-\pi,\pi ]^2, \qquad (\Lambda^{\mathrm{SN}}_L)^*_{--}=(\Lambda^{\mathrm{SP}}_L)^*_{--}\doteq \tfrac{2\pi}{L} \big(\mbb{Z}+\tfrac{1}{2}\big)^2 \cap \Big(\big(-\tfrac{\pi}{2},\tfrac{\pi}{2}\big]\times(-\pi,\pi] \Big).
\end{equation*}

\end{remark}

\item In order to find a lower bound for the determinant in the right--hand side of \eqref{eqn:44}, we use the following standard fact: letting $A$ be a linear operator on a vector space $\mc{V}$ of dimension $n<\infty$, then

\begin{equation}
\label{eqn:43}
|\det(\mathbbl{1}_{\mc{V}} + A)|\ge (1-\|A\|_{\mathrm{op}})^n.
\end{equation}

In our case, setting $A= \big((\slashed D_\alpha)^-_{\Lambda_L}-\varphi_\alpha^\star(\lambda)\big)^{-1}M_{\phi}$, with $\phi\in Q_{\d_N}^{\a,0}$, we see that, in force of Remark \ref{rmk:inverse_Dirac},

\[\left\|
\big((\slashed D_\alpha)^-_{\Lambda_L}-\varphi_\alpha^\star(\lambda)\big)^{-1}M_\phi
\right\|_{\mathrm{op}}
\le \left\| \big((\slashed D_\alpha)^-_{\Lambda_L}-\varphi_\alpha^\star(\lambda)\big)^{-1} \right\|_{\mathrm{op}} \left\| M_\phi\right\|_{\mathrm{op}} \leq
\tfrac{\delta_N}{\varphi_\alpha^\star(\lambda)}. \]

Therefore, for any choice $0<\d_N<\vphi^{\star}_{\a}(\l)$, we have that

\begin{equation*}
\big|\det\Big(
\mathbbl{1}-
\big((\slashed D_\alpha)^-_{\Lambda_L}-M_{\varphi_\alpha^\star(\lambda)}\big)^{-1} M_\phi
\Big)\big|\ge \Big( 1- \tfrac{\d_N}{\vphi^{\star}_{\a}(\l)} \Big)^{2|\L_L^{\a}|}.
\end{equation*}

Plugging the above bound in the right--hand side of \eqref{eqn:44} and performing the integral over $Q^{\a,0}_{\d_N}$, we find the following lower bound for the partition function:

\begin{equation}
\label{eqn:16}
(Z^{\alpha}_{\L_L})^{\frac{1}{|\Lambda_L^{\a}|}}
\ge e^{N\big[- v^{\alpha}_{\lambda}(\varphi^{\star}_{\alpha}(\lambda))- \tfrac{\d_N}{\l}(2\varphi^{\star}_{\alpha}(\lambda)+\d_N)+ \tfrac{1}{N}\log (2\d_N)+ 2\log(1- \tfrac{\d_N}{\varphi^{\star}_{\alpha}(\lambda)} ) -\tfrac{\mf{c}_0}{L}\big]}.
\end{equation}

\end{enumerate}

\paragraph{Proof of the bound \eqref{eqn:18}.}

By plugging \eqref{eqn:13} and \eqref{eqn:16} into the right--hand side of \eqref{eqn:10a}, we find:

\begin{equation}
   \label{eqn:17}
   \left\langle\phi_x^2 I_x^{+K_{\alpha}} \right\rangle_{\mu_{\a}}^{\frac{1}{|\L_L^{\a}|}} \le e^{-N \big[v^{\a}_{\l}(K_{\alpha})-v^{\a}_{\lambda}(\vphi^{\star}_{\a}(\lambda)) - \mathcal{R}^{(1)}_{N,L}(\lambda)\big]} \doteq e^{-N \big[ c_1^{\a}(\l) - \mathcal{R}^{(1)}_{N,L}(\l)\big]},
\end{equation}

where

\begin{equation*}
\begin{split}
\mc{R}^{(1)}_{N,L}(\l)\doteq &\big( v^{\alpha}_{\lambda}(K_{\alpha})- v^{\a}_{\tilde{\l}_N}(K_{\alpha})\big)+ \tfrac{\d_N}{\l}\big(2\varphi^{\star}_{\alpha}(\lambda)+\d_N\big)- \tfrac{1}{N}\log (2\d_N)+\\
&-2 \log\big(1- \tfrac{\d_N}{\varphi^{\star}_{\alpha}(\lambda)} \big)+ \tfrac{1}{2N}\big(\log(2\pi)+ 3\log\l\big) + \tfrac{2\mf{c}_0}{{L}}.
\end{split}\end{equation*}

The main observation is that

\[v^{\alpha}_{\lambda}(K_{\alpha}) -v^{\alpha}_{\lambda}(\varphi^{\star}_{\alpha}(\lambda)) \doteq c_1^{\a}(\l)>0,\]

for every choice of $K_{\alpha}>\varphi^{\star}_{\alpha}(\lambda)$, since the potential $v^{\a}_{\l}$ is strictly-increasing in the interval $[\varphi^{\star}_{\a}(\l), \infty)$. Furthermore, due to the smoothness of the map $\l\mapsto\vphi^{\star}_{\a}(\l)$, for every $\vphi>0$ fixed, we have that $v^{\alpha}_{\lambda}(K_{\alpha})- v^{\a}_{\tilde{\l}_N}(K_{\alpha})\to0$ as $N\to\infty$. In this way, by also fixing $\d_N\equiv \d_N(\l)\doteq \frac{\varphi^{\star}_{\alpha}(\lambda)}{N}$, we readily see that 

\[ \mathcal{R}^{(1)}_{N,L}(\lambda) \overset{N,L \to \infty}{\longrightarrow} 0,\]

for every $\l>0$ and $K_{\alpha}>\varphi^{\star}_{\alpha}(\lambda)$ fixed. Therefore, for $N,L$ sufficiently large (depending on $\l$ and $K_{\alpha}$), we can assume that $|\mc{R}_{N,L}^{(1)}(\l)| \leq \frac{1}{2}c_1^{\a}(\l)$, yielding the bound \eqref{eqn:18}.

\subsection{Local-maximum terms}
\label{ssect:loc_max}

In this subsection we prove the second inequality \eqref{eqn:18b}, namely
\[
\sup_{x \in \Lambda^{\alpha}_L}
\left \langle I_x^{(-C_{\alpha}, C_{\alpha})} \right \rangle_{\mu_{\alpha}}
\leq e^{- \frac{N}{2} c_2^{\alpha}(\lambda)}.
\]
This estimate controls configurations in which the field lies in a neighborhood of a local maximum of the potential. Such configurations are energetically unfavorable and hence exponentially suppressed in $N$. The proof relies on Chessboard Estimates, combined with the fact that, as established by Proposition \ref{const_field}, the potential $v^{\alpha}_{\lambda}$ is strictly larger than its minimum whenever $\phi_x \in (-C_{\alpha},C_{\alpha})\subsetneq (-\vphi^{\star}_{\a}, \vphi^{\star}_{\a})$. Applying the Chessboard Estimate \eqref{CE} to the single-site observable, we obtain:

\begin{equation}
\label{eqn:10}
\left \langle I_x^{(-C_{\alpha}, C_{\alpha})}\right \rangle_{\mu_{\alpha}} \leq \left(\frac{\displaystyle \int_{(-C_{\alpha},C_{\alpha})^{\Lambda^{\alpha}_L}} \big(\prod_{x \in \Lambda_L^{\alpha}}d\phi_x \big) e^{- N V^{\alpha}_{L,\lambda}(\phi)}}{ \displaystyle\int_{\mbb{R}^{\L^{\alpha}_L}} \big(\prod_{x \in \Lambda_L^{\alpha}}d\phi_x \big) e^{- N V^{\alpha}_{L,\lambda}(\phi)}}\right)^{\frac{1}{|\Lambda^{\alpha}_L|}}.
\end{equation}

\paragraph{Upper bound for the numerator  of \eqref{eqn:10}.}

The bound of the numerator is a straightforward application of Proposition \ref{const_field}.

\begin{enumerate}
    \item First, we extract the infimum of $V^{\alpha}_{L,\l}$ over the constrained set $-C_{\a} \leq \phi_x \leq C_{\a}$ and perform the integral over $\phi$:

\begin{equation*}
\begin{aligned}
\bigg(\int_{(-C_{\alpha},C_{\a})^{\Lambda^{\alpha}_L}} \big(\prod_{x\in \Lambda_L^{\alpha}}d\phi_x \big) \, e^{- N V^{\alpha}_{L,\lambda}(\phi)}\bigg)^{\frac{1}{|\Lambda_L^{\a}|}}  &\leq (2C_{\a})\, \exp\Big\{-\tfrac{N}{|\L^{\a}_L|} \inf_{ \phi\in (-C_{\alpha},C_{\a})^{\L^{\alpha}_L}} V^{\alpha}_{L,\lambda}(\phi) \Big\}.
\end{aligned}
\end{equation*}

\item By \eqref{Minpot}, the infimum of $V^{\alpha}_{L, \lambda}$ is attained over constant field configurations, whose large-$L$ asymptotics can be controlled, due to Proposition \ref{const_field}.\ref{itt:2}, by the infinite-volume potential density $v^{\alpha}_{\l}$:

\begin{equation}
\label{eqn:13w}
\frac{1}{|\L^{\a}_L|} \inf_{\phi\in(-C_{\alpha},C_{\a})^{\L^{\alpha}_L}} V^{\alpha}_{L,\lambda}(\phi) \overset{\eqref{Minpot}}{\geq} \frac{1}{|\L^{\a}_L|} \inf_{\vphi\in(-C_{\alpha},C_{\a})} V^{\alpha}_{L,\lambda}(\vphi) \overset{\eqref{eqn:12a}}{\geq} \inf_{\vphi\in(-C_{\alpha},C_{\a})} v_{{\l}}^{\alpha}(\vphi) -\frac{\mathfrak{c}_0}{L}. 
\end{equation}

\item  Since $v^{\a}_{\l}$ is strictly decreasing on $[0, \vphi^{\star}_{\a}(\l)]$ 
(see Proposition \ref{const_field}.\ref{itt:3}) and that $C_{\a} < \vphi^{\star}_{\a}(\l)$, the right--hand side of \eqref{eqn:13w} is larger than $v^{\a}_{\l}(C_{\a}) - \frac{\mf{c}_0}{L}$.
\end{enumerate}

Therefore, we find the following upper bound for the numerator of \eqref{eqn:10}:

\begin{equation}
\label{eqn:16u}
\begin{aligned}
\bigg(\int_{(-C_{\alpha},C_{\a})^{\Lambda^{\alpha}_L}} \big(\prod_{x\in \Lambda_L^{\alpha}}d\phi_x \big) \, e^{- N V^{\alpha}_{L,\lambda}(\phi)}\bigg)^{\frac{1}{|\Lambda_L^{\a}|}} \leq  (2C_{\alpha}) e^{- N  \big[  v^{\alpha}_{\lambda}(C_{\alpha})- \tfrac{\mf{c}_0}{{L}}\big]}.
\end{aligned}
\end{equation}

\paragraph{Proof of the bound \eqref{eqn:18b}.}

Plugging equations \eqref{eqn:13} and \eqref{eqn:16u} into the right--hand side of \eqref{eqn:10}, we find:
\[
\begin{split}
\left\langle I_x^{(-C_{\alpha},C_{\alpha})} \right\rangle_{\mu_{\alpha}}\le e^{-N\big(  v^{\alpha}_{\lambda}(C_{\alpha}) - v^{\alpha}_{\lambda}(\varphi^{\star}_{\alpha}(\lambda))- \mc{R}^{(2)}_{N,L}(\l) \big)},
\end{split}
\]
where

\[\mc{R}^{(2)}_{N,L}(\l)\doteq \tfrac{\d_N}{\l}\big(2\varphi^{\star}_{\alpha}(\lambda)+\d_N\big)+ \tfrac{1}{N}\big(-\log (2\d_N)+ \log(2C_{\alpha}) \big)- 2\log\big(1- \tfrac{\d_N}{\varphi^{\star}_{\alpha}(\lambda)} \big) + \tfrac{2\mf{c}_0}{L}. \]

We observe that

\[v^{\alpha}_{\lambda}(C_{\alpha}) -v^{\alpha}_{\lambda}(\varphi^{\star}_{\alpha}(\lambda)) \equiv c_2^{\a}(\l)>0,\]

for every choice of $C_{\alpha}<\varphi^{\star}_{\alpha}(\lambda)$, since the potential $v^{\a}_{\l}$ is strictly decreasing in the interval $[0, \varphi^{\star}_{\a}(\l))$ (see Proposition \ref{const_field}.\ref{itt:3}). By fixing $\d_N=\frac{\varphi^{\star}_{\alpha}(\lambda)}{N}$, we see that for every $\l>0$ and $0<C_{\alpha}<\varphi^{\star}_{\alpha}(\lambda)$,

\[\mc{R}_{N,L}^{(2)}(\l)\overset{N,L\to\infty}{\longrightarrow} 0.\]

Therefore, for $N,L$ sufficiently large (depending on $\l$ and $C_{\alpha}$), we can assume that $|\mc{R}_{N,L}^{(2)}(\l)| \leq \frac{1}{2}c_2^{\a}(\l)$, yielding the bound \eqref{eqn:18b}.

\subsection{Peierls-like terms}
\label{ssect:Peierls}

In this subsection we prove the bound \eqref{eqn:18c}, namely
\[
\sup_{x,y \in \Lambda^{\alpha}_L}
\left \langle I_x^{\star_+} I_y^{\star_-} \right \rangle_{\mu_{\alpha}}
\leq e^{-\frac{N}{4} c_3^{\alpha}(\lambda)}. 
\]

This estimate controls configurations in which the field takes values near opposite minimizers of the potential at two distinct sites. Such configurations necessarily contain an interface separating regions of opposite sign, and are therefore expected to be exponentially suppressed in $N$. \\

In contrast to the previous cases, this contribution cannot be handled directly by means of the Chessboard Estimates alone. Its analysis
requires a more detailed control of the energy cost associated with configurations interpolating between the two minima. This is achieved via a Peierls-type argument (see e.g. \cite[Sect.~3.7.3]{Friedli_Velenik_2017}), combined with Reflection
Positivity, which allows us to control the weight of contours separating regions with opposite sign of $\phi$ by relating them to the energetic penalty of suitably constructed periodic configurations generated through reflections.

\paragraph{General Peierls' argument.}

\begin{enumerate}
\item For any $x, y \in \Lambda_L^{\alpha}$, we have that

\begin{equation*}
    \left \langle I_x^{\star_+} I_y^{\star_-} \right \rangle_{\mu_{\alpha}} \leq \left \langle I_x^+ I_y^-\right \rangle_{\mu_{\alpha}},
\end{equation*}

where $I_x^{\pm}(\phi) \doteq \mathbbl{1}_{\{\pm\phi_x > 0\}}$.

\item Since $I_x^+ + I_x^- = 1$ almost surely, we have that

\begin{equation}
\label{eqn:19}
\left \langle I_x^+ I_y^-\right \rangle_{\mu_{\alpha}} = \Big\langle I_x^+ I_y^- \prod_{z \neq x,y} (I_z^+ + I_z^-)\Big\rangle_{\mu_{\alpha}} = \sum_{\substack{\zeta \in \{\pm \}^{\Lambda_L^{\alpha}}:\\ \zeta_x=+, \zeta_y=-}} \Big\langle \prod_{z\in \Lambda_L^{\alpha}} I_z^{\zeta_z} \Big\rangle_{\mu_{\alpha}}.
\end{equation}
\item To each sign configuration $\zeta\in \{\pm\}^{\L^{\a}_L}$, we associate its set of Peierls' contours, namely the closed curves in the dual lattice $(\Lambda_L^\alpha)^{\mathrm{dual}}\doteq \L^{\alpha}_L+ (\frac{1}{2},\frac{1}{2})$ crossing precisely those edges $\langle x',y'\rangle\subset \L^{\a}_L$ for which $\zeta_{x'}\neq \zeta_{y'}$. As usual, crossings are resolved according to the deformation rule (see \cite[pag. 111]{Friedli_Velenik_2017}) shown in Fig.~\ref{fig:deformation}, so that the contour set is a collection of disjoint loops including the information of the values of the
spins on both “sides”.\\
\item A standard Peierls argument (see \cite[pag. 453-457]{Friedli_Velenik_2017}) then gives:
\begin{equation}
\label{eqn:21}
\left\langle I_x^{\star_+} I_y^{\star_-}\right\rangle_{\mu_\alpha}
\le
2\sum_{\gamma}
\Big\langle
\prod_{\langle x',y'\rangle\in \mathrm{E}(\gamma)} I_{x'}^+ I_{y'}^-
\Big\rangle_{\mu_\alpha},
\end{equation}
where the sum runs over all dual contours $\gamma$ separating $x$ and $y$, and $\mathrm{E}(\gamma)$ denotes the set of crossed primal edges, ordered so that $x'$ lies on the \virg{left hand side} of $\gamma$ and $y'$ on the \virg{right hand side} (with respect to some fixed orientation of the contour $\gamma$).

\vspace{0.1in}

\begin{figure}
    \centering
    \begin{tikzpicture}[scale=0.8]
        \draw[thick] (0,0) -- (0,4);
        \draw[thick] (-2,2) -- (2,2);

        \draw[thick] (6,0) -- (6,1);
        \draw[thick] (6,3) -- (6,4);
        \draw[thick] (4,2) -- (5,2);
        \draw[thick] (7,2) -- (8,2);
       \draw[thick] (6,1) arc (180:90:1cm);
       \draw[thick] (5,2) arc (270:360:1cm);
    \end{tikzpicture}
    \caption{deformation rule adopted for associating a set of Peierls' contours to every sign configuration $\z\in \{\pm\}^{\L^{\a}_L}$.}
    \label{fig:deformation}
\end{figure}

    \end{enumerate}
\paragraph{Reduction to contour estimates.} The main bound for controlling the Peirls'-like contributions is provided by the following lemma.

\begin{lemma}\label{lem:Peierls_bound}
For every contour $\gamma$, the following bound holds true:
    \begin{equation}
\label{eq:Peierls_bound}
\Big\langle
\prod_{\langle x',y'\rangle\in \mathrm{E}(\gamma)} I_{x'}^+ I_{y'}^-
\Big\rangle_{\mu_\alpha}
\le
e^{-\frac14 N c_3^\alpha(\lambda)\,|\gamma|},
\end{equation}

with $c_3^\a(\l)>0$ defined in Proposition \ref{prop:contrib}.
\end{lemma}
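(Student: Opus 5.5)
The plan is the standard Fröhlich--Israel--Lieb--Simon route: reduce a contour event, via Chessboard Estimates, to universal ``disseminated'' events whose weights are computed on explicit periodic configurations. Concretely, I partition $\mathrm{E}(\gamma)$ into its horizontal and vertical edges, and split each class further by parity of position, so that every edge in one class can be mapped onto a fixed elementary $2\times2$ block $\Lambda_B$ by a composition of bond reflections $\Theta_t$. By the Cauchy--Schwarz inequality (reflection positivity, Theorem \ref{rp}) applied at most twice, the expectation of the full product is bounded by a product of square roots of expectations, each involving only edges of one type. This is where I expect the loss producing the factor $\frac14$ in the exponent.

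For each class, the indicator $I^+_{x'}I^-_{y'}$ on a crossed edge equals $\Theta_t(f)$, with $f=I^+_{z}I^-_{z+e_\nu}$ supported in $\Lambda_B$. One may also need to compose with the $\mathbb{Z}_2$ symmetry $\phi\mapsto-\phi$, which preserves $\mu_\alpha$ because the determinant satisfies $\det(\slashed D-M_{-\phi})=\det(\slashed D-M_\phi)$ up to conjugation by $\epsilon$ (respectively $\gamma_A$), so the product of two such determinants is unchanged. The Chessboard Estimate \eqref{CE} then gives
\[
\Big\langle \prod_{\langle x',y'\rangle\in \mathrm{E}(\gamma)} I_{x'}^+ I_{y'}^- \Big\rangle_{\mu_\alpha}\le \prod_{\nu}\mathfrak{z}_\nu^{\,|\mathrm{E}_\nu(\gamma)|/2},\qquad \mathfrak{z}_\nu\doteq\Big\langle \prod_{t}\Theta_t(f_\nu)\Big\rangle_{\mu_\alpha}^{1/|\Lambda^\alpha_L|}.
\]
Since $|\mathrm{E}(\gamma)|=|\gamma|$, it then suffices to prove $\mathfrak{z}_\nu\le e^{-\frac12 N c_3^\alpha(\lambda)}$ for $N$ and $L$ large. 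The disseminated event forces the field to have alternating sign along stripes orthogonal to $e_\nu$, that is, a configuration of the form $\phi_x=(-1)^{x_\nu}|\phi_x|$ up to the reflection pattern.

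To bound $\mathfrak{z}_\nu$, I mimic Subsection \ref{ssect:loc_max}. I split the disseminated event according to whether $|\phi|$ is mostly below $\tfrac{1}{\sqrt2}\varphi^\star_\alpha$ or not. In the first regime, Chessboard Estimates together with \eqref{Minpot} and \eqref{eqn:12a} give the gap $v^\alpha_\lambda(\tfrac{1}{\sqrt2}\varphi^\star_\alpha)-v^\alpha_\lambda(\varphi^\star_\alpha)$, after comparison with the lower bound \eqref{eqn:16} on $Z^\alpha_{\Lambda_L}$. In the second regime, I need an energy estimate for staggered-sign fields. For this I would write
\[
\log|\det(\slashed D-M_\phi)|=\tfrac12\log\det\big((\slashed D-M_\phi)^*(\slashed D-M_\phi)\big),
\]
and use that the cross term $\slashed D^*M_\phi+M_\phi^*\slashed D$ does not cancel for alternating $\phi$, contrary to the constant case. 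Its contribution is nonnegative and reduces $|\det|$ relative to the constant field $|\phi|$. The aim is an inequality of the form
\[
|\det(\slashed D-M_{\phi})|\le \Big(1-\mathfrak{C}^\alpha_2\tfrac{\varphi^{n_\alpha}}{1+\varphi^{n_\alpha}}\Big)^{|\Lambda^\alpha_L|/4}\,|\det(\slashed D-M_{|\phi|})|,
\]
obtained first for the explicit periodic configurations $\phi=\pm\varphi$ alternating in stripes of the reflected-block period. These configurations can be diagonalized by Fourier transform in the enlarged unit cell; one computes the eigenvalues and compares them with the constant case momentum by momentum, presumably in Subsection \ref{ssect:alternate}. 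I would then pass to general amplitudes via a Hadamard/AM--GM type inequality, or by a further application of \eqref{Minpot}-style reflection arguments to reduce to constant $|\phi|$ on each sublattice. Combining this with the bound $v^\alpha_\lambda(|\phi|)\ge v^\alpha_\lambda(\varphi^\star_\alpha)$ yields the second entry of the minimum defining $c_3^\alpha$.

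The main obstacle is this last step: controlling the determinant of the Dirac operator with sign-alternating, non-constant mass uniformly in the amplitudes. In particular, one has to show that a definite fraction of the spectral gap is lost, at order $\varphi^{n_\alpha}$, rather than merely comparing at the level of $v^\alpha_\lambda$. I would first try the explicit periodic computation for piecewise-constant amplitudes; this should reduce to a small matrix per momentum, of size $2\times2$ or $4\times4$ times the cell size. I would then handle the errors with the same $\delta_N$-box and $N,L\to\infty$ remainders used in \eqref{eqn:17}, absorbing them into half of the gap.
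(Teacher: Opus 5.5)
Your chessboard reduction is close to the paper's, with one correction. The paper tiles by $2\times1$ rectangles: four tessellations, then Cauchy--Schwarz giving $\tfrac14$-powers. Reflecting the block that carries $I^+I^-$ then produces width-two stripes of period four, the pattern $(-++-)$, which constrains \emph{every} site. It is not $\phi_x=(-1)^{x_\nu}|\phi_x|$. With your $2\times2$ block and $f$ supported on only two of its sites, half of the lattice would be unconstrained in the disseminated event. The paper then uses the reflections preserving $\mathrm{supp}\,I^{(-++-)}$ to reduce the infimum of $V^\alpha_{L,\tilde\lambda_N}$ on that support to the two-parameter family $(\varphi_-,\varphi_+,\varphi_+,\varphi_-)$, i.e.\ to $\inf_{\varphi_+\ge0,\varphi_-\le0}w^\alpha_\lambda(\varphi_+,\varphi_-)$.

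The genuine gap is the energy estimate. Your key inequality compares $\phi$ with $|\phi|$, so it only detects the sign change. After the reduction, that penalty is proportional to $\varphi_+|\varphi_-|$: for SN, $\Delta^{\mathrm{SN}}(p;\varphi_+,|\varphi_-|)-\Delta^{\mathrm{SN}}(p;\varphi_+,\varphi_-)=\varphi_+|\varphi_-|$. It therefore vanishes on the edges of the quadrant, where $\phi=|\phi|$, and these edges are exactly where the infimum sits. For $\lambda\le\lambda_0$ the paper shows $(1,-1)\cdot\nabla w^\alpha_\lambda\ge0$ on $\{\varphi_+\ge0,\varphi_-\le0\}$, so the infimum equals $\min_{\varphi\ge0}w^\alpha_\lambda(\varphi,0)$. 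This is where the smallness of $\lambda$ enters (cf.\ Remark \ref{rem2}); your argument never uses it.

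Take a point such as $(\varphi_+,\varphi_-)=(\sqrt2\,\varphi^\star_\alpha,0)$. Its root-mean-square amplitude is $\varphi^\star_\alpha$, so neither your small-amplitude regime nor the bound ``$v^\alpha_\lambda(|\phi|)\ge v^\alpha_\lambda(\varphi^\star_\alpha)$'' gives any gap there. Moreover, ``$|\phi|$ mostly below $\tfrac{1}{\sqrt2}\varphi^\star_\alpha$'' is not a product event, so \eqref{Minpot} and \eqref{CE} do not apply to it.

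The missing idea is an \emph{inhomogeneity} penalty: compare $w^\alpha_\lambda(\varphi,0)$ with the constant field of the same $\ell^2$ norm. For SN this uses the explicit identity $\Delta^{\mathrm{SN}}(p;\varphi,0)=\Delta^{\mathrm{SN}}\big(p;\tfrac{\varphi}{\sqrt2},\tfrac{\varphi}{\sqrt2}\big)-\tfrac14\varphi^2(1+\varphi^2)$, with analogous identities for NN and SP. It gives $w^\alpha_\lambda(\varphi,0)=v^\alpha_\lambda\big(\tfrac{\varphi}{\sqrt2}\big)+\delta w^\alpha(\varphi)$, where $\delta w^\alpha(\varphi)$ is bounded below by a positive multiple of $-\log\big(1-\mathfrak{C}_2^{\alpha}\tfrac{\varphi^{n_\alpha}}{1+\varphi^{n_\alpha}}\big)$. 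Splitting into $\varphi\le\varphi^\star_\alpha$ and $\varphi\ge\varphi^\star_\alpha$ then produces the two entries of $c_3^\alpha(\lambda)$. Both entries come from this boundary analysis, not from a small-amplitude versus sign-flip dichotomy.
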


In this way the estimate can be reduced to a simple contour-counting argument. It is readily checked that the sum over contours $\gamma$ as in the right--hand side of \eqref{eqn:21} can be bounded as 

\begin{equation*}
\sum_{\gamma \text{ around }x} + \sum_{\gamma \text{ around }y}+ \sum_{\gamma:\; |\gamma|\ge L},
\end{equation*}

where \virg{$\gamma$ around $x$} means that the contour $\gamma$ is contained in a square patch of the torus $\L^{\a}_L$ with side $L$ and centered around $x$. Therefore, assuming the bound \eqref{eq:Peierls_bound}, we readily obtain that
\begin{equation}
\label{eq:Peierls-sum}
\left\langle I_x^{\star_+} I_y^{\star_-}\right\rangle_{\mu_\alpha}
\le
2 \sum_{\gamma \text{ around } x}
e^{-\frac{1}{4} N c_3^\alpha(\lambda)|\gamma|} + 2 \sum_{\gamma \text{ around } y}
e^{-\frac{1}{4} N c_3^\alpha(\lambda)|\gamma|}
+
2 \sum_{\gamma:\ |\gamma|\ge L}
e^{-\frac{1}{4} N c_3^\alpha(\lambda)|\gamma|}.
\end{equation}

Using the standard contour counting bounds:
\[
\#\{\gamma\text{ around }x:\ |\gamma|=k\}\le \frac k2\,3^k,
\qquad
\#\{\gamma:\ |\gamma|=k\}\le L^2 3^k,
\]
we easily find, for $N$ and $L$ sufficiently large:
\[
\sum_{\gamma\mathrm{ around }x}
e^{-\frac14 N c_3^\alpha(\lambda)|\gamma|}
\le \frac18 e^{-\frac14 N c_3^\alpha(\lambda)},
\qquad
\sum_{\gamma:\ |\gamma|\ge L}
e^{-\frac14 N c_3^\alpha(\lambda)|\gamma|}
\le \frac14 e^{-\frac14 N c_3^\alpha(\lambda)}.
\]

Substituting into the right--hand side \eqref{eq:Peierls-sum}, we obtain the desired bound \eqref{eqn:18c}.

\paragraph{Proof of Lemma \ref{lem:Peierls_bound}.} 

We now prove \eqref{eq:Peierls_bound} by combining Chessboard Estimates with an energetic analysis of alternating-sign periodic configurations. 

\begin{enumerate}
\item For any contour $\gamma$, its crossed edges can be partitioned into four classes:
\[
\mathrm{E}(\gamma)=
\mathrm{E}_{h,e}(\gamma)\sqcup \mathrm{E}_{h,o}(\gamma)\sqcup \mathrm{E}_{v,e}(\gamma)\sqcup \mathrm{E}_{v,o}(\gamma),
\]

(here \virg{h}/\virg{v} stands for \emph{horizontal}/\emph{vertical} and \virg{e}/\virg{o} stands for \emph{even}/\emph{odd}) corresponding to the four disjoint tessellations $(\mathrm{v},\mathrm{e})$, $(\mathrm{v},\mathrm{o})$,
$(\mathrm{h},\mathrm{e})$, $(\mathrm{h},\mathrm{o})$ of the torus made of $2\times 1$
rectangles as in Fig.~\ref{tass}. \\

\begin{figure}
        \centering
        \begin{tikzpicture}[scale=0.8]
      %Caso Alto DX
        \begin{scope}[xshift=0cm]
\draw[dotted, ->-] (-0.5,-0.5) -- (3.5,-0.5);
\draw[dotted, ->>-] (3.5,-0.5) -- (3.5,3.5);
\draw[dotted, ->-] (-0.5,3.5) -- (3.5,3.5);
\draw[dotted, ->>-] (-0.5,-0.5) -- (-0.5,3.5);
\draw[] (0, 0) -- (3,0);
\draw[] (0, 2) -- (3,2);
\draw[] (3,1) -- (3.5,1);
\draw[] (-0.5,1) -- (0,1);
\draw[] (-0.5,3) -- (0,3);
\draw[] (3,3) -- (3.5,3);
\draw (0,3) -- (3,3);
\draw (0,1) -- (3,1);
\draw (0,0) -- (0,3);
\draw (1,0) -- (1,3);
\draw (2,0) -- (2,3);
\draw (-0.5,2) -- (0,2);
\draw (3,2) -- (3.5,2);
\draw (-0.5,0) -- (0,0);
\draw (3,0) -- (3.5,0);
\draw (3,0) -- (3,3);
\draw[] (3,3) -- (3,3.5);
\draw[] (2,3) -- (2,3.5);
\draw[] (1,3) -- (1,3.5);
\draw[] (0,3) -- (0,3.5);
\draw[] (3,0) -- (3,-0.5);
\draw[] (2,0) -- (2,-0.5);
\draw[] (1,0) -- (1,-0.5);
\draw[] (0,0) -- (0,-.5);
    \draw[fill, fill opacity =0.1] (-0.25,-0.25) -- (0.25,-0.25) -- (0.25, 1.25) -- (-0.25,1.25) -- (-0.25,-0.25);  
    \begin{scope}[xshift=1cm]
        \draw[fill, fill opacity =0.1] (-0.25,-0.25) -- (0.25,-0.25) -- (0.25, 1.25) -- (-0.25,1.25) -- (-0.25,-0.25); 
    \end{scope}
      \begin{scope}[xshift=2cm]
        \draw[fill, fill opacity =0.1] (-0.25,-0.25) -- (0.25,-0.25) -- (0.25, 1.25) -- (-0.25,1.25) -- (-0.25,-0.25); 
    \end{scope}
      \begin{scope}[xshift=3cm]
        \draw[fill, fill opacity =0.1] (-0.25,-0.25) -- (0.25,-0.25) -- (0.25, 1.25) -- (-0.25,1.25) -- (-0.25,-0.25); 
    \end{scope}
      \begin{scope}[xshift=1cm, yshift=2cm]
        \draw[fill, fill opacity =0.1] (-0.25,-0.25) -- (0.25,-0.25) -- (0.25, 1.25) -- (-0.25,1.25) -- (-0.25,-0.25); 
    \end{scope}
     \begin{scope}[xshift=2cm, yshift=2cm]
        \draw[fill, fill opacity =0.1] (-0.25,-0.25) -- (0.25,-0.25) -- (0.25, 1.25) -- (-0.25,1.25) -- (-0.25,-0.25); 
    \end{scope}
     \begin{scope}[xshift=3cm, yshift=2cm]
        \draw[fill, fill opacity =0.1] (-0.25,-0.25) -- (0.25,-0.25) -- (0.25, 1.25) -- (-0.25,1.25) -- (-0.25,-0.25); 
    \end{scope}
     \begin{scope}[xshift=0cm, yshift=2cm]
        \draw[fill, fill opacity =0.1] (-0.25,-0.25) -- (0.25,-0.25) -- (0.25, 1.25) -- (-0.25,1.25) -- (-0.25,-0.25); 
    \end{scope}
        \node[] at (1.5,-1) {\scalebox{1}{$(\mathrm{vertical, even})$}};
        \end{scope}
            %%Top Right
 \begin{scope}[xshift=5cm]
            \draw[dotted, ->-] (-0.5,-0.5) -- (3.5,-0.5);
\draw[dotted, ->>-] (3.5,-0.5) -- (3.5,3.5);
\draw[dotted, ->-] (-0.5,3.5) -- (3.5,3.5);
\draw[dotted, ->>-] (-0.5,-0.5) -- (-0.5,3.5);
\draw[] (0, 0) -- (3,0);
\draw[] (0, 2) -- (3,2);
\draw[] (3,1) -- (3.5,1);
\draw[] (-0.5,1) -- (0,1);
\draw[] (-0.5,3) -- (0,3);
\draw[] (3,3) -- (3.5,3);
\draw (0,3) -- (3,3);
\draw (0,1) -- (3,1);
\draw (0,0) -- (0,3);
\draw (1,0) -- (1,3);
\draw (2,0) -- (2,3);
\draw (-0.5,2) -- (0,2);
\draw (3,2) -- (3.5,2);
\draw (-0.5,0) -- (0,0);
\draw (3,0) -- (3.5,0);
\draw (3,0) -- (3,3);
\draw[] (3,3) -- (3,3.5);
\draw[] (2,3) -- (2,3.5);
\draw[] (1,3) -- (1,3.5);
\draw[] (0,3) -- (0,3.5);
\draw[] (3,0) -- (3,-0.5);
\draw[] (2,0) -- (2,-0.5);
\draw[] (1,0) -- (1,-0.5);
\draw[] (0,0) -- (0,-.5);
\draw[fill, fill opacity =0.1] (-0.25,0.75) -- (0.25,0.75) -- (0.25, 2.25) -- (-0.25,2.25) -- (-0.25,0.75); 
\begin{scope}[xshift=1cm]
    \draw[fill, fill opacity =0.1] (-0.25,0.75) -- (0.25,0.75) -- (0.25, 2.25) -- (-0.25,2.25) -- (-0.25,0.75); 
\end{scope}
\begin{scope}[xshift=2cm]
    \draw[fill, fill opacity =0.1] (-0.25,0.75) -- (0.25,0.75) -- (0.25, 2.25) -- (-0.25,2.25) -- (-0.25,0.75); 
\end{scope}
\begin{scope}[xshift=3cm]
    \draw[fill, fill opacity =0.1] (-0.25,0.75) -- (0.25,0.75) -- (0.25, 2.25) -- (-0.25,2.25) -- (-0.25,0.75); 
\end{scope}
\fill[fill opacity = 0.1] (-0.25, -0.5) -- (0.25, -0.5) -- (0.25, 0.5) -- (-0.25,0.5) -- (-0.25, -0.5);
 \draw[] (0.25, -0.5) -- (0.25, 0.5) -- (-0.25,0.5) -- (-0.25, -0.5);
 \begin{scope}[xshift=1cm]
   \fill[fill opacity = 0.1]   (-0.25, -0.5) -- (0.25, -0.5) -- (0.25, 0.5) -- (-0.25,0.5) -- (-0.25, -0.5);
 \draw[] (0.25, -0.5) -- (0.25, 0.5) -- (-0.25,0.5) -- (-0.25, -0.5);
 \end{scope}
  \begin{scope}[xshift=2cm]
  \fill[fill opacity = 0.1]    (-0.25, -0.5) -- (0.25, -0.5) -- (0.25, 0.5) -- (-0.25,0.5) -- (-0.25, -0.5);
 \draw[] (0.25, -0.5) -- (0.25, 0.5) -- (-0.25,0.5) -- (-0.25, -0.5);
 \end{scope}
  \begin{scope}[xshift=3cm]
  \fill[fill opacity = 0.1]    (-0.25, -0.5) -- (0.25, -0.5) -- (0.25, 0.5) -- (-0.25,0.5) -- (-0.25, -0.5);
 \draw[] (0.25, -0.5) -- (0.25, 0.5) -- (-0.25,0.5) -- (-0.25, -0.5);
 \end{scope}
 \fill[fill opacity = 0.1] (-0.25,2.5) -- (0.25,2.5) -- (0.25,3.5) -- (-0.25,3.5) -- (-0.25,2.5);
 \draw[] (-0.25, 3.5) -- (-0.25, 2.5) -- (0.25,2.5) -- (0.25, 3.5);
 \begin{scope}[xshift=1cm]
     \fill[fill opacity = 0.1] (-0.25,2.5) -- (0.25,2.5) -- (0.25,3.5) -- (-0.25,3.5) -- (-0.25,2.5);
 \draw[] (-0.25, 3.5) -- (-0.25, 2.5) -- (0.25,2.5) -- (0.25, 3.5);
 \end{scope}
 \begin{scope}[xshift=2cm]
     \fill[fill opacity = 0.1] (-0.25,2.5) -- (0.25,2.5) -- (0.25,3.5) -- (-0.25,3.5) -- (-0.25,2.5);
 \draw[] (-0.25, 3.5) -- (-0.25, 2.5) -- (0.25,2.5) -- (0.25, 3.5);
 \end{scope}
 \begin{scope}[xshift=3cm]
     \fill[fill opacity = 0.1] (-0.25,2.5) -- (0.25,2.5) -- (0.25,3.5) -- (-0.25,3.5) -- (-0.25,2.5);
 \draw[] (-0.25, 3.5) -- (-0.25, 2.5) -- (0.25,2.5) -- (0.25, 3.5);
 \end{scope}
 \node[] at (1.5,-1) {\scalebox{1}{$(\mathrm{vertical}, \mathrm{odd})$}};
 \end{scope}

 %%Down Left
 \begin{scope}[yshift=-5cm]
           \draw[dotted, ->-] (-0.5,-0.5) -- (3.5,-0.5);
\draw[dotted, ->>-] (3.5,-0.5) -- (3.5,3.5);
\draw[dotted, ->-] (-0.5,3.5) -- (3.5,3.5);
\draw[dotted, ->>-] (-0.5,-0.5) -- (-0.5,3.5);
\draw[] (0, 0) -- (3,0);
\draw[] (0, 2) -- (3,2);
\draw[] (3,1) -- (3.5,1);
\draw[] (-0.5,1) -- (0,1);
\draw[] (-0.5,3) -- (0,3);
\draw[] (3,3) -- (3.5,3);
\draw (0,3) -- (3,3);
\draw (0,1) -- (3,1);
\draw (0,0) -- (0,3);
\draw (1,0) -- (1,3);
\draw (2,0) -- (2,3);
\draw (-0.5,2) -- (0,2);
\draw (3,2) -- (3.5,2);
\draw (-0.5,0) -- (0,0);
\draw (3,0) -- (3.5,0);
\draw (3,0) -- (3,3);
\draw[] (3,3) -- (3,3.5);
\draw[] (2,3) -- (2,3.5);
\draw[] (1,3) -- (1,3.5);
\draw[] (0,3) -- (0,3.5);
\draw[] (3,0) -- (3,-0.5);
\draw[] (2,0) -- (2,-0.5);
\draw[] (1,0) -- (1,-0.5);
\draw[] (0,0) -- (0,-.5);
\draw[fill, fill opacity = 0.1] (-0.25, -0.25) -- (1.25, -0.25) -- (1.25, 0.25) -- (-0.25,0.25) -- (-0.25, -0.25);
\begin{scope}[xshift=2cm]
    \draw[fill, fill opacity = 0.1] (-0.25, -0.25) -- (1.25, -0.25) -- (1.25, 0.25) -- (-0.25,0.25) -- (-0.25, -0.25);
\end{scope}
\begin{scope}[xshift=2cm, yshift= 1cm]
    \draw[fill, fill opacity = 0.1] (-0.25, -0.25) -- (1.25, -0.25) -- (1.25, 0.25) -- (-0.25,0.25) -- (-0.25, -0.25);
\end{scope}
\begin{scope}[xshift=2cm, yshift=3cm]
    \draw[fill, fill opacity = 0.1] (-0.25, -0.25) -- (1.25, -0.25) -- (1.25, 0.25) -- (-0.25,0.25) -- (-0.25, -0.25);
\end{scope}
\begin{scope}[xshift=2cm, yshift=2cm]
    \draw[fill, fill opacity = 0.1] (-0.25, -0.25) -- (1.25, -0.25) -- (1.25, 0.25) -- (-0.25,0.25) -- (-0.25, -0.25);
\end{scope}
\begin{scope}[xshift=0cm, yshift= 1cm]
    \draw[fill, fill opacity = 0.1] (-0.25, -0.25) -- (1.25, -0.25) -- (1.25, 0.25) -- (-0.25,0.25) -- (-0.25, -0.25);
\end{scope}
\begin{scope}[xshift=0cm, yshift=3cm]
    \draw[fill, fill opacity = 0.1] (-0.25, -0.25) -- (1.25, -0.25) -- (1.25, 0.25) -- (-0.25,0.25) -- (-0.25, -0.25);
\end{scope}
\begin{scope}[xshift=0cm, yshift=2cm]
    \draw[fill, fill opacity = 0.1] (-0.25, -0.25) -- (1.25, -0.25) -- (1.25, 0.25) -- (-0.25,0.25) -- (-0.25, -0.25);
\end{scope}
 \node[] at (1.5,-1) {\scalebox{1}{$(\mathrm{horizontal, even})$}};
 \end{scope}

 %%Down Left
 \begin{scope}[xshift=5cm, yshift= -5cm]
     \draw[dotted, ->-] (-0.5,-0.5) -- (3.5,-0.5);
\draw[dotted, ->>-] (3.5,-0.5) -- (3.5,3.5);
\draw[dotted, ->-] (-0.5,3.5) -- (3.5,3.5);
\draw[dotted, ->>-] (-0.5,-0.5) -- (-0.5,3.5);
\draw[] (0, 0) -- (3,0);
\draw[] (0, 2) -- (3,2);
\draw[] (3,1) -- (3.5,1);
\draw[] (-0.5,1) -- (0,1);
\draw[] (-0.5,3) -- (0,3);
\draw[] (3,3) -- (3.5,3);
\draw (0,3) -- (3,3);
\draw (0,1) -- (3,1);
\draw (0,0) -- (0,3);
\draw (1,0) -- (1,3);
\draw (2,0) -- (2,3);
\draw (-0.5,2) -- (0,2);
\draw (3,2) -- (3.5,2);
\draw (-0.5,0) -- (0,0);
\draw (3,0) -- (3.5,0);
\draw (3,0) -- (3,3);
\draw[] (3,3) -- (3,3.5);
\draw[] (2,3) -- (2,3.5);
\draw[] (1,3) -- (1,3.5);
\draw[] (0,3) -- (0,3.5);
\draw[] (3,0) -- (3,-0.5);
\draw[] (2,0) -- (2,-0.5);
\draw[] (1,0) -- (1,-0.5);
\draw[] (0,0) -- (0,-.5);
 \draw[fill, fill opacity = 0.1] (0.75, -0.25) -- (2.25, -0.25) -- (2.25, 0.25) -- (0.75,0.25) -- (0.75, -0.25);
 \begin{scope}[yshift=1cm]
  \draw[fill, fill opacity = 0.1] (0.75, -0.25) -- (2.25, -0.25) -- (2.25, 0.25) -- (0.75,0.25) -- (0.75, -0.25);   
 \end{scope}

  \begin{scope}[yshift=2cm]
  \draw[fill, fill opacity = 0.1] (0.75, -0.25) -- (2.25, -0.25) -- (2.25, 0.25) -- (0.75,0.25) -- (0.75, -0.25);   
 \end{scope}

  \begin{scope}[yshift=3cm]
  \draw[fill, fill opacity = 0.1] (0.75, -0.25) -- (2.25, -0.25) -- (2.25, 0.25) -- (0.75,0.25) -- (0.75, -0.25);   
 \end{scope}

 \fill[fill opacity = 0.1] (-0.5,-0.25) -- (0.25,-0.25) -- (0.25, 0.25) -- (-0.5,0.25) -- (-0.5,-0.25); 
 \draw[] (-0.5,-0.25) -- (0.25,-0.25) -- (0.25, 0.25) -- (-0.5,0.25);
 \begin{scope}[xshift= 3.25cm]   
  \fill[fill opacity = 0.1] (-0.5,-0.25) -- (0.25,-0.25) -- (0.25, 0.25) -- (-0.5,0.25) -- (-0.5,-0.25); 
  \draw[] (0.25, -0.25) -- (-0.5,-0.25) -- (-0.5,0.25) -- (0.25, 0.25);
 \end{scope}
 \begin{scope}[xshift= 3.25cm, yshift=1cm]   
  \fill[fill opacity = 0.1] (-0.5,-0.25) -- (0.25,-0.25) -- (0.25, 0.25) -- (-0.5,0.25) -- (-0.5,-0.25); 
  \draw[] (0.25, -0.25) -- (-0.5,-0.25) -- (-0.5,0.25) -- (0.25, 0.25);
 \end{scope}
 \begin{scope}[xshift= 3.25cm, yshift=2cm]   
  \fill[fill opacity = 0.1] (-0.5,-0.25) -- (0.25,-0.25) -- (0.25, 0.25) -- (-0.5,0.25) -- (-0.5,-0.25); 
  \draw[] (0.25, -0.25) -- (-0.5,-0.25) -- (-0.5,0.25) -- (0.25, 0.25);
 \end{scope}
 \begin{scope}[xshift= 3.25cm, yshift=3cm]   
  \fill[fill opacity = 0.1] (-0.5,-0.25) -- (0.25,-0.25) -- (0.25, 0.25) -- (-0.5,0.25) -- (-0.5,-0.25); 
  \draw[] (0.25, -0.25) -- (-0.5,-0.25) -- (-0.5,0.25) -- (0.25, 0.25);
 \end{scope}

 \begin{scope}[yshift=1cm]
     \fill[fill opacity = 0.1] (-0.5,-0.25) -- (0.25,-0.25) -- (0.25, 0.25) -- (-0.5,0.25) -- (-0.5,-0.25); 
 \draw[] (-0.5,-0.25) -- (0.25,-0.25) -- (0.25, 0.25) -- (-0.5,0.25);
 \end{scope}

 \begin{scope}[yshift=2cm]
     \fill[fill opacity = 0.1] (-0.5,-0.25) -- (0.25,-0.25) -- (0.25, 0.25) -- (-0.5,0.25) -- (-0.5,-0.25); 
 \draw[] (-0.5,-0.25) -- (0.25,-0.25) -- (0.25, 0.25) -- (-0.5,0.25);
 \end{scope}

 \begin{scope}[yshift=3cm]
     \fill[fill opacity = 0.1] (-0.5,-0.25) -- (0.25,-0.25) -- (0.25, 0.25) -- (-0.5,0.25) -- (-0.5,-0.25); 
 \draw[] (-0.5,-0.25) -- (0.25,-0.25) -- (0.25, 0.25) -- (-0.5,0.25);
 \end{scope}
 
 \node[] at (1.5,-1) {\scalebox{1}{$(\mathrm{horizontal, odd})$}};
 
 \end{scope}
 
        \end{tikzpicture}
        \caption{$4$ Tassellation Patterns Used to Cover Lattices $\Lambda^{\a}_L$: notice that using all such covering one is able to recover all edges crossed by the contour $\gamma$. (In the Lattice $\Lambda^{\mathrm{SP}}_L$ all spacings are doubled)}
        \label{tass}
    \end{figure}
 
By Cauchy--Schwartz's inequality:
\begin{equation}
\label{eqn:26}
\Big\langle
\prod_{\langle x,y\rangle\in \mathrm{E}(\gamma)} I_x^+ I_y^-
\Big\rangle_{\mu_\alpha} = 
\Big\langle \prod_{\substack{a=h,v\\ b=e,o}}
\prod_{\langle x,y\rangle\in \mathrm{E}_{a,b}(\gamma)} I_x^+ I_y^-
\Big\rangle_{\mu_\alpha}
\le
\prod_{\substack{a=h,v\\ b=e,o}}
\Big\langle
\prod_{\langle x,y\rangle\in \mathrm{E}_{a,b}(\gamma)} I_x^+ I_y^-
\Big\rangle_{\mu_\alpha}^{1/4}.
\end{equation}

\item 

Applying the chessboard estimate \eqref{CE} to each factor produces, up to translations, one of the four periodic patterns shown in Fig.~\ref{pat5}. Therefore:
\begin{equation}
\label{eq:Peierls-patterns}
\Big\langle
\prod_{\langle x,y\rangle\in \mathrm{E}(\gamma)} I_x^+ I_y^-
\Big\rangle_{\mu_\alpha}
\le
\left\langle I^{(-++-)}\right\rangle_{\mu_\alpha}^{\frac{|\mathrm{E}_{h,e}(\gamma)|}{2|\Lambda_L^\alpha|}}
\left\langle I^{(--++)}\right\rangle_{\mu_\alpha}^{\frac{|\mathrm{E}_{h,o}(\gamma)|}{2|\Lambda_L^\alpha|}}
\left\langle I^{\left(\substack{-\\+\\+\\-}\right)}\right\rangle_{\mu_\alpha}^{\frac{|\mathrm{E}_{v,e}(\gamma)|}{2|\Lambda_L^\alpha|}}
\left\langle I^{\left(\substack{-\\-\\+\\+}\right)}\right\rangle_{\mu_\alpha}^{\frac{|\mathrm{E}_{v,o}(\gamma)|}{2|\Lambda_L^\alpha|}}
\end{equation}

where (see also Fig.~\ref{pat5}):

\begin{equation*}
\begin{aligned}
&I^{\left(-++-\right)}
\doteq
\prod_{x\in \L^{\a,h}_L}
I^-_{x} I^+_{x+ e_0^{\a}} I^+_{x+ 2 e_0^{\a}} I^-_{x+ 3e_0^{\a}}, \qquad
I^{\left(--++\right)} \doteq
\prod_{x\in \L^{\a,h}_L}
I^-_{x} I^-_{x+ e_0^{\a}} I^+_{x+ 2 e_0^{\a}} I^+_{x+ 3e_0^{\a}},
\\
&I^{\left(\substack{-\\+\\+\\-}\right)} \doteq
\prod_{x\in \L^{\a,v}_L}
I^-_{x} I^+_{x+ e_1^{\a}} I^+_{x+ 2e_1^{\a}} I^-_{x+ 3e_1^{\a}}, \qquad I^{\left(\substack{-\\-\\+\\+}\right)} \doteq
\prod_{x\in \L^{\a,v}_L}
I^-_{x} I^-_{x+ e_1^{\a}} I^+_{x+ 2 e_1^{\a}} I^+_{x+ 3e_1^{\a}},
\end{aligned}
\end{equation*}

with $e_0^{\a}= 2^{\d_{\a,\mathrm{SP}}}(1,0)$, $e_1^{\a}= 2^{\d_{\a,\mathrm{SP}}} (0,1)$, and 

\begin{align*}
\L^{\a,h}_L&= \L^{\a}_L \cap \left\{\begin{array}{cc}
(4\mbb{Z})\times \mbb{Z},     & \a\in\{\mathrm{NN},\mathrm{SN}\} \\
(8\mbb{Z})\times(2\mbb{Z}),     & \a=\mathrm{SP} 
\end{array}\right. \\
\L^{\a,v}_L&= \L^{\a}_L \cap \left\{\begin{array}{cc}
\mbb{Z}\times (4\mbb{Z}),     & \a\in\{\mathrm{NN},\mathrm{SN}\} \\
(2\mbb{Z})\times(8\mbb{Z}),     & \a=\mathrm{SP} 
\end{array}. \right.
\end{align*}

    \begin{figure}
        \centering
        \begin{tikzpicture}[scale=0.8]
      %Caso Alto DX
        \begin{scope}[xshift=0cm]
\draw[dotted, ->-] (-0.5,-0.5) -- (3.5,-0.5);
\draw[dotted, ->>-] (3.5,-0.5) -- (3.5,3.5);
\draw[dotted, ->-] (-0.5,3.5) -- (3.5,3.5);
\draw[dotted, ->>-] (-0.5,-0.5) -- (-0.5,3.5);
\draw[] (0, 0) -- (3,0);
\draw[] (0, 2) -- (3,2);
\draw[] (3,1) -- (3.5,1);
\draw[] (-0.5,1) -- (0,1);
\draw[] (-0.5,3) -- (0,3);
\draw[] (3,3) -- (3.5,3);
\draw (0,3) -- (3,3);
\draw (0,1) -- (3,1);
\draw (0,0) -- (0,3);
\draw (1,0) -- (1,3);
\draw (2,0) -- (2,3);
\draw (-0.5,2) -- (0,2);
\draw (3,2) -- (3.5,2);
\draw (-0.5,0) -- (0,0);
\draw (3,0) -- (3.5,0);
\draw (3,0) -- (3,3);
\draw[] (3,3) -- (3,3.5);
\draw[] (2,3) -- (2,3.5);
\draw[] (1,3) -- (1,3.5);
\draw[] (0,3) -- (0,3.5);
\draw[] (3,0) -- (3,-0.5);
\draw[] (2,0) -- (2,-0.5);
\draw[] (1,0) -- (1,-0.5);
\draw[] (0,0) -- (0,-.5);
    \draw[fill, fill opacity =0.1] (-0.25,-0.25) -- (0.25,-0.25) -- (0.25, 1.25) -- (-0.25,1.25) -- (-0.25,-0.25);  
    \begin{scope}[xshift=1cm]
        \draw[fill, fill opacity =0.1] (-0.25,-0.25) -- (0.25,-0.25) -- (0.25, 1.25) -- (-0.25,1.25) -- (-0.25,-0.25); 
    \end{scope}
      \begin{scope}[xshift=2cm]
        \draw[fill, fill opacity =0.1] (-0.25,-0.25) -- (0.25,-0.25) -- (0.25, 1.25) -- (-0.25,1.25) -- (-0.25,-0.25); 
    \end{scope}
      \begin{scope}[xshift=3cm]
        \draw[fill, fill opacity =0.1] (-0.25,-0.25) -- (0.25,-0.25) -- (0.25, 1.25) -- (-0.25,1.25) -- (-0.25,-0.25); 
    \end{scope}
      \begin{scope}[xshift=1cm, yshift=2cm]
        \draw[fill, fill opacity =0.1] (-0.25,-0.25) -- (0.25,-0.25) -- (0.25, 1.25) -- (-0.25,1.25) -- (-0.25,-0.25); 
    \end{scope}
     \begin{scope}[xshift=2cm, yshift=2cm]
        \draw[fill, fill opacity =0.1] (-0.25,-0.25) -- (0.25,-0.25) -- (0.25, 1.25) -- (-0.25,1.25) -- (-0.25,-0.25); 
    \end{scope}
     \begin{scope}[xshift=3cm, yshift=2cm]
        \draw[fill, fill opacity =0.1] (-0.25,-0.25) -- (0.25,-0.25) -- (0.25, 1.25) -- (-0.25,1.25) -- (-0.25,-0.25); 
        
    \end{scope}
     \begin{scope}[xshift=0cm, yshift=2cm]
        \draw[fill, fill opacity =0.1] (-0.25,-0.25) -- (0.25,-0.25) -- (0.25, 1.25) -- (-0.25,1.25) -- (-0.25,-0.25); 
        
    \end{scope}
        \node[below] at (1.5,-1) {\scalebox{1}{$\substack{-\\+\\+\\-}$}};
         \draw[black,fill=black] (0,0) circle (.1 cm);
         \draw[black,fill=black] (1,0) circle (.1 cm);
             \draw[black,fill=black] (2,0) circle (.1 cm);
              \draw[black,fill=black] (3,0) circle (.1 cm);
               \draw[black,fill=black] (0,3) circle (.1 cm);
                \draw[black,fill=black] (1,3) circle (.1 cm);
                 \draw[black,fill=black] (2,3) circle (.1 cm);
                  \draw[black,fill=black] (3,3) circle (.1 cm);
                   \draw[black,fill=white] (0,1) circle (.1 cm);
                   \draw[black,fill=white] (0,2) circle (.1 cm);
                   \draw[black,fill=white] (1,1) circle (.1 cm);
                   \draw[black,fill=white] (1,2) circle (.1 cm);
                   \draw[black,fill=white] (2,1) circle (.1 cm);
                   \draw[black,fill=white] (2,2) circle (.1 cm);
                   \draw[black,fill=white] (3,1) circle (.1 cm);
                   \draw[black,fill=white] (3,2) circle (.1 cm); 
        \end{scope}
            %%Top Right
 \begin{scope}[xshift=5cm]
           \draw[dotted, ->-] (-0.5,-0.5) -- (3.5,-0.5);
\draw[dotted, ->>-] (3.5,-0.5) -- (3.5,3.5);
\draw[dotted, ->-] (-0.5,3.5) -- (3.5,3.5);
\draw[dotted, ->>-] (-0.5,-0.5) -- (-0.5,3.5);
\draw[] (0, 0) -- (3,0);
\draw[] (0, 2) -- (3,2);
\draw[] (3,1) -- (3.5,1);
\draw[] (-0.5,1) -- (0,1);
\draw[] (-0.5,3) -- (0,3);
\draw[] (3,3) -- (3.5,3);
\draw (0,3) -- (3,3);
\draw (0,1) -- (3,1);
\draw (0,0) -- (0,3);
\draw (1,0) -- (1,3);
\draw (2,0) -- (2,3);
\draw (-0.5,2) -- (0,2);
\draw (3,2) -- (3.5,2);
\draw (-0.5,0) -- (0,0);
\draw (3,0) -- (3.5,0);
\draw (3,0) -- (3,3);
\draw[] (3,3) -- (3,3.5);
\draw[] (2,3) -- (2,3.5);
\draw[] (1,3) -- (1,3.5);
\draw[] (0,3) -- (0,3.5);
\draw[] (3,0) -- (3,-0.5);
\draw[] (2,0) -- (2,-0.5);
\draw[] (1,0) -- (1,-0.5);
\draw[] (0,0) -- (0,-.5);
\draw[fill, fill opacity =0.1] (-0.25,0.75) -- (0.25,0.75) -- (0.25, 2.25) -- (-0.25,2.25) -- (-0.25,0.75); 
\begin{scope}[xshift=1cm]
    \draw[fill, fill opacity =0.1] (-0.25,0.75) -- (0.25,0.75) -- (0.25, 2.25) -- (-0.25,2.25) -- (-0.25,0.75); 
\end{scope}
\begin{scope}[xshift=2cm]
    \draw[fill, fill opacity =0.1] (-0.25,0.75) -- (0.25,0.75) -- (0.25, 2.25) -- (-0.25,2.25) -- (-0.25,0.75); 
\end{scope}
\begin{scope}[xshift=3cm]
    \draw[fill, fill opacity =0.1] (-0.25,0.75) -- (0.25,0.75) -- (0.25, 2.25) -- (-0.25,2.25) -- (-0.25,0.75); 
\end{scope}
\fill[fill opacity = 0.1] (-0.25, -0.5) -- (0.25, -0.5) -- (0.25, 0.5) -- (-0.25,0.5) -- (-0.25, -0.5);
 \draw[] (0.25, -0.5) -- (0.25, 0.5) -- (-0.25,0.5) -- (-0.25, -0.5);
 \begin{scope}[xshift=1cm]
   \fill[fill opacity = 0.1]   (-0.25, -0.5) -- (0.25, -0.5) -- (0.25, 0.5) -- (-0.25,0.5) -- (-0.25, -0.5);
 \draw[] (0.25, -0.5) -- (0.25, 0.5) -- (-0.25,0.5) -- (-0.25, -0.5);
 \end{scope}
  \begin{scope}[xshift=2cm]
  \fill[fill opacity = 0.1]    (-0.25, -0.5) -- (0.25, -0.5) -- (0.25, 0.5) -- (-0.25,0.5) -- (-0.25, -0.5);
 \draw[] (0.25, -0.5) -- (0.25, 0.5) -- (-0.25,0.5) -- (-0.25, -0.5);
 \end{scope}
  \begin{scope}[xshift=3cm]
  \fill[fill opacity = 0.1]    (-0.25, -0.5) -- (0.25, -0.5) -- (0.25, 0.5) -- (-0.25,0.5) -- (-0.25, -0.5);
 \draw[] (0.25, -0.5) -- (0.25, 0.5) -- (-0.25,0.5) -- (-0.25, -0.5);
 \end{scope}
 \fill[fill opacity = 0.1] (-0.25,2.5) -- (0.25,2.5) -- (0.25,3.5) -- (-0.25,3.5) -- (-0.25,2.5);
 \draw[] (-0.25, 3.5) -- (-0.25, 2.5) -- (0.25,2.5) -- (0.25, 3.5);
 \begin{scope}[xshift=1cm]
     \fill[fill opacity = 0.1] (-0.25,2.5) -- (0.25,2.5) -- (0.25,3.5) -- (-0.25,3.5) -- (-0.25,2.5);
 \draw[] (-0.25, 3.5) -- (-0.25, 2.5) -- (0.25,2.5) -- (0.25, 3.5);
 \end{scope}
 \begin{scope}[xshift=2cm]
     \fill[fill opacity = 0.1] (-0.25,2.5) -- (0.25,2.5) -- (0.25,3.5) -- (-0.25,3.5) -- (-0.25,2.5);
 \draw[] (-0.25, 3.5) -- (-0.25, 2.5) -- (0.25,2.5) -- (0.25, 3.5);
 \end{scope}
 \begin{scope}[xshift=3cm]
     \fill[fill opacity = 0.1] (-0.25,2.5) -- (0.25,2.5) -- (0.25,3.5) -- (-0.25,3.5) -- (-0.25,2.5);
 \draw[] (-0.25, 3.5) -- (-0.25, 2.5) -- (0.25,2.5) -- (0.25, 3.5);
 \end{scope}
 \node[below] at (1.5,-1) {\scalebox{1}{$\substack{-\\-\\+\\+}$}};
 \draw[black,fill=white] (0,0) circle (.1 cm);
         \draw[black,fill=white] (1,0) circle (.1 cm);
             \draw[black,fill=white] (2,0) circle (.1 cm);
              \draw[black,fill=white] (3,0) circle (.1 cm);
               \draw[black,fill=black] (0,3) circle (.1 cm);
                \draw[black,fill=black] (1,3) circle (.1 cm);
                 \draw[black,fill=black] (2,3) circle (.1 cm);
                  \draw[black,fill=black] (3,3) circle (.1 cm);
                   \draw[black,fill=white] (0,1) circle (.1 cm);
                   \draw[black,fill=black] (0,2) circle (.1 cm);
                   \draw[black,fill=white] (1,1) circle (.1 cm);
                   \draw[black,fill=black] (1,2) circle (.1 cm);
                   \draw[black,fill=white] (2,1) circle (.1 cm);
                   \draw[black,fill=black] (2,2) circle (.1 cm);
                   \draw[black,fill=white] (3,1) circle (.1 cm);
                   \draw[black,fill=black] (3,2) circle (.1 cm);

 \end{scope}

 %%Down Left
 \begin{scope}[yshift=-6cm]
           \draw[dotted, ->-] (-0.5,-0.5) -- (3.5,-0.5);
\draw[dotted, ->>-] (3.5,-0.5) -- (3.5,3.5);
\draw[dotted, ->-] (-0.5,3.5) -- (3.5,3.5);
\draw[dotted, ->>-] (-0.5,-0.5) -- (-0.5,3.5);
\draw[] (0, 0) -- (3,0);
\draw[] (0, 2) -- (3,2);
\draw[] (3,1) -- (3.5,1);
\draw[] (-0.5,1) -- (0,1);
\draw[] (-0.5,3) -- (0,3);
\draw[] (3,3) -- (3.5,3);
\draw (0,3) -- (3,3);
\draw (0,1) -- (3,1);
\draw (0,0) -- (0,3);
\draw (1,0) -- (1,3);
\draw (2,0) -- (2,3);
\draw (-0.5,2) -- (0,2);
\draw (3,2) -- (3.5,2);
\draw (-0.5,0) -- (0,0);
\draw (3,0) -- (3.5,0);
\draw (3,0) -- (3,3);
\draw[] (3,3) -- (3,3.5);
\draw[] (2,3) -- (2,3.5);
\draw[] (1,3) -- (1,3.5);
\draw[] (0,3) -- (0,3.5);
\draw[] (3,0) -- (3,-0.5);
\draw[] (2,0) -- (2,-0.5);
\draw[] (1,0) -- (1,-0.5);
\draw[] (0,0) -- (0,-.5);
\draw[fill, fill opacity = 0.1] (-0.25, -0.25) -- (1.25, -0.25) -- (1.25, 0.25) -- (-0.25,0.25) -- (-0.25, -0.25);
\begin{scope}[xshift=2cm]
    \draw[fill, fill opacity = 0.1] (-0.25, -0.25) -- (1.25, -0.25) -- (1.25, 0.25) -- (-0.25,0.25) -- (-0.25, -0.25);
\end{scope}
\begin{scope}[xshift=2cm, yshift= 1cm]
    \draw[fill, fill opacity = 0.1] (-0.25, -0.25) -- (1.25, -0.25) -- (1.25, 0.25) -- (-0.25,0.25) -- (-0.25, -0.25);
\end{scope}
\begin{scope}[xshift=2cm, yshift=3cm]
    \draw[fill, fill opacity = 0.1] (-0.25, -0.25) -- (1.25, -0.25) -- (1.25, 0.25) -- (-0.25,0.25) -- (-0.25, -0.25);
\end{scope}
\begin{scope}[xshift=2cm, yshift=2cm]
    \draw[fill, fill opacity = 0.1] (-0.25, -0.25) -- (1.25, -0.25) -- (1.25, 0.25) -- (-0.25,0.25) -- (-0.25, -0.25);
\end{scope}
\begin{scope}[xshift=0cm, yshift= 1cm]
    \draw[fill, fill opacity = 0.1] (-0.25, -0.25) -- (1.25, -0.25) -- (1.25, 0.25) -- (-0.25,0.25) -- (-0.25, -0.25);
\end{scope}
\begin{scope}[xshift=0cm, yshift=3cm]
    \draw[fill, fill opacity = 0.1] (-0.25, -0.25) -- (1.25, -0.25) -- (1.25, 0.25) -- (-0.25,0.25) -- (-0.25, -0.25);
\end{scope}
\begin{scope}[xshift=0cm, yshift=2cm]
    \draw[fill, fill opacity = 0.1] (-0.25, -0.25) -- (1.25, -0.25) -- (1.25, 0.25) -- (-0.25,0.25) -- (-0.25, -0.25);
\end{scope}
 \node[] at (1.5,-1) {\scalebox{1}{$-++-$}};
  \draw[black,fill=black] (0,0) circle (.1 cm);
         \draw[black,fill=white] (1,0) circle (.1 cm);
             \draw[black,fill=white] (2,0) circle (.1 cm);
              \draw[black,fill=black] (3,0) circle (.1 cm);
               \draw[black,fill=black] (0,3) circle (.1 cm);
                \draw[black,fill=white] (1,3) circle (.1 cm);
                 \draw[black,fill=white] (2,3) circle (.1 cm);
                  \draw[black,fill=black] (3,3) circle (.1 cm);
                   \draw[black,fill=black] (0,1) circle (.1 cm);
                   \draw[black,fill=black] (0,2) circle (.1 cm);
                   \draw[black,fill=white] (1,1) circle (.1 cm);
                   \draw[black,fill=white] (1,2) circle (.1 cm);
                   \draw[black,fill=white] (2,1) circle (.1 cm);
                   \draw[black,fill=white] (2,2) circle (.1 cm);
                   \draw[black,fill=black] (3,1) circle (.1 cm);
                   \draw[black,fill=black] (3,2) circle (.1 cm); 
 \end{scope}

 %%Down Left
 \begin{scope}[xshift=5cm, yshift= -6cm]
         \draw[dotted, ->-] (-0.5,-0.5) -- (3.5,-0.5);
\draw[dotted, ->>-] (3.5,-0.5) -- (3.5,3.5);
\draw[dotted, ->-] (-0.5,3.5) -- (3.5,3.5);
\draw[dotted, ->>-] (-0.5,-0.5) -- (-0.5,3.5);
\draw[] (0, 0) -- (3,0);
\draw[] (0, 2) -- (3,2);
\draw[] (3,1) -- (3.5,1);
\draw[] (-0.5,1) -- (0,1);
\draw[] (-0.5,3) -- (0,3);
\draw[] (3,3) -- (3.5,3);
\draw (0,3) -- (3,3);
\draw (0,1) -- (3,1);
\draw (0,0) -- (0,3);
\draw (1,0) -- (1,3);
\draw (2,0) -- (2,3);
\draw (-0.5,2) -- (0,2);
\draw (3,2) -- (3.5,2);
\draw (-0.5,0) -- (0,0);
\draw (3,0) -- (3.5,0);
\draw (3,0) -- (3,3);
\draw[] (3,3) -- (3,3.5);
\draw[] (2,3) -- (2,3.5);
\draw[] (1,3) -- (1,3.5);
\draw[] (0,3) -- (0,3.5);
\draw[] (3,0) -- (3,-0.5);
\draw[] (2,0) -- (2,-0.5);
\draw[] (1,0) -- (1,-0.5);
\draw[] (0,0) -- (0,-.5);
 \draw[fill, fill opacity = 0.1] (0.75, -0.25) -- (2.25, -0.25) -- (2.25, 0.25) -- (0.75,0.25) -- (0.75, -0.25);
 \begin{scope}[yshift=1cm]
  \draw[fill, fill opacity = 0.1] (0.75, -0.25) -- (2.25, -0.25) -- (2.25, 0.25) -- (0.75,0.25) -- (0.75, -0.25);   
 \end{scope}

  \begin{scope}[yshift=2cm]
  \draw[fill, fill opacity = 0.1] (0.75, -0.25) -- (2.25, -0.25) -- (2.25, 0.25) -- (0.75,0.25) -- (0.75, -0.25);   
 \end{scope}

  \begin{scope}[yshift=3cm]
  \draw[fill, fill opacity = 0.1] (0.75, -0.25) -- (2.25, -0.25) -- (2.25, 0.25) -- (0.75,0.25) -- (0.75, -0.25);   
 \end{scope}

 \fill[fill opacity = 0.1] (-0.5,-0.25) -- (0.25,-0.25) -- (0.25, 0.25) -- (-0.5,0.25) -- (-0.5,-0.25); 
 \draw[] (-0.5,-0.25) -- (0.25,-0.25) -- (0.25, 0.25) -- (-0.5,0.25);
 \begin{scope}[xshift= 3.25cm]   
  \fill[fill opacity = 0.1] (-0.5,-0.25) -- (0.25,-0.25) -- (0.25, 0.25) -- (-0.5,0.25) -- (-0.5,-0.25); 
  \draw[] (0.25, -0.25) -- (-0.5,-0.25) -- (-0.5,0.25) -- (0.25, 0.25);
 \end{scope}
 \begin{scope}[xshift= 3.25cm, yshift=1cm]   
  \fill[fill opacity = 0.1] (-0.5,-0.25) -- (0.25,-0.25) -- (0.25, 0.25) -- (-0.5,0.25) -- (-0.5,-0.25); 
  \draw[] (0.25, -0.25) -- (-0.5,-0.25) -- (-0.5,0.25) -- (0.25, 0.25);
 \end{scope}
 \begin{scope}[xshift= 3.25cm, yshift=2cm]   
  \fill[fill opacity = 0.1] (-0.5,-0.25) -- (0.25,-0.25) -- (0.25, 0.25) -- (-0.5,0.25) -- (-0.5,-0.25); 
  \draw[] (0.25, -0.25) -- (-0.5,-0.25) -- (-0.5,0.25) -- (0.25, 0.25);
 \end{scope}
 \begin{scope}[xshift= 3.25cm, yshift=3cm]   
  \fill[fill opacity = 0.1] (-0.5,-0.25) -- (0.25,-0.25) -- (0.25, 0.25) -- (-0.5,0.25) -- (-0.5,-0.25); 
  \draw[] (0.25, -0.25) -- (-0.5,-0.25) -- (-0.5,0.25) -- (0.25, 0.25);
 \end{scope}

 \begin{scope}[yshift=1cm]
     \fill[fill opacity = 0.1] (-0.5,-0.25) -- (0.25,-0.25) -- (0.25, 0.25) -- (-0.5,0.25) -- (-0.5,-0.25); 
 \draw[] (-0.5,-0.25) -- (0.25,-0.25) -- (0.25, 0.25) -- (-0.5,0.25);
 \end{scope}

 \begin{scope}[yshift=2cm]
     \fill[fill opacity = 0.1] (-0.5,-0.25) -- (0.25,-0.25) -- (0.25, 0.25) -- (-0.5,0.25) -- (-0.5,-0.25); 
 \draw[] (-0.5,-0.25) -- (0.25,-0.25) -- (0.25, 0.25) -- (-0.5,0.25);
 \end{scope}

 \begin{scope}[yshift=3cm]
     \fill[fill opacity = 0.1] (-0.5,-0.25) -- (0.25,-0.25) -- (0.25, 0.25) -- (-0.5,0.25) -- (-0.5,-0.25); 
 \draw[] (-0.5,-0.25) -- (0.25,-0.25) -- (0.25, 0.25) -- (-0.5,0.25);
 \end{scope}
 
 \node[] at (1.5,-1) {\scalebox{1}{$--++$}};
 \draw[black,fill=black] (0,0) circle (.1 cm);
         \draw[black,fill=black] (1,0) circle (.1 cm);
             \draw[black,fill=white] (2,0) circle (.1 cm);
              \draw[black,fill=white] (3,0) circle (.1 cm);
               \draw[black,fill=black] (0,3) circle (.1 cm);
                \draw[black,fill=black] (1,3) circle (.1 cm);
                 \draw[black,fill=white] (2,3) circle (.1 cm);
                  \draw[black,fill=white] (3,3) circle (.1 cm);
                   \draw[black,fill=black] (0,1) circle (.1 cm);
                   \draw[black,fill=black] (0,2) circle (.1 cm);
                   \draw[black,fill=black] (1,1) circle (.1 cm);
                   \draw[black,fill=black] (1,2) circle (.1 cm);
                   \draw[black,fill=white] (2,1) circle (.1 cm);
                   \draw[black,fill=white] (2,2) circle (.1 cm);
                   \draw[black,fill=white] (3,1) circle (.1 cm);
                   \draw[black,fill=white] (3,2) circle (.1 cm);

 \end{scope}
 
        \end{tikzpicture}
        \caption{Periodic patterns generated by chessboard reflections of the gray boxes representing the covering configurations (h,e), (h,o), (v,e), and (v,o).}
        \label{pat5}
    \end{figure}

\item By translation and rotation invariance of $\mu_\alpha$, the four expectations coincide\footnote{\label{footnote:symmetries} One can verify that the same conclusion holds for staggered fermions by observing the following facts. Any geometric transformation of the field $\phi$ is implemented by conjugation with a suitable unitary operator $U$. Since we are only interested in spectral properties, one may equivalently regard this transformation as acting on the Dirac operator, because
\[
\det\!\big((\slashed D_\alpha)^- - U M_\phi U^\ast - \lambda \mathbf 1\big)
=
\det\!\big(U(\slashed D_\alpha)^-U^\ast - M_\phi - \lambda \mathbf 1\big).
\]
Now, the $\mathbb Z_2$-holonomy background is invariant under translations and rotations, since every plaquette carries flux $-1$. Moreover, the spectrum of a magnetic Laplacian, of which the staggered Dirac operator is an example, depends only on the holonomies of the background Gauge field (see \cite[Lemma 2.1]{10.1215/S0012-7094-93-07114-1}). In the proof of that result, the unitary conjugation relating two magnetic Laplacians with the same fluxes is constructed explicitly and shown to be a gauge transformation, namely multiplication by site-dependent phases. Such transformations commute with the multiplication operator $M_\phi$, and therefore do not affect it.} and thus, using also that $|\mathrm{E}_{h,e}(\gamma)|+|\mathrm{E}_{h,o}(\gamma)|+|\mathrm{E}_{v,e}(\gamma)|+|\mathrm{E}_{v,o}(\gamma)|=|\gamma|$,

\begin{equation}
\label{eq:Peierls-single-pattern}
\Big\langle
\prod_{\langle x,y\rangle\in \mathrm{E}(\gamma)} I_x^+ I_y^-
\Big\rangle_{\mu_\alpha}
\le
\left\langle I^{(-++-)}\right\rangle_{\mu_\alpha}^{\frac{|\gamma|}{2|\Lambda_L^\alpha|}}.
\end{equation}

\item Following the strategy used to bound the numerator in the right--hand side of \eqref{eqn:10a}, we obtain:
\begin{equation}
\label{eq:Peierls-I-bound}
\left\langle I^{(-++-)}\right\rangle_{\mu_\alpha}^{\frac{|\gamma|}{2|\Lambda^{\a}_L|}}
\le(2 \pi \lambda)^{\frac{|\gamma|}{4}} \bigg[ \frac{1}{Z^{\alpha}_{\L_L}} \exp\big\{-N \inf_{\phi \in \operatorname{supp} I^{\left(-++-\right)}} V^{\alpha}_{L,\tilde{\lambda}_N}(\phi)\big\} \bigg]^{\frac{|\gamma|}{2 |\L^{\a}_L|}},
\end{equation}

with $\tilde{\l}_N= \l \big(1-\frac{1}{N}\big)^{-1}$.

\item A further RP argument, closely analogous to Proposition~\ref{const_field}, shows that the infimum over $\mathrm{supp}\,I^{(-++-)}$ is attained by periodic configurations  of the form:
\[
(\varphi_-,\varphi_+,\varphi_+,\varphi_-),
\qquad
\varphi_+\ge 0,\ \varphi_-\le 0,
\]
depending only on \emph{two real
parameters} $(\vphi_+,\vphi_-)$. Accordingly:

\begin{equation}
\label{eqn:47}
\inf_{\phi\in\mathrm{supp}\,I^{(-++-)}}
V^{\alpha}_{L,\tilde\lambda_N}(\phi)
=
\inf_{\varphi_+\ge0,\ \varphi_-\le0}
V^{\alpha}_{L,\tilde\lambda_N}(\varphi_+,\varphi_-)
\end{equation}

where, by abuse of notation, $V^{\alpha}_{L,\tilde\lambda_N}(\varphi_+,\varphi_-)$ denotes the potential $V^{\alpha}_{L,\tilde\lambda_N}(\varphi_+,\varphi_-)$ evaluated at the periodic configuration corresponding to the repetition of the pattern $(\varphi_-,\varphi_+,\varphi_+,\varphi_-)$.

Moreover, for any $\lambda>0$ and $\vphi_+,\vphi_-\in\mbb{R}$, the finite-volume potential density $|\L^{\a}_L|^{-1}V^{\a}_{L,\l}(\vphi_+, \vphi_-)$ is well approximated by its infinite-volume limit, to be denoted by $w^{\alpha}_{\lambda}(\varphi_+,\varphi_-)$ (in full analogy with \eqref{eqn:12a}):
\begin{equation}
\label{eq:Peierls-thermo}
\left|
\tfrac{1}{|\Lambda_L^\alpha|}V^{\alpha}_{L,\lambda}(\varphi_+,\varphi_-)
-
w^{\alpha}_{\lambda}(\varphi_+,\varphi_-)
\right|
\le
\frac{\mathfrak c_1}{L}.
\end{equation}
It is possible to write down an explicit expression for $w^{\a}_{\l}$:
\begin{equation}
\label{eq:Peierls-w}
w^{\alpha}_{\lambda}(\varphi_+,\varphi_-)
=
\frac{\varphi_+^2+\varphi_-^2}{4\lambda}
-
N_\alpha
\iint_{(-\pi,\pi]^2}\frac{d^2p}{(2\pi)^2}
\log \Delta^\alpha(p;\varphi_+,\varphi_-),
\end{equation}

where $N_{\alpha}$ is a multiplicity factor ($N_{\a}=\frac{1}{4},\frac{1}{4},\frac{1}{2}$ for $\a=\mathrm{NN},\mathrm{SN},\mathrm{SP}$ respectively) and  $\Delta^\alpha(p;\varphi_+,\varphi_-)$ is an explicit function (see \eqref{eqn:46a}, \eqref{eqn:46b} and \eqref{eq:mySP1}), related to the determinant of the Dirac operator in alternating background.

Therefore we have that

\begin{equation}
\label{eqn:56}
\Big\langle I^{(-++-)}\Big\rangle_{\mu_\alpha}^{\frac{|\gamma|}{2|\Lambda^{\a}_L|}}
\le(2 \pi \lambda)^{\frac{|\gamma|}{4}} \left[ \tfrac{1}{|\L^{\a}_L|} \exp\Big\{-N \Big(\inf_{\varphi_+\ge0,\ \varphi_-\le0} w^{\a}_{\tilde{\l}_N}(\vphi_+,\vphi_-) -\frac{\mf{c}_1}{L} \Big)\Big\}\right]^{\frac{|\gamma|}{2 |\L^{\a}_L|}}.
\end{equation}

\item Plugging the lower-bound on the partition function \eqref{eqn:16} in \eqref{eqn:56}, we get:

\begin{equation*}
\Big\langle I^{\left(-++- \right)} \Big\rangle_{\mu_{\a}}^{\frac{|\gamma|}{2|\L^{\a}_L|}} \leq  e^{-N  \frac{|\gamma|}{2}\big(w^{\alpha}_{\tilde{\lambda}_N}(\varphi_+,\varphi_-)-
v^{\alpha}_{\tilde{\lambda}_N}(\varphi_\alpha^\star(\tilde{\lambda}_N)) -  \mc{R}^{(3)}_{N,L}(\l) \big) }, 
\end{equation*}

where

\begin{equation*}
\begin{split}
\mc{R}^{(3)}_{N,L}(\l)\doteq &\big( v^{\alpha}_{\lambda}(\varphi^{\star}_{\alpha}(\lambda))- v^{\a}_{\tilde\l_N}(\vphi^{\star}_{\a}(\tilde\l_N)) \big) + \tfrac{\log(2\pi\l)}{2N}+ \tfrac{\d_N}{\l}\big(2\varphi^{\star}_{\alpha}(\lambda)+\d_N\big)\\
&-\tfrac{1}{N}\log (2\d_N)- 2\log\big(1- \tfrac{\d_N}{\varphi^{\star}_{\alpha}(\lambda)} \big) + \tfrac{\mf{c}_0+\mf{c}_1}{L}.
\end{split}
\end{equation*}

\item Since $\varphi^{\star}_{\alpha}$ is a smooth function of $\l\in(0,\infty)$ (see Lemma \ref{const_field}) and $ v^{\alpha}_{\lambda}(\vphi)$ is smooth with respect to $\l$ and $\vphi\in (0,\infty)$, we easily get that

\begin{equation*}
\lim_{N\to\infty} v^{\a}_{\tilde\l_N}(\vphi^{\star}_{\a}(\tilde\l_N))=  v^{\alpha}_{\lambda}(\varphi^{\star}_{\alpha}(\lambda)).
\end{equation*}

Using this information, we can follow a similar reasoning as before: by fixing $\d_N=\frac{\varphi^{\star}_{\alpha}(\lambda)}{N}$, we see that for every $\l>0$:

\begin{equation*}
    \mc{R}^{(3)}_{N,L}(\l) \overset{N,L \to \infty }{\longrightarrow} 0.
\end{equation*}
\end{enumerate}

The crucial input, about the energetic penalty associated with the Peirels-like terms, is provided by the following lemma.

\begin{lemma}%[Alternating configurations are penalized]
\label{lem:Peierls-gap}
There exists $\lambda_0>0$ such that, for every $0<\lambda\le\lambda_0$,
\[
\inf_{\varphi_+\ge0,\ \varphi_-\le0}
w^{\alpha}_{\lambda}(\varphi_+,\varphi_-)
-
v^{\alpha}_{\lambda}(\varphi_\alpha^\star(\lambda))
\geq
c_3^\alpha(\lambda)>0,
\]

where $c^{\alpha}_3(\l)$ as in Proposition \ref{prop:contrib}.   
\end{lemma}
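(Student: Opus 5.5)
The plan is to bound $w^{\alpha}_{\lambda}(\varphi_+,\varphi_-)$ from below by a sum of two terms. The first is a \emph{potential} term, which sees only the amplitudes $|\varphi_\pm|$. The second is a \emph{kink} term, which is strictly positive because the background changes sign. The two entries of $c_3^\alpha(\lambda)$ correspond to two regimes.

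\textbf{Step 1: explicit symbol.} Block-diagonalize the Dirac operator in the $4$-periodic background $(\varphi_-,\varphi_+,\varphi_+,\varphi_-)$. This gives $\Delta^\alpha(p;\varphi_+,\varphi_-)$ as the determinant of a finite matrix symbol, namely $4\times4$ or $8\times 8$ in spinor times sublattice indices. Because of the Clifford structure (the same mechanism as in Lemma~\ref{PND}), $\Delta^\alpha$ should factorize into products of terms of the form
\[
\varphi_\pm^2+\sin^2 p_0+\sin^2 p_1,
\]
with suitably shifted momenta, plus a correction proportional to $(\varphi_+-\varphi_-)^2$ times $\varphi_+^2\varphi_-^2$-type factors. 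The jump $\varphi_+-\varphi_-$ is the quantity that detects the sign change.

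Concretely, I expect an identity of the form
\[
\log\Delta^\alpha=\log\prod_{\pm}\big(\varphi_\pm^2+s(p)^2\big)+\log\Big(1-\mathcal{K}(p;\varphi_+,\varphi_-)\Big),
\qquad 0\le \mathcal K\le 1,
\]
where $\mathcal K$ vanishes when $\varphi_+=\varphi_-$. Integrating over $p$ then gives
\[
w^\alpha_\lambda(\varphi_+,\varphi_-)=\tfrac12\big(v^\alpha_\lambda(\varphi_+)+v^\alpha_\lambda(\varphi_-)\big)+\mathcal{G}^\alpha(\varphi_+,\varphi_-).
\]
The multiplicity $N_\alpha$ matches $C_\alpha/4$ of \eqref{eq:mean_field_pot} because the unit cell is four times larger. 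The task is then to show $\mathcal{G}^\alpha\ge0$, with a quantitative lower bound when $\varphi_+\varphi_-<0$.

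\textbf{Step 2: two regimes.} Set $m=\max(|\varphi_+|,|\varphi_-|)$ and $\underline m=\min(|\varphi_+|,|\varphi_-|)$.

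If $\underline m\le \varphi^\star_\alpha/\sqrt2$, drop $\mathcal G^\alpha\ge0$. Using Proposition~\ref{const_field}.\ref{itt:3} (monotonicity of $v^\alpha_\lambda$ on each side of $\varphi^\star_\alpha$, and symmetry), we get
\[
\tfrac12\big(v^\alpha_\lambda(\varphi_+)+v^\alpha_\lambda(\varphi_-)\big)-v^\alpha_\lambda(\varphi^\star_\alpha)\ \ge\ \tfrac12\Big(v^\alpha_\lambda\big(\varphi^\star_\alpha/\sqrt2\big)-v^\alpha_\lambda(\varphi^\star_\alpha)\Big).
\]
This yields the first entry of $c_3^\alpha$, up to a harmless constant that I would either absorb or recover by using $m$ as well.

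If both $|\varphi_\pm|\ge \varphi^\star_\alpha/\sqrt2$, use $v^\alpha_\lambda\ge v^\alpha_\lambda(\varphi^\star_\alpha)$ and bound $\mathcal G^\alpha$ from below. Since $\varphi_+\varphi_-\le0$, we have $|\varphi_+-\varphi_-|\ge \sqrt2\,\varphi^\star_\alpha$. Near the Dirac points, that is the doublers where $\sin p_\mu\approx 0$, the correction $\mathcal K$ is of order $\varphi^{n_\alpha}/(1+\varphi^{n_\alpha})$. The power $n_\alpha$ counts the order at which the jump couples the folded momenta, and it depends on the model's cell size. Restricting the $p$-integral to a fixed neighbourhood of the Dirac points and using $-\log(1-\mathcal K)\ge$ its infimum there produces
\[
-\tfrac14\log\Big(1-\mathfrak C_2^\alpha\,\tfrac{(\varphi^\star_\alpha)^{n_\alpha}}{1+(\varphi^\star_\alpha)^{n_\alpha}}\Big).
\]
Here $\lambda\le\lambda_0$ is used so that $\varphi^\star_\alpha$ is small, by \eqref{eqn:41}, which keeps all expansions uniform.

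\textbf{Main obstacle.} The hard part is Step~1 together with the sign claim $0\le\mathcal K<1$ and its quantitative lower bound. Computing $\Delta^\alpha$ for the alternating background is model-dependent: NN, SN and SP each need their own momentum folding, and SP reduces to SN on cell-constant fields as in Remark~\ref{RPSDP}. Proving the factorization with a nonnegative remainder is delicate. My first attempt would use the chirality relation \eqref{PND2}/\eqref{PSD2} to write $\det(\slashed D-M_\phi)=\det\big(M_\phi^2-\slashed D^2+[\slashed D,M_\phi]\text{-terms}\big)^{1/2}$. For the $4$-periodic pattern, the commutator $[\slashed D,M_\phi]$ is supported on the two jump bonds per period and is proportional to $\varphi_+-\varphi_-$. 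That should exhibit the kink contribution explicitly, and its positivity would be checked through the eigenvalues of the resulting reduced $2\times2$ blocks.
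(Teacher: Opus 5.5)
Your Step~1 carries the whole proof, and in the form you state it, it fails. You want to split $\log\Delta^\alpha$ pointwise into a ``homogeneous'' part integrating to $\tfrac12\big(v^\alpha_\lambda(\varphi_+)+v^\alpha_\lambda(\varphi_-)\big)$ plus a kink part $-\log(1-\mathcal K)$ with $0\le\mathcal K\le1$. You then want to extract the gap by restricting the $p$-integral to a neighbourhood of the Dirac points.

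Test the natural version of this on the explicit SN symbol \eqref{eqn:46a}, taking $\tfrac12\log\Delta^\alpha(p;\varphi_+,\varphi_+)+\tfrac12\log\Delta^\alpha(p;\varphi_-,\varphi_-)$ as the homogeneous part. At $p=0$ one has $\Delta^{\mathrm{SN}}(0;\varphi_+,\varphi_-)=\tfrac14(\varphi_++\varphi_-)^2+\varphi_+^2\varphi_-^2$ and $\Delta^{\mathrm{SN}}(0;\varphi,\varphi)=\varphi^2+\varphi^4$. For $\varphi_+=1$, $\varphi_-=-b$ with $b$ small, $\Delta^{\mathrm{SN}}(0;1,-b)\approx\tfrac14$. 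This is far larger than $\big(\Delta^{\mathrm{SN}}(0;1,1)\,\Delta^{\mathrm{SN}}(0;-b,-b)\big)^{1/2}\approx\sqrt2\,b$. So the kink integrand is \emph{negative} in a whole neighbourhood of the Dirac point.

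Because $\varphi^\star_\alpha(\lambda)$ is exponentially small, such configurations (with $b\ge\varphi^\star_\alpha/\sqrt2$) lie inside your second regime. The Dirac point is exactly where you planned to take an infimum, so restricting the integral there gives no lower bound. At most $\mathcal G^\alpha\ge0$ might hold after integration, but that would need a separate chessboard-type argument that you do not supply. Even granting it, your first regime yields only $\tfrac12\big(v^\alpha_\lambda(\varphi^\star_\alpha/\sqrt2)-v^\alpha_\lambda(\varphi^\star_\alpha)\big)$, which is not the first entry of $c_3^\alpha$.

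The paper avoids this with two ideas your plan lacks.

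\textbf{Where small $\lambda$ enters.} Smallness of $\lambda$ is not used merely to make $\varphi^\star_\alpha$ small. The paper shows $(1,-1)\cdot\nabla w^\alpha_\lambda=(\varphi_+-\varphi_-)\{\tfrac1{2\lambda}-I^\alpha(\varphi_+,\varphi_-)\}$, where $I^\alpha$ is uniformly bounded thanks to lower bounds of the form $\Delta^\alpha\gtrsim|p|^2+m_\alpha^2$. So for $\lambda\le\lambda_0$, $w^\alpha_\lambda$ increases along $(1,-1)$ on the quadrant. Together with $w^\alpha_\lambda(\varphi_+,\varphi_-)=w^\alpha_\lambda(-\varphi_-,-\varphi_+)$, this reduces the infimum to the edge $\varphi_-=0$.

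\textbf{Comparison at equal quadratic mean.} On that edge the paper compares with the homogeneous field of the same quadratic mean, $\varphi/\sqrt2$, not with an average of $v$'s. The Gaussian terms then coincide exactly, and the symbols obey a signed pointwise identity. For SN it reads $\Delta^{\mathrm{SN}}(p;\varphi,0)=\Delta^{\mathrm{SN}}(p;\tfrac{\varphi}{\sqrt2},\tfrac{\varphi}{\sqrt2})-\tfrac14\varphi^2(1+\varphi^2)$, with analogues for $\sqrt{\Delta^{\mathrm{NN}}}$ and $\Delta^{\mathrm{SP}}$. Hence $w^\alpha_\lambda(\varphi,0)=v^\alpha_\lambda(\varphi/\sqrt2)+\delta w^\alpha(\varphi)$, with an integrand that is nonnegative at \emph{every} $p$.

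The logarithmic entry of $c_3^\alpha$ comes from bounding $\delta w^\alpha$ over the whole zone with a crude upper bound on the homogeneous symbol, not from localizing at Dirac points. The split $\varphi\le\varphi^\star_\alpha$ versus $\varphi\ge\varphi^\star_\alpha$ then produces the two entries of $c_3^\alpha$.

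Your kink intuition does survive if you adopt this comparison on the whole quadrant. From \eqref{eqn:46a}, with $h^2=\tfrac12(\varphi_+^2+\varphi_-^2)$, one checks that
\[
\Delta^{\mathrm{SN}}(p;h,h)-\Delta^{\mathrm{SN}}(p;\varphi_+,\varphi_-)=\tfrac14(\varphi_+-\varphi_-)^2+\tfrac14(\varphi_+^2-\varphi_-^2)^2\ge0
\]
for every $p$, and the same identity holds for $\sqrt{\Delta^{\mathrm{NN}}}$. This is a genuine jump term, and $(\varphi_+-\varphi_-)^2\ge2h^2$ whenever $\varphi_+\varphi_-\le0$.
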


\begin{remark}\label{rem2}
As clear from the incoming analysis, the assumption that the coupling constant $\lambda$ is sufficiently small ensures that the effective potential $w^{\a}_{\l}(\vphi_+, \vphi_-)$ does not develop stationary points in the interior of the region $\{\vphi_{+}\ge0, \vphi_-\le0\}$. In fact, a direct inspection of the effective potential suggests that, for sufficiently large values of $\lambda$, this condition is violated and nontrivial stationary points appear (whose energy gap is not easily controlled via our method). This behavior is consistent with the expected phase structure of the model, where no spontaneous breaking of Chiral symmetry occurs at strong coupling due to the approximate product structure of the measure \cite{cohen1981monte, Cohen:1983nr}.
\end{remark}

\begin{enumerate}[resume]
    \item If $\lambda \leq \lambda_0$, for $N$ large enough, $\tilde{\l}_N \leq \l_0$ as well. Therefore, we can apply Lemma \ref{lem:Peierls-gap} to obtain:

    \begin{equation*}
         \left\langle I^{\left(-++- \right)} \right\rangle_{\mu_{\a}}^{\frac{|\gamma|}{2|\L^{\a}_L|}} \leq  e^{-N  \frac{|\gamma|}{2}\big( c_3^{\a}(\tilde\l_N) -  \mc{R}^{(3)}_{N,L}(\l) \big) }. 
\end{equation*}

Also, for $N, L$ large enough (depending on $\l$), we can assume that $|\mc{R}^{(3)}_{N,L}(\l)| \leq \frac{1}{4} c_3^{\a}(\l)$, leading to

\begin{equation*}
         \left\langle I^{\left(-++- \right)} \right\rangle_{\mu_{\a}}^{\frac{|\gamma|}{2|\L^{\a}_L|}} \leq  e^{-N  \frac{|\gamma|}{2} \big(c_3^{\a}(\tilde\l_N) - \frac{1}{4} c_3^{\a}(\l)\big) }. 
\end{equation*}

\item   Furthermore, by the very definition (cf. Proposition \ref{prop:contrib}), $c^{\alpha}_3$ is a continuous function of $\l>0$, so that $\lim_{N\to\infty} c_3^{\a}(\tilde\l_N)= c_3^{\a}(\l)$. Therefore, noting also that $c_3^{\a}(\l)>0$, for $N$ large enough we can assume that

\[
\big|c^{\a}_3(\l)-c^{\a}_3(\tilde{\l}_N)\big| \leq \tfrac{1}{4}c^{\a}_3(\l),
\]

which yields the bound \eqref{eqn:18c}.
\end{enumerate}

\subsection{Proof of Lemma \ref{lem:Peierls-gap}}
\label{ssect:alternate}

We discuss the proof for the three models, $\a\in\{\mathrm{NN},\mathrm{SN},\mathrm{SP}\}$, separately. We prefer to start by discussing the SN model, which involves slightly simpler computations than the other two models.

\paragraph{Proof for the SN model.}

By using the periodicity of the alternate field configurations $\phi$ (consisting of only two independent variables $\vphi_+,\vphi_-$), see Fig. \ref{per}, one can easily compute the expression for the potential $V^{\mathrm{SN}}_{L,\l}(\vphi_+,\vphi_-)$ introduced after \eqref{eqn:47}:

\begin{equation*}
V^{\mathrm{SN}}_{L,\l}(\vphi_+,\vphi_-)= \frac{|\L_L|}{2} \frac{\vphi_+^2+ \vphi_-^2}{2\l}- \sum_{p\in \Xi^*_L}\log \Delta^{\mathrm{SN}}(p;\vphi_+,\vphi_-),
\end{equation*}

where $\Xi^*_L= \Big(\frac{2\pi}{L} \big( \mbb{Z}+\frac{1}{2}\big)^2\Big) \cap \Big((-\frac{\pi}{4},\frac{\pi}{4}]\times(-\pi,\pi]\Big)$ and

\begin{equation*}
\Delta^{\mathrm{SN}}(p;\vphi_+,\vphi_-)= \det\begin{pmatrix}
i \sin(p_1) - \vphi_- &-\tfrac{1}{2} &0 &\tfrac{1}{2}e^{4i p_0}\\
\tfrac{1}{2} &- i \sin(p_1) - \vphi_+ &-\tfrac{1}{2} &0\\
0 &\tfrac{1}{2} &i \sin(p_1) - \vphi_+ &-\tfrac{1}{2}\\
-\frac{1}{2}e^{-4 i p_0} &0 &\tfrac{1}{2} &-i \sin(p_1) - \vphi_-
\end{pmatrix}.
\end{equation*}

It follows that

\begin{equation}
\label{eqn:24}
\begin{split}
&w^{\mathrm{SN}}_{\l}(\vphi_+,\vphi_-)= \lim_{L\to\infty} \tfrac{1}{|\L_L|} V^L_{\l}(\vphi_+,\vphi_-)= \frac{\vphi_+^2+ \vphi_-^2}{4\l}- \iint_{(-\frac{\pi}{4},\frac{\pi}{4}]\times(-\pi,\pi]} \frac{d^2p}{(2\pi)^2} \log \Delta^{\mathrm{SN}}(p;\vphi_+,\vphi_-);
\end{split}
\end{equation}

moreover, an explicit computation leads to 

\begin{equation}
\label{eqn:46a}
\Delta^{\mathrm{SN}}(p;\vphi_+,\vphi_-)=  \frac{1}{4}\sin^2 (2p_0)+ \frac{1}{4}(\vphi_++\vphi_-)^2 + (\vphi_+\vphi_-)^2 + \sin^2 (p_1) \big(1+ \vphi_+^2+ \vphi_-^2 +\sin^2 (p_1) \big).
\end{equation}

Note that due to the periodicity of $\Delta^{\mathrm{SN}}(\,\cdot\, ;\vphi_+,\vphi_-)$ under translations of $\frac{\pi}{2}\mbb{Z}\times \pi\mbb{Z}$, we can either rewrite $w_{\l}^{\mathrm{SN}}$ in the form of \eqref{eq:Peierls-w}, or alternatively

\begin{equation}
\label{eqn:24bis}
\begin{split}
&w^{\mathrm{SN}}_{\l}(\vphi_+,\vphi_-)= \frac{\vphi_+^2+ \vphi_-^2}{4\l}- 2\iint_{(-\frac{\pi}{4},\frac{\pi}{4}]\times(-\frac{\pi}{2},\frac{\pi}{2}]} \frac{d^2p}{(2\pi)^2} \log \Delta^{\mathrm{SN}}(p;\vphi_+,\vphi_-),
\end{split}
\end{equation}

so that within the latter domain of integration, if $\vphi_+=\vphi_-=0$, $\Delta^{\mathrm{SN}}(\,\cdot\, ;\vphi_+,\vphi_-)$ is singular only at $p=0$.

\begin{figure}
\centering
\begin{tikzpicture}
\draw[dotted, ->-] (-0.5,-0.5) -- (3.5,-0.5);
\draw[dotted, ->>-] (3.5,-0.5) -- (3.5,3.5);
\draw[dotted, ->-] (-0.5,3.5) -- (3.5,3.5);
\draw[dotted, ->>-] (-0.5,-0.5) -- (-0.5,3.5);
\draw[line width = 0.05 cm] (0, 0) -- (3,0);
\draw[line width = 0.05 cm] (0, 2) -- (3,2);
\draw[line width = 0.05 cm] (3,1) -- (3.5,1);
\draw[line width = 0.05 cm] (-0.5,1) -- (0,1);
\draw[line width = 0.05 cm] (-0.5,3) -- (0,3);
\draw[line width = 0.05 cm] (3,3) -- (3.5,3);
\draw (0,3) -- (3,3);
\draw (0,1) -- (3,1);
\draw (0,0) -- (0,3);
\draw (1,0) -- (1,3);
\draw (2,0) -- (2,3);
\draw (-0.5,2) -- (0,2);
\draw (3,2) -- (3.5,2);
\draw (-0.5,0) -- (0,0);
\draw (3,0) -- (3.5,0);
\draw (3,0) -- (3,3);
\draw[line width = 0.05 cm] (3,3) -- (3,3.5);
\draw[line width = 0.05 cm] (2,3) -- (2,3.5);
\draw[line width = 0.05 cm] (1,3) -- (1,3.5);
\draw[line width = 0.05 cm] (0,3) -- (0,3.5);
\draw[line width = 0.05 cm] (3,0) -- (3,-0.5);
\draw[line width = 0.05 cm] (2,0) -- (2,-0.5);
\draw[line width = 0.05 cm] (1,0) -- (1,-0.5);
\draw[line width = 0.05 cm] (0,0) -- (0,-.5);
      \begin{scope}[xshift=1cm]
         \draw[fill, fill opacity =0.1] (-0.25,-0.25) -- (0.25,-0.25) -- (0.25, 3.25) -- (-0.25,3.25) -- (-0.25,-0.25); 
    \end{scope}
     \begin{scope}[xshift=2cm]
         \draw[fill, fill opacity =0.1] (-0.25,-0.25) -- (0.25,-0.25) -- (0.25, 3.25) -- (-0.25,3.25) -- (-0.25,-0.25);  
    \end{scope}
     \begin{scope}[xshift=3cm]
        \draw[fill, fill opacity =0.1] (-0.25,-0.25) -- (0.25,-0.25) -- (0.25, 3.25) -- (-0.25,3.25) -- (-0.25,-0.25); 
    \end{scope}
     \begin{scope}
        \draw[fill, fill opacity =0.1] (-0.25,-0.25) -- (0.25,-0.25) -- (0.25, 3.25) -- (-0.25,3.25) -- (-0.25,-0.25); 
    \end{scope}
        \draw[black,fill=black] (0,0) circle (.1 cm);
         \draw[black,fill=black] (1,0) circle (.1 cm);
             \draw[black,fill=black] (2,0) circle (.1 cm);
              \draw[black,fill=black] (3,0) circle (.1 cm);
               \draw[black,fill=black] (0,3) circle (.1 cm);
                \draw[black,fill=black] (1,3) circle (.1 cm);
                 \draw[black,fill=black] (2,3) circle (.1 cm);
                  \draw[black,fill=black] (3,3) circle (.1 cm);
                   \draw[black,fill=white] (0,1) circle (.1 cm);
                   \draw[black,fill=white] (0,2) circle (.1 cm);
                   \draw[black,fill=white] (1,1) circle (.1 cm);
                   \draw[black,fill=white] (1,2) circle (.1 cm);
                   \draw[black,fill=white] (2,1) circle (.1 cm);
                   \draw[black,fill=white] (2,2) circle (.1 cm);
                   \draw[black,fill=white] (3,1) circle (.1 cm);
                   \draw[black,fill=white] (3,2) circle (.1 cm);                   
           \end{tikzpicture}
           \caption{Graphical Representation of the Dirac Operator with periodic pattern produced by Chessboard's Estimates. In Grey we have drawn the unit cells used to perform Fourier Transforms}
           \label{per}
       \end{figure}

We are interested in finding the minimizers of $w^{\mathrm{SN}}_{\l}$ and understanding their properties. By computing derivatives, we get:

\begin{equation*}
\begin{aligned}
\partial_{\vphi_+} w_{\lambda}^{\mathrm{SN}}(\vphi_+,\vphi_-) &= \frac{\vphi_+}{2\lambda}  - 2\iint_{(-\frac{\pi}{4},\frac{\pi}{4}]\times(-\frac{\pi}{2},\frac{\pi}{2}]} \frac{d^2p}{(2 \pi)^2} \frac{ \frac{1}{2}(\vphi_+ +\vphi_-) + 2 \vphi_+ \vphi_-^2 +2 \sin^2(p_1) \vphi_+}{\Delta^{\mathrm{SN}}(p;\vphi_+,\vphi_-)}, \\
\partial_{\vphi_-} w^{\mathrm{SN}}_{\lambda}(\vphi_+,\vphi_-) &= \frac{\vphi_-}{2\lambda} - 2\iint_{(-\frac{\pi}{4},\frac{\pi}{4}]\times(-\frac{\pi}{2},\frac{\pi}{2}]} \frac{d^2p}{(2 \pi)^2} \frac{ \frac{1}{2}(\vphi_+ +\vphi_-) + 2 \vphi_- \vphi_+^2 +2 \sin^2(p_1) \vphi_-}{\Delta^{\mathrm{SN}}(p;\vphi_+,\vphi_-)}.
\end{aligned}
\end{equation*}

Since we are looking for the minimizer over th constrained region $\{\vphi_+\ge0, \vphi_-\le0\}$, looking at the zeros of $\nabla w_{\l}$ is not enough. However, we observe the following:

\begin{equation}
\label{eqn:5}
\begin{split}
& (1,-1)\cdot \nabla w_{\l}^{\mathrm{SN}}(\vphi_+,\vphi_-)\equiv \de_{\vphi_+}w_{\l}^{\mathrm{SN}}(\vphi_+,\vphi_-)- \de_{\vphi_-}w_{\l}^{\mathrm{SN}}(\vphi_+,\vphi_-)=\\
& \frac{1}{2\lambda} (\vphi_+-\vphi_-) - 2\iint_{(-\frac{\pi}{4},\frac{\pi}{4}]\times(-\frac{\pi}{2},\frac{\pi}{2}]} \frac{d^2p}{(2 \pi)^2} \frac{-2(\vphi_+-\vphi_-) \vphi_+ \vphi_- +2(\vphi_+-\vphi_-) \sin^2(p_1) }{\Delta^{\mathrm{SN}}(p;\vphi_+,\vphi_-)}=\\
& (\vphi_+-\vphi_-) \left\{ \frac{1}{2\l} - 2\iint_{(-\frac{\pi}{4},\frac{\pi}{4}]\times(-\frac{\pi}{2},\frac{\pi}{2}]} \frac{d^2p}{(2 \pi)^2} \frac{-2\vphi_+ \vphi_- +2\sin^2(p_1) }{\Delta^{\mathrm{SN}}(p;\vphi_+,\vphi_-)} \right\}.
\end{split}\end{equation}

The expression in brace can be shown to be positive for $\l$ small enough, say $\l\le \l_0$ with $\l_0$ a suitable constant (recall also the discussion in Remark \ref{rem2}). Indeed, since

\begin{equation}
\label{eqn:6}
\Delta(p;\vphi_+,\vphi_-)\gtrsim m_{\mathrm{SN}}(\vphi_+,\vphi_-)^2 + |p|^2 \qquad \forall p\in \big(-\tfrac{\pi}{4},\tfrac{\pi}{4}\big]\times\big(-\tfrac{\pi}{2},\tfrac{\pi}{2}\big],\end{equation}

with $m_{\mathrm{SN}}(\vphi_+,\vphi_-)= |\vphi_+\vphi_-|$ (the writing $a\gtrsim b$ means that $a\ge \mf{C}b$, for some universal constant $\mf{C}>0$), it follows that

\begin{equation}
\begin{split}
\iint_{(-\frac{\pi}{4},\frac{\pi}{4}]\times(-\frac{\pi}{2},\frac{\pi}{2}]} \frac{d^2p}{(2 \pi)^2} \left|\frac{-2\vphi_+ \vphi_- +2\sin^2(p_1) }{\Delta^{\mathrm{SN}}(p;\phi_+,\phi_-)} \right|&\lesssim  \iint_{(-\frac{\pi}{4},\frac{\pi}{4}]\times(-\frac{\pi}{2},\frac{\pi}{2}]} \frac{d^2p}{(2 \pi)^2} \frac{|p|+m_{\mathrm{SN}}(\vphi_+,\vphi_-)}{|p|^2+m_{\mathrm{SN}}(\vphi_+,\vphi_-)^2}\\
& \lesssim \iint_{(-\frac{\pi}{4},\frac{\pi}{4}]\times(-\frac{\pi}{2},\frac{\pi}{2}]} \frac{d^2p}{(2 \pi)^2} \frac{1}{\sqrt{|p|^2+m_{\mathrm{SN}}(\vphi_+,\vphi_-)^2}}\\
&\lesssim 1.
\end{split}
\end{equation}

As a consequence, for $\l$ small enough the right--hand side of \eqref{eqn:5} is always non-negative on the domain of our interest, namely $\{\vphi_+\ge0,\vphi_-\le0\}$ and so is the quantity $(1,-1)\cdot \nabla w_{\l}(\vphi_+,\vphi_-)$. In other words, $w_{\l}(\vphi_+,\vphi_-)$ is minimized on the boundary of the region $\{\vphi_+\ge0,\vphi_-\le0\}$. Besides, since by inspection $w_{\l}(\vphi_+,\vphi_-)= w_{\l}(-\vphi_-,-\vphi_+)$, it suffices to look for the minima on the set $\{\vphi_+\ge0,\vphi_-=0\}$. In order to proceed, we note the following.

\begin{enumerate}
\item By construction, $w_{\l}^{\mathrm{SN}}(\vphi,\vphi)= \lim_{L\to\infty} \frac{1}{|\L_L|}  V^{\mathrm{SN}}_{L,\lambda}(\phi)\Big|_{\phi= \vphi^{\L^{\a}_L}} \equiv v_{\l}^{\mathrm{SN}}(\vphi)$, with $v^{\mathrm{SN}}_{\l}$ minimized by $\vphi=\vphi^{\star}_{\mathrm{SN}}(\l)$ (see Proposition \ref{const_field}).
\item Explicitly, we have that

\begin{align}
& \Delta^{\mathrm{SN}}(p;\vphi,\vphi)= \tfrac{1}{4}\sin^2 (2p_0)+ \vphi^2 + \vphi^4 + \sin^2 (p_1) \big(1+ 2\vphi^2 +\sin^2 (p_1) \big),\\
&\Delta^{\mathrm{SN}}(p;\vphi,0)= \tfrac{1}{4}\sin^2 (2p_0)+ \tfrac{1}{4}\vphi^2  + \sin^2 (p_1) \big(1+ \vphi^2 +\sin^2 (p_1) \big),
\end{align}

from which we see that $\Delta^{\mathrm{SN}}(p;\vphi,0) = \Delta^{\mathrm{SN}}\big(p;\tfrac{1}{\sqrt{2}}\vphi, \tfrac{1}{\sqrt{2}}\vphi\big)- \tfrac{1}{4}\vphi^2- \tfrac{1}{4}\vphi^4$.
\end{enumerate}

Therefore we have that 

\[\begin{split}
w_{\l}^{\mathrm{SN}}(\vphi,0)&=   \frac{\vphi^2}{4\l}- 2\iint_{(-\frac{\pi}{4},\frac{\pi}{4}]\times(-\frac{\pi}{2},\frac{\pi}{2}]} \frac{d^2p}{(2\pi)^2} \log \Delta^{\mathrm{SN}}(p;\vphi,0)\\
&= \frac{\vphi^2}{4\l}- 2\iint_{(-\frac{\pi}{4},\frac{\pi}{4}]\times(-\frac{\pi}{2},\frac{\pi}{2}]} \frac{d^2p}{(2\pi)^2} \log\Big( \Delta^{\mathrm{SN}}\big(p;\tfrac{1}{\sqrt{2}}\vphi,\tfrac{1}{\sqrt{2}}\vphi\big) - \tfrac{1}{4}\vphi^2(1+ \vphi^2)\Big)\\
&= \frac{\vphi^2}{4\l}- 2\iint_{(-\frac{\pi}{4},\frac{\pi}{4}]\times(-\frac{\pi}{2},\frac{\pi}{2}]} \frac{d^2p}{(2\pi)^2} \log \Delta^{\mathrm{SN}}\big(p;\tfrac{1}{\sqrt{2}}\vphi,\tfrac{1}{\sqrt{2}}\vphi\big) \\
&\,\,\,\,\,\,\,\,\,\,\,\,\,\,\,\,\,\,-  2\iint_{(-\frac{\pi}{4},\frac{\pi}{4}]\times(-\frac{\pi}{2},\frac{\pi}{2}]} \frac{d^2p}{(2\pi)^2} \log\left( 1 - \frac{\tfrac{1}{4}\vphi^2(1+ \vphi^2)}{\Delta^{\mathrm{SN}}\big(p;\tfrac{1}{\sqrt{2}}\vphi,\tfrac{1}{\sqrt{2}}\vphi\big)} \right)\\
&= v_{\l}^{\mathrm{SN}}\big( \tfrac{1}{\sqrt{2}}\vphi\big)- 2\iint_{(-\frac{\pi}{4},\frac{\pi}{4}]\times(-\frac{\pi}{2},\frac{\pi}{2}]} \frac{d^2p}{(2\pi)^2} \log\left( 1 - \frac{\tfrac{1}{4}\vphi^2(1+\vphi^2)}{\Delta^{\mathrm{SN}}\big(p;\tfrac{1}{\sqrt{2}}\vphi,\tfrac{1}{\sqrt{2}}\vphi\big)} \right)\\
&\doteq v_{\l}^{\mathrm{SN}}\big( \tfrac{1}{\sqrt{2}}\vphi\big) + \d w^{\mathrm{SN}}(\vphi).
\end{split}\]

Observe that $\Delta^{\mathrm{SN}}(p;\vphi,\vphi)\lesssim 1+ \vphi^2+\vphi^4\le (1+\vphi^2)^2$, therefore:

\begin{equation}
\label{eqn:30}
\begin{split}
\d w^{\mathrm{SN}}(\vphi)& \equiv -2 \iint_{(-\frac{\pi}{4},\frac{\pi}{4}]\times(-\frac{\pi}{2},\frac{\pi}{2}]} \frac{d^2p}{(2\pi)^2} \log\left( 1 - \frac{\tfrac{1}{4}\vphi^2(1+\vphi^2)}{\Delta^{\mathrm{SN}}\big(p;\tfrac{1}{\sqrt{2}}\vphi,\tfrac{1}{\sqrt{2}}\vphi\big)} \right) \\
&\ge - \frac{1}{4}\log\left( 1 -  \mf{C}_2\frac{\vphi^2(1+\vphi^2)}{\big( 1+ \vphi^2\big)^2} \right) = - \frac{1}{4}\log \left(1- \mf{C}_2\frac{\vphi^2}{1+\vphi^2} \right), 
\end{split}\end{equation}

for a suitable constant $\mf{C}_2^{\mathrm{SN}}>0$. Note that the right--hand side of \eqref{eqn:30} is positive and monotone increasing for $\vphi>0$. We deduce the following:

\begin{itemize}
\item if $\vphi\in\big[0,\varphi^{\star}_{\mathrm{SN}}(\lambda)\big]$, we have that  $w_{\l}^{\mathrm{SN}}(\vphi,0)\ge  v_{\l}^{\mathrm{SN}}\big(\frac{1}{\sqrt{2}}\vphi\big)\ge v_{\l}^{\mathrm{SN}}\big(\tfrac{1}{\sqrt{2}}\vphi_{\mathrm{SN}}^{\star}(\l)\big)$ (recall from Proposition \ref{const_field} the form of $v_{\l}^{\mathrm{SN}}$);
\item if $\vphi\in\big[\varphi^{\star}_{\mathrm{SN}}(\lambda),\infty\big)$, then 

$$w_{\l}^{\mathrm{SN}}(\vphi,0) \ge - \frac{1}{4} \log \left(1- \mf{C}_2^{\mathrm{SN}} \tfrac{\vphi^2}{1+ \vphi^2} \right) +  v_{\l}^{\mathrm{SN}}\big(\tfrac{1}{\sqrt{2}}\vphi\big)\ge - \frac{1}{4} \log \left(1- \mf{C}_2^{\mathrm{SN}} \tfrac{\big(\varphi^{\star}_{\mathrm{SN}}(\lambda)\big)^2}{1+ \big(\varphi^{\star}_{\mathrm{SN}}(\lambda)\big)^2} \right) +  v_{\l}^{\mathrm{SN}}\big(\vphi_{\mathrm{SN}}^{\star}(\l)\big).$$

\end{itemize}

We conclude that

\[\begin{split} &\min_{\vphi_+\ge0,\vphi_-\le0} w_{\l}^{\mathrm{SN}}(\vphi_+,\vphi_-) -  v_{\l}^{\mathrm{SN}}\big(\vphi_{\mathrm{SN}}^{\star}(\l)\big) =\min_{\vphi\ge0} w_{\l}^{\mathrm{SN}}(\vphi,0) -  v_{\l}^{\mathrm{SN}}\big(\vphi_{\mathrm{SN}}^{\star}(\l)\big) \\
&\ge \min\left\{  v_{\l}^{\mathrm{SN}}\big(\tfrac{1}{\sqrt{2}}\vphi_{\mathrm{SN}}^{\star}(\l)\big)- v_{\l}^{\mathrm{SN}}\big(\vphi_{\mathrm{SN}}^{\star}(\l)\big), \, - \frac{1}{4} \log \left(1- \mf{C}_2^{\mathrm{SN}}\tfrac{\big(\varphi^{\star}_{\mathrm{SN}}(\lambda)\big)^2}{1+ \big(\varphi^{\star}_{\mathrm{SN}}(\lambda)\big)^2} \right)  \right\}\equiv c_3^{\mathrm{SN}}(\l). \end{split}\]

\paragraph{Proof for the NN model.}

Proceeding in analogy with the case of the SN model (see Fig. \ref{diracnaivestag} for a graphical understanding of the field configuration), we find that

\begin{equation}
\label{eqn:46b}
w_{\l}^{\mathrm{NN}}(\vphi_+,\vphi_-)= \frac{\vphi_+^2+\vphi_-^2}{4\l}- \iint_{(-\frac{\pi}{4},\frac{\pi}{4}]\times(-\pi,\pi]} \frac{d^2p}{(2\pi)^2} \log \Delta^{\mathrm{NN}}(p;\vphi_+,\vphi_-),
\end{equation}

where:

\begin{equation}
\label{eq:myNN1}
\begin{split}
&\Delta^{\mathrm{NN}}(p;\vphi_+,\vphi_-)=\\
&=\det \left(\!\!\begin{array}{cccccccc}
    -\vphi_- & -i\sin p_1&0 &-\frac{i}{2} &0 &0 &0 &\frac{i}{2}e^{4ip_0} \\
    -i\sin p_1 & -\vphi_- &\frac{i}{2}&0&0&0&-\frac{i}{2} e^{4i p_0}&0\\
    0&\frac{i}{2}&-\vphi_+ &-i\sin p_1&0&-\frac{i}{2}&0&0\\
    -\frac{i}{2}&0&-i\sin p_1&-\vphi_+&\frac{i}{2}&0&0&0\\
    0&0&0&\frac{i}{2}&-\vphi_+&-i\sin p_1&0&-\frac{i}{2}\\
    0&0&-\frac{i}{2}&0&-i\sin p_1&-\vphi_+&\frac{i}{2}&0\\
    0&-\frac{i}{2} e^{-4i p_0}&0&0&0&\frac{i}{2}&-\vphi_-&-i\sin p_1\\
    \frac{i}{2} e^{-4i p_0}&0&0&0&-\frac{i}{2}&0&-i\sin p_1&-\vphi_-
\end{array} \!\!\right)\\
&=\frac{1}{64} \big[8+ 4\vphi_+\vphi_- + 6\vphi_+^2+ 6\vphi_-^2 + 8\vphi_+^2\vphi_-^2 - \cos(4p_0) -4\big(2+\vphi_+^2+ \vphi_-^2 \big) \cos(2p_1) + \cos(4p_1)\big]^2.
\end{split}
\end{equation}

\begin{figure}
\centering
\begin{tikzpicture}
\draw[dotted, ->-] (-0.5,-0.5) -- (3.5,-0.5);
\draw[dotted, ->>-] (3.5,-0.5) -- (3.5,3.5);
\draw[dotted, ->-] (-0.5,3.5) -- (3.5,3.5);
\draw[dotted, ->>-] (-0.5,-0.5) -- (-0.5,3.5);

\draw (0, 0) -- (3,0);
\draw (0, 2) -- (3,2);
\draw (3,1) -- (3.5,1);
\draw (-0.5,1) -- (0,1);
\draw (-0.5,3) -- (0,3);
\draw (3,3) -- (3.5,3);
\draw (0,3) -- (3,3);
\draw (0,1) -- (3,1);
\draw (0,0) -- (0,3);
\draw (1,0) -- (1,3);
\draw (2,0) -- (2,3);
\draw (-0.5,2) -- (0,2);
\draw (3,2) -- (3.5,2);
\draw (-0.5,0) -- (0,0);
\draw (3,0) -- (3.5,0);
\draw (3,0) -- (3,3);

\draw[line width = 0.05 cm] (3,3) -- (3,3.5);
\draw[line width = 0.05 cm] (2,3) -- (2,3.5);
\draw[line width = 0.05 cm] (1,3) -- (1,3.5);
\draw[line width = 0.05 cm] (0,3) -- (0,3.5);
\draw[line width = 0.05 cm] (3,0) -- (3,-0.5);
\draw[line width = 0.05 cm] (2,0) -- (2,-0.5);
\draw[line width = 0.05 cm] (1,0) -- (1,-0.5);
\draw[line width = 0.05 cm] (0,0) -- (-0.5,0);
\draw[line width = 0.05 cm] (0,0) -- (0,-0.5);
\draw[line width = 0.05 cm] (0,1) -- (-0.5,1);
\draw[line width = 0.05 cm] (0,2) -- (-0.5,2);
\draw[line width = 0.05 cm] (0,3) -- (-0.5,3);
\draw[line width = 0.05 cm] (3,0) -- (3.5,0);
\draw[line width = 0.05 cm] (3,1) -- (3.5,1);
\draw[line width = 0.05 cm] (3,2) -- (3.5,2);
\draw[line width = 0.05 cm] (3,3) -- (3.5,3); 
      \begin{scope}[xshift=1cm]
         \draw[fill, fill opacity =0.1] (-0.25,-0.25) -- (0.25,-0.25) -- (0.25, 3.25) -- (-0.25,3.25) -- (-0.25,-0.25); 
    \end{scope}
     \begin{scope}[xshift=2cm]
         \draw[fill, fill opacity =0.1] (-0.25,-0.25) -- (0.25,-0.25) -- (0.25, 3.25) -- (-0.25,3.25) -- (-0.25,-0.25);  
    \end{scope}
     \begin{scope}[xshift=3cm]
        \draw[fill, fill opacity =0.1] (-0.25,-0.25) -- (0.25,-0.25) -- (0.25, 3.25) -- (-0.25,3.25) -- (-0.25,-0.25); 
    \end{scope}
     \begin{scope}
        \draw[fill, fill opacity =0.1] (-0.25,-0.25) -- (0.25,-0.25) -- (0.25, 3.25) -- (-0.25,3.25) -- (-0.25,-0.25); 
    \end{scope}
        \draw[black,fill=black] (0,0) circle (.1 cm);
         \draw[black,fill=black] (1,0) circle (.1 cm);
             \draw[black,fill=black] (2,0) circle (.1 cm);
              \draw[black,fill=black] (3,0) circle (.1 cm);
               \draw[black,fill=black] (0,3) circle (.1 cm);
                \draw[black,fill=black] (1,3) circle (.1 cm);
                 \draw[black,fill=black] (2,3) circle (.1 cm);
                  \draw[black,fill=black] (3,3) circle (.1 cm);
                   \draw[black,fill=white] (0,1) circle (.1 cm);
                   \draw[black,fill=white] (0,2) circle (.1 cm);
                   \draw[black,fill=white] (1,1) circle (.1 cm);
                   \draw[black,fill=white] (1,2) circle (.1 cm);
                   \draw[black,fill=white] (2,1) circle (.1 cm);
                   \draw[black,fill=white] (2,2) circle (.1 cm);
                   \draw[black,fill=white] (3,1) circle (.1 cm);
                   \draw[black,fill=white] (3,2) circle (.1 cm);                   
           \end{tikzpicture}
           \caption{Graphical Representation of the Naive Dirac Operator with periodic pattern produced by Chessboard's Estimates. In Grey we have drawn the unit cells used to perform Fourier Transforms}
           \label{diracnaivestag}
       \end{figure}

Note that due to the periodicity of $\Delta^{\mathrm{NN}}(\,\cdot\,;\vphi_+,\vphi_-)$ under translations of $\frac{\pi}{2}\mbb{Z}\times \pi\mbb{Z}$, we can either rewrite $w_{\l}^{\mathrm{NN}}$ in the form of \eqref{eq:Peierls-w}, or alternatively, by also writing $\Delta^{\mathrm{NN}}(p;\vphi_+,\vphi_-)= \big(\sqrt{\Delta^{\mathrm{NN}}}(p;\vphi_+,\vphi_-)\big)^2$,

\begin{equation}
w_{\l}^{\mathrm{NN}}(\vphi_+,\vphi_-)= \frac{\vphi_+^2+\vphi_-^2}{4\l}- 4\iint_{(-\frac{\pi}{4},\frac{\pi}{4}]\times(-\frac{\pi}{2},\frac{\pi}{2}]} \frac{d^2p}{(2\pi)^2} \log \sqrt{\Delta^{\mathrm{NN}}}(p;\vphi_+,\vphi_-),
\end{equation}

so that, on the latter domain, the integrand has only one singularity at $p=0$, if $\vphi_+=\vphi_-=0$. Some basic but cumbersome algebraic manipulations lead to the following rewriting:

\begin{equation}
\label{eq:myNN4}
\begin{split}
&\sqrt{\Delta^{\mathrm{NN}}}(p;\vphi_+,\vphi_-)= \frac{1}{4}(\vphi_+ + \vphi_-)^2 + \vphi_+^2\vphi_-^2 + \frac{1}{4}\sin^2(2p_0) + \big(1+\vphi_+^2+ \vphi_-^2 + \sin^2 (p_1)\big) \sin^2(p_1).
\end{split}
\end{equation}

We are going to adapt the same arguments used in the proof for the SN model. As a first step, we shall show that $\de_{\vphi_+}w_{\l}^{\mathrm{NN}}(\vphi_+,\vphi_-)- \de_{\vphi_-}w_{\l}^{\mathrm{NN}}(\vphi_+,\vphi_-)\ge0$, implying that the minimum of $w_{\l}^{\mathrm{NN}}$ is attained on the set $\{\vphi_+=0\}\cup\{\vphi_-=0\}$. Since:

\begin{equation*}
\begin{split}
&\de_{\vphi_+}w_{\l}^{\mathrm{NN}}(\vphi_+,\vphi_-)- \de_{\vphi_-}w_{\l}^{\mathrm{NN}}(\vphi_+,\vphi_-)\\
&= \frac{\vphi_+-\vphi_-}{2\l}- 4 \iint_{(-\frac{\pi}{4},\frac{\pi}{4}]\times(-\frac{\pi}{2},\frac{\pi}{2}]} \frac{d^2p}{(2\pi)^2} \frac{\de_{\vphi_+}\sqrt{\Delta^{\mathrm{NN}}}(p;\vphi_+,\vphi_-)- \de_{\vphi_-}\sqrt{\Delta^{\mathrm{NN}}}(p;\vphi_+,\vphi_-)}{\sqrt{\Delta^{\mathrm{NN}}}(p;\vphi_+,\vphi_-)},
\end{split}\end{equation*}

and, as an explicit computation shows,

\begin{equation}
\label{eq:myNN2}
\begin{split}
\big(\de_{\vphi_+}-\de_{\vphi_-}\big)\sqrt{\Delta^{\mathrm{NN}}}(p;\vphi_+,\vphi_-) = 2(\vphi_+-\vphi_-) \big( -\vphi_+\vphi_- + \sin^2(p_1)\big),
\end{split}
\end{equation}

we can write:

\begin{equation}
\label{eq:myNN3}
\begin{split}
&\de_{\vphi_+}w_{\l}^{\mathrm{NN}}(\vphi_+,\vphi_-)- \de_{\vphi_-}w_{\l}^{\mathrm{NN}}(\vphi_+,\vphi_-)= (\vphi_+-\vphi_-)\left\{\frac{1}{2\l}- 8 \iint_{(-\frac{\pi}{4},\frac{\pi}{4}]\times(-\frac{\pi}{2},\frac{\pi}{2}]} \frac{d^2p}{(2\pi)^2} \frac{-\vphi_+\vphi_- + \sin^2p_1}{\sqrt{\Delta^{\mathrm{NN}}}(p;\vphi_+,\vphi_-)} \right\}.
\end{split}\end{equation}

Notice that for $p\in (-\frac{\pi}{4},\frac{\pi}{4}]\times (-\frac{\pi}{2},\frac{\pi}{2}]$, we have that $\sqrt{\Delta^{\mathrm{NN}}}(p;\vphi_+,\vphi_-) \gtrsim |p|^2+ m_{\mathrm{NN}}(\vphi_+,\vphi_-)^2$, with $m_{\mathrm{NN}}(\vphi_+,\vphi_-)\doteq \sqrt{(\vphi_+ + \vphi_-)^2 + \vphi_+^2\vphi_-^2}$; furthermore
$|\vphi_+\vphi_-|\le m_{\mathrm{NN}}(\vphi_+,\vphi_-)$. It follows that the integral in the right--hand side of \eqref{eq:myNN3} is uniformly bounded:

\begin{equation*}
\begin{split}
 \iint_{(-\frac{\pi}{4},\frac{\pi}{4}]\times(-\frac{\pi}{2},\frac{\pi}{2}]} \frac{d^2p}{(2\pi)^2} \left|\frac{-\vphi_+\vphi_- + \sin^2p_1}{\sqrt{\Delta^{\mathrm{NN}}}(p;\vphi_+,\vphi_-)}\right|& \lesssim \iint_{(-\frac{\pi}{4},\frac{\pi}{4}]\times(-\frac{\pi}{2},\frac{\pi}{2}]} \frac{d^2p}{(2\pi)^2} \frac{m_{\mathrm{NN}}(\vphi_+,\vphi_-)+ |p|^2}{m_{\mathrm{NN}}(\vphi_+,\vphi_-)^2+ |p|^2}\\
&\lesssim \iint_{(-\frac{\pi}{4},\frac{\pi}{4}]\times(-\frac{\pi}{2},\frac{\pi}{2}]} \frac{d^2p}{(2\pi)^2} \left(\frac{1}{\sqrt{|p|^2 + m_{\mathrm{NN}}(\vphi_+,\vphi_-)^2}}+ 1 \right) \\
&\lesssim 1.
\end{split}
\end{equation*}

As an immediate consequence, from \eqref{eq:myNN3} we have that for $\l$ small enough, $(1,-1)\cdot \nabla w_{\l}^{\mathrm{NN}}(\vphi_+,\vphi_-)\ge0$ on the domain $\{\vphi_+\ge0,\vphi_-\le0\}$. Using also the symmetry $w_{\l}^{\mathrm{NN}}(\vphi_+,\vphi_-)= w_{\l}^{\mathrm{NN}}(-\vphi_-,-\vphi_+)$, it follows that

\begin{equation*}
\min_{\vphi_+\ge0,\vphi_-\le0} w^{\mathrm{NN}}_{\l}(\vphi_+,\vphi_-)= \min_{\vphi\ge0} w_{\l}^{\mathrm{NN}}(\vphi,0).
\end{equation*}

Recalling that, by construction, $w_{\l}^{\mathrm{NN}}(\vphi,\vphi)= v_{\l}^{\mathrm{NN}}(\vphi)$; it is then natural to compare $w_{\l}^{\mathrm{NN}}(\vphi,0)$ with $w_{\l}^{\mathrm{NN}}\big(\frac{1}{\sqrt{2}}\vphi,\frac{1}{\sqrt{2}}\vphi\big)$, in the same way as for the SN model. From \eqref{eq:myNN4}, it is straightforward to check that

\begin{equation}
\label{eq:myNN5}
\begin{split}
\sqrt{\Delta^{\mathrm{NN}}}\big(p;\tfrac{1}{\sqrt{2}}\vphi,\tfrac{1}{\sqrt{2}}\vphi\big)&= \sqrt{\Delta^{\mathrm{NN}}}(p;\vphi,0)+ \tfrac{1}{4} \big(\vphi^2+\vphi^4\big).
\end{split}
\end{equation}

This allows us to rewrite:

\begin{equation*}
\begin{split}
&w_{\l}^{\mathrm{NN}}(\vphi,0)= \frac{\vphi^2}{4\l}- 4\iint_{(-\frac{\pi}{4},\frac{\pi}{4}]\times(-\frac{\pi}{2},\frac{\pi}{2}]} \frac{d^2p}{(2\pi)^2} \log \sqrt{\Delta^{\mathrm{NN}}}(p;\vphi,0)=\\
& w_{\l}^{\mathrm{NN}}\big(\tfrac{1}{\sqrt{2}}\vphi, \tfrac{1}{\sqrt{2}}\vphi\big) - 4\iint \frac{d^2p}{(2\pi)^2} \log \left(1- \frac{\vphi^2+\vphi^4}{4\sqrt{\Delta^{\mathrm{NN}}}\big(p;\tfrac{1}{\sqrt{2}}\vphi,\tfrac{1}{\sqrt{2}}\vphi\big)} \right) \doteq v_{\l}\big(\tfrac{1}{\sqrt{2}}\vphi\big)+ \d w^{\mathrm{NN}}(\vphi).
\end{split}
\end{equation*}

Noting that $\sqrt{\Delta^{\mathrm{NN}}}\big(p;\tfrac{1}{\sqrt{2}}\vphi,\tfrac{1}{\sqrt{2}}\vphi\big) \lesssim  1+ \vphi^4$, it is easy to get the following lower bound for $\d w^{\mathrm{NN}}$:

\begin{equation}
\label{eq:myNN6}
\d w^{\mathrm{NN}}(\vphi)\ge - 4 \iint_{(-\frac{\pi}{4},\frac{\pi}{4}]\times(-\frac{\pi}{2},\frac{\pi}{2}]} \frac{d^2p}{(2\pi)^2} \log\left( 1- \mf{C}_2^{\mathrm{NN}} \frac{\vphi^4}{1+\vphi^4}\right)= - \frac{1}{2} \log\left( 1- \mf{C}_2^{\mathrm{NN}} \frac{\vphi^4}{1+\vphi^4}\right),
\end{equation}

for a suitable constant $\mf{C}_2^{\mathrm{NN}}>0$. Note that the right--hand side of \eqref{eq:myNN6} is positive and monotone increasing for $\vphi>0$. All in all:

\begin{equation*}
w_{\l}^{\mathrm{NN}}(\vphi,0) \ge v_{\l}^{\mathrm{NN}} \big(\tfrac{1}{\sqrt{2}}\vphi \big)- \frac{1}{2} \log\left( 1- \mf{C}_2^{\mathrm{NN}} \frac{\vphi^4}{1+\vphi^4}\right).
\end{equation*}

By replicating the same reasoning presented at the end of the previous paragraph, we find that

\[\begin{split} &\min_{\vphi_+\ge0,\vphi_-\le0} w_{\l}^{\mathrm{NN}}(\vphi_+,\vphi_-) -  v_{\l}^{\mathrm{NN}}\big(\vphi_{\mathrm{NN}}^{\star}(\l)\big) \ge \\
& \min\left\{  v_{\l}^{\mathrm{NN}}\big(\tfrac{1}{\sqrt{2}}\vphi_{\mathrm{NN}}^{\star}(\l)\big)- v_{\l}^{\mathrm{NN}}\big(\vphi_{\mathrm{NN}}^{\star}(\l)\big), \, - \frac{1}{4} \log \left(1- \mf{C}_2^{\mathrm{NN}}\frac{\big(\varphi^{\star}_{\mathrm{NN}}(\lambda)\big)^4}{1+ \big(\varphi^{\star}_{\mathrm{NN}}(\lambda)\big)^4} \right)  \right\}\equiv c_3^{\mathrm{NN}}(\l). \end{split}\]

\paragraph{Proof for the SP model.} 

Again we follow the same ideas used for the SN model (see Fig. \ref{dirastagplaq} for a graphical understanding of the field configuration). An explicit computation leads to

\begin{equation}
w_{\l}^{\mathrm{SP}}(\vphi_+,\vphi_-)= \frac{\vphi_+^2+\vphi_-^2}{4\l}- 4\iint_{(-\frac{\pi}{8},\frac{\pi}{8}]\times(-\pi,\pi]} \frac{d^2p}{(2\pi)^2} \log \Delta^{\mathrm{SP}}(p;\vphi_+,\vphi_-),
\end{equation}

where

\begin{equation}
\label{eq:mySP1}
\begin{split}
&\Delta^{\mathrm{SP}}(p;\vphi_+,\vphi_-)=\\
&\frac{1}{256}\Big\{ 16\big[\vphi_+^2+\vphi_+^4+ 2\vphi_-(\vphi_++2\vphi_+^3)+ 4\vphi_-^3(\vphi_++2\vphi_+^3)+ \vphi_-^2(1+10\vphi_+^2+12\vphi_+^4)+ \\
&+\vphi_-^4(1+12\vphi_+^2+ 16\vphi_+^4)\big] + 4\sin^2(4p_0)+ 64\sin^2p_1 \big(1+ \vphi_-^2+ \vphi_+^2+ \sin^2p_1 \big)\cdot\\
&\cdot \big[  1+ 2\vphi_-\vphi_+ + 3\vphi_+^2 + \vphi_-^2(3+8\vphi_+^2)+ 4\sin^2p_1\big(1+\vphi_-^2+ \vphi_+^2 + \sin^2p_1\big)\big]\Big\}.
\end{split}
\end{equation}

\begin{figure}
\centering
\begin{tikzpicture}
\draw[dotted, ->-] (-0.5,-0.5) -- (7.5,-0.5);
\draw[dotted, ->>-] (7.5,-0.5) -- (7.5,7.5);
\draw[dotted, ->-] (-0.5,7.5) -- (7.5,7.5);
\draw[dotted, ->>-] (-0.5,-0.5) -- (-0.5,7.5);

\draw (-0.5, 0) -- (7.5,0);
\draw (-0.5, 1) -- (7.5,1);
\draw (-0.5, 2) -- (7.5,2);
\draw (-0.5, 3) -- (7.5,3);
\draw (-0.5, 4) -- (7.5,4);
\draw (-0.5, 5) -- (7.5,5);
\draw (-0.5, 6) -- (7.5,6);
\draw (-0.5, 7) -- (7.5,7);

\draw (0,-0.5) -- (0,7.5);
\draw (1,-0.5) -- (1,7.5);
\draw (2,-0.5) -- (2,7.5);
\draw (3,-0.5) -- (3,7.5);
\draw (4,-0.5) -- (4,7.5);
\draw (5,-0.5) -- (5,7.5);
\draw (6,-0.5) -- (6,7.5);
\draw (7,-0.5) -- (7,7.5);
\draw[line width = 0.05 cm] (0,0) -- (7,0);
\draw[line width = 0.05 cm] (0,2) -- (7,2);
\draw[line width = 0.05 cm] (0,4) -- (7,4);
\draw[line width = 0.05 cm] (0,6) -- (7,6);
\draw[line width = 0.05 cm] (-0.5,1) -- (0,1);
\draw[line width = 0.05 cm] (-0.5,3) -- (0,3);
\draw[line width = 0.05 cm] (-0.5,5) -- (0,5);
\draw[line width = 0.05 cm] (-0.5,7) -- (0,7);

\draw[line width = 0.05 cm] (7,1) -- (7.5,1);
\draw[line width = 0.05 cm] (7,3) -- (7.5,3);
\draw[line width = 0.05 cm] (7,5) -- (7.5,5);
\draw[line width = 0.05 cm] (7,7) -- (7.5,7);

\draw[line width = 0.05 cm] (0,-0.5) -- (0,0);
\draw[line width = 0.05 cm] (1,-0.5) -- (1,0);
\draw[line width = 0.05 cm] (2,-0.5) -- (2,0);
\draw[line width = 0.05 cm] (3,-0.5) -- (3,0);
\draw[line width = 0.05 cm] (4,-0.5) -- (4,0);
\draw[line width = 0.05 cm] (5,-0.5) -- (5,0);
\draw[line width = 0.05 cm] (6,-0.5) -- (6,0);
\draw[line width = 0.05 cm] (7,-0.5) -- (7,0);

\draw[line width = 0.05 cm] (0,7.5) -- (0,7);
\draw[line width = 0.05 cm] (1,7.5) -- (1,7);
\draw[line width = 0.05 cm] (2,7.5) -- (2,7);
\draw[line width = 0.05 cm] (3,7.5) -- (3,7);
\draw[line width = 0.05 cm] (4,7.5) -- (4,7);
\draw[line width = 0.05 cm] (5,7.5) -- (5,7);
\draw[line width = 0.05 cm] (6,7.5) -- (6,7);
\draw[line width = 0.05 cm] (7,7.5) -- (7,7);

      \begin{scope}[xshift=1cm]
         \draw[fill, fill opacity =0.1] (-0.25,-0.25) -- (0.25,-0.25) -- (0.25, 7.25) -- (-0.25,7.25) -- (-0.25,-0.25); 
    \end{scope}
     \begin{scope}[xshift=2cm]
            \draw[fill, fill opacity =0.1] (-0.25,-0.25) -- (0.25,-0.25) -- (0.25, 7.25) -- (-0.25,7.25) -- (-0.25,-0.25); 
    \end{scope}
     \begin{scope}[xshift=3cm]
           \draw[fill, fill opacity =0.1] (-0.25,-0.25) -- (0.25,-0.25) -- (0.25, 7.25) -- (-0.25,7.25) -- (-0.25,-0.25); 
    \end{scope}
     \begin{scope}[xshift=4cm]
           \draw[fill, fill opacity =0.1] (-0.25,-0.25) -- (0.25,-0.25) -- (0.25, 7.25) -- (-0.25,7.25) -- (-0.25,-0.25); 
    \end{scope}
     \begin{scope}[xshift=5cm]
           \draw[fill, fill opacity =0.1] (-0.25,-0.25) -- (0.25,-0.25) -- (0.25, 7.25) -- (-0.25,7.25) -- (-0.25,-0.25); 
    \end{scope}
     \begin{scope}[xshift=6cm]
           \draw[fill, fill opacity =0.1] (-0.25,-0.25) -- (0.25,-0.25) -- (0.25, 7.25) -- (-0.25,7.25) -- (-0.25,-0.25); 
    \end{scope}
     \begin{scope}[xshift=7cm]
           \draw[fill, fill opacity =0.1] (-0.25,-0.25) -- (0.25,-0.25) -- (0.25, 7.25) -- (-0.25,7.25) -- (-0.25,-0.25); 
    \end{scope}
     \begin{scope}
         \draw[fill, fill opacity =0.1] (-0.25,-0.25) -- (0.25,-0.25) -- (0.25, 7.25) -- (-0.25,7.25) -- (-0.25,-0.25); 
    \end{scope}
        \draw[black,fill=black] (0,0) circle (.1 cm);
         \draw[black,fill=black] (1,0) circle (.1 cm);
             \draw[black,fill=black] (1,1) circle (.1 cm);
              \draw[black,fill=black] (0,1) circle (.1 cm);
              \begin{scope}[xshift=2cm]
                  \draw[black,fill=black] (0,0) circle (.1 cm);
         \draw[black,fill=black] (1,0) circle (.1 cm);
             \draw[black,fill=black] (1,1) circle (.1 cm);
              \draw[black,fill=black] (0,1) circle (.1 cm);
              \end{scope}
               \begin{scope}[xshift=4cm]
                  \draw[black,fill=black] (0,0) circle (.1 cm);
         \draw[black,fill=black] (1,0) circle (.1 cm);
             \draw[black,fill=black] (1,1) circle (.1 cm);
              \draw[black,fill=black] (0,1) circle (.1 cm);
              \end{scope}
                   \begin{scope}[xshift=6cm]
                  \draw[black,fill=black] (0,0) circle (.1 cm);
         \draw[black,fill=black] (1,0) circle (.1 cm);
             \draw[black,fill=black] (1,1) circle (.1 cm);
              \draw[black,fill=black] (0,1) circle (.1 cm);
              \end{scope}

\begin{scope}[yshift=6cm]
       \draw[black,fill=black] (0,0) circle (.1 cm);
         \draw[black,fill=black] (1,0) circle (.1 cm);
             \draw[black,fill=black] (1,1) circle (.1 cm);
              \draw[black,fill=black] (0,1) circle (.1 cm);
              \begin{scope}[xshift=2cm]
                  \draw[black,fill=black] (0,0) circle (.1 cm);
         \draw[black,fill=black] (1,0) circle (.1 cm);
             \draw[black,fill=black] (1,1) circle (.1 cm);
              \draw[black,fill=black] (0,1) circle (.1 cm);
              \end{scope}
               \begin{scope}[xshift=4cm]
                  \draw[black,fill=black] (0,0) circle (.1 cm);
         \draw[black,fill=black] (1,0) circle (.1 cm);
             \draw[black,fill=black] (1,1) circle (.1 cm);
              \draw[black,fill=black] (0,1) circle (.1 cm);
              \end{scope}
                   \begin{scope}[xshift=6cm]
                  \draw[black,fill=black] (0,0) circle (.1 cm);
         \draw[black,fill=black] (1,0) circle (.1 cm);
             \draw[black,fill=black] (1,1) circle (.1 cm);
              \draw[black,fill=black] (0,1) circle (.1 cm);
              \end{scope}
\end{scope}

\begin{scope}[yshift=4cm]
       \draw[black,fill=white] (0,0) circle (.1 cm);
         \draw[black,fill=white] (1,0) circle (.1 cm);
             \draw[black,fill=white] (1,1) circle (.1 cm);
              \draw[black,fill=white] (0,1) circle (.1 cm);
              \begin{scope}[xshift=2cm]
                  \draw[black,fill=white] (0,0) circle (.1 cm);
         \draw[black,fill=white] (1,0) circle (.1 cm);
             \draw[black,fill=white] (1,1) circle (.1 cm);
              \draw[black,fill=white] (0,1) circle (.1 cm);
              \end{scope}
               \begin{scope}[xshift=4cm]
                  \draw[black,fill=white] (0,0) circle (.1 cm);
         \draw[black,fill=white] (1,0) circle (.1 cm);
             \draw[black,fill=white] (1,1) circle (.1 cm);
              \draw[black,fill=white] (0,1) circle (.1 cm);
              \end{scope}
                   \begin{scope}[xshift=6cm]
                  \draw[black,fill=white] (0,0) circle (.1 cm);
         \draw[black,fill=white] (1,0) circle (.1 cm);
             \draw[black,fill=white] (1,1) circle (.1 cm);
              \draw[black,fill=white] (0,1) circle (.1 cm);
              \end{scope}
\end{scope}

\begin{scope}[yshift=2cm]
       \draw[black,fill=white] (0,0) circle (.1 cm);
         \draw[black,fill=white] (1,0) circle (.1 cm);
             \draw[black,fill=white] (1,1) circle (.1 cm);
              \draw[black,fill=white] (0,1) circle (.1 cm);
              \begin{scope}[xshift=2cm]
                  \draw[black,fill=white] (0,0) circle (.1 cm);
         \draw[black,fill=white] (1,0) circle (.1 cm);
             \draw[black,fill=white] (1,1) circle (.1 cm);
              \draw[black,fill=white] (0,1) circle (.1 cm);
              \end{scope}
               \begin{scope}[xshift=4cm]
                  \draw[black,fill=white] (0,0) circle (.1 cm);
         \draw[black,fill=white] (1,0) circle (.1 cm);
             \draw[black,fill=white] (1,1) circle (.1 cm);
              \draw[black,fill=white] (0,1) circle (.1 cm);
              \end{scope}
                   \begin{scope}[xshift=6cm]
                  \draw[black,fill=white] (0,0) circle (.1 cm);
         \draw[black,fill=white] (1,0) circle (.1 cm);
             \draw[black,fill=white] (1,1) circle (.1 cm);
              \draw[black,fill=white] (0,1) circle (.1 cm);
              \end{scope}
\end{scope}

           \end{tikzpicture}
           \caption{Graphical Representation of the Staggered Dirac Operator with Plaquette Interaction with periodic pattern produced by Chessboard Estimates. In Grey we have drawn the unit cells used to perform Fourier Transforms.}
           \label{dirastagplaq}
       \end{figure}

Note that due to the periodicity of $\Delta^{\mathrm{SP}}(\,\cdot\,;\vphi_+,\vphi_-)$ under translations of $\frac{\pi}{4}\mbb{Z}\times \pi\mbb{Z}$, we can either rewrite $w_{\l}^{\mathrm{SP}}$ in the form of \eqref{eq:Peierls-w}, or alternatively

\begin{equation}
w_{\l}^{\mathrm{SP}}(\vphi_+,\vphi_-)= \frac{\vphi_+^2+\vphi_-^2}{4\l}- 8\iint_{(-\frac{\pi}{8},\frac{\pi}{8}]\times(-\frac{\pi}{2},\frac{\pi}{2}]} \frac{d^2p}{(2\pi)^2} \log \Delta^{\mathrm{SP}}(p;\vphi_+,\vphi_-).
\end{equation}

In this way, if $\vphi_+=\vphi_-=0$, the integrand has only one singularity at $p=0$. Trivial though cumbersome algebraic manipulations lead to

\begin{equation}
\label{eq:mySP4}
\begin{split}
\Delta^{\mathrm{SP}}(p;\vphi_+,\vphi_-)&= \frac{1}{16} \big[ 16\vphi_+^4\vphi_-^4 + 8\vphi_-^2\vphi_+^2(\vphi_-^2+\vphi_+^2)+ (\vphi_+^2+ \vphi_-^2)^2 + (1+2\vphi_-\vphi_+)^2(\vphi_-+\vphi_+)^2 \big]+\\
&+ \frac{1}{64}\sin^2(4p_0)+ \frac{1}{4}\sin^2p_1 \big(1+ \vphi_-^2+ \vphi_+^2 + \sin^2p_1\big)\cdot\\
&\cdot \big[  1+ 2\vphi_-\vphi_+ + 3\vphi_+^2 + \vphi_-^2(3+8\vphi_+^2)+ 4\sin^2p_1\big(1+\vphi_-^2+ \vphi_+^2 + \sin^2p_1\big)\big].
\end{split}
\end{equation}

Notice that since $2\vphi_-\vphi_+ + 3\vphi_+^2 + \vphi_-^2(3+8\vphi_+^2)\ge 2\vphi_-\vphi_+ + \vphi_+^2 + \vphi_-^2\ge0$, we have that

\begin{equation}
\Delta^{\mathrm{SP}}(p;\vphi_+,\vphi_-)\gtrsim m_{\mathrm{SP}}(\vphi_+,\vphi_-)^2+ |p|^2 \qquad \forall p\in \big(-\tfrac{\pi}{8},\tfrac{\pi}{8}\big]\times\big(-\tfrac{\pi}{2},\tfrac{\pi}{2}\big],
\end{equation}

where we have introduced

\begin{equation*}
m_{\mathrm{SP}}(\vphi_+,\vphi_-) \doteq \sqrt{ \vphi_+^4\vphi_-^4 + \vphi_-^2\vphi_+^2(\vphi_-^2+\vphi_+^2)+ (\vphi_+^2+ \vphi_-^2)^2 }.
\end{equation*}

Once again our goal is to replicate the reasoning from the analysis of the SN model. An explicit computation shows that

\begin{equation}
\label{eq:mySP3}
\begin{split}
&(1,-1)\cdot\nabla w_{\l}^{\mathrm{SP}}(\vphi_+,\vphi_-) = (\vphi_+-\vphi_-)\left\{\frac{1}{2\l}- 8 \iint_{(-\frac{\pi}{8},\frac{\pi}{8}]\times(-\frac{\pi}{2},\frac{\pi}{2}]} \frac{d^2p}{(2\pi)^2} \frac{\Gamma^{\mathrm{SP}}(p;\vphi_+,\vphi_-)}{\Delta^{\mathrm{SP}}(p;\vphi_+,\vphi_-)} \right\},
\end{split}\end{equation}

where 

\begin{equation*}
\begin{split}
\Gamma^{\mathrm{SP}}(p;\vphi_+,\vphi_-) &= \frac{1}{2} \Big\{ -\vphi_+\vphi_-\big[1+3 \vphi_+^2+ \vphi_-^2(3+8\vphi_+^2) \big] +\\
&+\sin^2(p_1)\big[3-8\vphi_-^3\vphi_+ + 5\vphi_+^2 + \vphi_-^2(5+8\vphi_+^2) - \vphi_+\vphi_-(6+ 8\vphi_+^2) \big]\\
&+ 2\sin^2(p_1) \big[5+ 4(\vphi_-^2+\vphi_+^2- \vphi_-\vphi_+) + 4\sin^2(p_1) \big]\Big\}.
\end{split}
\end{equation*}

Observe that $|\Gamma^{\mathrm{SP}}(p;\vphi_+,\vphi_-)|\lesssim m_{\mathrm{SP}}(\vphi_+,\vphi_-)+ m_{\mathrm{SP}}(\vphi_+,\vphi_-)^2+ |p|^2$; therefore the integral in the right--hand side of \eqref{eq:mySP3} is uniformly bounded:

\begin{equation*}
\begin{split}
&\iint_{(-\frac{\pi}{8},\frac{\pi}{8}]\times(-\frac{\pi}{2},\frac{\pi}{2}]} \frac{d^2p}{(2\pi)^2} \left|\frac{\Gamma^{\mathrm{SP}}(p;\vphi_+,\vphi_-)}{\Delta^{\mathrm{SP}}(p;\vphi_+,\vphi_-)} \right| \lesssim \\
&\iint_{(-\frac{\pi}{8},\frac{\pi}{8}]\times(-\frac{\pi}{2},\frac{\pi}{2}]} \frac{d^2p}{(2\pi)^2} \frac{m_{\mathrm{SP}}(\vphi_+,\vphi_-)+ m_{\mathrm{SP}}(\vphi_+,\vphi_-)^2 + |p|^2}{|p|^2 + m_{\mathrm{SP}}(\vphi_+,\vphi_-)^2} \lesssim 1.
\end{split}
\end{equation*}

As an immediate consequence, from \eqref{eq:mySP3} we have that for $\l$ small enough, $
(1,-1)\cdot \nabla w_{\l}^{\mathrm{SP}}(\vphi_+,\vphi_-)\ge0$ on $\{\vphi_+\ge0, \vphi_-\le0\}$.
Using also the symmetry $w_{\l}(\vphi_+,\vphi_-)= w_{\l}(-\vphi_-,-\vphi_+)$, it follows that

\begin{equation}
\min_{\vphi_+\ge0,\vphi_-\le0} w_{\l}^{\mathrm{SP}}(\vphi_+,\vphi_-)= \min_{\vphi\ge0} w_{\l}^{\mathrm{SP}}(\vphi,0).
\end{equation}

Again we aim to compare $w_{\l}^{\mathrm{SP}}(\vphi,0)$ with $w_{\l}^{\mathrm{SP}}\big(\frac{1}{\sqrt{2}}\vphi,\frac{1}{\sqrt{2}}\vphi\big)$, keeping in mind that $w_{\l}^{\mathrm{SP}}(\vphi,\vphi)= v_{\l}^{\mathrm{SP}}(\vphi)$. From \eqref{eq:mySP4} we have that

\begin{equation*}
\begin{split}
&\Delta^{\mathrm{SP}}(p;\vphi,0)=\\
&\frac{1}{16} \big[\vphi^2+ \vphi^4\big]+ \frac{1}{64}\sin^2(4p_0)+ \frac{1}{4}\sin^2p_1 \big(1+ \vphi^2+ \sin^2p_1\big) \big[  1+ 3\vphi^2 + 4\sin^2p_1\big(1+ \vphi^2 + \sin^2p_1\big)\big],
\end{split}
\end{equation*}

while

\begin{equation*}
\begin{split}
\Delta^{\mathrm{SP}}\big(p;\tfrac{1}{\sqrt{2}}\vphi,\tfrac{1}{\sqrt{2}}\vphi\big)&= \frac{1}{16} \big[2\vphi^2+ 5\vphi^4 + 4\vphi^6 +  \vphi^8 \big]+ \frac{1}{64}\sin^2(4p_0)+\\
&+ \frac{1}{4}\sin^2p_1 \big(1+ \vphi^2 + \sin^2p_1\big)\cdot \big[  1+ 4\vphi^2 + 2\vphi^4+ 4\sin^2p_1\big(1+\vphi^2 + \sin^2p_1\big)\big].
\end{split}
\end{equation*}

Therefore we see that $\Delta^{\mathrm{SP}}\big(p;\tfrac{1}{\sqrt{2}}\vphi,\tfrac{1}{\sqrt{2}}\vphi\big)= \Delta^{\mathrm{SP}}(p;\vphi,0)+ \tilde\Delta^{\mathrm{SP}}(p;\vphi)$, where

\begin{equation}
\label{eq:mySP5}
\begin{split}
\tilde\Delta^{\mathrm{SP}}\big(p;\vphi\big) =  \tfrac{1}{16}\big(\vphi^2+ 4\vphi^4+ 4\vphi^6+ \vphi^8\big)+  \tfrac{1}{4}\sin^2p_1 \big(1+ \vphi^2 + \sin^2p_1\big)\big(\vphi^2+ 2\vphi^4\big).
\end{split}
\end{equation}

This allows us to rewrite

\begin{equation*}
\begin{split}
&w_{\l}^{\mathrm{SP}}(\vphi,0)=\\
&v_{\l}^{\mathrm{SP}}\big(\tfrac{1}{\sqrt{2}}\vphi\big)- 8\iint_{(-\frac{\pi}{8},\frac{\pi}{8}]\times(-\frac{\pi}{2},\frac{\pi}{2}]} \frac{d^2p}{(2\pi)^2} \log \left(1- \frac{\tilde\Delta^{\mathrm{SP}}(p;\vphi)}{\Delta^{\mathrm{SP}}\big(p;\tfrac{1}{\sqrt{2}}\vphi,\tfrac{1}{\sqrt{2}}\vphi\big)} \right) \doteq v_{\l}^{\mathrm{SP}}\big(\tfrac{1}{\sqrt{2}}\vphi\big)+ \d w^{\mathrm{SP}}(\vphi).
\end{split}
\end{equation*}

In order to get a lower bound for $\d w^{\mathrm{SP}}(\vphi)$, we observe that $\tilde\Delta^{\mathrm{SP}}(p;\vphi)\gtrsim \frac{1}{16}\vphi^8$ and $\Delta^{\mathrm{SP}}\big(p;\tfrac{1}{\sqrt{2}}\vphi,\tfrac{1}{\sqrt{2}}\vphi\big) \lesssim 1+\vphi^8$. Therefore

\begin{equation}
\label{eq:mySP6}
\d w^{\mathrm{SP}}(\vphi)\ge - 8 \iint_{(-\frac{\pi}{8},\frac{\pi}{8}]\times(-\frac{\pi}{2},\frac{\pi}{2}]} \frac{d^2p}{(2\pi)^2} \log\left( 1- \mf{C}_2^{\mathrm{SP}} \frac{\vphi^8}{1+\vphi^8}\right)= - \frac{1}{2} \log\left( 1- \mf{C}_2^{\mathrm{SP}} \frac{\vphi^8}{1+\vphi^8}\right),
\end{equation}

for a suitable $\mf{C}_2^{\mathrm{SP}}>0$. Note that the right--hand side is positive and monotone increasing for $\vphi>0$. All in all:

\begin{equation*}
w_{\l}^{\mathrm{SP}}(\vphi,0) \ge v_{\l}^{\mathrm{SP}} \big(\tfrac{1}{\sqrt{2}}\vphi \big)- \frac{1}{2} \log\left( 1- \mf{C}_2^{\mathrm{SP}} \frac{\vphi^8}{1+\vphi^8}\right).
\end{equation*}

Finally, the same argument applied to the SN and NN models yields

\[\begin{split} &\min_{\vphi_+\ge0,\vphi_-\le0} w_{\l}^{\mathrm{SP}}(\vphi_+,\vphi_-) -  v_{\l}^{\mathrm{SP}}\big(\vphi_{\mathrm{SP}}^{\star}(\l)\big) \ge \\
& \min\left\{  v_{\l}^{\mathrm{SP}}\big(\tfrac{1}{\sqrt{2}}\vphi_{\mathrm{SP}}^{\star}(\l)\big)- v_{\l}^{\mathrm{SP}}\big(\vphi_{\mathrm{SP}}^{\star}(\l)\big), \, - \frac{1}{4} \log \left(1- \mf{C}_2^{\mathrm{SP}}\frac{\big(\varphi^{\star}_{\mathrm{SP}}(\lambda)\big)^8}{1+ \big(\varphi^{\star}_{\mathrm{SP}}(\lambda)\big)^8} \right)  \right\}\equiv c_3^{\mathrm{SP}}(\l). \end{split}\]

\appendix

\section{Properties of the Dirac Operator and of its anti-symmetrization}
\label{app:Dirac}

In this appendix we collect and prove several technical results concerning the Dirac operators associated with the three lattice fermion models used in the main body of the paper.  
These results are repeatedly employed in the proof of Reflection Positivity for the corresponding measures $d\mu_{\alpha}(\phi)$, but are of a purely operator-theoretic nature and can therefore be isolated from the main exposition.
\\

For each model, we analyze the algebraic and spectral properties of the associated Dirac operators, with particular emphasis on adjointness relations, symmetry under reflections, and the reality and invariance of the fermionic determinant.  
\\

In order to evoid cluttering of notation, in Subsections \ref{Ap21} and \ref{AP22} we will use the symbol $\slashed{D}$ to denote $\slashed{D}_{\mathrm{NN}}$ and $\slashed{D}_{\mathrm{SN}}$ respectively. On the other hand, we will need to distinguish between reflections on the full plane and reflections on the torus (cf. Definition \ref{def:reflections}), which will be denoted by $\vartheta_{\nu}$ and $\vartheta_{\nu}^L$ respectively.

\subsection{Naive Dirac Operator}\label{Ap21}

\begin{proof}[Proof of Lemma \ref{PND}.]
    \leavevmode
    \begin{enumerate}
        \item[\eqref{PND1}]: \textbf{Anti-self-adjointness.}  Let us consider first the Infinite Plane Naive Dirac Operator $\slashed{D}$:

\begin{equation*}
    \slashed{D} = \frac{1}{2} \sum_{\mu=0}^{1} \gamma_{\mu} \otimes (T_{\mu}- T_{\mu}^*), \qquad \slashed{D}^* = - \frac{1}{2} \sum_{\mu=0}^{1} \gamma_{\mu}^{\dagger} \otimes (T_{\mu}- T_{\mu}^*)  = -\slashed{D};
\end{equation*}

thus:

\begin{equation}
\label{eqn:48}
\begin{split}
\overline{\big(\slashed{D}^-_{\L_L}\big)_{(y,s'),(x,s)}} &=  \sum_{n \in \mathbb{Z}^2} (-1)^{n_0+ n_{1}} \overline{(\slashed{D})_{(y+nL,s'),(x,s)}}\\
&= \sum_{n \in \mathbb{Z}^2} (-1)^{n_0+ n_{1}} \overline{(\slashed{D})_{(y,s'),(x-nL,s)}}\\
&= -  \sum_{n \in \mathbb{Z}^2} (-1)^{n_0+ n_{1}} (\slashed{D})_{(x-nL,s),(y,s')}= - \big(\slashed{D}^-_{\L_L}\big)_{(x,s),(y,s')}.
\end{split}
\end{equation}

\item[\eqref{PND2}:] \textbf{Chirality under adjoint.} Since $\gamma_A \gamma_{\mu} \gamma_A = - \gamma_{\mu}$ and $\gamma_A^2= \mathbbl{1}_2$, we have that

\begin{equation*}
\begin{aligned}
(\gamma_A \otimes 1) \slashed{D}^-_{\Lambda_L} (\gamma_A  \otimes 1)= -\slashed{D}^-_{\Lambda_L},\qquad
(\gamma_A \otimes 1) M_{\phi}  (\gamma_A \otimes 1)= M_{\phi},
\end{aligned}
\end{equation*}

so that

\begin{equation*}
(\slashed{D}^-_{\Lambda_L} - M_{\phi})^* = (\gamma_A \otimes 1) (\slashed{D}^-_{\Lambda_L} - M_{\phi}) (\gamma_A \otimes 1).
\end{equation*}
\item[\eqref{PND3}:] \textbf{Spectrum of the adjoint.} Since $\gamma_A^2 = \mathbbl{1}_2$ and $\gamma_A^{\dagger} =  \gamma_A$, $\gamma_A$ is a unitary matrix. Thus, \eqref{PND2} implies that $(\slashed{D}^-_{\Lambda_L} -  M_{\phi})^*$ and $\slashed{D}^-_{\Lambda_L} -  M_{\phi}$ are unitarily conjugated through $\gamma_A \otimes 1$ and thus have the same spectrum. 
\item[\eqref{PND4}:] \textbf{Reality of the determinant.} By applying the determinant to \eqref{PND2}, we get:
\begin{equation*}   \overline{\det(\slashed{D}^-_{\Lambda_L} -  M_{\phi})} =\det((\slashed{D}^-_{\Lambda_L} -  M_{\phi})^*) \overset{\eqref{PND2}}{=} \det(\slashed{D}^-_{\Lambda_L} -  M_{\phi}).
\end{equation*}
\end{enumerate}
\end{proof}

We now analyze the properties of the Dirac operator under reflections. To this purpose we will need the notion of the \emph{Reflection Operator}.

\begin{defn}
[Reflection Operator]
Given a cut plane $(\nu,r)$ on $\L_L$, and letting $\vartheta_{\nu}$ and $\vartheta_{\nu}^L$ be its related reflections on the full plane and on the torus respectively (cf. Definitions \ref{def:cut_plane}-\ref{def:reflections}), we define the linear operators $\hat{\vartheta}_{\nu}: \ell^2(\mbb{Z}^2)\mapsto\ell^2(\mbb{Z}^2)$ and $\hat{\vartheta}_{\nu}^L: \ell^2(\L_L)\mapsto\ell^2(\L_L)$ as:

\begin{align*}
&(\hat{\vartheta}_{\nu}f)_x\doteq f_{\vartheta_{\nu}(x)}, \qquad \qquad \qquad \qquad \,\,\,\,\,\,\,\,\,\,\,\,\,\,\,\,\,\,\,\,\,\,\,\,\,\,\,\,\,\,\,\,\forall f\in \ell^2(\mbb{Z}^2),\\
&(\hat{\vartheta}_{\nu}^Lf)_x\doteq \underbrace{(-1)^{\sum_{\mu=0,1}\frac{1}{L}(\vartheta_{\nu}(x)- \vartheta_{\nu}^L(x) )_{\mu}}}_{\doteq\s_{\nu}^L(x)}f_{\vartheta_{\nu}^L(x)}, \qquad \forall f\in \ell^2(\L_L),
\end{align*}

where note that $\vartheta_{\nu}(x)- \vartheta_{\nu}^L(x)\in L\mbb{Z}^2$.
\end{defn}

\begin{lemma}[Properties under the Reflection Operator]\label{RND}
The following identities hold, for every reflection $\vartheta_{\nu}$ on $\L_L$.

    \begin{enumerate}
        \item  \textbf{\emph{Reflection Covariance of the Dirac Operator:}}
        \begin{equation}
      (\mathbbl{1}_2 \otimes \hat{\vartheta}_{\nu}^L) (\slashed{D}^-_{\Lambda_L} ) (\mathbbl{1}_2 \otimes \hat{\vartheta}_{\nu}^L) =  \big((\gamma_{\nu} \otimes 1)(\slashed{D}^-_{\Lambda_L})(\gamma_{\nu} \otimes 1) \big)^*. 
        \tag{RND1}
        \label{RND1}
    \end{equation}  
    \item \textbf{\emph{Reflection Covariance :}}
    \begin{equation}
       \slashed{D}^-_{\Lambda_L}- M_{\Theta_{\nu}(\phi)} = (\mathbbl{1}_2 \otimes \hat{\vartheta}_{\nu}^L) \big((\gamma_{\nu} \otimes 1)(\slashed{D}^-_{\Lambda_L} -  M_{\phi})(\gamma_{\nu} \otimes 1) \big)^* (\mathbbl{1}_2 \otimes \hat{\vartheta}_{\nu}^L).
        \tag{RND2}
        \label{RND2}
    \end{equation}  
        \item \textbf{\emph{Determinant Invariance:}}
        \begin{equation}
        \det(\slashed{D}^-_{\Lambda_L}-   M_{\Theta_{\nu}(\phi)}) = \overline{\det(\slashed{D}^-_{\Lambda_L} - M_{\phi})}= \det(\slashed{D}^-_{\Lambda_L} -  M_{\phi}). 
        \tag{RND3}
        \label{RND3}
    \end{equation}  

    \end{enumerate}
\end{lemma}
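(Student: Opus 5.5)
The plan is to prove the three identities in order, each as a consequence of the previous one. Everything will be verified first on the infinite-plane operator $\slashed{D}=\frac12\sum_\mu\gamma_\mu\otimes(T_\mu-T_\mu^*)$, and then transported to the anti-periodic extension.

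\textbf{Step 1: the infinite plane.} Let $\hat\vartheta_\nu$ be the plane reflection. Since $\vartheta_\nu$ flips only the $\nu$-th coordinate,
\[
\hat\vartheta_\nu T_\nu\hat\vartheta_\nu=T_\nu^*,\qquad \hat\vartheta_\nu T_\mu\hat\vartheta_\nu=T_\mu \quad (\mu\neq\nu).
\]
Hence
\[
(\mathbbl{1}_2\otimes\hat\vartheta_\nu)\slashed{D}(\mathbbl{1}_2\otimes\hat\vartheta_\nu)=\tfrac12\Big(-\gamma_\nu\otimes(T_\nu-T_\nu^*)+\sum_{\mu\neq\nu}\gamma_\mu\otimes(T_\mu-T_\mu^*)\Big).
\]
For the right-hand side of \eqref{RND1}, we use $\gamma_\nu\gamma_\mu\gamma_\nu=-\gamma_\mu$ for $\mu\neq\nu$, $\gamma_\nu^2=\mathbbl{1}$, and $\slashed{D}^*=-\slashed{D}$. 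These give
\[
\big((\gamma_\nu\otimes1)\slashed{D}(\gamma_\nu\otimes1)\big)^*=-\tfrac12\Big(\gamma_\nu\otimes(T_\nu-T_\nu^*)-\sum_{\mu\ne\nu}\gamma_\mu\otimes(T_\mu-T_\mu^*)\Big),
\]
which agrees with the expression above. The computation uses only that the $\gamma$'s are self-adjoint and anticommute, so it holds for any axis $r$.

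\textbf{Step 2: the anti-periodic extension, the main obstacle.} The kernel of $\slashed{D}^-_{\Lambda_L}$ is the image sum $\sum_n(-1)^{n_0+n_1}\slashed{D}_{x+Ln,y}$. I will write $\vartheta_\nu(x)=\vartheta_\nu^L(x)+Lk(x)$, substitute into the reflected kernel $\sigma_\nu^L(x)\sigma_\nu^L(y)\,(\slashed{D}^-_{\Lambda_L})_{\vartheta^L_\nu x,\vartheta^L_\nu y}$, and reindex the image sum. Since $\vartheta_\nu$ is affine with linear part a diagonal $\pm1$ matrix, $\vartheta_\nu(x+Ln)=\vartheta_\nu(x)+L\,R_\nu n$, and $(-1)^{(R_\nu n)_0+(R_\nu n)_1}=(-1)^{n_0+n_1}$. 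The leftover sign $(-1)^{\sum_\mu k_\mu(x)-k_\mu(y)}$ is exactly cancelled by $\sigma^L_\nu(x)\sigma^L_\nu(y)$; this is precisely why $\sigma_\nu^L$ enters the definition of $\hat\vartheta^L_\nu$. After this bookkeeping, \eqref{RND1} on the torus follows from Step 1 applied termwise, together with the fact that the adjoint commutes with the image sum, as in \eqref{eqn:48}. The care needed is in tracking the winding vectors $k(x)$ for a general cut plane. For the canonical plane $\vartheta_\nu(\Lambda_L)=\Lambda_L$, so $k\equiv0$ and $\sigma^L_\nu\equiv1$, and the step is immediate.

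\textbf{Step 3: \eqref{RND2} and \eqref{RND3}.} Since $\sigma_\nu^L(x)^2=1$, we have $(\mathbbl{1}_2\otimes\hat\vartheta^L_\nu)M_\phi(\mathbbl{1}_2\otimes\hat\vartheta^L_\nu)=M_{\Theta_\nu\phi}$. Also, $M_\phi$ is self-adjoint and commutes with $\gamma_\nu\otimes1$, so $\big((\gamma_\nu\otimes1)M_\phi(\gamma_\nu\otimes1)\big)^*=M_\phi$. Subtracting this from \eqref{RND1}, and using that $\hat\vartheta^L_\nu$ is an involution, gives \eqref{RND2}. Taking determinants in \eqref{RND2} gives
\[
\det(\slashed{D}^-_{\Lambda_L}-M_{\Theta_\nu\phi})=\overline{\det(\slashed{D}^-_{\Lambda_L}-M_\phi)},
\]
because $\det(\mathbbl{1}_2\otimes\hat\vartheta^L_\nu)^2=1$, $\det(\gamma_\nu)^{2|\Lambda_L|}$ times its conjugate equals $|\det\gamma_\nu|^{2|\Lambda_L|}=1$, and $\det(A^*)=\overline{\det A}$. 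The second equality in \eqref{RND3} is \eqref{PND4} from Lemma \ref{PND}.
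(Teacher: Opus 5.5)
Your proposal is correct and follows essentially the same route as the paper: you prove the infinite-plane identity using $\gamma_\nu\gamma_\mu\gamma_\nu=(-1)^{1+\delta_{\mu\nu}}\gamma_\mu$ and $\slashed{D}^*=-\slashed{D}$, transport it to $\slashed{D}^-_{\Lambda_L}$ by reindexing the image sum with the $\sigma^L_\nu$ signs absorbing the anti-periodicity (a step you make more explicit than the paper's ``essentially by the definition''), and then obtain \eqref{RND2} by conjugating $M_\phi$ and \eqref{RND3} by taking determinants and invoking \eqref{PND4}. One small slip: the determinant step involves no product of a determinant with its conjugate; the only factor is $\overline{\det(\gamma_\nu\otimes 1)^2}$, which equals $1$ because $(\gamma_\nu\otimes 1)^2=\mathbbl{1}$.
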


\begin{proof}
\leavevmode
\begin{enumerate}
\item[\eqref{RND1}.] Let us first analyze the action of the reflection on the infinite-plane Dirac operator.

\begin{equation*}
\begin{aligned}
\big((\mathbbl{1}_2 \otimes \hat{\vartheta}_{\nu}) \slashed{D} ( \mathbbl{1}_2 \otimes \hat{\vartheta}_{\nu})\big)_{(x,s),(y,s')} &= \frac{1}{2} \sum_{\mu=0}^1 (\gamma_{\mu})_{s,s'} \big(T_{\mu}-T_{\mu}^*\big)_{\vartheta_{\nu}(x),\vartheta_{\nu}(y)}\\
&= \frac{1}{2} \sum_{\mu=0}^1 (\gamma_{\mu})_{s,s'} \big(\delta_{\vartheta_{\nu}(x), \vartheta_{\nu}(y)-e_{\mu}} - \delta_{\vartheta_{\nu}(x), \vartheta_{\nu}(y)+e_{\mu}}  \big)\\
&= \frac{1}{2} \sum_{\mu=0}^1 (\gamma_{\mu})_{s,s'} \big(\delta_{x, y-(-1)^{\delta_{\mu \nu}}e_{\mu}} - \delta_{x, y+ (-1)^{\delta_{\mu \nu}}e_{\mu}}  \big)\\
&=\frac{1}{2} \sum_{\mu=0}^1 (-1)^{\delta_{\mu \nu}}(\gamma_{\mu})_{s,s'} (T_{\mu}-T_{\mu}^*)_{x,y}.
\end{aligned}
\end{equation*}

With a simple manipulation with $\gamma$-matrices, we get:

\begin{equation*}
\begin{aligned}
\frac{1}{2} \sum_{\mu=0}^1 (-1)^{\delta_{\mu \nu}}(\gamma_{\mu})_{s,s'} (T_{\mu}-T_{\mu}^*)_{x,y}
&= -\frac{1}{2} \sum_{\mu=0}^1 (\gamma_{\nu}\gamma_{\mu} \gamma_{\nu})_{s,s'} (T_{\mu}-T_{\mu}^*)_{x,y}\\
&=  \big((\gamma_{\nu} \otimes 1 )(- \slashed{D})(\gamma_{\nu} \otimes 1) \big)_{(x,s),(y,s')}.
\end{aligned}
\end{equation*}

Therefore, we get that $(\mathbbl{1}_2 \otimes \hat{\vartheta}_{\nu}) \slashed{D} ( \mathbbl{1}_2 \otimes \hat{\vartheta}_{\nu}) = -(\gamma_{\nu} \otimes 1 )\slashed{D} (\gamma_{\nu} \otimes 1)$. From this identity, first note that

\begin{equation*}
\begin{aligned}
\big((\mathbbl{1}_2 \otimes \hat{\vartheta}_{\nu}^L) (\slashed{D}^-_{\Lambda_L})( \mathbbl{1}_2 \otimes \hat{\vartheta}_{\nu}^L)\big)_{(x,s),(y,s')}&=
 \s_{\nu}^L(x) \s_{\nu}^L(y) (\slashed{D}^-_{\Lambda_L})_{(\vartheta_{\nu}^L(x),s), (\vartheta_{\nu}^L(y),s')}\\
 &= (\slashed{D}^-_{\Lambda_L})_{(\vartheta_{\nu}(x),s), (\vartheta_{\nu}(y),s')},
\end{aligned}
\end{equation*}

essentially by the definition of the reflection operator and the anti-periodicity of $\slashed{D}^-_{\L_L}$. Furthermore, notice that

\begin{equation*}
\begin{split}
(\slashed{D}^-_{\Lambda_L})_{(\vartheta_{\nu}(x),s), (\vartheta_{\nu}(y),s')} &= \sum_{n \in \mathbb{Z}^2} (-1)^{n_0+n_1} \slashed{D}_{(\vartheta_{\nu}(x)+nL,s),(\vartheta_{\nu}(y),s')}= \sum_{n \in \mathbb{Z}^2} (-1)^{n_0+n_1} \slashed{D}_{(\vartheta_{\nu}(x+nL),s),(\vartheta_{\nu}(y),s')}.
\end{split}
\end{equation*}

Therefore:

\begin{equation*}
\begin{aligned}
\big((\mathbbl{1}_2 \otimes \hat{\vartheta}_{\nu}^L) (\slashed{D}^-_{\Lambda_L})( \mathbbl{1}_2 \otimes \hat{\vartheta}_{\nu}^L)\big)_{(x,s),(y,s')} &= \sum_{n \in \mathbb{Z}^2} (-1)^{n_0+n_1} \big((\mathbbl{1}_2 \otimes \vartheta_{\nu}) (\slashed{D})( \mathbbl{1}_2 \otimes \vartheta_{\nu})\big)_{(x+nL,s),(y,s')}\\
&= -\sum_{n \in \mathbb{Z}^2} (-1)^{n_0+n_1} \big((\gamma_{\nu}\otimes 1) \slashed{D}( \gamma_{\nu} \otimes 1)\big)_{(x+nL,s),(y,s')}\\
&= -\big((\gamma_{\nu} \otimes 1 )( \slashed{D}^-_{\Lambda_L})(\gamma_{\nu} \otimes 1) \big)_{(x,s),(y,s')}\\
&= \big((\gamma_{\nu} \otimes 1 )( \slashed{D}^-_{\Lambda_L})(\gamma_{\nu} \otimes 1) \big)^*_{(x,s),(y,s')},
\end{aligned}
\end{equation*}

where we have also used that, by \eqref{PND1}, $(\slashed{D}^-_{\L_L})^*=- \slashed{D}^-_{\L_L}$. By the arbitrariness of $x, y \in \Lambda_L$ and $s,s'\in\{1,2\}$, it follows that

\begin{equation*}
(\mathbbl{1}_2 \otimes \hat{\vartheta}_{\nu}^L) (\slashed{D}^-_{\Lambda_L})( \mathbbl{1}_2 \otimes \hat{\vartheta}_{\nu}^L) = \big((\gamma_{\nu} \otimes 1 )( \slashed{D}^-_{\Lambda_L})(\gamma_{\nu} \otimes 1) \big)^*.
\end{equation*}

\item[\eqref{RND2}.]
Since $M_{\Theta_{\nu}(\phi)}= (\mathbbl{1}_2\otimes \hat{\vartheta}_{\nu}^L) M_{\phi} (\mathbbl{1}_2\otimes \hat{\vartheta}_{\nu}^L)$ and, as one can easily check, $(\hat{\vartheta}_{\nu}^L)^2= 1$, we have:

\begin{equation}\label{eqPR}
\slashed{D}^-_{\Lambda_L} - M_{\Theta_{\nu}(\phi)} = (\mathbbl{1}_2 \otimes \hat{\vartheta}_{\nu}^L) \big( (\mathbbl{1}_2 \otimes \hat{\vartheta}_{\nu}^L) (\slashed{D}^-_{\Lambda_L}) (\mathbbl{1}_2 \otimes \hat{\vartheta}_{\nu}^L) - M_{\phi} \big) (\mathbbl{1}_2 \otimes \hat{\vartheta}_{\nu}^L),
\end{equation}

which, after \eqref{RND1}, implies \eqref{RND2}. 
\item[\eqref{RND3}] readily follows by taking the determinant at both sides of \eqref{RND2}, using the fact that $\det(\gamma_{\nu} \otimes 1)^2= \det(\mathbbl{1}_2 \otimes \hat{\vartheta}_{\nu}^L)^2=1$ and exploiting the reality of the determinant, established in \eqref{PND4}.
\end{enumerate}
     
\end{proof}

\subsection{Staggered Dirac Operator}\label{AP22}

\begin{proof}[Proof of Lemma \ref{AP221}.]
\leavevmode
\begin{enumerate}
\item[\eqref{PSD1}:] \textbf{Anti-self-adjointness.} The infinite-volume operator $\slashed{D} \equiv \frac{1}{2} \Gamma_{\mu} (T_{\mu}- T_{\mu}^*)$ is anti-self-adjoint by direct inspection. About $\slashed{D}^-_{\Lambda_L}$, we have that

\begin{equation*}
\begin{aligned}
\big( (\slashed{D}^-_{\Lambda_L})^*\big)_{x,y} &= \overline{\big(\slashed{D}^-_{\Lambda_L}\big)_{y,x}} = \sum_{n \in \mathbb{Z}^2} (-1)^{n_0+n_1} \overline{\slashed{D}_{y+Ln, x} }\\
&= \frac{1}{2}\sum_{n \in \mathbb{Z}^2} (-1)^{n_0+n_1} \sum_{\mu=0}^1\Gamma_{\mu}(y+L n) (\delta_{y+Ln+e_{\mu},x} - \delta_{y+Ln-e_{\mu},x} )\\
&= \frac{1}{2} \sum_{n \in \mathbb{Z}^2} \sum_{\mu=0}^1(-1)^{n_0+n_1} \Gamma_{\mu}(x) (\delta_{y,x-Ln-e_{\mu}} - \delta_{y,x-Ln+e_{\mu}} ),
\end{aligned}   
\end{equation*} 

where we used the fact that $\Gamma_{\mu}(x-Ln \pm e_{\mu})= \Gamma_{\mu}(x \pm e_{\mu})= \Gamma_{\mu}(x)$. From the previous formula we recognize that

\begin{equation*}
\begin{aligned}
\big( (\slashed{D}^-_{\Lambda_L})^*\big)_{x,y} &= \frac{1}{2}\sum_{n \in \mathbb{Z}^2} \sum_{\mu=0}^1(-1)^{n_0+n_1} \Gamma_{\mu}(x) (\delta_{x,y+Ln+e_{\mu}} - \delta_{x,y+Ln- e_{\mu}} )\\
&= - \frac{1}{2}\sum_{n \in \mathbb{Z}^2} \sum_{\mu=0}^1(-1)^{n_0 +n_1} \Gamma_{\mu}(x-nL)(T_{\mu}-T_{\mu}^*)_{x-nL,y}= -  (\slashed{D}^-_{\Lambda_L})_{x,y}.
\end{aligned}
\end{equation*}

\item[\eqref{PSD2}:] \textbf{Chirality under adjoint.} It is convenient to first check the properties for the Infinite Plane Dirac Operator $\slashed{D}$, whose only non-vanishing entries have different parity, because they are related by a single translation in the $\mu$-th direction:

\begin{equation*}
\begin{aligned}            
\big(\epsilon\slashed{D} \epsilon\big)_{x,y} &= (-1)^{x_1+x_2} \sum_{\mu=0,1} \Gamma_{\mu}(x) (\delta_{x+e_{\mu},y} - \delta_{x-e_{\mu},y}) (-1)^{y_1+y_2} \\
&= - \sum_{\mu=0,1} \Gamma_{\mu}(x) (\delta_{x+e_{\mu},y} - \delta_{x-e_{\mu},y}) = - \slashed{D}_{x,y}. 
\end{aligned}
\end{equation*}

Thus, for the anti-periodic operator:

\begin{equation*}
\big(\epsilon \slashed{D}^-_{\Lambda_L} \epsilon\big)_{x,y} = \sum_{n \in \mathbb{Z}^2} (-1)^{n_0+n_1} \big(\epsilon \slashed{D} \epsilon \big)_{x+nL,y} =  -\sum_{n \in \mathbb{Z}^2} (-1)^{n_0+n_1} \slashed{D}_{x+nL,y}= -\big( \slashed{D}^-_{\Lambda_L} \big)_{x,y}.
\end{equation*}

This shows, after \eqref{PSD1}, that $\epsilon \slashed{D}^-_{\Lambda_L} \epsilon = - \slashed{D}^-_{\Lambda_L}= (\slashed{D}^-_{\Lambda_L})^*$. Besides, the fact that $\e M_{\phi}\e= M_{\phi}^*= M_{\phi}$ is obvious since $\e^2=1$ and $\e M_{\phi}= M_{\phi}\e$.

\item[\eqref{PSD3}:] \textbf{Spectrum of the adjoint.}  Since $\epsilon^2=1$ and $\epsilon^* = \epsilon$, \eqref{PSD2} shows that $(\slashed{D}^-_{\Lambda_L}-M_{\phi})^*$ and  $\slashed{D}^-_{\Lambda_L}-M_{\phi}$ are unitarily equivalent, so, in particular, they share the same spectrum. 

\item[\eqref{PSD4}:] \textbf{Reality of the determinant.} Taking the determinant at both sides of \eqref{PSD2}, we get:

\begin{equation*}
\begin{aligned}
\overline{\det(\slashed{D}^-_{\Lambda_L}-M_{\phi})} = \det\big((\slashed{D}^-_{\Lambda_L}-M_{\phi})^*\big) = \det\big(\epsilon(\slashed{D}^-_{\Lambda_L}-M_{\phi})\epsilon\big) = \cancel{\det(\epsilon)^2}\det(\slashed{D}^-_{\Lambda_L}-M_{\phi}).   
\end{aligned}
\end{equation*}   
\end{enumerate}
\end{proof}

\begin{lemma}[Properties under the Reflection Operator]\label{RSD}
The following identities hold, for every reflection $\vartheta_{\nu}$ on $\L_L$.

\begin{enumerate}
\item  \textbf{\emph{Reflection Covariance of the Dirac Operator:}}
\begin{equation}
\hat{\vartheta}_{\nu}^L \slashed{D}^-_{\Lambda_L}   \hat{\vartheta}_{\nu}^L =  \big(\Gamma_{\nu} \slashed{D}^-_{\Lambda_L}\Gamma_{\nu}  \big)^*, 
\tag{RSD1}
\label{RSD1}
\end{equation}  

where $\Gamma_{\nu}$ denotes the multiplication operator by $\Gamma_{\nu}(x)$.

\item \textbf{\emph{Reflection Covariance :}}
    \begin{equation}
       \slashed{D}^-_{\Lambda_L}- M_{\Theta_{\nu}(\phi)} = \hat{\vartheta}_{\nu}^L \big(\Gamma_{\nu} (\slashed{D}^-_{\Lambda_L} - M_{\phi})\Gamma_{\nu} \big)^* \hat{\vartheta}_{\nu}^L.
        \tag{RSD2}
        \label{RSD2}
    \end{equation}  
        \item \textbf{\emph{Determinant Invariance:}}
        \begin{equation}
        \det(\slashed{D}^-_{\Lambda_L}- M_{\Theta_{\nu}(\phi)}) = \overline{\det(\slashed{D}^-_{\Lambda_L} - M_{\phi}) }= \det(\slashed{D}^-_{\Lambda_L} - M_\phi).
        \tag{RSD3}
        \label{RSD3}
    \end{equation}  
    \end{enumerate}
\end{lemma}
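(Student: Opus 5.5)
We follow the same three steps as in the proof of Lemma \ref{RND}, now with the staggered phases $\Gamma_\nu$ taking the place of the matrices $\gamma_\nu$.

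\textbf{Step 1: the infinite-plane identity.} We first establish
\[
\hat\vartheta_\nu \slashed{D}\hat\vartheta_\nu=-\Gamma_\nu\slashed{D}\Gamma_\nu .
\]
Writing out matrix elements gives
\[
(\hat\vartheta_\nu\slashed D\hat\vartheta_\nu)_{x,y}=\tfrac12\sum_\mu \Gamma_\mu(\vartheta_\nu(x))\big(\delta_{\vartheta_\nu(x)+e_\mu,\vartheta_\nu(y)}-\delta_{\vartheta_\nu(x)-e_\mu,\vartheta_\nu(y)}\big).
\]
The reflection reverses only the direction $\mu=\nu$. Hence the $\mu=\nu$ term acquires a factor $-1$, while the $\mu\neq\nu$ term is unchanged. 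In both terms $\Gamma_\mu(\vartheta_\nu(x))=\Gamma_\mu(x)$: indeed $\Gamma_0\equiv1$, and $\Gamma_1$ depends only on $x_0$. For the canonical cut plane, $x_0\mapsto 1-x_0$ flips the parity of $x_0$, so $\Gamma_1(\vartheta_0 x)=-\Gamma_1(x)$; this sign must be tracked.

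On the other side, $\Gamma_\nu\slashed D\Gamma_\nu$ multiplies the hop $x\to x\pm e_\mu$ by $\Gamma_\nu(x)\Gamma_\nu(x\pm e_\mu)$. This factor equals $1$ for $\mu=\nu$. For $\mu\neq\nu$ it equals $-1$ exactly when $\Gamma_\nu$ changes along $e_\mu$, which happens for $\nu=1,\mu=0$. A short case check on $(\nu,\mu)\in\{0,1\}^2$ then shows that the two sides agree. This sign bookkeeping, including the parity flip of $\Gamma_1$ under $\vartheta_0$, is the main obstacle. If the signs fail to match for one value of $\nu$, we would absorb the discrepancy by a further conjugation with $\epsilon$, using the identity $\epsilon\slashed D\epsilon=-\slashed D$ from the proof of \eqref{PSD2}. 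This matches the staggered phase transformation already used in Section \ref{sec3}.

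\textbf{Step 2: passage to the torus.} As in Lemma \ref{RND}, the sign $\sigma^L_\nu$ in $\hat\vartheta^L_\nu$ combines with the anti-periodicity of the kernel to give
\[
(\hat\vartheta^L_\nu\slashed D^-_{\Lambda_L}\hat\vartheta^L_\nu)_{x,y}=(\slashed D^-_{\Lambda_L})_{\vartheta_\nu(x),\vartheta_\nu(y)} .
\]
Using $\vartheta_\nu(x+nL)=\vartheta_\nu(x)+n'L$ with $(-1)^{n'_0+n'_1}=(-1)^{n_0+n_1}$, we resum the images and apply Step 1 termwise. The phases $\Gamma_\nu$ are $L$-periodic, since $L$ is even, so they pass through the image sum. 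This yields
\[
\hat\vartheta^L_\nu\slashed D^-_{\Lambda_L}\hat\vartheta^L_\nu=-\Gamma_\nu\slashed D^-_{\Lambda_L}\Gamma_\nu=(\Gamma_\nu\slashed D^-_{\Lambda_L}\Gamma_\nu)^* ,
\]
where the last equality uses \eqref{PSD1} and the fact that $\Gamma_\nu$ is real. This is \eqref{RSD1}.

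\textbf{Step 3: conclusion.} From the definitions we have $M_{\Theta_\nu(\phi)}=\hat\vartheta^L_\nu M_\phi\hat\vartheta^L_\nu$ and $(\hat\vartheta^L_\nu)^2=1$. Moreover $\Gamma_\nu M_\phi\Gamma_\nu=M_\phi=M_\phi^*$, because $\Gamma_\nu^2=1$ and $\phi$ is real. Inserting these into the analogue of \eqref{eqPR} gives \eqref{RSD2}.

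Taking determinants in \eqref{RSD2} and using $\det(\hat\vartheta^L_\nu)^2=\det(\Gamma_\nu)^2=1$ together with $\det(A^*)=\overline{\det A}$ gives the first equality in \eqref{RSD3}. The second equality is \eqref{PSD4}.
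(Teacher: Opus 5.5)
Your proposal is correct and follows the paper's proof step for step: the infinite-plane identity $\hat\vartheta_\nu\slashed{D}\hat\vartheta_\nu=-\Gamma_\nu\slashed{D}\Gamma_\nu$, its transfer to the torus through the signs $\sigma^L_\nu$, the anti-periodicity of the kernel and the $L$-periodicity of $\Gamma_\nu$, then \eqref{PSD1}, the conjugation identity for \eqref{RSD2}, and determinants for \eqref{RSD3}. The sign check you flag does close, so you should drop two things: the sentence claiming $\Gamma_\mu(\vartheta_\nu(x))=\Gamma_\mu(x)$ in both terms (false for $\nu=0,\ \mu=1$, as you note yourself), and the $\epsilon$-conjugation fallback (it flips all hops at once, so it would prove a different identity) --- with the parity flip tracked, both sides equal $-\slashed{D}$ for $\nu=0$, and for $\nu=1$ both equal $\slashed{D}$ with only the $\mu=1$ hopping terms sign-reversed.
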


\begin{proof}
\leavevmode
\begin{enumerate}
\item[\eqref{RSD1}.] Let us first analyze the action of the reflection on the Infinite Plane Dirac operator.

\begin{equation*}
\begin{aligned}
\big( \hat{\vartheta}_{\nu} \slashed{D} \hat{\vartheta}_{\nu} \big)_{x,y} &=\slashed{D}_{\vartheta_{\nu}(x),\vartheta_{\nu}(y)} =  \frac{1}{2}\sum_{\mu=0,1}\Gamma_{\mu}(\vartheta_{\nu}(x)) \big(\delta_{\vartheta_{\nu}(x)+e_{\mu},\vartheta_{\nu}(y)} - \delta_{\vartheta_{\nu}(x)-e_{\mu},\vartheta_{\nu}(y)} \big)\\     
&=  \frac{1}{2}\sum_{\mu=0,1}\Gamma_{\mu}(\vartheta_{\nu}(x)) \big(\delta_{x+(-1)^{\delta_{\mu \nu}}e_{\mu},y} - \delta_{x-(-1)^{\delta_{\mu \nu}}e_{\mu},y} \big).
\end{aligned}
\end{equation*}

Using the fact that $\Gamma_{\mu}(\vartheta_{\nu}(x))$ equals $(-1)^{\d_{\mu,\nu}+1}\Gamma_{\mu}(x)$, if $\nu=0$, and $\Gamma_{\mu}(x)$, if $\nu=1$, we find that

\begin{equation*}
\begin{aligned}
\big( \hat{\vartheta}_{\nu} \slashed{D} \hat{\vartheta}_{\nu} \big)_{x,y} &= \frac{1}{2}\sum_{\mu}   \Gamma_{\mu}(x) (\delta_{x+e_{\mu},y}-\delta_{x-e_{\mu},y}) \cdot
\begin{cases}
(-1),&(\nu=0)\\
(-1)^{\d_{\mu,\nu}}, &(\nu=1)
\end{cases}
\; = -\big(\Gamma_{\nu}\slashed{D} \Gamma_{\nu} \big)_{x,y},
\end{aligned}
\end{equation*}

where we also used the fact that, for $y=x\pm e_{\mu}$, $\Gamma_{0}(x)\Gamma_{0}(y)= 1$ and $\Gamma_1(x)\Gamma_1(y)= (-1)^{\d_{\mu,0}}$. Now, for the anti-periodic operator:

\begin{equation}\label{refllap}
\begin{aligned}
\big(\hat{\vartheta}_{\nu}^L \slashed{D}^-_{\Lambda_L} \hat{\vartheta}_{\nu}^L \big)_{x,y} &=  \s_{\nu}^L(x) \s_{\nu}^L(y) (\slashed{D}^-_{\Lambda_L})_{\vartheta_{\nu}^L(x),\vartheta_{\nu}^L(y)}= (\slashed{D}^-_{\Lambda_L})_{\vartheta_{\nu}(x),\vartheta_{\nu}(y)}\\
&=\sum_{n\in\mbb{Z}^2} (-1)^{n_0+n_1} \slashed{D}_{\vartheta_{\nu}(x)+nL, \vartheta_{\nu}(y)}= \sum_{n\in\mbb{Z}^2} (-1)^{n_0+n_1} \slashed{D}_{\vartheta_{\nu}(x+nL), \vartheta_{\nu}(y)}\\
&= -\sum_{n\in\mbb{Z}^2} (-1)^{n_0+n_1} \big(\Gamma_{\nu}\slashed{D} \Gamma_{\nu}\big)_{x+nL,y}= -\Gamma_{\nu}(x) (\slashed{D}^-_{\L_L})_{x,y} \Gamma_{\nu}(y)\\
&= - \big(\Gamma_{\nu}(\slashed{D}^-_{\L_L})\Gamma_{\nu}\big).
\end{aligned}
\end{equation}

Since $(\slashed{D}^-_{\Lambda_L})^* = - \slashed{D}^-_{\Lambda_L}$ and the $\Gamma_{\nu}$ operators are self-adjoint, it follows that
   
\begin{equation*}
\hat{\vartheta}_{\nu}^L \slashed{D}^-_{\Lambda_L} \hat{\vartheta}_{\nu}^L =  \big( \Gamma_{\nu} \slashed{D}^{-}_{\Lambda_L} \Gamma_{\nu}\big)^*.
\end{equation*}

\item[\eqref{RSD2}.] It is clear that 
$$\big(\hat{\vartheta}_{\nu}^L M_{\phi} \hat{\vartheta}_{\nu}^L\big)_{x,y}= \phi_{\vartheta_{\nu}^L(x)} \d_{\vartheta_{\nu}^L(x),\vartheta_{\nu}^L(y)}= (M_{\Theta_{\nu}(\phi)})_{x,y}$$

and also that $M_{\phi} = \Gamma_{\nu} M_{\phi} \Gamma_{\nu}$. Combining these two identities with the self-adjointness of $M_{\phi}$, the fact that $(\hat{\vartheta}_{\nu}^L)^2 = 1$ and with \eqref{RND1}, we get:

\begin{equation*}
\slashed{D}^-_{\Lambda_L} - M_{\Theta_{\nu}(\phi)}= \hat{\vartheta}_{\nu}^L \big(\hat{\vartheta}_{\nu}^L \slashed{D}^-_{\Lambda_L} \hat{\vartheta}_{\nu}^L - M_{\phi}\big) \hat{\vartheta}_{\nu}^L = \hat{\vartheta}_{\nu}^L \big(\Gamma_{\nu} (\slashed{D}^-_{\Lambda_L} - \phi) \Gamma_{\nu}\big)^* \hat{\vartheta}_{\nu}^L.
\end{equation*}

\item[\eqref{RSD3}.] Taking the determinant at both sides of \eqref{RSD2}, and using that $\det (\hat{\vartheta}_{\nu}^L)^2= \det(\Gamma_{\nu}^2)=1$, we get:

\begin{equation*}
\begin{aligned}
\det\big(\slashed{D}^-_{\Lambda_L} -M_{\Theta_{\nu}(\phi)}\big) &=   \det\big( \hat{\vartheta}_{\nu}^L \big(\Gamma_{\nu} (\slashed{D}^-_{\Lambda_L} - M_\phi) \Gamma_{\nu}\big)^* \hat{\vartheta}_{\nu}^L\big) = \det\big(\slashed{D}^-_{\Lambda_L} - M_\phi\big)^* =\overline{\det(\slashed{D}^-_{\Lambda_L}-M_{\phi})}.
\end{aligned}
\end{equation*}

Combining \eqref{RSD3} with \eqref{PSD4}, \eqref{RSD3} follows.
   \end{enumerate}
\end{proof}

\section{Analysis of the Bosonic Potential: Proof of Proposition \ref{const_field}}
\label{app:pot}

This appendix is dedicated to the proof of the main properties, stated in Proposition \ref{const_field}, of the bosonic potential:

\begin{equation*}
V^{\a}_{L,\l}(\phi) = \frac{1}{2\l} \sum_{x\in\L_L^{\a}}\phi_x^2 - \log\big|\det\big( (\slashed{D}_{\a})^-_{\L_L} - M_{\phi} \big) \big|
\end{equation*}

and its infinite volume limit: $v^{\a}_{\l}(\vphi)\doteq \lim_{L\to\infty} \frac{1}{|\L^{\a}_L|} V^{\a}_{\l}(\phi)\big|_{\phi=\vphi^{\L^{\a}_L}}$, with $\vphi\in\mbb{R}$.

\subsection{Proof of Item \ref{itt:1}}
\label{app:pot_1}

Here we shall prove that the potential $V^{\a}_{L,\l}$ is minimized by constant field configurations, namely, given any interval $I\subseteq \mbb{R}$,

\begin{equation}
\label{Minpot_2}
\inf_{\phi\in I^{\Lambda_L^\alpha}}  V^{\alpha}_{L,\lambda}(\phi) = \inf_{\varphi\in I}  V^{\alpha}_{L,\lambda}(\phi)\Big|_{\phi= \vphi^{\L^{\a}_L}}.
\end{equation}

The key tool is again Reflection Positivity, which implies the following lemma.

\begin{lemma}
\label{lemma:reflection}
Given any cut plane of $\L^{\a}_L$, with $\vartheta:(\L^{\a}_L)^{\pm}\mapsto (\L^{\a}_L)^{\mp}$ being the related reflection operator (see Definitions \ref{def:cut_plane} and \ref{def:reflections}), and given any $\phi\in\mbb{R}^{\L^{\a}_L}$, it holds true that

\begin{equation}
\label{eqn:36}
\big(e^{-NV^{\a}_{L,\l}(\phi)} \big)^2 \leq e^{-NV^{\a}_{L,\l}(\phi^+)} e^{-NV^{\a}_{L,\l}(\phi^-)},  
\end{equation}

where $\phi^{\pm}$ is the configuration obtained from $\phi$ by reflecting, respectively, the left/right configuration (with respect to the cut plane) to the right/left:

\begin{equation*}
\phi^{\pm}_x \doteq \begin{cases}
\phi_x \,\,\,\,&x \in (\Lambda^{\alpha}_L)_{\pm}\\
\phi_{\vartheta(x)} &x \in (\Lambda^{\alpha}_L)_{\mp}.
\end{cases}
\end{equation*}
    
\end{lemma}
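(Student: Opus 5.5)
The inequality \eqref{eqn:36} is a pointwise (in $\phi$) Cauchy--Schwarz statement for the reflection-positive pairing, applied to the fermionic representation of the determinant rather than to the bosonic measure. Since $N$ is even and the determinant is real by Lemmas \ref{PND}, \ref{AP221} and \ref{Lemma:Dirac_SP}, we have
\[
e^{-NV^{\a}_{L,\l}(\phi)}=\prod_x e^{-\frac{N}{2\l}\phi_x^2}\,\det\big((\slashed D_\a)^-_{\L_L}-M_\phi\big)^N .
\]
The Gaussian factor splits exactly as
\[
\sum_x\phi_x^2=\tfrac12\Big(\sum_x(\phi^+_x)^2+\sum_x(\phi^-_x)^2\Big),
\]
so it reproduces the right-hand side with equality. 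It therefore suffices to prove
\[
\det(\slashed D-M_\phi)^{2N}\le \det(\slashed D-M_{\phi^+})^N\det(\slashed D-M_{\phi^-})^N .
\]

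To prove this, I would reuse Steps 1--2 of the proof of Theorem \ref{rp} with $\phi$ frozen, that is, without integrating over the bosonic field. Write $\det(\slashed D-M_\phi)^N$ as a Grassmann integral of $e^{K_\a}\prod_x e^{\phi_x(\ov\psi\psi)_x}$. Set $F_+(\phi)\doteq\prod_{x\in(\L^\a_L)_+}e^{\phi_x(\ov\psi\psi)_x}$, and analogously $F_-$ on the other half. By \eqref{H3}, after the staggered phase change,
\[
K_\a=B_\a+\Theta(B_\a)+\sum_\o C^\o_\a\Theta(C^\o_\a).
\]
Define the fermionic sesquilinear form on $(\mc A^\a_{\rm Fer})_+$ by $\langle F,G\rangle\doteq\int d\ov\psi\,d\psi\, e^{K_\a}\,\Theta(F)G$. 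The expansion of the mixed term shows $\langle F,F\rangle=\sum|\cdot|^2\ge0$, exactly as in that proof, so Cauchy--Schwarz gives $|\langle F,G\rangle|^2\le\langle F,F\rangle\langle G,G\rangle$.

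Now choose $G=F_+(\phi)$ and $F$ the even element with $\Theta(F)=F_-(\phi)$, namely $F=\Theta(F_-(\phi))$. Using that $\Theta$ maps the bilinear $\ov\psi_x\psi_x$ to $\ov\psi_{\vartheta x}\psi_{\vartheta x}$ (the phases $\gamma_\nu^2=\Gamma_\nu^2=1$ cancel, and the reflection of $\Theta$ applied to a product of even commuting factors gives no sign) and that $\phi$ is real, we get $F=\prod_{x\in(\L^\a_L)_+}e^{\phi_{\vartheta x}(\ov\psi\psi)_x}$. Then:
\begin{itemize}
\item $\langle F,G\rangle=\det(\slashed D-M_\phi)^N$;
\item $\langle G,G\rangle=\det(\slashed D-M_{\phi^+})^N$, since $\Theta(G)=\prod_{y\in(\L^\a_L)_-}e^{\phi_{\vartheta y}(\ov\psi\psi)_y}$ is exactly the $\phi^+$ weight on the minus side;
\item $\langle F,F\rangle=\det(\slashed D-M_{\phi^-})^N$, by the symmetric computation.
\end{itemize}
Cauchy--Schwarz then yields the claim. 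The Grassmann sign convention that $\langle F,G\rangle$ equals the full integral in the stated order is the one already fixed in the RP proof, where $\Theta(A)A$ is placed under the integral.

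The main obstacle is making sure that the phase redefinition $\psi\to(-1)^{x_0+x_1}\psi$ used for \eqref{H3}, and the gauge transformation for non-canonical cut planes (Remark \ref{rem:holonomy}), do not alter the local factors $e^{\phi_x\ov\psi_x\psi_x}$. Both act diagonally, so $\ov\psi_x\psi_x$ is invariant, and the argument goes through for every cut plane. For SP the reflection also permutes the cell index, but $(\ov\psi\psi)_x$ sums over $a$, so the computation is unchanged.

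The last point to settle is sign positivity: $\langle F,G\rangle$ need not be nonnegative, but it enters squared, and evenness of $N$ makes $\langle F,F\rangle,\langle G,G\rangle\ge0$ consistent with the left-hand side. No absolute value issue therefore arises, since $e^{-NV}$ is defined with $|\det|^N=\det^N$.
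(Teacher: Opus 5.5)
Your proposal is correct and is essentially the paper's proof: you write the determinant as the reflection-positive fermionic pairing of $e^{A}$ and $e^{B}$ (your $G=F_+(\phi)$ and $F=\Theta(F_-(\phi))$), apply Cauchy--Schwarz, and observe that the Gaussian factor splits exactly. The only cosmetic difference is that you work with all $N$ flavors at once. The paper instead applies Cauchy--Schwarz to a single-flavor determinant and then raises the inequality to the $N$-th power, using that the right-hand factors are nonnegative.
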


Before proving Lemma \ref{lemma:reflection}, let us show how it readily implies the validity of \eqref{Minpot_2}. The idea is to prove that, for any given field configuration $\phi\in I^{\L^{\a}_L}$, there exists a uniform configuration, $\phi=\vphi^{\L^{\a}_L}$ for some $\vphi\in I$, with a smaller energy. First of all observe that, since $e^{-N V^{\alpha}_{L,\lambda}(\phi)} \geq 0$ and $e^{-N V^{\alpha}_{L,\lambda}(\phi)}  \to 0^+$ as $\|\phi\|_{\ell^2(\L^{\a}_L)} \to \infty$, we have that

\begin{equation*}
\inf_{\phi\in I^{\Lambda_L^\alpha}}  V^{\alpha}_{L,\lambda}(\phi)= - \frac{1}{N}\log \sup _{\phi\in I^{\Lambda_{L}^\alpha}} e^{-N V^{\a}_{L,\l}(\phi)}= - \frac{1}{N}\log \max _{\phi\in \ov{I}^{\Lambda_L^\alpha}} e^{-N V^{\a}_{L,\l}(\phi)},
\end{equation*}

where $\ov{I}$ denotes the closure of the interval. If we denote by $\tilde\phi$ the maximizer of $e^{-N V^{\a}_{L,\l}}$ over $\ov{I}^{\L^{\a}_L}$, then by Lemma \ref{lemma:reflection} we have that

\begin{equation*}
   \big(e^{-NV^{\a}_{L,\l}(\tilde\phi)}\big)^2 \leq e^{-NV^{\a}_{L,\l}(\tilde\phi^+)} e^{-NV^{\a}_{L,\l}(\tilde\phi^-)} \leq e^{-N V^{\a}_{L,\l}(\tilde\phi^+)} e^{-NV^{\a}_{L,\l}(\tilde\phi_-)},  
\end{equation*}

implying that $e^{-NV^{\a}_{L,\l}(\tilde\phi)} = e^{-N V^{\a}_{L,\l}(\tilde\phi^+)}$. In other words, we have shown that a configuration obtained by reflections from the minimizing one has the same energy as the latter. We can repeat this argument first for all the horizontal and then for all the vertical cutting planes, so that we end up with a constant field configuration $\tilde\varphi^{\L^{\a}_L}$, with $\tilde\vphi\in \ov{I}$, which has the same energy as $\tilde\phi$, thus proving \eqref{Minpot_2}.

\begin{proof}[Proof of Lemma \ref{lemma:reflection}.]
We must employ Reflection Positivity for the fermionic integration. Recall that

\begin{equation*}
e^{-N V^{\alpha}_{L,\lambda}(\phi)} =  \bigg(\prod_{x \in \Lambda^{\alpha}_L} e^{- \frac{N}{2 \lambda} \phi_x^2} \bigg)  \det\big((\slashed{D}_{\alpha})^{-}_{\Lambda_L} - M_\phi\big)^N = \bigg(\prod_{x \in \Lambda^{\alpha}_L} e^{- \frac{N}{2 \lambda} \phi_x^2} \bigg)  \bigg(\int d \overline{\psi} d \psi\, e^{\sum_{x,y} \overline{\psi}_x\big((\slashed{D}_{\alpha})^-_{\Lambda_L}) - M_\phi  \big)_{xy} \psi_y}\bigg)^N.
\end{equation*}

The idea is to think of the determinant $\det\big((\slashed{D}_{\alpha})^{-}_{\Lambda_L} - M_\phi\big)$ as a bilinear product $\left \langle e^A, e^B \right \rangle_{\rho_\alpha}$, where, given two functions $F,G$ over the Grassmann algebra generated by $\overline{\psi},\psi$,

\begin{equation*}
    \left \langle F, G \right \rangle_{\rho_\alpha} \doteq   \int d\rho_{\a}(\overline{\psi},\psi) \big(F\, \Theta(G)\big)(\overline{\psi},\psi),
\end{equation*}

with the reflection operator $\Theta$ as in \eqref{RNF},\eqref{RSF},\eqref{RSFP} and:

\begin{align}
&\label{eqn:34}
A(\overline{\psi},\psi)\doteq \sum_{x \in (\L^{\a}_L)_+} \phi_x \overline{\psi}_x \psi_x, \qquad B(\overline{\psi},\psi) \doteq \sum_{x \in (\L^{\a}_L)_+} \phi_{\vartheta(x)} \overline{\psi}_x \psi_x,\\
&\label{eqn:34b} d \rho_{\alpha}(\overline{\psi}, \psi) \doteq   d\overline{\psi} d \psi \exp\Big\{\sum_{x,y\in\L^{\a}_L} \overline{\psi}_x\big((\slashed{D}_{\alpha})^-_{\Lambda_L}\big)_{x,y} \psi_y \Big\}.
\end{align}

Notice that here $\phi$ has to be understood as a fixed static, external field (versus the dynamical field in the full measure $d \mu_{\alpha}$ as in \eqref{eq:boson_measure}). The Grassmann integration $d \rho_{\alpha}(\overline{\psi}, \psi)$ has already been (implicitly) shown to be reflection positive, in proving Theorem \ref{rp}, so that the sesquilinear form $\left \langle \cdot, \cdot \right \rangle_{\rho_\alpha}$ is actually semi-positive definite. As a consequence, by the Cauchy-Schwartz inequality:

\begin{equation}
\label{eqn:35}
\det\big((\slashed{D}_{\alpha})^-_{\Lambda_L}- M_\phi\big)^2 = \left \langle e^A, e^B \right \rangle_{\rho_\alpha}^2 \leq  \left \langle e^A, e^A \right \rangle_{\rho_\alpha}  \left \langle e^B, e^B \right \rangle_{\rho_\alpha}= \det\big((\slashed{D}_{\alpha})^-_{\Lambda_L}- M_{\phi^+}\big) \det\big((\slashed{D}_{\alpha})^-_{\Lambda_L}- M_{\phi^-}\big),
\end{equation}

with $A$ and $B$ as in \eqref{eqn:34}. Observe that each factor in the right--hand side is non-negative, being of the form $\|e^{A}\|_{\rho_\alpha}^2 \ge 0$.
Therefore, if either of the two determinants on the right--hand side vanishes, the left--hand side. must vanish as well. Furthermore, since the left--hand side. is manifestly non-negative, we also have, for any $N \in \mathbb{N}$,

\begin{equation*}     \det\big((\slashed{D}_{\alpha})^-_{\Lambda_L}- M_\phi\big)^{2N} \leq \det\big((\slashed{D}_{\alpha})^-_{\Lambda_L}- M_{\phi^+}\big)^N \det\big((\slashed{D}_{\alpha})^-_{\Lambda_L}- M_{\phi^-}\big)^N.
\end{equation*}

Finally, by multiplying both sides by $\prod_{x \in \Lambda^{\alpha}_L} e^{-\frac{N}{2 \lambda} \phi^2_x}$ and noting that

\begin{equation*}
\bigg( \prod_{x \in \Lambda^{\alpha}_L} e^{-\frac{N}{2 \lambda} \phi^2_x}\bigg)^2 = \bigg(\prod_{x \in \Lambda^{\alpha}_L} e^{-\frac{N}{2 \lambda} (\phi^+_x)^2}\bigg)\bigg( \prod_{x \in \Lambda^{\alpha}_L} e^{-\frac{N}{2 \lambda} (\phi^-_x)^2} \bigg),
\end{equation*}

we find the validity of \eqref{eqn:36}, hence proving Lemma \ref{lemma:reflection}.
\end{proof}

\begin{remark}
It is interesting to observe that inequality \eqref{eqn:35} implies that $\det\big((\slashed{D}_{\alpha})^-_{\Lambda_L} - M_{\phi}\big)$ is positive whenever the configuration $\phi$ is symmetric under reflection around some cut plane, as it can be written as:

\begin{equation*}
\det\big((\slashed{D}_{\alpha})^-_{\Lambda_L} - M_\phi\big) = \left \langle e^A, e^A \right \rangle_{\rho_{\alpha}} \geq 0,
\end{equation*}

with $A$ as in \eqref{eqn:34} and $\L_L^+$ the right half portion of $\L_L$ determined by the cut plane.
\end{remark}

\subsection{Proof of Item \ref{itt:2}}
\label{app:pot_2}

\paragraph{Computation of the uniform-field potential.} Since $\phi$ is constant, namely $\phi= \vphi^{\L^{\a}_L}$, the spectrum of the operator $(\slashed{D}_{\alpha})^-_{\Lambda_L}- M_{\phi}$ can be explicitly obtained using Fourier transforms. Such operation actually depends on the specific model under consideration, thus we will work out the three cases $\a=\mathrm{NN},\mathrm{SN},\mathrm{SP}$ separately.

\begin{itemize}
\item \textbf{NN model.} The model has a full translational invariance by $\mathbb{Z}^2$, thus the space of allowed momenta with anti-periodic boundary conditions is

\begin{equation*}
(\Lambda^{\mathrm{NN}}_L)^*_{--} \doteq \tfrac{2\pi}{L}\big(\mbb{Z}+ \tfrac{1}{2}\big)^2 \cap (-\pi,\pi ]^2. 
\end{equation*}

The determinant of the finite-volume Dirac operator is computed starting from \eqref{eq:model_NN}:

    \begin{equation*} 
    \begin{aligned}    \det\big((\slashed{D}_{\mathrm{NN}})^{-}_{\Lambda_L}-  M_{\phi}\big)\Big|_{\phi=\vphi^{\L^{\a}_L}} &= \prod_{p \in (\Lambda^{\mathrm{NN}}_L)^*_{--}} \det\big(-i \gamma_0 \sin(p_0) - i \gamma_1 \sin(p_1) - \mathbbl{1}_2 \varphi\big)\\
        &=  \prod_{p \in (\Lambda^{\mathrm{NN}}_L)^*_{--}}  \big(\varphi^2 +\sin^2(p_0) + \sin^2(p_1)\big),
        \end{aligned}
    \end{equation*}

from which:

\begin{equation*}        
\frac{1}{|\L^{\mathrm{NN}}_L|}\log\big|\det\big((\slashed{D}_{\mathrm{NN}})^{-}_{\Lambda_L}-  M_\phi\big)\Big|_{\phi=\vphi^{\L^{\a}_L}}\big| = \frac{1}{L^2}\sum_{p \in (\Lambda^{\mathrm{NN}}_L)^*_{--}} \log(\varphi^2 +\sin^2(p_0) + \sin^2(p_1)).
\end{equation*}

\item \textbf{SN model.} The model has a reduced translational invariance by $(2\mbb{Z})\times\mbb{Z}$. It is therefore convenient to considered an enlarged $2\times1$ fundamental cell. The resulting lattice is drawn in fig. \ref{fig2}.

\begin{figure}[h]
    \centering
\begin{tikzpicture}
\draw[dotted, ->-] (-0.5,-0.5) -- (3.5,-0.5);
\draw[dotted, ->>-] (3.5,-0.5) -- (3.5,3.5);
\draw[dotted, ->-] (-0.5,3.5) -- (3.5,3.5);
\draw[dotted, ->>-] (-0.5,-0.5) -- (-0.5,3.5);
\draw[line width = 0.05 cm] (0, 0) -- (3,0);
\draw[line width = 0.05 cm] (0, 2) -- (3,2);
\draw[line width = 0.05 cm] (3,1) -- (3.5,1);
\draw[line width = 0.05 cm] (-0.5,1) -- (0,1);
\draw[line width = 0.05 cm] (-0.5,3) -- (0,3);
\draw[line width = 0.05 cm] (3,3) -- (3.5,3);
\draw (0,3) -- (3,3);
\draw (0,1) -- (3,1);
\draw (0,-0.5) -- (0,3.5);
\draw (1,-0.5) -- (1,3.5);
\draw (2,-0.5) -- (2,3.5);
\draw (-0.5,2) -- (0,2);
\draw (3,2) -- (3.5,2);
\draw (-0.5,0) -- (0,0);
\draw (3,0) -- (3.5,0);
\draw (3,-0.5) -- (3,3.5);
\draw[fill, fill opacity =0.1] (-0.25,-0.25) -- (0.25,-0.25) -- (0.25, 1.25) -- (-0.25,1.25) -- (-0.25,-0.25);
\begin{scope} [xshift = 1cm]
\draw[fill, fill opacity =0.1] (-0.25,-0.25) -- (0.25,-0.25) -- (0.25, 1.25) -- (-0.25,1.25) -- (-0.25,-0.25);    
\end{scope}
\begin{scope} [xshift = 2cm]
\draw[fill, fill opacity =0.1] (-0.25,-0.25) -- (0.25,-0.25) -- (0.25, 1.25) -- (-0.25,1.25) -- (-0.25,-0.25);    
\end{scope}
\begin{scope} [xshift = 3cm]
\draw[fill, fill opacity =0.1] (-0.25,-0.25) -- (0.25,-0.25) -- (0.25, 1.25) -- (-0.25,1.25) -- (-0.25,-0.25);    
\end{scope}
\begin{scope} [xshift = 3cm, yshift= 2cm]
\draw[fill, fill opacity =0.1] (-0.25,-0.25) -- (0.25,-0.25) -- (0.25, 1.25) -- (-0.25,1.25) -- (-0.25,-0.25);    
\end{scope}
\begin{scope} [xshift = 2cm, yshift= 2cm]
\draw[fill, fill opacity =0.1] (-0.25,-0.25) -- (0.25,-0.25) -- (0.25, 1.25) -- (-0.25,1.25) -- (-0.25,-0.25);    
\end{scope}
\begin{scope} [xshift = 1cm, yshift= 2cm]
\draw[fill, fill opacity =0.1] (-0.25,-0.25) -- (0.25,-0.25) -- (0.25, 1.25) -- (-0.25,1.25) -- (-0.25,-0.25);    
\end{scope}
\begin{scope} [yshift= 2cm]
\draw[fill, fill opacity =0.1] (-0.25,-0.25) -- (0.25,-0.25) -- (0.25, 1.25) -- (-0.25,1.25) -- (-0.25,-0.25);    
\end{scope}
\draw[line width = 0.05 cm] (0,-0.5) -- (0,0);
\draw[line width = 0.05 cm] (1,-0.5) -- (1,0);
\draw[line width = 0.05 cm] (2,-0.5) -- (2,0);
\draw[line width = 0.05 cm] (3,-0.5) -- (3,0);

\draw[line width = 0.05 cm] (0,3.5) -- (0,3);
\draw[line width = 0.05 cm] (1,3.5) -- (1,3);
\draw[line width = 0.05 cm] (2,3.5) -- (2,3);
\draw[line width = 0.05 cm] (3,3.5) -- (3,3);
\draw[->] (-1,-0.5) -- (-1,3.5);
\node[above] (a) at (-1,3.5) {{$x_0$}};
\draw[->] (0,0.5) -- (1,0.5);
\draw[->] (0,0.5) -- (0,2.5);
\end{tikzpicture}
\caption{Fundamental Cells: In absence of twistings of the boundary conditions one gets a manifestly translation invariant system.}
\label{fig2}
\end{figure}

The set of allowed momenta is given by the dual of this reduced lattice:

\begin{equation*}
(\Lambda^{\mathrm{SN}}_L)^*_{--}= \tfrac{2\pi}{L} \big(\mbb{Z}+\tfrac{1}{2}\big)^2 \cap \Big(\big(-\tfrac{\pi}{2},\tfrac{\pi}{2}\big]\times(-\pi,\pi] \Big).
\end{equation*}

Again the determinant can be computed starting from \eqref{eq:model_SN}:

\begin{equation*} 
    \begin{aligned}    \det\big((\slashed{D}_{\mathrm{SN}})^{-}_{\Lambda_L}- M_\phi\big)\Big|_{\phi=\vphi^{\L^{\a}_L}} &= \prod_{p \in (\Lambda^{\mathrm{SN}}_L)^*_{--}} \det \begin{pmatrix}
        i\sin p_1 -\vphi & \frac{1}{2}(e^{2ip_0}-1)\\
        -\frac{1}{2}(e^{-2ip_0}-1) & -i\sin p_1 - \vphi
    \end{pmatrix} \\
        &=  \prod_{p \in (\Lambda^{\mathrm{SN}}_L)^*_{--}}  \big(\varphi^2 +\sin^2(p_0) + \sin^2(p_1)\big).
        \end{aligned}
    \end{equation*}

    Thus:
 \begin{equation*}        
 \frac{1}{|\L^{\mathrm{SN}}_L|} \log \big|\det\big((\slashed{D}_{\mathrm{SN}})^{-}_{\Lambda_L}- M_\phi\big)\Big|_{\phi=\vphi^{\L^{\a}_L}}\big| = \frac{1}{L^2}\sum_{p \in (\Lambda^{\mathrm{SN}}_L)^*_{--}} \log(\varphi^2 +\sin^2(p_0) + \sin^2(p_1)).
\end{equation*}

\item \textbf{SP model.} The analysis is identical to the one for the model SN.  Indeed, when the field $\phi$ is constant, the models SN and SP are actually the same, as well as the set of momenta: $(\L^{\mathrm{SP}}_L)^*_{--}= (\L^{\mathrm{SN}}_L)^*_{--}$. The only difference comes from the fact that $|\L^{\mathrm{SP}}_L|= \frac{1}{4}|\L^{\mathrm{SN}}_L|$, so that

 \begin{equation*}        
 \frac{1}{|\L^{\mathrm{SP}}_L|} \log \big|\det\big((\slashed{D}_{\mathrm{SP}})^{-}_{\Lambda_L}- M_\phi\big)\Big|_{\phi=\vphi^{\L^{\a}_L}}\big| = \frac{4}{L^2}\sum_{p \in (\Lambda^{\mathrm{SP}}_L)^*_{--}} \log(\varphi^2 +\sin^2(p_0) + \sin^2(p_1)).
\end{equation*}
\end{itemize}

Notice that, due to the periodicity of $\hat{f}(p;\vphi) \doteq \log\big(\vphi^2+ \sin^2(p_0)+ \sin^2(p_1)\big)$ with respect to translations of $\pm\pi$ in both directions, it is possible to rewrite:

\begin{equation}
\label{eqn:37}
\begin{split}
\frac{1}{|\L^{\a}_L|}V^{\a}_{L,\l}(\vphi^{\L^{\a}_L})&= \frac{\vphi^2}{2\l}-
\frac{1}{|\L^{\a}_L|}\sum_{p\in(\L_L^{\a})^*} \hat{f}(p;\vphi)= \frac{\vphi^2}{2\l}- \frac{C_{\a}}{L^2}\sum_{p\in (\L^{\mathrm{NN}}_L)^*} \hat{f}(p;\vphi)\\
&= \frac{\vphi^2}{2\l}- \frac{4C_{\a}}{L^2} \sum_{p\in \frac{2\pi}{L}(\mbb{Z}+\frac{1}{2})^2\cap (-\frac{\pi}{2},\frac{\pi}{2})^2} \hat{f}(p;\vphi),
\end{split}\end{equation}

where $C_{\a}=1,\frac{1}{2},2$ for $\a=\mathrm{NN},\mathrm{SN},\mathrm{SP}$ respectively. 

\paragraph{Convergence to the infinite-volume limit.} By standard arguments (e.g. the Dominated Convergence Theorem), we have the convergence, in the limit $L\to\infty$, of the Riemann sum  in \eqref{eqn:37} to the respective integral $C_{\a}\iint_{(-\pi,\pi]^2} \frac{d^2p}{(2\pi)^2} \hat{f}(p;\vphi)$, so that 

\begin{equation}
\label{eqn:38}
\begin{split}
v^{\a}_{\l}(\vphi) \equiv \lim_{L\to\infty}\frac{1}{|\L^{\a}_L|} V^{\a}_{L,\l}(\vphi^{\L^{\a}_L})&= \frac{\vphi^2}{2\l}- C_{\a}\iint_{(-\pi,\pi]^2} \frac{d^2p}{(2\pi)^2} \hat{f}(p;\vphi)\\
&= \frac{\vphi^2}{2\l}- 4C_{\a}\iint_{(-\frac{\pi}{2},\frac{\pi}{2}]^2} \frac{d^2p}{(2\pi)^2} \hat{f}(p;\vphi).
\end{split}
\end{equation}

The representation in the second line of \eqref{eqn:38} is convenient because in the domain $(-\frac{\pi}{2},\frac{\pi}{2}]^2$, the integrand $\hat{f}(p;\vphi)$ has only one singularity at $p=(0,0)$, if $\vphi=0$. Observe also that the Riemann sum in the right--hand side of \eqref{eqn:37} can be rewritten as follows:

\begin{equation}
\frac{1}{L^2}\sum_{p\in \frac{2\pi}{L}(\mbb{Z}+\frac{1}{2})^2\cap (-\frac{\pi}{2},\frac{\pi}{2})^2} \hat{f}(p;\vphi)=\iint_{(-\frac{\pi}{2},\frac{\pi}{2})^2} \frac{d^2p}{(2\pi)^2} \hat{f}\big(\pi_L(p);\vphi\big),
\end{equation}

where $\pi_L: \big(-\tfrac{\pi}{2},\tfrac{\pi}{2}\big)^2\mapsto \big(-\tfrac{\pi}{2},\tfrac{\pi}{2}\big)^2 \cap \tfrac{2\pi}{L}\big(\mbb{Z}+\tfrac{1}{2}\big)^2$ is the projection defined by the following condition (see Fig. \ref{figproj}):

\[\text{if}\, p\in \frac{2\pi}{L}\big([n_0,n_0+1)\times [n_1,n_1+1)\big)\text{, then}\, \pi_L(p)= \frac{2\pi}{L}\big(n_0+\frac{1}{2},n_1+\frac{1}{2}\big)\]

. \\

\begin{figure}[h]
    \centering
\begin{tikzpicture}
\draw[dotted, ->-] (-3,-3) -- (3,-3);
\draw[dotted, ->>-] (3,-3) -- (3,3);
\draw[dotted, ->-] (-3,3) -- (3,3);
\draw[dotted, ->>-] (-3,-3) -- (-3,3);
\node[below] at (-3,-3.01) {{$-\frac{\pi}{2}$}};
\node[below] at (3,-3.01) {{$\frac{\pi}{2}$}};
\node[left] at (-2.99,3) {{$\frac{\pi}{2}$}};
\draw[dotted] (-2.5,-3) -- (-2.5,3); 
\draw[dotted] (-1.5,-3) -- (-1.5,3);
\draw[dotted] (-0.5,-3) -- (-0.5,3); 
\draw[dotted] (0.5,-3) -- (0.5,3); 
\draw[dotted] (1.5,-3) -- (1.5,3);
\draw[dotted] (2.5,-3) -- (2.5,3); 
\draw[dotted] (-3,-2.5) -- (3,-2.5); 
\draw[dotted] (-3,-1.5) -- (3,-1.5); 
\draw[dotted] (-3,-0.5) -- (3,-0.5);
\draw[dotted] (-3,0.5) -- (3,0.5); 
\draw[dotted] (-3,1.5) -- (3,1.5);
\draw[dotted] (-3,2.5) -- (3,2.5); 
\begin{scope}[xshift= 0cm]
\draw[] (-3,-3) -- (-2,-3) -- (-2,-2)--(-3,-2) -- (-3,-3);
\end{scope}
\begin{scope}[xshift= 1cm]
\draw[] (-3,-3) -- (-2,-3) -- (-2,-2)--(-3,-2) -- (-3,-3);    
\end{scope}
\begin{scope}[xshift= 2cm]
\draw[] (-3,-3) -- (-2,-3) -- (-2,-2)--(-3,-2) -- (-3,-3);    
\end{scope}
\begin{scope}[xshift= 3cm]
\draw[] (-3,-3) -- (-2,-3) -- (-2,-2)--(-3,-2) -- (-3,-3);    
\end{scope}
\begin{scope}[xshift= 4cm]
\draw[] (-3,-3) -- (-2,-3) -- (-2,-2)--(-3,-2) -- (-3,-3);    
\end{scope}
\begin{scope}[xshift= 5cm]
\draw[] (-3,-3) -- (-2,-3) -- (-2,-2)--(-3,-2) -- (-3,-3);    
\end{scope}
\begin{scope}[xshift= 0cm, yshift=1cm]
\draw[] (-3,-3) -- (-2,-3) -- (-2,-2)--(-3,-2) -- (-3,-3);
\end{scope}
\begin{scope}[xshift= 1cm, yshift=1cm]
\draw[] (-3,-3) -- (-2,-3) -- (-2,-2)--(-3,-2) -- (-3,-3);    
\end{scope}
\begin{scope}[xshift= 2cm, yshift=1cm]
\draw[] (-3,-3) -- (-2,-3) -- (-2,-2)--(-3,-2) -- (-3,-3);    
\end{scope}
\begin{scope}[xshift= 3cm, yshift=1cm]
\draw[] (-3,-3) -- (-2,-3) -- (-2,-2)--(-3,-2) -- (-3,-3);    
\end{scope}
\begin{scope}[xshift= 4cm, yshift=1cm]
\draw[] (-3,-3) -- (-2,-3) -- (-2,-2)--(-3,-2) -- (-3,-3);    
\end{scope}
\begin{scope}[xshift= 5cm, yshift=1cm]
\draw[] (-3,-3) -- (-2,-3) -- (-2,-2)--(-3,-2) -- (-3,-3);    
\end{scope}

\begin{scope}[yshift=1cm]
    \begin{scope}[xshift= 0cm, yshift=1cm]
\draw[] (-3,-3) -- (-2,-3) -- (-2,-2)--(-3,-2) -- (-3,-3);
\end{scope}
\begin{scope}[xshift= 1cm, yshift=1cm]
\draw[] (-3,-3) -- (-2,-3) -- (-2,-2)--(-3,-2) -- (-3,-3);    
\end{scope}
\begin{scope}[xshift= 2cm, yshift=1cm]
\draw[] (-3,-3) -- (-2,-3) -- (-2,-2)--(-3,-2) -- (-3,-3);    
\end{scope}
\begin{scope}[xshift= 3cm, yshift=1cm]
\draw[] (-3,-3) -- (-2,-3) -- (-2,-2)--(-3,-2) -- (-3,-3);    
\end{scope}
\begin{scope}[xshift= 4cm, yshift=1cm]
\draw[] (-3,-3) -- (-2,-3) -- (-2,-2)--(-3,-2) -- (-3,-3);    
\end{scope}
\begin{scope}[xshift= 5cm, yshift=1cm]
\draw[] (-3,-3) -- (-2,-3) -- (-2,-2)--(-3,-2) -- (-3,-3);    
\end{scope}
\end{scope}
\begin{scope}[yshift=2cm]
    \begin{scope}[xshift= 0cm, yshift=1cm]
\draw[] (-3,-3) -- (-2,-3) -- (-2,-2)--(-3,-2) -- (-3,-3);
\end{scope}
\begin{scope}[xshift= 1cm, yshift=1cm]
\draw[] (-3,-3) -- (-2,-3) -- (-2,-2)--(-3,-2) -- (-3,-3);    
\end{scope}
\begin{scope}[xshift= 2cm, yshift=1cm]
\draw[] (-3,-3) -- (-2,-3) -- (-2,-2)--(-3,-2) -- (-3,-3);    
\end{scope}
\begin{scope}[xshift= 3cm, yshift=1cm]
\draw[] (-3,-3) -- (-2,-3) -- (-2,-2)--(-3,-2) -- (-3,-3);    
\end{scope}
\begin{scope}[xshift= 4cm, yshift=1cm]
\draw[] (-3,-3) -- (-2,-3) -- (-2,-2)--(-3,-2) -- (-3,-3);    
\end{scope}
\begin{scope}[xshift= 5cm, yshift=1cm]
\draw[] (-3,-3) -- (-2,-3) -- (-2,-2)--(-3,-2) -- (-3,-3);    
\end{scope}
\end{scope}
\begin{scope}[yshift=3cm]
    \begin{scope}[xshift= 0cm, yshift=1cm]
\draw[] (-3,-3) -- (-2,-3) -- (-2,-2)--(-3,-2) -- (-3,-3);
\end{scope}
\begin{scope}[xshift= 1cm, yshift=1cm]
\draw[] (-3,-3) -- (-2,-3) -- (-2,-2)--(-3,-2) -- (-3,-3);    
\end{scope}
\begin{scope}[xshift= 2cm, yshift=1cm]
\draw[] (-3,-3) -- (-2,-3) -- (-2,-2)--(-3,-2) -- (-3,-3);    
\end{scope}
\begin{scope}[xshift= 3cm, yshift=1cm]
\draw[] (-3,-3) -- (-2,-3) -- (-2,-2)--(-3,-2) -- (-3,-3);    
\end{scope}
\begin{scope}[xshift= 4cm, yshift=1cm]
\draw[] (-3,-3) -- (-2,-3) -- (-2,-2)--(-3,-2) -- (-3,-3);    
\end{scope}
\begin{scope}[xshift= 5cm, yshift=1cm]
\draw[] (-3,-3) -- (-2,-3) -- (-2,-2)--(-3,-2) -- (-3,-3);    
\end{scope}
\end{scope}
\begin{scope}[yshift=4cm]
    \begin{scope}[xshift= 0cm, yshift=1cm]
\draw[] (-3,-3) -- (-2,-3) -- (-2,-2)--(-3,-2) -- (-3,-3);
\end{scope}
\begin{scope}[xshift= 1cm, yshift=1cm]
\draw[] (-3,-3) -- (-2,-3) -- (-2,-2)--(-3,-2) -- (-3,-3);    
\end{scope}
\begin{scope}[xshift= 2cm, yshift=1cm]
\draw[] (-3,-3) -- (-2,-3) -- (-2,-2)--(-3,-2) -- (-3,-3);    
\end{scope}
\begin{scope}[xshift= 3cm, yshift=1cm]
\draw[] (-3,-3) -- (-2,-3) -- (-2,-2)--(-3,-2) -- (-3,-3);    
\end{scope}
\begin{scope}[xshift= 4cm, yshift=1cm]
\draw[] (-3,-3) -- (-2,-3) -- (-2,-2)--(-3,-2) -- (-3,-3);    
\end{scope}
\begin{scope}[xshift= 5cm, yshift=1cm]
\draw[] (-3,-3) -- (-2,-3) -- (-2,-2)--(-3,-2) -- (-3,-3);    
\end{scope}
\end{scope}

\draw[->] (3.5,0) -- (5.5,0);
\node[above] at (4.5,0) {{$\pi_L$}};

\begin{scope}[xshift= 9cm]
    \draw[dotted, ->-] (-3,-3) -- (3,-3);
\draw[dotted, ->>-] (3,-3) -- (3,3);
\draw[dotted, ->-] (-3,3) -- (3,3);
\draw[dotted, ->>-] (-3,-3) -- (-3,3);
\node[below] at (-3,-3.01) {{$-\frac{\pi}{2}$}};
\node[below] at (3,-3.01) {{$\frac{\pi}{2}$}};
\node[left] at (-2.99,3) {{$\frac{\pi}{2}$}};
\fill (-2.5,-2.5) circle (1pt);
\begin{scope}[xshift=1cm]
    \fill (-2.5,-2.5) circle (1pt);
\end{scope}
\begin{scope}[xshift=1cm]
    \fill (-1.5,-2.5) circle (1pt);
\end{scope}
\begin{scope}[xshift=1cm]
    \fill (-0.5,-2.5) circle (1pt);
\end{scope}
\begin{scope}[xshift=1cm]
    \fill (0.5,-2.5) circle (1pt);
\end{scope}
\begin{scope}[xshift=1cm]
    \fill (1.5,-2.5) circle (1pt);
\end{scope}

\begin{scope}[yshift=1cm]
    \fill (-2.5,-2.5) circle (1pt);
\begin{scope}[xshift=1cm]
    \fill (-2.5,-2.5) circle (1pt);
\end{scope}
\begin{scope}[xshift=1cm]
    \fill (-1.5,-2.5) circle (1pt);
\end{scope}
\begin{scope}[xshift=1cm]
    \fill (-0.5,-2.5) circle (1pt);
\end{scope}
\begin{scope}[xshift=1cm]
    \fill (0.5,-2.5) circle (1pt);
\end{scope}
\begin{scope}[xshift=1cm]
    \fill (1.5,-2.5) circle (1pt);
\end{scope}
\end{scope}
\begin{scope}[yshift=2cm]
    \fill (-2.5,-2.5) circle (1pt);
\begin{scope}[xshift=1cm]
    \fill (-2.5,-2.5) circle (1pt);
\end{scope}
\begin{scope}[xshift=1cm]
    \fill (-1.5,-2.5) circle (1pt);
\end{scope}
\begin{scope}[xshift=1cm]
    \fill (-0.5,-2.5) circle (1pt);
\end{scope}
\begin{scope}[xshift=1cm]
    \fill (0.5,-2.5) circle (1pt);
\end{scope}
\begin{scope}[xshift=1cm]
    \fill (1.5,-2.5) circle (1pt);
\end{scope}
\end{scope}
\begin{scope}[yshift=3cm]
    \fill (-2.5,-2.5) circle (1pt);
\begin{scope}[xshift=1cm]
    \fill (-2.5,-2.5) circle (1pt);
\end{scope}
\begin{scope}[xshift=1cm]
    \fill (-1.5,-2.5) circle (1pt);
\end{scope}
\begin{scope}[xshift=1cm]
    \fill (-0.5,-2.5) circle (1pt);
\end{scope}
\begin{scope}[xshift=1cm]
    \fill (0.5,-2.5) circle (1pt);
\end{scope}
\begin{scope}[xshift=1cm]
    \fill (1.5,-2.5) circle (1pt);
\end{scope}
\end{scope}
\begin{scope}[yshift=4cm]
    \fill (-2.5,-2.5) circle (1pt);
\begin{scope}[xshift=1cm]
    \fill (-2.5,-2.5) circle (1pt);
\end{scope}
\begin{scope}[xshift=1cm]
    \fill (-1.5,-2.5) circle (1pt);
\end{scope}
\begin{scope}[xshift=1cm]
    \fill (-0.5,-2.5) circle (1pt);
\end{scope}
\begin{scope}[xshift=1cm]
    \fill (0.5,-2.5) circle (1pt);
\end{scope}
\begin{scope}[xshift=1cm]
    \fill (1.5,-2.5) circle (1pt);
\end{scope}
\end{scope}
\begin{scope}[yshift=5cm]
    \fill (-2.5,-2.5) circle (1pt);
\begin{scope}[xshift=1cm]
    \fill (-2.5,-2.5) circle (1pt);
\end{scope}
\begin{scope}[xshift=1cm]
    \fill (-1.5,-2.5) circle (1pt);
\end{scope}
\begin{scope}[xshift=1cm]
    \fill (-0.5,-2.5) circle (1pt);
\end{scope}
\begin{scope}[xshift=1cm]
    \fill (0.5,-2.5) circle (1pt);
\end{scope}
\begin{scope}[xshift=1cm]
    \fill (1.5,-2.5) circle (1pt);
\end{scope}
\end{scope}

\end{scope}

\end{tikzpicture}
\caption{Picture of the action of the projector $\pi_L$: each cell is sent to its center}
\label{figproj}
\end{figure}

In this way:

\begin{equation}
\label{eqn:31}
\begin{split}
\left|\frac{1}{|\L_L^{\a}|}V^{\a}_{L,\l}(\vphi^{\L^{\a}_L})- v^{\a}_{\l}(\vphi) \right|&\le 4C_{\a}\iint_{(-\frac{\pi}{2},\frac{\pi}{2}]^2} \frac{d^2p}{(2\pi)^2} \Big|\hat{f}\big(\pi_L(p);\vphi\big)- \hat{f}(p;\vphi)\Big|.
\end{split}\end{equation}

In order to proceed, observe that for any $p\in (-\frac{\pi}{2},\frac{\pi}{2})^2$ and $\mu\in\{0,1\}$, $|\pi_L(p)_{\mu}|\ge \frac{1}{2}|p_{\mu}|$. Note also that for any $p\in (-\frac{\pi}{2},\frac{\pi}{2})^2$, we can write:

\begin{equation*}
\begin{split}
\Big|\hat{f}\big(\pi_L(p);\vphi\big)- \hat{f}(p;\vphi)\Big|&\le \sum_{\mu=0,1}\big|\pi_L(p)_{\mu}-p_{\mu}\big| \Bigg|\int_0^1 ds \left[\tfrac{2\sin(q_{\mu})\cos(q_{\mu})}{\vphi^2+ \sin^2(q_0)+ \sin^2(q_1)} \right]_{q= s\pi_L(p)+ (1-s)p} \Bigg|\\
&\le \sum_{\mu=0,1}\big|\pi_L(p)_{\mu}-p_{\mu}\big| \int_0^1 ds \left[\tfrac{2}{\sqrt{\sin^2(q_0)+ \sin^2(q_1)}} \right]_{q= s\pi_L(p)+ (1-s)p}\\
&\le \frac{4\pi}{L}\tfrac{2}{\sqrt{\sin^2(\frac{1}{2}p_0)+ \sin^2(\frac{1}{2}p_1)}} \le \frac{8\pi^2}{L|p|},
\end{split}
\end{equation*}

where we also used that for $|t|\le\frac{\pi}{2}$, $|\sin t|\ge \frac{2}{\pi}|t|$. Therefore from \eqref{eqn:31} we find:

\begin{equation*}
\begin{split}
&\left|\frac{1}{|\L_L|} V^{\alpha}_{L,\lambda}(\vphi^{\L^{\a}_L})- v^{\a}_{\l}(\vphi) \right|\le 8 \iint_{(-\frac{\pi}{2},\frac{\pi}{2}]^2} \frac{d^2p}{(2\pi)^2} \frac{8\pi^2}{L|p|}\le \frac{32\pi^2}{L},
\end{split}\end{equation*}

concluding the proof of Item \ref{itt:2}.

\subsection{Proof of Item \ref{itt:3}}
\label{app:pot_3}

\paragraph{Existence of non-zero minima.} It is clear by inspection that, for any $\l>0$, the mean-field potential $v^{\a}_{\l}$, as a function of $\vphi$, is even and $C^1$ over $\mbb{R}$. In order to find its minima it is convenient to analyze its derivative:

\begin{equation}
\frac{d v^{\a}_{\l}}{d\vphi}(\vphi)= \vphi\left( \frac{1}{\l}- 4 C_{\a} \iint_{(-\frac{\pi}{2},\frac{\pi}{2}]^2} \frac{d^2p}{(2\pi)^2} \frac{1}{\vphi^2+ \sin^2p_0+ \sin^2 p_1}\right) \doteq \vphi g^{\a}_{\l}(\vphi).
\end{equation}

We see that $v^{\a}_{\l}$ has a local maximum at $\vphi=0$. Moreover, as it is easy to check, for $\vphi>0$ the function $g^{\a}_{\l}$ is monotone increasing and $\lim_{\vphi\to+\infty}g^{\a}_{\l}(\vphi)=0^+$. Besides, $\lim_{\vphi\to 0^+}g^{\a}_{\l}(\vphi)=-\infty$, indeed:

\begin{equation}
\label{eqn:39}
g^{\a}_{\l}(\vphi)\le \frac{1}{\l}- 4 C_{\a} \iint_{(-\frac{\pi}{2},\frac{\pi}{2}]^2} \frac{d^2p}{(2\pi)^2} \frac{1}{\vphi^2+ |p|^2} \le \frac{1}{\l}- 2 \iint_{|p|\le 1} \frac{d^2p}{(2\pi)^2} \frac{1}{\vphi^2+ |p|^2}= \frac{1}{\l}- \frac{1}{2\pi}\log\big(1+ \tfrac{1}{\vphi^2} \big).
\end{equation}

Since $g^{\a}_{\l}$ is also obviously continuous over $(0,\infty)$, it follows that there exists a unique value $\vphi^{\star}_{\a}(\l)\in (0,\infty)$ such that $g^{\a}_{\l}\big(\vphi^{\star}_{\a}(\l)\big)=0$. Such point is of course an absolute minimum for $v^{\a}_{\l}$.

\paragraph{Regularity of the minima.} As we just argued, the minimum $\vphi^{\star}_{\a}(\l)$ can be interpreted as the unique positive solution to the \emph{gap equation}:

\begin{equation}
\label{eqn:40}
g^{\a}_{\l}(\vphi)=0, \qquad \vphi\in(0,\infty).
\end{equation}

The solution of such equation can be characterized by the Implicit Function Theorem. The map $(\l,\vphi)\mapsto g^{\a}_{\l}(\vphi)$ is clearly $C^{\infty}$ over $(0,\infty)^2$. Besides, since the function

\begin{equation*}
\frac{\de g^{\a}_{\l}}{\de\vphi}(\vphi)= -8 C_{\a}\vphi \iint_{(-\frac{\pi}{2},\frac{\pi}{2}]^2} \frac{d^2p}{(2\pi)^2} \frac{1}{\big(\vphi^2+ \sin^2p_0+ \sin^2 p_1\big)^2}
\end{equation*}

is never zero, we can apply the (global version of the) Implicit Function Theorem, which implies that the (unique) solution $\vphi^{\star}_{\a}(\l)$ to \eqref{eqn:40} is actually $C^{\infty}$ with respect to $\l\in(0,\infty)$.

\paragraph{Exponential smallness of the minima.} Finally we prove \eqref{eqn:41}. Using that $g^{\a}_{\l}\big(\vphi^{\star}_{\a}(\l)\big)=0$, from \eqref{eqn:39} we have that $0\le \frac{1}{\l}- \frac{1}{2\pi}\log\big(1+ \vphi^{\star}_{\a}(\l)^{-2} \big)$, which implies that $\vphi^{\star}_{\a}(\l) \ge e^{-\frac{\pi}{\l}}$. In order to provide an upper bound, we observe that

\begin{equation*}
\begin{split}
g^{\a}_{\l}(\vphi)&= \frac{1}{\l}- 4 C_{\a} \iint_{(-\frac{\pi}{2},\frac{\pi}{2}]^2} \frac{d^2p}{(2\pi)^2} \frac{1}{\vphi^2+ \sin^2p_0+ \sin^2 p_1}\\
&\ge \frac{1}{\l}- 8 \iint_{|p|\le \pi} \frac{d^2p}{(2\pi)^2} \frac{1}{\vphi^2+ \frac{4}{\pi^2}|p|^2}= \frac{1}{\l}- \frac{\pi}{2}\log\big(1+ \tfrac{2}{\vphi^2}\big),
\end{split}
\end{equation*}

where again we used the fact that $|\sin t|\ge \frac{2}{\pi}|t|$ for $|t|\le\frac{\pi}{2}$. The condition $g^{\a}_{\l}\big(\vphi^{\star}_{\a}(\l) \big)=0$ then yields, for $\l\le 1$,

\begin{equation*}
\frac{2}{\vphi^{\star}_{\a}(\l)^2}\ge e^{\frac{2}{\pi\l}} -1 \ge e^{\frac{2}{\pi\l}}\big(1- e^{-\frac{2}{\pi}} \big) \Rightarrow \vphi^{\star}_{\a}(\l)\le \sqrt{\frac{2}{1- e^{-\frac{2}{\pi}}}} e^{-\frac{1}{\pi\l}}.
\end{equation*}

\section{Formal Continuum Limit of the Staggered Plaquette Model}
\label{app:continuum_sp}

In this appendix we derive the formal continuum limit of the Staggered Plaquette model and identify the corresponding continuum Gross--Neveu theory. We will show that:

\begin{enumerate}
\item after a suitable change of fermionic variables, the non-interacting lattice theory converges to one describing $N_{\mathrm{eff}}\equiv 2N$ non-interacting Dirac fermions;
\item the plaquette interaction becomes the standard Gross--Neveu quartic interaction.
\end{enumerate}

\paragraph{Lattice Action.}

We consider a two-dimensional lattice of linear size $L$ and mesh $\ell>0$,
$\Lambda_{\ell,L}=\ell\mbb{Z}^2\cap \big(-\tfrac{L}{2},\tfrac{L}{2} \big]^2$, with
$L/\ell \in 4\mathbb N$. We partition $\Lambda_{\ell,L}$ into $2\times2$ plaquettes and introduce the coarse lattice: $\widetilde\Lambda_{\ell,L}
=
2\ell \mbb{Z}^2\cap\big(-\tfrac{L}{2},\tfrac{L}{2}\big]^2$ and we consider the set of internal plaquette sites $\mathcal I=\{A,B,C,D\}$, so that we can identify $\L_{\ell,L}$ with $\widetilde\Lambda_{\ell,L}
\times
\mathcal I$ (see Fig. \ref{figstaglat1}). The action of the Staggered Plaquette model reads $S_{\mathrm{SP}}^{\ell,L}
=
S_0^{\ell,L}
+
S_{\mathrm{int}}^{\ell,L}$, where

\begin{align}
&S_0^{\ell,L}\doteq \ell^2 \sum_{x,y \in \wt{\L}_{\ell,L}}\sum_{j=1}^N\sum_{a,b\in \mc{I}} \ov{\psi}_{x,a,j}\big((\slashed{D}_{\mathrm{SP}})^-_{\L_{\ell,L}}\big)_{(x,a),(y,b)} \psi_{y,b,j},\\
&\label{eq:sp_interaction}
S_{\mathrm{int}}^{\ell,L}
\doteq
-\frac{\lambda(2\ell)^2}{8N}
\sum_{x\in\widetilde\Lambda_{\ell,L}}
\Bigg(
\sum_{a\in\mathcal I}
\sum_{j=1}^N
\ov{\psi}_{x,a,j}\psi_{x,a,j}
\Bigg)^2
\end{align}

and $\big((\slashed{D}_{\mathrm{SP}})^-_{\L_{\ell,L}}\big)_{x,y} \doteq \ell^{-1} \big((\slashed{D}_{\mathrm{SP}})^-_{\L_{L/\ell}}\big)_{x/\ell,y/\ell}$, with the latter defined in \eqref{eqn:51}. 

\paragraph{Fourier Representation.}

We define, for $p\in (\widetilde\Lambda_{\ell,L})^*_{--}
\doteq
\frac{2\pi}{L}
\Big(\mathbb Z+\tfrac12\Big)^2
\cap
\Big(
-\frac{\pi}{2\ell},
\frac{\pi}{2\ell}
\Big)^2$,

\[
\hat{\psi}^-_{p,a,j} \doteq (2\ell)^2\sum_{x\in\wt{\L}_{\ell,L}} e^{ip\cdot x} \psi_{x,a,j}, \qquad \hat{\psi}^+_{p,a,j} \doteq \ell^2\sum_{x\in\wt{\L}_{\ell,L}} e^{-ip\cdot x} \overline\psi_{x,a,j},
\]

so that

\[
\psi_{x,a,j}
=
\frac1{L^2}
\sum_{p\in(\widetilde\Lambda_{\ell,L})^*_{--}}
e^{-ipx}
\hat{\psi}^-_{p,a,j}, \qquad
\overline{\psi}_{x,a,j}
=
\frac1{L^2}
\sum_{p\in(\widetilde\Lambda_{\ell,L})^*_{--}}
e^{ipx}
\hat{\psi}^+_{p,a,j}.
\]

By introducing the four-component spinor:

\[
\hat{\Psi}^-_{p,j}
=
\begin{pmatrix}
\hat\psi^-_{p,A,j}\\
\hat\psi^-_{p,B,j}\\
\hat\psi^-_{p,C,j}\\
\hat\psi^-_{p,D,j}
\end{pmatrix},
\qquad
\hat{\Psi}^+_{p,j}
=
\begin{pmatrix}
\hat{\psi}^+_{p,A,j}&
\hat{\psi}^+_{p,B,j}&
\hat{\psi}^+_{p,C,j}&
\hat{\psi}^+_{p,D,j}
\end{pmatrix},
\]

the free action takes the form

\[
S_0^{\ell,L}
=
\frac1{4L^2}
\sum_{j=1}^{N}
\sum_{p}
\hat{\Psi}^+_{p,j}
\hat{D}(p)
\hat{\Psi}^-_{p,j}, 
\]

where, letting $\sigma_\mu(p)
\doteq
\frac{1-e^{-2i\ell p_\mu}}{2i\ell}$,

\[\hat{D}(p)\doteq \begin{pmatrix}
0
&
-i\ov{\sigma}_1(p)
&
i\ov{\sigma}_0(p)
&
0
\\[6pt]
-i\sigma_1(p)
&
0
&
0
&
i\ov{\sigma}_0(p)
\\[6pt]
i\sigma_0(p)
&
0
&
0
&
i\ov{\sigma}_1(p)
\\[6pt]
0
&
i\sigma_0(p)
&
i\sigma_1(p)
&
0
\end{pmatrix}.\]

\paragraph{Spin--Taste Basis.}

To identify the proper fermionic degrees of freedom suited for the continuum limit, we perform a standard \emph{spin--taste transformation}
\cite[pp.~243--248]{Gattringer:2010zz}. Namely we define:

\[
\hat{\xi}^-_{p,\o,s,j}
\doteq
\frac14
\sum_{a\in\mathcal I}
\Gamma^{(a)}_{s,\omega}
\hat{\psi}^-_{p,a,j}, \qquad
\hat\xi^+_{p,\o,s,j}
\doteq
\frac14
\sum_{a\in\mathcal I}
\hat\psi^+_{p,a,j}
(\Gamma^{(a)})^\dagger_{\omega,s},
\]

with $\o\in\{\pm\}$ the \emph{taste index}, $s\in\{\uparrow,\downarrow\}$ the \emph{spin index} and, letting $e_A=(0,0), e_B=(1,0), e_C=(0,1), e_D=(1,1)$ be the normalized displacement within the unit cell (see Fig. \ref{figstaglat1}), 

\[
\Gamma^{(a)} \equiv \left(\begin{array}{cc}
    \Gamma^{(a)}_{\uparrow,+} & \Gamma^{(a)}_{\uparrow,-} \\
    \Gamma^{(a)}_{\downarrow,+} & \Gamma^{(a)}_{\downarrow,-}
\end{array} \right)
\doteq
\gamma_1^{e_a\cdot e_1}
\gamma_0^{e_a\cdot e_0}
=
\begin{cases}
\mathbbl{1}_2, & a=A,\\
\sigma_x, & a=B,\\
\sigma_y, & a=C,\\
i\sigma_z, & a=D.
\end{cases}
\]

Alternatively, after introducing

\[
\hat{\Xi}^-_{p,j}
=
\begin{pmatrix}
\xi^-_{p,+,\uparrow,j}
\\
\xi^-_{p,+,\downarrow,j}
\\
\xi^-_{p,-,\uparrow,j}
\\
\xi^-_{p,-,\downarrow,j}
\end{pmatrix}, \qquad \hat{\Xi}^+_{p,j}
=
\begin{pmatrix}
\xi^+_{p,+,\uparrow,j},
\xi^+_{p,+,\downarrow,j},
\xi^+_{p,-,\uparrow,j},
\xi^+_{p,-,\downarrow,j}
\end{pmatrix},
\]

the change of basis reads $\hat\Psi^-_{p,j}
=
U^\dagger \hat\Xi^-_{p,j}$ and $\hat\Psi^+_{p,j}=\hat\Xi^+_{p,j} U$, where

\[
U
=
\frac1{\sqrt2}
\begin{pmatrix}
1&0&0&i\\
0&1&-i&0\\
0&1&i&0\\
1&0&0&-i
\end{pmatrix}
\] 

is a unitary matrix. Hence:

\[
S_0^{\ell,L}
=
\frac1{4L^2}
\sum_{j=1}^{N}
\sum_{p\in (\wt{\L}_{\ell,L})^*_{--}}
\hat\Xi^+_{p,j}
\,\hat{S}(p)\,
\hat\Xi^-_{p,j},
\]

with $\hat{S}(p)= U \hat{D}(p) U^\dagger$. A direct computation gives:

\begin{equation}
\label{eq:continuum_dirac}
\hat{S}(p)
=
\mathbbl{1}_{\mathrm{taste}} \otimes \Big(
-i p_0 \gamma_0
-
i p_1 \gamma_1
\Big)
+
\mc{O}(\ell |p|^2).
\end{equation}

Equation \eqref{eq:continuum_dirac} shows that the taste sector becomes asymptotically diagonal and survives as an additional independent fermionic species. Note also that 

$$\det \hat{S}(p)= \det\hat{D}(p)= \ell^{-2}\big(\sin^2(\ell p_0)+ \sin^2(\ell p_1)\big),$$

showing that $\hat{S}(p)^{-1}$ has a pole only at $p=0$. 

Observe also that, in the spin-taste basis, the interaction term \eqref{eq:sp_interaction} becomes

\[
S_{\mathrm{int}}^{\ell,L}
=
-\frac{\lambda(2\ell)^2}{8N}
\sum_{x\in\widetilde\Lambda_{\ell,L}}
\Big(
\ov\Xi_x\Xi_x
\Big)^2 \equiv -\frac{\lambda(2\ell)^2}{8N}
\sum_{x\in\widetilde\Lambda_{\ell,L}}
\Big(\sum_{j=1}^N\sum_{\o=\pm}
\ov\Xi_{x,\o,j}\Xi_{x,\o,j}
\Big)^2.
\]

\paragraph{Continuum Gross--Neveu Theory.}

The formal limit $\ell\to 0^+,L\to\infty$ is obtained by replacing Riemann sums by the corresponding integrals:

\begin{equation*}
(2\ell^2)\sum_{x\in\wt{\L}_{\ell,L}} \leadsto \iint_{\mbb{R}^2} d^2x, \qquad  L^{-2}\sum_{p\in (\wt{\L}_{\ell,L})^*_{--}} \leadsto \iint_{\mbb{R}^2} \frac{d^2p}{(2\pi)^2} 
\end{equation*}

and neglecting the remainders in the right--hand side of \eqref{eq:continuum_dirac}. We obtain:

\[
S_{\mathrm{SP}}^{\ell,L} \leadsto S_{\mathrm{SP}} = \iint d^2x \Bigg[-\frac{1}{4}i\sum_{j=1}^N\sum_{\o=\pm}\sum_{\mu=0,1} \ov{\Xi}_{x,\o,j} \gamma_{\mu}\de_{\mu} \Xi_{x,\o,j} -\frac{\lambda}{8N} 
\big(\sum_{j=1}^N\sum_{\o=\pm}\ov\Xi_{x,\o,j}\Xi_{x,\o,j}\big)^2 \Bigg].
\]

The kinetic term contains an overall factor $1/4$, which is easily removed by a trivial rescaling $\Xi\mapsto \Xi'= \frac{1}{2}\Xi$. In this way we get:

\begin{equation}
S_{\mathrm{SP}} = \iint d^2x \Bigg[-\frac{1}{4}i\sum_{j=1}^N\sum_{\o=\pm}\sum_{\mu=0,1} \ov{\Xi}'_{x,\o,j} \gamma_{\mu}\de_{\mu} \Xi'_{x,\o,j} -\frac{\lambda_{\mathrm{eff}}}{2N_{\mathrm{eff}}} 
\big(\sum_{j=1}^N\sum_{\o=\pm}\ov\Xi'_{x,\o,j}\Xi'_{x,\o,j}\big)^2 \Bigg],
\end{equation}

with $N_{\mathrm{eff}}=2N$ and $\l_{\mathrm{eff}}=8\l$. We therefore conclude that the Staggered Plaquette model, after performing the formal continuum limit, reproduces the standard Gross-Neveu model with coupling $\l_{\mathrm{eff}}$ and $N_{\mathrm{eff}}$ flavors.

\paragraph{One-Loop Running Coupling.}

The Renormalization Group equation, for the running coupling constant $\l_{\mathrm{eff}}(\ell)$ at energy scale $\ell^{-1}$, reads:

\begin{equation*}
\ell \frac{d\l_{\mathrm{eff}}}{d\ell}= \b(\l_{\mathrm{eff}}),
\end{equation*}

with $\b(\l_{\mathrm{eff}})$ the Beta Function of the Gross-Neveu model, whose one-loop expression is \cite{TRACAS1990333}:

\[
\beta(\lambda_{\mathrm{eff}})
=
\frac1\pi
\left(
1-\frac1{N_{\mathrm{eff}}}
\right)
\lambda_{\mathrm{eff}}^2
+
\mc{O}(\lambda_{\mathrm{eff}}^3).
\]

The integration of the Renormalization Group equation truncated at order $\l_{\mathrm{eff}}^2$, yields

\[
\lambda_{\mathrm{eff}}(\ell)
=
\frac{\lambda_{\mathrm{eff}}(\ell_0)}{1
+
\frac1\pi
\left(
1-\frac1{N_{\mathrm{eff}}}
\right)
\lambda_{\mathrm{eff}}(\ell_0)^2
\log\!\big(\frac{\ell_0}{\ell}\big)}.
\]

After writing $N_{\mathrm{eff}}=2N$ and
$\lambda_{\mathrm{eff}}(\ell)=8\lambda(\ell)$, we obtain:

\[
\lambda(\ell)
=
\frac{\lambda(\ell_0)}{1
+
\frac8\pi
\left(
1-\frac1{2N}
\right)
\lambda(\ell_0)^2
\log\!\big(\frac{\ell_0}{\ell}\big)}.
\]

At the lowest order in $\frac{1}{N}$, this expression coincides with \eqref{eqn:bare_coupling} for $\a=\mathrm{SP}$.

\section{Upper and Lower-bound on the $2$-point function}\label{applbub}
\subsection{Lower-bound of the $2$-point function} 
Since:
\[
1 = I_x^{+K_{\alpha}}+I_x^{-K_{\alpha}}+I_x^{(-C_{\alpha},C_{\alpha})}+I_x^{\star_+}+I_x^{\star_-},
\]
the two-point function can be decomposed as:
\begin{equation}
\label{eqn:9}
\langle \phi_x\phi_y\rangle_{\mu_\alpha}
=
\sum_{i,j \in \mathcal{R}}
\big\langle
\phi_x\phi_y I_x^{(i)} I_y^{(j)}
\big\rangle_{\mu_\alpha},
\end{equation}
where the sum runs over the five regions $\mathcal{R}$ listed in
Items \ref{it:large+}--\ref{it:min-} in Section \ref{ssect:field_size}. These contributions can be naturally grouped into the three following classes. 

\begin{itemize}
\item \textbf{DT (dominant term).} The two fields lie near the \textit{same minimum}:
\[ \left \langle \phi_x \phi_y 
\right \rangle_{\mu_{\alpha}}^{\mathrm{DT}} \doteq
\big\langle
\phi_x\phi_y
\bigl(
I_x^{\star_+}I_y^{\star_+}
+
I_x^{\star_-}I_y^{\star_-}
\bigr)
\big\rangle_{\mu_\alpha}.
\]

\item \textbf{PT (Peierls' term).} The two fields lie around \textit{different minima}:
\[ \left \langle \phi_x \phi_y 
\right \rangle_{\mu_{\alpha}}^{\mathrm{PT}} \doteq
\big\langle
\phi_x\phi_y
\bigl(
I_x^{\star_+}I_y^{\star_-}
+
I_x^{\star_-}I_y^{\star_+}
\bigr)
\big\rangle_{\mu_\alpha}.
\]

\item \textbf{EUT (energetic unfavorable term).} At least one of the two fields lies in an \textit{energetically unfavorable region}:
\[\begin{split} 
\left \langle \phi_x \phi_y 
\right \rangle_{\mu_{\alpha}}^{\mathrm{EUT}}  &\doteq
\Big\langle \phi_x \phi_y 
\bigl(I_x^{+K_{\alpha}}+I_x^{-K_{\alpha}}+I_x^{(-C_{\alpha},C_{\alpha})}\bigr)
\Big\rangle_{\mu_{\alpha}}\\
&+
\Big\langle \phi_x \phi_y 
\bigl( I_x^{\star_+}+ I_x^{\star_-} \bigr)
\bigl(I_y^{+K_{\alpha}}+I_y^{-K_{\alpha}}+I_y^{(-C_{\alpha},C_{\alpha})}\bigr)
\Big\rangle_{\mu_{\alpha}}.
\end{split}\]
\end{itemize}

First, it is convenient to perform a preliminary bound of the two-point function as: 

\begin{equation}
\label{eqn:50}
\left \langle \phi_x \phi_y \right \rangle_{\mu_{\alpha}} \geq \left \langle \phi_x \phi_y \right \rangle_{\mu_{\alpha}}^{\mathrm{DT}} - \bigg|\left \langle \phi_x \phi_y \right \rangle_{\mu_{\alpha}}^{\mathrm{PT}}\bigg| - \bigg|\left \langle \phi_x \phi_y \right \rangle_{\mu_{\alpha}}^{\mathrm{EUT}}\bigg|.
\end{equation}

 The bound \eqref{eq:mainbound2} is then obtained by estimating separately the three terms in the right--hand side of \eqref{eqn:50}.

\paragraph{Dominant Term (DT).}

We shall derive the following lower bound:
    
\begin{equation}
\label{eqn:50a}
\begin{aligned}
\left \langle \phi_x \phi_y 
\right \rangle_{\mu_{\alpha}}^{\mathrm{DT}}  \geq C_{\alpha}^2 -\frac{4 C_{\alpha}^2}{K_{\alpha}^2} \left \langle \phi_x^2 I^{+K_{\alpha}}_x \right \rangle_{\mu_{\alpha}} -  \frac{8C_{\alpha}^2}{K_{\alpha}} \left \langle \phi_x^2 I^{+K_{\alpha}}_x \right \rangle_{\mu_{\alpha}}^{\frac{1}{2}} -3C_{\alpha}^2\left \langle I^{(-C_{\alpha},C_{\alpha})}_x\right \rangle_{\mu_{\alpha}} - 2C_{\alpha}^2 \left \langle I_x^{\star_+} I_y^{\star_-} \right \rangle_{\mu_{\alpha}}, 
\end{aligned}
\end{equation}

where notice that, due to the translation invariance of the model (see Footnote \ref{footnote:symmetries}), averages of the form $\langle f(\phi_x) \rangle_{\mu_{\a}}$ are actually constant with respect to $x\in \L^{\a}_L$.

\begin{proof}[Proof of \eqref{eqn:50a}.]
Since $\phi_x\phi_y\ge C_{\alpha}^2$ on the support of $I^{\star_\pm}_x I^{\star_\pm}_y$, we obtain:
\[ \left \langle \phi_x \phi_y 
\right \rangle_{\mu_{\alpha}}^{\mathrm{DT}} =
\big\langle
\phi_x\phi_y
(I_x^{\star_+}I_y^{\star_+}+I_x^{\star_-}I_y^{\star_-})
\big\rangle_{\mu_\alpha}
\ge
C_{\alpha}^2
\big\langle
I_x^{\star_+}I_y^{\star_+}+I_x^{\star_-}I_y^{\star_-}
\big\rangle_{\mu_\alpha}.
\]

Moreover, using the identity:

\[
I_x^{\star_+}I_y^{\star_+}
+
I_x^{\star_-}I_y^{\star_-}
=
(I_x^{\star_+}+I_x^{\star_-})(I_y^{\star_+}+I_y^{\star_-})
-
(I_x^{\star_+}I_y^{\star_-}+I_x^{\star_-}I_y^{\star_+}),
\]

we find that

\begin{equation}
\label{eqn:8}
\begin{aligned}
\left \langle \phi_x \phi_y 
\right \rangle_{\mu_{\alpha}}^{\mathrm{DT}}
&\ge C_{\alpha}^2\left \langle \big(I^{\star_+}_x I^{\star_+}_y + I^{\star_-}_x I^{\star_-}_y\big)\right \rangle_{\mu_{\alpha}}\\
&\ge C_{\alpha}^2\left \langle (I_x^{\star_+} + I^{\star_-}_x)(I_y^{\star_+} + I_y^{\star_-})\right \rangle_{\mu_{\alpha}}- 2C_{\alpha}^2 \left \langle  I_x^{\star_+} I_y^{\star_-} \right \rangle_{\mu_{\alpha}}\\
&= C_{\alpha}^2 \left \langle \big(1_x-I^{+K_{\alpha}}_x - I^{-K_{\alpha}}_x - I_x^{(-C_{\alpha}, C_{\alpha})}\big)\big( 1_y-I^{+K_{\alpha}}_y - I^{-K_{\alpha}}_y - I_y^{(-C_{\alpha}, C_{\alpha})}\big) \right \rangle_{\mu_{\alpha}} - 2C_{\alpha}^2 \left \langle I_x^{\star_+} I_y^{\star_-} \right \rangle_{\mu_{\alpha}}  \\
&= C_{\alpha}^2 + C_{\alpha}^2 \left \langle \big(I^{+K_{\alpha}}_x + I^{-K_{\alpha}}_x + I_x^{(-C_{\alpha}, C_{\alpha})}\big)\big( I^{+K_{\alpha}}_y + I^{-K_{\alpha}}_y + I_y^{(-C_{\alpha}, C_{\alpha})}\big) \right \rangle_{\mu_{\alpha}} \\
&\,\,\,\,\,\,\,\,\,\,\,- 2C_{\alpha}^2 \left \langle  \big(I^{+K_{\alpha}}_x + I^{-K_{\alpha}}_x + I_x^{(-C_{\alpha}, C_{\alpha})}\big)\right \rangle_{\mu_{\alpha}}- 2C_{\alpha}^2 \left \langle I_x^{\star_+} I_y^{\star_-} \right \rangle_{\mu_{\alpha}} \\
&\geq C_{\alpha}^2 - C_{\alpha}^2 \left \langle \big(I^{+K_{\alpha}}_x + I^{-K_{\alpha}}_x + I_x^{(-C_{\alpha}, C_{\alpha})}\big)\big( I^{+K_{\alpha}}_y + I^{-K_{\alpha}}_y + I_y^{(-C_{\alpha}, C_{\alpha})}\big) \right \rangle_{\mu_{\alpha}} \\
&\,\,\,\,\,\,\,\,\,\,\,- 2C_{\alpha}^2 \left \langle  \big(I^{+K_{\alpha}}_x + I^{-K_{\alpha}}_x + I_x^{(-C_{\alpha}, C_{\alpha})}\big)\right \rangle_{\mu_{\alpha}}- 2C_{\alpha}^2 \left \langle I_x^{\star_+} I_y^{\star_-} \right \rangle_{\mu_{\alpha}}.
\end{aligned}
\end{equation}

This estimate yields a strictly positive lower bound on the dominant contribution, up to additional subdominant terms of the form PT and EUT, which can be controlled as follows, using the Cauchy-Schwartz inequality and the invariance of $\mu_{\a}$ under the transformation $\phi\mapsto -\phi$.

\begin{equation*}
\begin{aligned}
&\left \langle \big(I^{+K_{\alpha}}_x + I^{-K_{\alpha}}_x + I_x^{(-C_{\alpha}, C_{\alpha})}\big)\big( I^{+K_{\alpha}}_y + I^{-K_{\alpha}}_y + I_y^{(-C_{\alpha}, C_{\alpha})}\big) \right \rangle_{\mu_{\alpha}} \\
&= \left \langle \big(I^{+K_{\alpha}}_x + I^{-K_{\alpha}}_x\big)\big( I^{+K_{\alpha}}_y + I^{-K_{\alpha}}_y \big) \right \rangle_{\mu_{\alpha}} +  \left \langle  I_x^{(-C_{\alpha}, C_{\alpha})} I_y^{(-C_{\alpha}, C_{\alpha})} \right \rangle_{\mu_{\alpha}} +\left \langle \big(I^{+K_{\alpha}}_x + I^{-K_{\alpha}}_x) I_y^{(-C_{\alpha}, C_{\alpha})} \right \rangle_{\mu_{\alpha}} \\
&+ \left \langle \ I_x^{(-C_{\alpha}, C_{\alpha})} \big(I^{+K_{\alpha}}_y + I^{-K_{\alpha}}_y\big) \right \rangle_{\mu_{\alpha}}   \\
&\leq 2 \left \langle I^{+K_{\alpha}}_x I^{+K_{\alpha}}_y\right \rangle_{\mu_{\alpha}} +  2 \left \langle I^{+K_{\alpha}}_x I^{-K_{\alpha}}_y\right \rangle_{\mu_{\alpha}} + 4   \left \langle I^{+K_{\alpha}}_x\right \rangle_{\mu_{\alpha}} + \left \langle I^{(-C_{\alpha},C_{\alpha})}_x\right \rangle_{\mu_{\alpha}}\\
&\leq \frac{2}{K_{\alpha}^2} \left \langle|\phi_x| |\phi_y| (I^{+K_{\alpha}}_x I^{+K_{\alpha}}_y+ I^{+K_{\alpha}}_x I^{-K_{\alpha}}_y)\right \rangle_{\mu_{\alpha}} + \frac{4}{K_{\alpha}}   \left \langle \phi_x I^{+K_{\alpha}}_x\right \rangle_{\mu_{\alpha}}
        + \left \langle I^{(-C_{\alpha},C_{\alpha})}_x\right \rangle_{\mu_{\alpha}}\\
        &\leq \frac{4}{K_{\alpha}^2} \left \langle \phi_x^2 I^{+K_{\alpha}}_x \right \rangle_{\mu_{\alpha}} +  \frac{4}{K_{\alpha}} \left \langle \phi_x^2 I^{+K_{\alpha}}_x \right \rangle_{\mu_{\alpha}}^{\frac{1}{2}} + \left \langle I^{(-C_{\alpha},C_{\alpha})}_x\right \rangle_{\mu_{\alpha}},\\
    \end{aligned}
\end{equation*}

and

\begin{equation*}
\begin{aligned}
        \left \langle  \big(I^{+K_{\alpha}}_x + I^{-K_{\alpha}}_x + I_x^{(-C_{\alpha}, C_{\alpha})}\big)\right \rangle_{\mu_{\alpha}} &\leq \frac{2}{K_{\alpha}} \left \langle \phi_x I_x^{+K_{\alpha}} \right \rangle_{\mu_{\alpha}} + \left \langle I^{(-C_{\alpha},C_{\alpha})}_x\right \rangle_{\mu_{\alpha}} \leq \frac{2}{K_{\alpha}} \left \langle \phi_x^2 I_x^{+K_{\alpha}} \right \rangle_{\mu_{\alpha}}^{\frac{1}{2}} + \left \langle I^{(-C_{\alpha},C_{\alpha})}_x\right \rangle_{\mu_{\alpha}}.\\
    \end{aligned}
\end{equation*}

Collecting together all the contributions above, \eqref{eqn:50a} follows.    
\end{proof}

\paragraph{Energetically Unfavorable Terms (EUT).}

We shall prove the following upper bound:
\begin{equation}
\label{eqn:50b}
\begin{aligned}
\bigg| \left \langle \phi_x \phi_y \right \rangle_{\mu_{\alpha}}^{\mathrm{EUT}} \bigg| \leq 4 \left \langle \phi_x^2 I_x^{+K_{\alpha}}  \right \rangle_{\mu_{\alpha}} + 4(C_{\alpha}+2K_{\alpha})  \left \langle \phi_x^2 I_x^{+K_{\alpha}}  \right \rangle_{\mu_{\alpha}}^{\frac{1}{2}} + (4K_{\alpha}C_{\alpha} + C_{\alpha}^2) \left \langle I_x^{(-C_{\alpha},C_{\alpha})}  \right \rangle_{\mu_{\alpha}}.   
\end{aligned}
\end{equation}

\begin{proof}[Proof of \eqref{eqn:50b}.]
 
We can rewrite

\begin{equation*}
    \left \langle \phi_x \phi_y \right \rangle_{\mu_{\alpha}}^{\mathrm{EUT}} =  \left \langle \phi_x \phi_y \right \rangle_{\mu_{\alpha}}^{\mathrm{K,K}} + \left \langle \phi_x \phi_y \right \rangle_{\mu_{\alpha}}^{\mathrm{K,C}} + \left \langle \phi_x \phi_y \right \rangle_{\mu_{\alpha}}^{\mathrm{K}, \star} + \left \langle \phi_x \phi_y \right \rangle_{\mu_{\alpha}}^{\mathrm{C}, \star} + \left \langle \phi_x \phi_y \right \rangle_{\mu_{\alpha}}^{\mathrm{C}, \mathrm{C}}, 
\end{equation*}

where:

\begin{equation}
\begin{aligned}
\left \langle \phi_x \phi_y \right \rangle_{\mu_{\alpha}}^{\mathrm{K,K}}  &\doteq \left \langle \phi_x \phi_y (I^{+K_{\alpha}}_x + I^{-K_{\alpha}}_x)(I^{+K_{\alpha}}_y+ I^{-K_{\alpha}}_y)\right \rangle_{\mu_{\alpha}}, \\
\left \langle \phi_x \phi_y \right \rangle_{\mu_{\alpha}}^{\mathrm{K,C}} &\doteq \left \langle \phi_x \phi_y  \big[ (I^{+K_{\alpha}}_x + I^{-K_{\alpha}}_x) I^{(-C_{\alpha},C_{\alpha})}_y + I^{(-C_{\alpha},C_{\alpha})}_x (I^{+K_{\alpha}}_y + I^{-K_{\alpha}}_y) \big]\right \rangle_{\mu_{\alpha}},  \\
\left \langle \phi_x \phi_y \right \rangle_{\mu_{\alpha}}^{\mathrm{K},\star}  &\doteq  \left \langle \phi_x \phi_y \big[ (I^{+K_{\alpha}}_x + I^{-K_{\alpha}}_x)( I^{\star_+}_y + I^{\star_-}_y) + ( I^{\star_+}_x + I^{\star_-}_x) (I^{+K_{\alpha}}_y + I^{-K_{\alpha}}_y) \big] \right \rangle_{\mu_{\alpha}}, \\
\left \langle \phi_x \phi_y \right \rangle_{\mu_{\alpha}}^{\mathrm{C},\star}  &\doteq \left \langle \phi_x \phi_y \big[ I^{(-C_{\alpha},C_{\alpha})}_x (I^{\star_+}_y + I^{\star_-}_y)+ (I^{\star_+}_x + I^{\star_-}_x) I^{(-C_{\alpha},C_{\alpha})}_y  \big]\right \rangle_{\mu_{\alpha}},\\
\left \langle \phi_x \phi_y \right \rangle_{\mu_{\alpha}}^{\mathrm{C,C}} &\doteq \left \langle \phi_x \phi_y I^{(-C_{\alpha},C_{\alpha})}_x I^{(-C_{\alpha},C_{\alpha})}_y\right \rangle_{\mu_{\alpha}}. 
\end{aligned}
\end{equation}

We are going to bound the several terms above separately.

\begin{enumerate}
\item \textbf{Bound on K,K term.} Using the Cauchy-Schwartz inequality and the invariance of $\mu_{\a}$ under the transformation $\phi\mapsto -\phi$, we get:
\begin{equation*}
\begin{aligned}
\sup_{x,y\in\L^{\a}_L} \bigg|\left \langle \phi_x \phi_y \right \rangle_{\mu_{\alpha}}^{\mathrm{K,K}}\bigg| &\leq 2 \sup_{x,y\in\L^{\a}_L} \left \langle \phi_x \phi_y I_x^{+K_{\alpha}} I_y^{+K_{\alpha}}  \right \rangle_{\mu_{\alpha}}  + 2 \sup_{x,y\in\L^{\a}_L} \left \langle |\phi_x \phi_y| I_x^{+K_{\alpha}} I_y^{-K_{\alpha}} \right \rangle_{\mu_{\alpha}} 
\\
&\leq 4 \sup_{x\in\L^{\a}_L} \left \langle \phi_x^2 I_x^{+K_{\alpha}}  \right \rangle_{\mu_{\alpha}}. 
\end{aligned}
\end{equation*}
\item \textbf{Bound on K,C term.} Similarly to the previous item:
\begin{equation*}
\begin{aligned}
\sup_{x,y\in\L^{\a}_L} \bigg|\left \langle \phi_x \phi_y \right \rangle_{\mu_{\alpha}}^{\mathrm{K,C}}\bigg| &\leq  4 \sup_{x,y\in\L^{\a}_L} \left \langle |\phi_x| |\phi_y| I_x^{+K_{\alpha}} I_y^{(-C_{\alpha},C_{\alpha})}  \right \rangle_{\mu_{\alpha}} 
\\
&\leq 4 C_{\alpha} \sup_{x,y\in\L^{\a}_L} \left \langle \phi_x^2 I_x^{+K_{\alpha}}  \right \rangle_{\mu_{\alpha}}^{\frac{1}{2}} \left \langle I_y^{(-C_{\alpha},C_{\alpha})} \right \rangle_{\mu_{\alpha}}^{\frac{1}{2}}\leq 4 C_{\alpha} \sup_{x\in\L^{\a}_L}\left \langle \phi_x^2 I_x^{+K_{\alpha}}  \right \rangle_{\mu_{\alpha}}^{\frac{1}{2}}. 
\end{aligned}
\end{equation*}

\item \textbf{Bound on K,$\star$ term}:
 \begin{equation*}
\begin{aligned}
\sup_{x,y\in\L^{\a}_L} \bigg|\left \langle \phi_x \phi_y \right \rangle_{\mu_{\alpha}}^{\mathrm{K,}\star}\bigg| &\leq 4 \sup_{x,y\in\L^{\a}_L} \left \langle \phi_x \phi_y I_x^{+K_{\alpha}} I_y^{\star_+}  \right \rangle_{\mu_{\alpha}} + 4 \sup_{x,y\in\L^{\a}_L} \left \langle |\phi_x \phi_y| I_x^{+K_{\alpha}} I_y^{\star_-}  \right \rangle_{\mu_{\alpha}}\\
& \leq 8K_{\alpha} \sup_{x\in\L^{\a}_L} \left \langle \phi_x^2 I_x^{+K_{\alpha}}  \right \rangle_{\mu_{\alpha}}^{\frac{1}{2}}.
\end{aligned}
\end{equation*}
\item \textbf{Bound on C,$\star$ term}:
\begin{equation*}
\begin{aligned}
\sup_{x,y\in\L^{\a}_L} \bigg|\left \langle \phi_x \phi_y \right \rangle_{\mu_{\alpha}}^{\mathrm{C,}\star}\bigg| \leq 4 \left \langle |\phi_x \phi_y| I_x^{(-C_{\alpha},C_{\alpha})} I_y^{\star_+}  \right \rangle_{\mu_{\alpha}} 
\leq 4K_{\alpha} C_{\alpha} \sup_{x\in \L^{\a}_L} \left \langle  I_x^{(-C_{\alpha},C_{\alpha})}  \right \rangle_{\mu_{\alpha}}. 
\end{aligned}
\end{equation*}
\item \textbf{Bound on C,C term}:
\begin{equation*}
\begin{aligned}
\sup_{x,y\in\L^{\a}_L} \bigg|\left \langle \phi_x \phi_y \right \rangle_{\mu_{\alpha}}^{\mathrm{C,C}}\bigg| \leq \sup_{x,y\in\L^{\a}_L} \left \langle |\phi_x \phi_y| I_x^{(-C_{\alpha},C_{\alpha})} I_y^{(-C_{\alpha},C_{\alpha})}  \right \rangle_{\mu_{\alpha}} 
\leq C_{\alpha}^2 \sup_{x\in\L^{\a}_L} \left \langle  I_x^{(-C_{\alpha},C_{\alpha})}  \right \rangle_{\mu_{\alpha}}.
\end{aligned}
\end{equation*}
\end{enumerate}

Collecting all the contributions, we find the claimed inequality \eqref{eqn:50b}.
   
\end{proof}

\paragraph{Peierls' terms (PT).} Using again the invariance of $\mu_{\a}$ under $\phi\mapsto-\phi$, we get that

\begin{equation}
\label{eqn:50c}
\bigg|\left \langle \phi_x \phi_y 
\right \rangle_{\mu_{\alpha}}^{\mathrm{PT}} \bigg| \leq 2
\big\langle
|\phi_x \phi_y| I_x^{\star_+}I_y^{\star_-}
\big\rangle_{\mu_\alpha} \leq 2 K_{\alpha}^2 \big\langle
I_x^{\star_+}I_y^{\star_-}
\big\rangle_{\mu_\alpha}.
 \end{equation}

 Plugging \eqref{eqn:50a}, \eqref{eqn:50b} and \eqref{eqn:50c} into the right--hand side of \eqref{eqn:50}, we finally get:

\begin{equation*}
\begin{split}
\inf_{x,y\in\L^{\alpha}_L} \Big\langle \phi_x \phi_y \Big\rangle_{\mu_{\alpha}} &\ge C_{\alpha}^2 - \Bigg( 2(K_{\alpha}^2+C_{\alpha}^2) \sup_{x,y\in\L^{\alpha}_L} \left\langle I^{\star_+}_x I^{\star_-}_y \right\rangle_{\mu_{\alpha}} +  4C_{\alpha} (C_{\a}+ K_{\alpha}) \sup_{x\in\L^{\alpha}_L} \left\langle I^{(-C_{\alpha},C_{\alpha})}_x \right\rangle_{\mu_{\alpha}} \\
&+ 4\big(1+ \tfrac{C_{\alpha}^2}{K_{\alpha}^2}\big)\sup_{x\in\L^{\alpha}_L} \left\langle \phi_x^2 I^{+K_{\alpha}}_x \right\rangle_{\mu_{\alpha}} + 4\big(C_{\alpha}+ 2\tfrac{C_{\alpha}^2}{K_{\alpha}}+ 2K_{\alpha}\big) \sup_{x\in\L^{\alpha}_L} \left\langle \phi_x^2  I^{+K_{\alpha}}_x\right\rangle_{\mu_{\alpha}}^{\frac{1}{2}}  \Bigg),
\end{split}
\end{equation*}

which is nothing but \eqref{eq:mainbound2}.

\subsection{Upper bound of the $2$-point function} 
In order to perform an upper bound, we can decompose the two-point function as:

\begin{equation*}
    \left \langle \phi_x \phi_y \right \rangle_{\mu_{\alpha}} = \left \langle \phi_x \phi_y (I_x^{(-K_{\alpha},K_{\alpha})} + I_x^{(-K_{\alpha},K_{\alpha})^c}) (I_y^{(-K_{\alpha},K_{\alpha})} + I_y^{(-K_{\alpha},K_{\alpha})^c})\right \rangle_{\mu_{\alpha}},
\end{equation*}

where $I_x^{(-K_{\alpha},K_{\alpha})} \doteq \mathbbl{1}_{\{-L_{\a}<\phi_x <K_{\a}\}}$ and $I_x^{(-K_{\alpha},K_{\alpha})^c} \doteq 1- I_x^{(-K_{\alpha},K_{\alpha})}$. Proceeding as in the previous subsection, we can decompose the correlator as:

\begin{equation}
\label{eq:upb}
\begin{aligned}
    \left \langle \phi_x \phi_y \right \rangle_{\mu_{\alpha}} &=  \left \langle \phi_x \phi_y I_x^{(-K_{\alpha},K_{\alpha})}  I_y^{(-K_{\alpha},K_{\alpha})}\right \rangle_{\mu_{\alpha}}  + 2 \left \langle \phi_x \phi_y I_x^{(-K_{\alpha},K_{\alpha})} I_y^{(-K_{\alpha},K_{\alpha})^c}\right \rangle_{\mu_{\alpha}}  + \\
    &\,\,\,\,\,\,\,\,\,\,\,\,\,\,\,\,\,\,\,\,\,\,\,\,\,\,+\left \langle \phi_x \phi_y I_x^{(-K_{\alpha},K_{\alpha})^c} I_y^{(-K_{\alpha},K_{\alpha})^c}\right \rangle_{\mu_{\alpha}}.
    \end{aligned}
\end{equation}

The three terms appearing in the right--hand side of \eqref{eq:upb} can be upper-bounded as follows.

\begin{equation*}
\begin{aligned}
\bigg|\left \langle \phi_x \phi_y I_x^{(-K_{\alpha},K_{\alpha})}  I_y^{(-K_{\alpha},K_{\alpha})}\right \rangle_{\mu_{\alpha}}\bigg| &\leq \left \langle |\phi_x| |\phi_y| I_x^{(-K_{\alpha},K_{\alpha})}  I_y^{(-K_{\alpha},K_{\alpha})}\right \rangle_{\mu_{\alpha}} \leq \left \langle I_x^{(-K_{\alpha},K_{\alpha})}  I_y^{(-K_{\alpha},K_{\alpha})}\right \rangle_{\mu_{\alpha}} \leq K_{\alpha}^2;\\
\bigg|\left \langle \phi_x \phi_y I_x^{(-K_{\alpha},K_{\alpha})} I_y^{(-K_{\alpha},K_{\alpha})^c}\right \rangle_{\mu_{\alpha}} \bigg| &\leq  \left \langle |\phi_x| |\phi_y| I_x^{(-K_{\alpha},K_{\alpha})} I_y^{(-K_{\alpha},K_{\alpha})^c}\right \rangle_{\mu_{\alpha}} \leq K_{\alpha} \left \langle  |\phi_y| I_y^{(-K_{\alpha},K_{\alpha})^c}\right \rangle_{\mu_{\alpha}}  \\
&= 2 K_{\alpha} \left \langle  \phi_x I_x^{+K_{\alpha}}\right \rangle_{\mu_{\alpha}}\leq 2K_{\alpha}\left \langle  \phi_x^2 I_x^{+K_{\alpha}}\right \rangle_{\mu_{\alpha}}^{\frac{1}{2}};\\
\bigg|\left \langle \phi_x \phi_y I_x^{(-K_{\alpha},K_{\alpha})^c} I_y^{(-K_{\alpha},K_{\alpha})^c}\right \rangle_{\mu_{\alpha}} \bigg| &\leq \left \langle |\phi_x| |\phi_y| I_x^{(-K_{\alpha},K_{\alpha})^c} I_y^{(-K_{\alpha},K_{\alpha})^c}\right \rangle_{\mu_{\alpha}}\\
&= 2 \left \langle \phi_x \phi_y I_x^{+K_{\alpha}} I_y^{+K_{\alpha}}\right \rangle_{\mu_{\alpha}} + 2 \left \langle |\phi_x \phi_y| I_x^{-K_{\alpha}} I_y^{+K_{\alpha}}\right \rangle_{\mu_{\alpha}}  \\
&\leq 2 \left \langle \phi_x^2 I_x^{+K_{\alpha}} \right \rangle_{\mu_{\alpha}}^{\frac{1}{2}} \left \langle \phi_y^2 I_y^{+K_{\alpha}} \right \rangle_{\mu_{\alpha}}^{\frac{1}{2}}  + 2 \left \langle \phi_x^2 I_x^{-K_{\alpha}}\right \rangle_{\mu_{\alpha}}^{\frac{1}{2}} \left \langle \phi_y^2  I_y^{+K_{\alpha}}\right \rangle_{\mu_{\alpha}}^{\frac{1}{2}}\\
&= 4 \left \langle \phi_x^2 I_x^{+K_{\alpha}} \right \rangle_{\mu_{\alpha}}^{\frac{1}{2}} \left \langle \phi_y^2 I_y^{+K_{\alpha}} \right \rangle_{\mu_{\alpha}}^{\frac{1}{2}}.
\end{aligned}
\end{equation*}

Therefore, going back to \eqref{eq:upb}:

\begin{equation*}
    \sup_{x,y \in \Lambda^{\alpha}_L} \left \langle \phi_x \phi_y \right \rangle_{\mu_{\alpha}} \leq  \sup_{x,y \in \Lambda^{\alpha}_L} \bigg|\left \langle \phi_x \phi_y \right \rangle_{\mu_{\alpha}} \bigg|\leq K_{\alpha}^2 + 2K_{\alpha}\sup_{x \in \Lambda_L^{\alpha}} \left \langle \phi_x^2 I_x^{+K_{\alpha}} \right \rangle_{\mu_{\alpha}}^{\frac{1}{2}} + 4 \sup_{x \in \Lambda_L^{\alpha}} \left \langle \phi_x^2 I_x^{+K_{\alpha}} \right \rangle_{\mu_{\alpha}},
\end{equation*}

which is nothing but \eqref{upbound}.

\printbibliography
\end{document}